\documentclass{article}

\usepackage[style=alphabetic, backend=biber, backref=true]{biblatex}
\usepackage{graphicx} 
\usepackage{tikz}
\usetikzlibrary{arrows.meta, positioning, calc}
\usepackage{quantikz}
\usepackage{booktabs}
\usepackage{xcolor}
\usepackage{makecell}

\usepackage{amsmath}
\usepackage{amsthm}
\usepackage{amssymb}
\usepackage{braket}

\usepackage[margin=1in]{geometry}

\definecolor{darkblue}{rgb}{0.0, 0.0, 0.55}
\usepackage{xurl}
\usepackage{hyperref}
\hypersetup{colorlinks=true, allcolors=darkblue}

\usepackage{aliascnt} 

\theoremstyle{plain}
\newtheorem{thm}{Theorem} 

\newaliascnt{lemma}{thm}       
\newtheorem{lemma}[lemma]{Lemma} 
\aliascntresetthe{lemma}

\newaliascnt{cor}{thm}
\newtheorem{cor}[cor]{Corollary}
\aliascntresetthe{cor}

\theoremstyle{definition}
\newaliascnt{defn}{thm}       
\newtheorem{defn}[defn]{Definition} 
\aliascntresetthe{defn}

\theoremstyle{remark} 

\newtheorem{rem}{Remark}

\usepackage{cleveref}

\crefname{defn}{Definition}{Definitions}  
\Crefname{defn}{Definition}{Definitions}

\crefname{thm}{Theorem}{Theorems}
\Crefname{thm}{Theorem}{Theorems}

\crefname{lemma}{Lemma}{Lemmas}
\Crefname{lemma}{Lemma}{Lemmas}

\crefname{cor}{Corollary}{Corollaries}
\Crefname{cor}{Corollary}{Corollaries}

\crefname{rem}{Remark}{Remarks}
\Crefname{rem}{Remark}{Remarks}

\begin{document}
\title{The Complexity of Single-Interaction Hamiltonians}
\author{
Elad Hecker\\
The Rachel and Selim Benin School of Engineering and Computer Science\\
The Hebrew University of Jerusalem\\
\texttt{elad.hecker@mail.huji.ac.il}
}
\date{}
\maketitle

\begin{abstract}
We study the complexity of Local Hamiltonian problems under the restriction that every local term is an unweighted occurrence of the same fixed interaction. 
In the resulting Single-Interaction Hamiltonian (SIH) model, once the interaction is fixed, the Hamiltonian is specified entirely by its ordered interaction hypergraph.

Our main technical tool is an exact compiler that converts any fixed finite alphabet of local interactions into a single interaction. 
On a designated auxiliary subspace, the compiled Hamiltonian reproduces the source Hamiltonian exactly, while the complete spectrum below a chosen separation scale is preserved with multiplicity.
Combined with a finite-alphabet normalization, this gives a universal polynomial-time reduction from $k$-Local Hamiltonian to $\mathrm{SIH}_{3k+3}$ up to a known global energy scale and inverse-polynomial approximation.

Sharper constructions give QMA-completeness of $\mathrm{SIH}_3$ and of geometrically local $\mathrm{SIH}_8$ with bounded degree and interaction range. 
In the stoquastic setting, we obtain StoqMA-completeness of $\mathrm{Stoq\text{-}SIH}_3$ and QMA-completeness of $1$-Pinned $\mathrm{Stoq\text{-}SIH}_4$. 
We also establish hardness results for uniformly system-size-dependent interactions and derive single-interaction formulations of the Hamiltonian quantum PCP conjecture, frustration-free, exponentially precise, and guided Local Hamiltonian problems.
These results show that independently chosen local matrices and independently tunable coupling strengths are not necessary for a broad range of Hamiltonian-complexity phenomena.
\end{abstract}

\tableofcontents
\newpage

\section{Introduction}
\label{sec:introduction}

The Local Hamiltonian problem is a central problem in quantum complexity theory. 
Given a Hamiltonian
\[
H=\sum_i H_i
\]
whose terms act on only a constant number of qubits, the task is to distinguish whether its ground-state energy is at most $a$ or at least $b$, under the promise of an inverse-polynomial gap $b-a$. 
Kitaev's circuit-to-Hamiltonian construction established QMA-completeness of the $5$-Local Hamiltonian problem \cite{Kitaev2002}; subsequent work reduced the required locality and established hardness under additional spatial restrictions \cite{Kempe2004,Oliveira2005}.

In addition to the choice of interaction supports, a standard Local Hamiltonian instance has two further termwise freedoms: the local interaction acting on each hyperedge and the coefficient multiplying that interaction. 
This raises a basic structural question. 
How much Hamiltonian complexity survives if both of these freedoms are removed, so that every local term is exactly the same interaction with exactly the same coupling strength?

We study this question through the \emph{Single-Interaction Hamiltonian} framework. 
Fix a $k$-qubit Hermitian interaction $h$. 
An $\mathrm{SIH}_k$ instance is specified by a collection $L$ of ordered $k$-tuples of physical qubits and has Hamiltonian
\[
H_L=\sum_{e\in L} h_e.
\]
Thus every occurrence of the interaction has coefficient one, and, once $h$ is fixed, all nontrivial instance dependence in the Hamiltonian is carried by the ordered interaction hypergraph. 
This restriction is stronger than fixing only the set or type of allowed interactions while
retaining independently tunable coupling strengths.

We also consider an $n$-dependent variant, $\mathrm{DSIH}_k$, in which the common interaction belongs to a uniformly computable family $\{h^{(n)}\}_{n\geq 1}$ depending only on the final physical system size. 
In both the SIH and DSIH settings we study pinned, stoquastic, and geometrically local variants.  These models are distinct from translationally invariant Hamiltonians: the local interaction is uniform, but the interaction hypergraph remains instance dependent.

The results of this work show that this severe interaction uniformity still permits substantial computational complexity. 
At the heart of our results is an exact finite-alphabet compilation framework that preserves a controlled low-energy spectrum while enforcing a single common interaction.
Combined with a separate finite-alphabet normalization, it yields a universal reduction from $k$-Local Hamiltonian to $\mathrm{SIH}_{3k+3}$.

Sharper locality bounds arise from specialized constructions. 
We obtain QMA-complete $\mathrm{SIH}_3$ and geometrically local QMA-complete $\mathrm{SIH}_8$, as well as StoqMA-complete $\mathrm{Stoq\text{-}SIH}_3$ and QMA-complete $1$-Pinned $\mathrm{Stoq\text{-}SIH}_4$.
Together with the $n$-dependent and further structural results developed below, these results show that independently chosen local matrices and independently tunable coupling strengths are not necessary for several standard hardness phenomena of Local Hamiltonian problems.

\subsection{The Single-Interaction Compiler}

A central technical contribution of this work is an exact compiler that converts any fixed finite interaction alphabet to a single interaction.
Let
\[
\mathcal A=\{h^{(1)},\ldots,h^{(m)}\}
\]
be a fixed collection of $k$-local Hermitian matrices, and consider a Hamiltonian
\[
K=\sum_{i=1}^{M} h^{(j_i)}_{V_i}
\]
whose local terms are drawn from $\mathcal A$. 
The compiler augments the working system with auxiliary selector qubits and constructs a single interaction whose action depends on their local configuration.

In the pinned construction, the auxiliary selector register is fixed to a prescribed state, so compression to the pinned sector reproduces $K$ exactly. 
The unpinned construction replaces this external restriction by local energetic checks. 
Each target occurrence receives its own constant-size auxiliary block, and the correct selector configuration is separated energetically from every invalid configuration.

The resulting compiler preserves considerably more than the ground-state energy. 
By \cref{thm:unpinned-spectral-preservation}, for any chosen separation parameter $\gamma>0$, the valid auxiliary sector is an exact isometric copy of $K$, the ground space is preserved exactly, and every eigenvalue below
\[
\lambda_{\min}(K)+\gamma
\]
is reproduced with the same multiplicity.
In particular, if the spectral gap of $K$ is at most $\gamma$, then that gap is preserved exactly. 
For a fixed alphabet of size $m$, the compiled interaction has locality
\[
k+\left\lceil\log_2 m\right\rceil+2.
\]
The construction also preserves bounded degree and geometric locality for geometrically local finite-alphabet Hamiltonians, and preserves stoquasticity when every interaction in the alphabet is stoquastic.

Extending the construction to arbitrary fixed-locality Hamiltonians requires a separate normalization step before the exact compiler, described in \cref{sec:universal-finite-alphabet-normalization}. 
Each local term is expanded in a fixed Pauli basis, its coefficients are rounded to a common inverse-polynomial scale, and the resulting integer coefficients are represented by multiplicities. 
For every inverse-polynomial accuracy $\varepsilon$, this produces a finite-alphabet Hamiltonian $K$ and an efficiently computable integer $Q$ satisfying
\[
\left\|
H-\frac{1}{Q}K
\right\|
\leq \varepsilon.
\]
Composing this normalization with the exact compiler yields the universal reduction of \cref{thm:universal-single-interaction-reduction}. For every fixed $k$, the $k$-Local Hamiltonian problem admits a polynomial-time, promise-preserving reduction to $\mathrm{SIH}_{3k+3}$, up to the known global energy scale $Q$ and the chosen inverse-polynomial approximation.

The distinction between these two stages is important. 
The finite-alphabet compiler is exact and can therefore transport low-energy structural information, whereas the preceding normalization of arbitrary local terms need not preserve properties such as exact ground-space degeneracy, frustration freeness, geometric locality, stoquasticity, or the exact spectral gap. 
Several of the later results therefore use the exact compiler directly on structured finite alphabets rather than invoking the universal normalization.

Finally, when the source alphabet consists of a single interaction with nonnegative coupling strengths, the generic compiler can be sharpened.
The construction of \cref{cor:positive-weight-singleton-sih} converts a positive-weight
singleton $k$-local Hamiltonian to an unweighted $\mathrm{SIH}_{k+1}$ problem. 
This specialized reduction underlies the three-local QMA- and StoqMA-completeness results developed below.

\subsection{Main Results}
\label{subsec:intro-main-results}

Our results show that strict interaction uniformity is compatible with hardness across several standard Hamiltonian-complexity settings. 
The principal complexity classifications are summarized in \cref{tab:main_results}.

\paragraph{Constant interactions.}
We first obtain QMA-completeness with a fixed three-local interaction. 
Starting from the QMA-complete positive-weight antiferromagnetic Heisenberg Hamiltonian ~\cite{Piddock2015}, the specialized singleton unweighting of \cref{cor:positive-weight-singleton-sih} replaces all coupling strengths by repeated unit-strength occurrences and one auxiliary qubit per occurrence.  
This gives \cref{thm:sih3-qma-complete}:
\[
\boxed{\mathrm{SIH}_3\ \text{is QMA-complete}}
\]
for an explicit fixed $3$-local interaction.

The multiplicity construction underlying this result does not preserve bounded degree.  We therefore treat geometric locality separately.
Using a spatially localized fixed-gate circuit-to-Hamiltonian construction ~\cite{Oliveira2005}, together with local auxiliary checks, we prove in \cref{thm:unpinned-fixed} that
\[
\boxed{\text{geometrically local }\mathrm{SIH}_8
       \ \text{is QMA-complete}},
\]
with both bounded interaction range and bounded vertex degree.

\paragraph{$n$-dependent interactions.}
We next consider uniform interaction families whose common matrix may depend only on the final physical system size. 
Starting from the QMA-complete XY Hamiltonian at fixed Hamming weight \cite{Childs2015}, we replace the global sector restriction by a local energetic penalty and encode the required parameters into the final system size. 
This yields \cref{thm:pinned-n-dependent,thm:unpinned-n-dependent}:
\[
\boxed{
2\text{-Pinned }\mathrm{DSIH}_3
\ \text{and unpinned }\mathrm{DSIH}_5
\ \text{are QMA-complete}.
}
\]

\paragraph{Stoquastic interactions.}
The same singleton-unweighting principle also applies within the stoquastic setting. 
Using a StoqMA-complete positive-weight two-qubit interaction ~\cite{Piddock2015}, we obtain \cref{thm:stoq-sih3-stoqma-complete}:
\[
\boxed{\mathrm{Stoq\text{-}SIH}_3
       \ \text{is StoqMA-complete}}
\]
for a fixed stoquastic interaction.

We also formulate realification and the stoquastic sign embedding directly in the single-interaction framework. 
For real interactions, the embedding increases the locality by one and introduces one pinned qubit while preserving the restricted spectrum exactly. 
Applying it to the QMA-complete $\mathrm{SIH}_3$ interaction gives
\cref{thm:pinned-stoq-sih4}:
\[
\boxed{
1\text{-Pinned }\mathrm{Stoq\text{-}SIH}_4\ \text{is QMA-complete}.
}
\]

Finally, combining a geometrically local StoqMA verification Hamiltonian~\cite{Waite2025}, with a uniform $n$-dependent single interaction gives \cref{thm:StoqMA-complete}:
\[
\boxed{
\text{geometrically local }\mathrm{Stoq\text{-}DSIH}_9\ \text{is StoqMA-complete}.
}
\]

\begin{table}[ht]
\centering
\renewcommand{\arraystretch}{1.2}
\begin{tabular}{@{}llcl@{}}
\hline
\textbf{Framework} &
\textbf{Boundary condition} &
\textbf{Locality} &
\textbf{Complexity}
\\
\hline
$\mathrm{SIH}$ &
Unpinned &
3 &
QMA-complete
\\
Geometric $\mathrm{SIH}$ &
Unpinned &
8 &
QMA-complete
\\
$\mathrm{DSIH}$ &
$2$-Pinned &
3 &
QMA-complete
\\
$\mathrm{DSIH}$ &
Unpinned &
5 &
QMA-complete
\\
$\mathrm{Stoq\text{-}SIH}$ &
Unpinned &
3 &
StoqMA-complete
\\
$\mathrm{Stoq\text{-}SIH}$ &
$1$-Pinned &
4 &
QMA-complete
\\
Geometric $\mathrm{Stoq\text{-}DSIH}$ &
Unpinned &
9 &
StoqMA-complete
\\
\hline
\end{tabular}
\caption{Principal complexity classifications established in this work.}
\label{tab:main_results}
\end{table}

\paragraph{Structural consequences.}
The compiler also transfers several other Hamiltonian-complexity phenomena to the single-interaction setting. 
First, for the Hamiltonian quantum PCP conjecture~\cite{AharonovAradVidick2013}, \cref{cor:sih-qpcp-normal-form} shows that the Hamiltonian quantum PCP conjecture is equivalent to its restriction to Hamiltonians in which every local term is an unweighted copy of one fixed positive-semidefinite interaction. 
Thus independently chosen local matrices and independently tunable coupling strengths are not required for the qPCP question.

Second, exact finite-alphabet compilation is useful for promises that are unstable under approximation. 
Using perfect-completeness QMA verification ~\cite{GrewalRudolph2026}, \cref{thm:ff-sih9} gives a fixed positive-semidefinite $9$-local interaction for which
\[
\boxed{\mathrm{FF\text{-}SIH}_9\ \text{is QMA-complete}.}
\]
For exponentially precise energies, using $\mathrm{QMA}_{\exp}=\mathrm{PSPACE}$~\cite{FeffermanLin2016}, \cref{thm:precise-geometric-sih8} shows that the same fixed $8$-local interaction used in the geometrically local QMA construction also gives
\[
\boxed{
\text{geometrically local }\mathrm{Precise\text{-}SIH}_8\ \text{is PSPACE-complete}.
}
\]
Finally, applying the compilation framework to the Guided Local Hamiltonian construction of Cade et al.~\cite{CadeEtAl2023},
\cref{thm:guided-sih9} establishes
\[
\boxed{\mathrm{Guided\text{-}SIH}_9\ \text{is BQP-complete},}
\]
even with a unique inverse-polynomially gapped ground state and a semi-classical guiding state whose overlap with that ground state is $1-1/\operatorname{poly}(n)$.

\subsection{Relation to Previous Work}
\label{subsec:related-work}

Several established Hamiltonian-complexity frameworks impose forms of interaction uniformity closely related to, but distinct from, the single-interaction restriction considered here.

\paragraph{Fixed interaction sets and weighted interactions.}
Cubitt and Montanaro study Local Hamiltonian problems in which the allowed local terms are restricted to a fixed finite interaction set $S$ \cite{Cubitt2013}. Of particular relevance, Piddock and Montanaro consider Hamiltonians generated by a single fixed two-qubit interaction $h$ with independently chosen coupling strengths
\cite{Piddock2015}:
\[
H=\sum_e \alpha_e h_e .
\]
They establish complexity classifications for broad classes of fixed interactions. 
In particular, positive-weight antiferromagnetic Heisenberg and XY interactions yield QMA-complete problems.

The strict SIH model removes the remaining coefficient freedom:
\[
H=\sum_{e\in L} h_e ,
\]
so every occurrence has coefficient exactly one. 
Thus, in the single-interaction case, the distinction from these weighted models is not the number of available interaction matrices, but whether instance information may be encoded in independently tunable coupling strengths. 
The multiplicity-based constructions developed here show that, for several complexity questions, this coefficient freedom can itself be removed at constant locality overhead.

The two-local unweighted regime also contains problems of independent interest. 
In particular, unweighted Quantum Max-Cut is, up to an overall normalization and the max-versus-min convention, a strict fixed-interaction two-local problem generated by repeated copies of the
singlet projector. 
Recent work shows that even constant-factor approximation remains NP-hard for unweighted bounded-degree instances \cite{Piddock2025}. 
Our results do not classify the general two-local SIH problem; the explicit QMA-complete fixed interactions established here begin at locality three.

\paragraph{Hamiltonian simulation with a fixed interaction type.}
A related but stronger notion of universality arises in analogue Hamiltonian simulation, where a simulator is required to reproduce low-energy spectral and state information of a target Hamiltonian.
Zhou and Aharonov show that strongly universal two-dimensional simulators can use only a single type of nearest-neighbor interaction \cite{Zhou2021}. 
Their semi-translation-invariant Hamiltonians have the form
\[
H'=\sum_{\langle i,j\rangle} J_{ij} h_{ij},
\]
where the two-body operator $h$ is fixed but the interaction energies $J_{ij}$ vary spatially.  This is therefore complementary to the SIH restriction: their results provide substantially stronger simulation guarantees while retaining tunable coupling strengths, whereas the present work removes those coupling strengths and studies the resulting complexity-theoretic normal form. 
Correspondingly, the universal reduction of
\cref{thm:universal-single-interaction-reduction}
should be viewed as a ground-energy complexity reduction rather than as a general analogue-simulation theorem.

\paragraph{Sector restrictions and pinning.}
The antiferromagnetic XY model of Childs, Gosset, and Webb provides a particularly close example of an unweighted fixed interaction \cite{Childs2015}. 
Every graph edge carries the same XY interaction, but QMA-hardness is established for energy minimization restricted to a prescribed magnetization, equivalently Hamming-weight, sector. 
Our $n$-dependent construction uses this problem as a source and replaces the global sector restriction by local energetic constraints.

Pinning gives another mechanism for restricting the set of states over which the energy is minimized.
Nagaj et al.\ studied pinned Local Hamiltonian problems and showed, in particular, that pinning a single qubit can yield QMA-complete stoquastic Hamiltonian problems \cite{Nagaj2020}. 
In the present work, pinning is also used as a selector resource, while the stoquastic single-interaction embedding shows that the same phenomenon persists when every local term is an occurrence of one fixed stoquastic interaction.

\paragraph{Translational and geometric restrictions.}
Translationally invariant Hamiltonians impose a different form of homogeneity. 
In the construction of Gottesman and Irani, a fixed local interaction is repeated across a rigid architecture and the system size itself supplies the instance \cite{Gottesman2009}. 
Subsequent work reduced the required qudit dimension while retaining translational invariance \cite{Bausch2016}. 
SIH separates these two notions of uniformity: the local matrix is fixed, but the ordered interaction hypergraph remains instance dependent.

For geometrically constrained hardness, we build on the spatially sparse circuit-to-Hamiltonian construction of Oliveira and Terhal \cite{Oliveira2005} and its stoquastic extension by Waite and Bremner \cite{Waite2025}. 
The explicit layouts used here satisfy our bounded-degree and bounded-interaction-range definition after constant rescaling. 
These source constructions are then combined with local single-interaction auxiliary structures to obtain the geometrically local results of \cref{thm:unpinned-fixed,thm:StoqMA-complete}.

\subsection{Organization}
\label{subsec:organization}

The remainder of the paper is organized as follows.
\Cref{sec:background} reviews the complexity-theoretic and Hamiltonian background used throughout the work, and \cref{sec:sih-framework} defines the SIH and DSIH frameworks and their variants.

\Cref{sec:general_methodology} develops the single-interaction compilation machinery, including the exact finite-alphabet compilers, weighted normalization, singleton unweighting, and the universal reduction. 
\Cref{sec:hardness_single_interaction} then proves QMA-completeness of $\mathrm{SIH}_3$ and geometrically local $\mathrm{SIH}_8$, while \cref{sec:n-dependent-hardness} establishes the corresponding $n$-dependent results.

\Cref{sec:stoquastic-embedding-results} develops the stoquastic single-interaction constructions, including StoqMA-completeness of $\mathrm{Stoq\text{-}SIH}_3$ and QMA-completeness of $1$-Pinned $\mathrm{Stoq\text{-}SIH}_4$.
\Cref{sec:stoq-n-dependent-hardness} proves StoqMA-completeness of geometrically local $\mathrm{Stoq\text{-}DSIH}_9$, and \cref{sec:further-consequences} develops the qPCP, frustration-free, precise, and guided consequences.

Finally, \cref{sec:discussion} discusses implications and open directions. 
The appendices collect the geometrically local source constructions, explicit routing details, and the system-size decoding and uniformity bookkeeping used in the body.

\section{Background and Preliminaries}
\label{sec:background}

\subsection{Notation}

We use the following notation throughout the paper.

\begin{itemize}
\item \textbf{Operator support.}
Subscripts indicate the physical qubits on which an operator acts; for example, $M_{v_1,v_2}$ denotes the operator $M$ acting on qubits $v_1$ and $v_2$. When an interaction acts on several qubits, we also use argument notation such as $h(v_1,\ldots,v_k)$.

\item \textbf{Projectors.}
For a single-qubit state $\Ket{i}$, we write
\[
    \Pi^{\Ket{i}}=\Ket{i}\Bra{i}.
\]
In particular, we frequently use
$\Pi^{\Ket{0}}$, $\Pi^{\Ket{1}}$, and
$\Pi^{\Ket{-}}$.

\item \textbf{Registers.}
Multi-qubit registers are denoted by capital letters such as $W$ for a work register and $A$ for an auxiliary register. 
A subscript identifies a physical qubit within the register; for example, $W_i$ denotes the $i$-th qubit of $W$.

\end{itemize}

Throughout the reductions below, we freely pad verification circuits
with identity operations, or with exactly cancelling pairs of gates from
the relevant fixed gate set, so that the circuit length $T$ dominates
$|x|$ and any fixed polynomial in $|x|$ appearing in the relevant
completeness--soundness parameters. This padding leaves the acceptance
probabilities unchanged. Consequently, inverse-polynomial and
inverse-exponential bounds expressed in terms of $|x|$ remain,
respectively, inverse polynomial and inverse exponential in the final
Hamiltonian system size.

\subsection{Complexity Classes}

A promise problem is a pair $L=(L_{\mathrm{yes}},L_{\mathrm{no}})$ of disjoint sets of strings. 
An algorithm for $L$ is required to distinguish inputs in $L_{\mathrm{yes}}$ from inputs in $L_{\mathrm{no}}$; no condition is imposed on inputs outside $L_{\mathrm{yes}}\cup L_{\mathrm{no}}$.

\paragraph{Bounded-error quantum computation.}
The complexity class $\mathrm{BQP}$ consists of promise problems that can be decided with bounded error by polynomial-size uniform quantum circuits.

\begin{defn}[$\mathrm{BQP}$~\cite{Bernstein1997}]
\label{def:BQP}
A promise problem $L=(L_{\mathrm{yes}},L_{\mathrm{no}})$ is in $\mathrm{BQP}$ if there exists a deterministic polynomial-time classical algorithm $F$ such that, for every input string $x$, the algorithm $F(x)$ outputs a polynomial-size quantum circuit $U_x$ with a designated output qubit satisfying
\[
x\in L_{\mathrm{yes}}
\quad\Longrightarrow\quad
\Pr(U_x\text{ accepts})\ge \frac23,
\]
and
\[
x\in L_{\mathrm{no}}
\quad\Longrightarrow\quad
\Pr(U_x\text{ accepts})\le \frac13.
\]
As usual, these constants can be amplified exponentially close to $1$ and $0$, respectively, with only polynomial overhead.
\end{defn}

The complexity class \textbf{QMA} consists of promise problems for which YES instances admit polynomial-size quantum witnesses that can be verified by polynomial-size quantum circuits with bounded error.
For the circuit-to-Hamiltonian reductions used below, we work with exponentially small completeness and soundness error obtained by standard QMA amplification.

\begin{defn} [$QMA$, adapted from \cite{Kitaev2002}, \cite{Kempe2004}] \label{def:QMA}
A promise problem $L=(L_{\mathrm{yes}},L_{\mathrm{no}})$ is in \textbf{QMA} if there exists a deterministic polynomial-time classical algorithm $F$ and polynomials $n_0, n_1, m, t$ such that for every input string $x\in\left\{ 0,1\right\} ^{*}$, $F(x)$ outputs a quantum circuit $U_x$.
The quantum circuit $U_x$ acts on $n_0(|x|) + n_1(|x|) + m(|x|)$ qubits, consists of $t(|x|)$ gates from a finite, universal quantum gate set, and is followed by a measurement of a designated output qubit in the computational basis $\{\Ket{0},\Ket{1}\}$. The acceptance conditions are:

\begin{itemize}
    \item[] $\forall x \in L_{\mathrm{yes}} \quad \exists \Ket{\xi}\in\left(\mathbb{C}^{2}\right)^{\otimes m(|x|)} \quad \Pr\left(U_x \text{ accepts } \Ket{0}^{\otimes n_{0}}\otimes\Ket{1}^{\otimes n_{1}}\otimes\Ket{\xi} \right)\geq 1-\varepsilon $
    \item[] $\forall x \in L_{\mathrm{no}} \quad \forall \Ket{\xi}\in\left(\mathbb{C}^{2}\right)^{\otimes m(|x|)} \quad \Pr\left(U_x \text{ accepts } \Ket{0}^{\otimes n_{0}}\otimes\Ket{1}^{\otimes n_{1}}\otimes\Ket{\xi} \right)\leq \varepsilon$
\end{itemize}

where $\varepsilon=2^{-\Omega(|x|)}$. The term $\Pr(U_x \text{ accepts } \dots)$ denotes the probability that the measurement of the output qubit yields 1 after applying $U_x$ to the state $\Ket{0}^{\otimes n_{0}}\otimes\Ket{1}^{\otimes n_{1}}\otimes\Ket{\xi}$.
\end{defn}

\begin{rem}
A standard circuit formulation of QMA supplies the classical input $x$ to a uniform verifier and typically initializes ancillas in $\Ket{0}$.
For convenience, we instead hardwire $x$ into the verifier description and allow a fixed computational-basis ancilla string. 
This equivalent formulation avoids requiring the chosen fixed gate set itself to prepare every $\Ket{1}$ ancilla from $\Ket{0}$.
\end{rem}

\paragraph{Perfect completeness.}
The perfect-completeness variant of QMA, denoted $\mathrm{QMA}_1$~\cite{Bravyi2006b}, is defined as follows: YES instances admit a witness that is accepted with probability exactly one, while on NO instances every witness is accepted with probability at most $s<1$, where $1-s$ is at least inverse polynomial in the input size.

Grewal and Rudolph proved that perfect completeness does not change the power of quantum Merlin--Arthur verification.

\begin{thm}[Grewal--Rudolph~\cite{GrewalRudolph2026}]
\label{thm:qma-perfect-completeness}
\[
\mathrm{QMA}=\mathrm{QMA}_1.
\]
Moreover, the perfectly complete verifier may be taken to use only the fixed gate set
\[
\{H,X,\mathrm{Toffoli}\}.
\]
\end{thm}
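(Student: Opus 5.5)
The inclusion $\mathrm{QMA}_1\subseteq\mathrm{QMA}$ is immediate. A perfectly complete verifier with soundness $s\le 1-1/\mathrm{poly}$ is already a QMA verifier with a non-trivial gap. Standard parallel or Marriott--Watrous amplification with majority vote turns it into the exponentially small-error form of \cref{def:QMA}. All the content is in $\mathrm{QMA}\subseteq\mathrm{QMA}_1$ with gate set $\{H,X,\mathrm{Toffoli}\}$. The plan is to follow the structure of the argument in~\cite{GrewalRudolph2026}, which I only outline here.

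\emph{Step 1 (normalize the verifier).} Start from an arbitrary QMA verifier. Amplify it to completeness $1-2^{-p}$ and soundness $2^{-p}$ for a large polynomial $p$. Compile it into the real gate set $\{H,\mathrm{Toffoli}\}$ (plus $X$), which is universal for real amplitudes, using the standard realification trick. This costs one extra qubit and at most a constant factor in acceptance probability loss. After this step every amplitude of $U_x$ lies in the ring $\mathbb{Z}[1/\sqrt2]$. The acceptance operator $\Pi_{\mathrm{acc}}=U_x^\dagger \Pi^{\Ket{1}}_{\mathrm{out}}U_x$ restricted to the witness space then has entries in this ring. In particular, its eigenvalues are algebraic numbers whose conjugates are controlled by the Galois map $\sqrt2\mapsto-\sqrt2$.

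\emph{Step 2 (make completeness exact).} Choose a modified verifier $V'$ that, on YES instances, has a witness (the original witness augmented with auxiliary data the honest prover can supply) accepted with probability exactly one. The natural tool is exact amplitude amplification / oblivious reflection. If the initial acceptance amplitude $\sin\theta$ were a known value that admits an exact rotation in the gate set, one round of amplitude amplification preceded by a suitable partial rotation would map it to amplitude exactly $1$.

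The obstacle is that $\theta$ depends on the witness and the instance, and generic angles are not exactly implementable with $\{H,X,\mathrm{Toffoli}\}$. I would try to circumvent this by letting the prover supply part of the correction. The witness register will include an ancilla state encoding the ``missing'' amplitude $\cos\theta$, prepared in a form that $V'$ checks by a reflection rather than by synthesis. $V'$ then runs a Grover-type iteration composed of reflections about the verifier's accept subspace and about the prover-supplied state. Completeness becomes an algebraic identity among these reflections, which holds exactly because reflections built from $H$ and $\mathrm{Toffoli}$ are exact.

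\emph{Step 3 (soundness).} For NO instances, every eigenvalue of $\Pi_{\mathrm{acc}}$ on the witness space is at most $2^{-p}$. I would argue via Jordan's lemma that the two-reflection walk in $V'$ acts on each two-dimensional Jordan block as a rotation by an angle of order $2^{-p/2}$. Hence no prover-supplied correction state can raise acceptance above $1-1/\mathrm{poly}$. The key ingredient is that a cheating auxiliary state can only rotate within blocks whose principal angles are already tiny.

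I expect Step 2 to be the main difficulty. The delicate points are ensuring that honest completeness is \emph{exactly} $1$ while the checking procedure stays inside the fixed gate set, and ruling out cheating correction states in Step 3. This is precisely where the specific algebraic structure of $\mathbb{Z}[1/\sqrt2]$ and the choice $\{H,X,\mathrm{Toffoli}\}$ must be exploited. For the purposes of this paper, the result is used only as a black box.
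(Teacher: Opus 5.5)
\textbf{There is a genuine gap in your Step 2, and it undermines Step 3 as well.} For comparison, the paper gives no proof of this statement. It imports the theorem from Grewal--Rudolph and uses it only as a black box (in \cref{thm:ff-sih9}), so the citation is the entire justification, which matches your last sentence. The gap is in the outline you present as the structure of their argument.

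\textbf{Step 2.} Your correction register is meant to carry $\cos\theta$ for a witness-dependent $\theta$, so the verifier does not know this state. No unitary can implement $I-2\Ket{\varphi}\Bra{\varphi}$ from a single copy of an unknown $\Ket{\varphi}$, because the required map is not linear in $\Ket{\varphi}$. So ``checking it by a reflection'' is not an available operation. Exactness of $H$, $X$ and Toffoli does not rescue this. With a fixed number of reflections, amplitude amplification accepts with probability exactly $1$ only if the rotation is tuned to the precise eigenvalue of $\Pi_{\mathrm{acc}}$ on which the witness is supported. That eigenvalue is in general an algebraic number whose degree can be exponential in the witness length, and the verifier cannot synthesize it exactly. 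This is the difference from $\mathrm{QCMA}=\mathrm{QCMA}_1$: there a classical witness has an acceptance probability in $\mathbb{Z}[1/\sqrt2]$ with a polynomial-size description that the prover can simply send.

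\textbf{Step 3.} Jordan's lemma controls the principal angles between two subspaces fixed by the verifier. A reflection defined by a prover-chosen state is not such a subspace. Even if the operation were available, the prover could choose the state to overlap the accept subspace, so no $1-1/\mathrm{poly}$ soundness bound follows.

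\textbf{The underlying obstruction.} Everything you actually use in Steps 2--3 treats $U_x$ as a black box: amplification, Grover-type iterations, reflections about the accept subspace, prover-supplied auxiliary states, Jordan's lemma, and exactness of individual gates. Any argument built only from these would therefore go through relative to a quantum oracle. Aaronson has constructed a quantum oracle relative to which $\mathrm{QMA}\neq\mathrm{QMA}_1$, so no such argument can give $\mathrm{QMA}\subseteq\mathrm{QMA}_1$. The missing idea is exactly the non-relativizing use of the concrete gate set, which you defer with ``this is where the algebraic structure must be exploited.''

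The algebraic remark in Step 1 is never used afterwards. As stated it is also too weak: eigenvalues of a $2^m\times 2^m$ matrix over $\mathbb{Q}(\sqrt2)$ generally have far more conjugates than those produced by $\sqrt2\mapsto-\sqrt2$. Either cite the theorem, as the paper does, or supply a concrete non-relativizing mechanism for exact completeness together with a soundness argument that survives it.
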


\paragraph{Exponentially small verification gaps.}
We also use the exponentially precise variant of quantum Merlin--Arthur verification. 
Let $\mathrm{QMA}_{\exp}$ denote the class defined as in $\mathrm{QMA}$, except that the completeness and soundness parameters are only required to satisfy
\[
c-s\ge 2^{-p(n)}
\]
for some polynomial $p$.

Fefferman and Lin proved that allowing an inverse-exponentially small verification gap increases the power of QMA to polynomial space.

\begin{thm}[Fefferman--Lin~\cite{FeffermanLin2016}]
\label{thm:qma-exp-pspace}
\[
\mathrm{QMA}_{\exp}=\mathrm{PSPACE}.
\]
\end{thm}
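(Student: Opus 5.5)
The statement has two inclusions, and I would prove them separately. Throughout, $m$ denotes the witness length and $p$ the polynomial in the gap bound $c-s\ge 2^{-p(n)}$.

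\textbf{$\mathrm{QMA}_{\exp}\subseteq\mathrm{PSPACE}$.} Fix the verifier $U_x$ with witness register of $m$ qubits. The maximum acceptance probability is $\lambda_{\max}(Q)$, where
\[
Q=\bigl(\Bra{0^{n_0}1^{n_1}}\otimes I\bigr)\,U_x^\dagger\,\Pi^{\Ket{1}}_{\mathrm{out}}\,U_x\,\bigl(\Ket{0^{n_0}1^{n_1}}\otimes I\bigr).
\]
This is a positive semidefinite $2^m\times 2^m$ operator. Each entry of $Q$ is a Feynman path sum over polynomially many gates, so it can be computed to $2^{-\mathrm{poly}}$ precision in polynomial space.

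To compare $\lambda_{\max}(Q)$ against $c$ and $s$, I would use the trace bound
\[
\lambda_{\max}^r\le \operatorname{Tr}Q^r\le 2^m\lambda_{\max}^r .
\]
Taking $r=2^{q(n)}$ with $q$ a polynomial large enough that $(c/s)^r\ge(1+2^{-p})^r\gg 2^m$ separates the two cases. The power $Q^r$ is obtained by $q(n)$ successive squarings. Each squaring of an exponential-size matrix is a uniform circuit of polynomial depth in $\log(\text{dimension})$, so the whole computation is a uniform circuit of exponential size and polynomial depth. Such a circuit can be evaluated in $\mathrm{PSPACE}$ by Borodin's depth-to-space simulation.

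The main technical obstacle is numerical: after each squaring the entries must be truncated to polynomially many bits relative to the current scale, and I must check that the accumulated relative error stays below the $(1+2^{-p})^r$ margin. If this turns out to be delicate, I would instead compute the characteristic polynomial of $Q$ exactly via Csanky's $\mathrm{NC}^2$ algorithm scaled up to exponential size, and then isolate its largest root in polynomial space.

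\textbf{$\mathrm{PSPACE}\subseteq\mathrm{QMA}_{\exp}$.} Take a $\mathrm{PSPACE}$ language decided by a reversible machine using polynomial space and running for $T=2^{\mathrm{poly}}$ steps. Each step is a polynomial-size circuit $V$, and the steps are identical, so the whole computation is $V^T$. The intended witness is the history state
\[
\frac{1}{\sqrt{T+1}}\sum_{t=0}^{T}\Ket{t}\otimes V^t\Ket{\mathrm{init}},
\]
with the clock $t$ stored in binary on polynomially many qubits. The verifier picks at random one of three Kitaev-type checks and measures the corresponding projector:
\begin{itemize}
\item an initialization check at $t=0$;
\item an output check at $t=T$;
\item a propagation check comparing $\Ket{t}\otimes\Ket{\psi}$ with $\Ket{t+1}\otimes V\Ket{\psi}$.
\end{itemize}
The propagation check is implementable because a binary increment controlled on $V$ is a polynomial-size circuit.

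For completeness, the history state makes the rejection probability exactly the output-penalty term. This is $O(1/T)\cdot\varepsilon$, which is negligible.

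For soundness, the propagation operator is unitarily equivalent to a path-graph Laplacian with spectral gap $\Omega(1/T^2)$. Kitaev's geometric (projection) lemma then gives rejection probability $\Omega(1/T^3)$ on NO instances.

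The resulting completeness–soundness gap is therefore $2^{-\mathrm{poly}}$, which is exactly what $\mathrm{QMA}_{\exp}$ permits. The step I expect to need the most care is this soundness bound: I must verify that the projection lemma applies with the exponentially small gaps, and that all constants remain inverse-exponential rather than degrading further.
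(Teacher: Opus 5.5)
The paper gives no proof of this statement. It imports it from Fefferman--Lin \cite{FeffermanLin2016} as a black box, so your sketch is a self-contained reconstruction rather than a variant of an argument in the text, and it is sound.

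\textbf{Comparison.} Your upper bound is the standard one: the acceptance operator $Q$ on the $2^m$-dimensional witness space, a $2^{\mathrm{poly}}$-size polynomial-depth linear-algebra circuit, and Borodin's simulation. Your lower bound, a binary-clock history state of an exponentially long reversible computation, also works as you describe.

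The soundness worry you flag does not materialize. There is no free witness register, so Kitaev's $\epsilon$ is $0$. The initialization and output penalties together charge every history state at least $1/(T+1)$: split the initial state as $\alpha\Ket{\mathrm{init}}+\beta\Ket{\perp}$ and use that $V^T\Ket{\mathrm{init}}$ rejects and is orthogonal to $V^T\Ket{\perp}$. This gives $\sin^2\theta\ge 1/(T+1)$. The geometric lemma is dimension-free, so $\Omega(T^{-2})\cdot\Omega(T^{-1})=\Omega(T^{-3})$ holds verbatim for $T=2^{\mathrm{poly}}$.

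What your route buys over the citation is an explicit verifier with $c=1$ and $1-s=\Omega(T^{-3})$. This matters for the paper. $\mathrm{QMA}_{\exp}$ admits no general amplification, so the bare equality does not yield completeness error negligible relative to $1-s$. Yet the proof of \cref{thm:precise-geometric-sih8} needs exactly that shape ($1-c=\epsilon$ and $1-s=d-\epsilon$ with $\epsilon\ll d$), and obtains it by appealing to the specific verifier parameters of Fefferman--Lin. Your construction supplies it directly, with $\epsilon=0$. The citation, in exchange, is shorter and comes packaged with the Precise Local Hamiltonian completeness result of \cref{thm:precise-local-hamiltonian}.

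\textbf{Small points to tighten.}
\begin{itemize}
\item The summed propagation term is not a projector, since adjacent $H_t$ do not commute. Either sample $t$ uniformly, losing a harmless factor $1/T$, or split the sum into even and odd pairs. Each of these is a projector, measured by applying $V^\dagger$ controlled on the lowest clock bit (after a clock decrement for the odd pairs) and then measuring that bit in the $\{\Ket{+},\Ket{-}\}$ basis.
\item Take $T+1=2^k$ so that no clock value is illegal.
\item A reversible step cannot simply idle after halting, so make the answer readable at exactly time $T$, e.g.\ with a counter incremented on entering the accepting configuration.
\item For the numerics in the upper bound, keep each $A_i\approx Q^{2^i}$ as a power-of-two scale times polynomial-bit entries, with error measured relative to $\|A_i\|$. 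Because $A_i\succeq 0$ gives $\|A_i^2\|=\|A_i\|^2$, the relative error obeys $\eta_{i+1}\le 2\eta_i+\eta_i^2+\eta_0$. Hence $O(q+m)$ bits of working precision suffice, and the Csanky fallback is unnecessary.
\end{itemize}
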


\paragraph{StoqMA.}
Stoquastic Hamiltonians motivate a corresponding restricted verification model. The complexity class StoqMA captures promise problems verified by quantum circuits whose computational gates are classical reversible operations, supplemented by prescribed $\Ket{0}$, $\Ket{1}$, and $\Ket{+}$ ancillas and a final measurement in the Hadamard basis.

\begin{defn} [$StoqMA$, adapted from \cite{Bravyi2006}] \label{def: StoqMA}
A promise problem $L=(L_{\mathrm{yes}},L_{\mathrm{no}})$ is in \textbf{StoqMA} if there exists a deterministic polynomial-time classical algorithm $F$ and polynomials $n_0, n_1, n_+, m, t$ such that for every input string $x\in\left\{ 0,1\right\} ^{*}$, $F(x)$ outputs a quantum circuit $U_x$. 
The circuit $U_x$ acts on $n_0(|x|) + n_1(|x|) + n_+(|x|) + m(|x|)$ qubits, consists of $t(|x|)$ gates from a finite gate set that is universal for classical reversible computation (i.e., operations restricted to permutations of the computational basis), and is followed by a measurement of a designated output qubit in the Hadamard basis $\{\Ket{+},\Ket{-}\}$. The acceptance conditions are:

\begin{itemize}
    \item[] $\forall x \in L_{\mathrm{yes}} \quad \exists \Ket{\xi}\in\left(\mathbb{C}^{2}\right)^{\otimes m(|x|)} \quad \Pr\left(U_x \text{ accepts } \Ket{0}^{\otimes n_{0}}\otimes\Ket{1}^{\otimes n_{1}}\otimes\Ket{+}^{\otimes n_{+}}\otimes\Ket{\xi} \right)\geq \epsilon_{\mathrm{yes}} $
    \item[] $\forall x \in L_{\mathrm{no}} \quad \forall \Ket{\xi}\in\left(\mathbb{C}^{2}\right)^{\otimes m(|x|)} \quad \Pr\left(U_x \text{ accepts } \Ket{0}^{\otimes n_{0}}\otimes\Ket{1}^{\otimes n_{1}}\otimes\Ket{+}^{\otimes n_{+}}\otimes\Ket{\xi} \right)\leq \epsilon_{\mathrm{no}}$
\end{itemize}

where the threshold probabilities satisfy $0\leq \epsilon_{\mathrm{no}}<\epsilon_{\mathrm{yes}}\leq1$ and have a polynomial separation: $\epsilon_{\mathrm{yes}}-\epsilon_{\mathrm{no}} \geq \frac{1}{\operatorname{poly}(|x|)}$. 
The term $\Pr(U_x \text{ accepts } \dots)$ denotes the probability that the measurement of the output qubit yields outcome $+$ after applying $U_x$ to the state $\Ket{0}^{\otimes n_{0}}\otimes\Ket{1}^{\otimes n_{1}}\otimes\Ket{+}^{\otimes n_{+}}\otimes\Ket{\xi}$.
\end{defn}

\begin{rem}
Unlike QMA, which admits exponential amplification of its completeness and soundness parameters~\cite{Marriott2005}, no general error-reduction procedure is known for StoqMA that preserves the stoquastic-verifier structure~\cite{Aharonov2020}. 
We therefore retain generic completeness and soundness thresholds separated by an inverse-polynomial gap.
\end{rem}

The complexity classes used in this work satisfy the standard containments
\begin{align}
\mathrm{NP}
\subseteq
\mathrm{MA}
\subseteq
\mathrm{StoqMA}
\subseteq
\mathrm{QMA}
\subseteq
\mathrm{PP}
\subseteq
\mathrm{PSPACE},
\end{align}
together with
\[
\mathrm{BQP}\subseteq\mathrm{QMA},
\qquad
\mathrm{QMA}=\mathrm{QMA}_1,
\qquad
\mathrm{QMA}_{\exp}=\mathrm{PSPACE}.
\]
The inclusion $\mathrm{MA}\subseteq\mathrm{StoqMA}\subseteq\mathrm{QMA}$ follows from the verifier definitions~\cite{Bravyi2006}, while $\mathrm{QMA}\subseteq\mathrm{PP}$ follows from \cite{Marriott2005}. 
The equality $\mathrm{QMA}=\mathrm{QMA}_1$ is due to Grewal and Rudolph~\cite{GrewalRudolph2026}, and $\mathrm{QMA}_{\exp}=\mathrm{PSPACE}$ is due to Fefferman and Lin~\cite{FeffermanLin2016}.

\subsection{The Local Hamiltonian Problems}

The $k$-Local Hamiltonian problem is the central decision problem of Hamiltonian complexity. An instance specifies a Hamiltonian $H=\sum_i H_i$ whose terms each act nontrivially on at most $k$ qubits, and asks whether the ground-state energy of $H$ lies below a threshold $a$ or above a threshold $b$, subject to an inverse-polynomial promise gap.

\begin{defn} [$k$-Local Hamiltonian Problem, adapted from \cite{Kitaev2002}, \cite{Kempe2004}]\mbox{}\\
\noindent\textbf{Problem Input:} A set of $m = \operatorname{poly}(n)$ Hermitian matrices $H_1, \dots, H_m$ acting on a system of $n$ qubits. 
Each term $H_i$ acts nontrivially on at most $k$ qubits, satisfies the operator norm bound $\|H_i\| \leq \operatorname{poly}(n)$ and its entries are specified by $\operatorname{poly}(n)$ bits. 
The total Hamiltonian is given by $H = \sum_{i=1}^m H_i$. 
Additionally, the input includes two real numbers $a$ and $b$, specified with $\operatorname{poly}(n)$ bits of precision, such that $b-a \geq 1/\operatorname{poly}(n)$.

\noindent\textbf{Promise:}  One of the following holds:
\begin{itemize}
    \item There exists a quantum state $\Ket{\phi}$ such that its energy with respect to $H$ is at most $a$ (i.e., $\Bra{\phi}H\Ket{\phi} \leq a$).
    \item For every quantum state $\Ket{\phi}$, its energy is at least $b$ (i.e., $\Bra{\phi}H\Ket{\phi} \geq b$).
\end{itemize}

\noindent\textbf{Output:} Determine which of these two possibilities is the case.
\end{defn}

\paragraph{Stoquastic Local Hamiltonian.}
A Hamiltonian is called \emph{stoquastic} in the computational basis if all of its off-diagonal matrix elements are real and non-positive. 
The Stoquastic $k$-Local Hamiltonian problem is obtained by imposing this condition on every local term of a standard $k$-Local Hamiltonian instance.

\begin{defn} [Stoquastic $k$-Local Hamiltonian Problem, adapted from \cite{Bravyi2006a}]
The Stoquastic $k$-Local Hamiltonian Problem is defined exactly as the $k$-Local Hamiltonian Problem, with the additional constraint that each term $H_i$ in the input must be stoquastic in the standard computational basis. 
That is, for every $i \in [m]$ and all standard basis states $\Ket{x} \neq \Ket{y}$, the matrix elements satisfy:
\[ \Bra{x} H_i \Ket{y} \leq 0 \]
and are real-valued.
\end{defn}

\paragraph{Pinned Local Hamiltonian.}
In a pinned Local Hamiltonian problem, the energy minimization is restricted to states in which a specified set of qubits is fixed to a prescribed state. Equivalently, the optimization is performed only over the remaining unpinned qubits.

\begin{defn} [$p$-Pinned $k$-Local Hamiltonian Problem, adapted from \cite{Nagaj2020}]\mbox{}\\
\noindent\textbf{Problem Input:} A set of $m = \operatorname{poly}(n)$ Hermitian matrices $H_1, \dots, H_m$ acting on a system of $n$ qubits. 
Each term $H_i$ acts nontrivially on at most $k$ qubits, satisfies the operator norm bound $\|H_i\| \leq \operatorname{poly}(n)$ and its entries are specified by $\operatorname{poly}(n)$ bits. 
The total Hamiltonian is given by $H = \sum_{i=1}^m H_i$. 
Additionally, the input includes a description of a fixed quantum state $\Ket{\psi}$ on $p$ qubits, where $p=\operatorname{poly}(n)$, and two real numbers $a$ and $b$, specified with $\operatorname{poly}(n)$ bits of precision, such that $b-a \geq 1/\operatorname{poly}(n)$.

\noindent\textbf{Promise:} One of the following holds:
\begin{itemize}
    \item There exists a quantum state $\Ket{\phi}$ on $n-p$ qubits such that the energy of the joint $n$-qubit state $\Ket{\phi} \otimes \Ket{\psi}$ with respect to $H$ is at most $a$ (i.e., $(\Bra{\phi}\otimes\Bra{\psi}) H (\Ket{\phi}\otimes\Ket{\psi}) \leq a$).
    \item For any quantum state $\Ket{\phi}$ on $n-p$ qubits, the energy of the joint state $\Ket{\phi} \otimes \Ket{\psi}$ with respect to $H$ is at least $b$.
\end{itemize}

\noindent\textbf{Output:} Determine which of these two possibilities is the case.
\end{defn}

For every fixed $p\geq 1$ and $k\geq 2$, the $p$-Pinned $k$-Local Hamiltonian problem is QMA-complete. 
QMA-hardness follows from the QMA-completeness of $k$-Local Hamiltonian~\cite{Kempe2004} by adjoining $p$ unused pinned qubits, while containment in QMA is immediate.

\paragraph{Pinned stoquastic Local Hamiltonian.}
Combining the two restrictions gives the pinned stoquastic variant.

\begin{defn} [$p$-Pinned Stoquastic $k$-Local Hamiltonian Problem, adapted from \cite{Nagaj2020}]\mbox{}\\
\noindent The $p$-Pinned Stoquastic $k$-Local Hamiltonian Problem is defined as the $p$-Pinned $k$-Local Hamiltonian Problem, with the additional constraint that each term $H_i$ in the input Hamiltonian must be stoquastic in the standard computational basis.
\end{defn}

However, pinning can substantially change the complexity of stoquastic Local Hamiltonian problems. Nagaj et al.~\cite{Nagaj2020} showed that fixing a single qubit is already sufficient to obtain QMA-completeness.

\begin{thm}[\cite{Nagaj2020}]
The $1$-Pinned Stoquastic $3$-Local Hamiltonian problem is \textbf{QMA}-complete.
\end{thm}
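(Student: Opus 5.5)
The statement to prove is the Nagaj et al.\ theorem: $1$-Pinned Stoquastic $3$-Local Hamiltonian is QMA-complete. The proof has two directions.

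\textbf{Containment.} Containment in QMA is the easy direction. Pinning one qubit to a fixed, efficiently described state $\Ket{\psi}$ can be absorbed by the verifier. The verifier prepares $\Ket{\psi}$ itself on the pinned qubit and runs the standard Kitaev phase-estimation or energy-measurement verifier for $H$ on $\Ket{\phi}\otimes\Ket{\psi}$. Since $H$ is $3$-local with polynomially bounded norms, the usual $k$-Local Hamiltonian containment argument applies verbatim.

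\textbf{Hardness: the plan.} For QMA-hardness I would start from a QMA-complete real local Hamiltonian and apply the stoquastic sign embedding, using the pinned qubit to fix the sign convention.

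\textbf{Step 1: make the Hamiltonian real.} Take a QMA-complete $k$-local Hamiltonian with real entries. Realification suffices for this: replace each $H_i$ by $\mathrm{Re}(H_i)\otimes I + \mathrm{Im}(H_i)\otimes iY$ on one extra qubit. This preserves the spectrum up to doubling and costs one locality. Alternatively, one can directly use Kitaev's construction over a real universal gate set such as Toffoli and Hadamard, with locality reduced to $2$ or $3$ via gadgets.

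\textbf{Step 2: the sign embedding.} Every real off-diagonal entry $\Bra{x}H_i\Ket{y}$ is split into its positive and negative parts. Introduce a sign qubit $s$ and define the lift
\[
\tilde H_i=\sum_{x,y}\bigl(\mathrm{neg}_{xy}\,\Ket{x}\Bra{y}\otimes I_s+\mathrm{pos}_{xy}\,\Ket{x}\Bra{y}\otimes X_s\bigr),
\]
with diagonal terms tensored with $I$. This is arranged so that all off-diagonal entries become nonpositive after conjugating $X_s$ into $-X_s$ through a basis change. Concretely, write the lift as $A\otimes I - B\otimes X_s$ with $A,B\ge 0$ entrywise off-diagonal, so that on $\Ket{-}_s$ it reproduces $H_i$. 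The lifted Hamiltonian then block-diagonalizes into $H$ on the $\Ket{-}$ sector and a stoquastic-friendly $H^{\mathrm{abs}}$ on the $\Ket{+}$ sector. Pinning $s$ to $\Ket{-}$ restricts exactly to the original spectrum. In the computational basis the operator is stoquastic because $-B\otimes X$ has nonpositive entries. This yields a $1$-Pinned Stoquastic $(k+1)$-Local Hamiltonian whose pinned-sector ground energy equals $\lambda_{\min}(H)$.

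\textbf{Main obstacle: locality.} The difficulty is bringing the locality down to $3$, which requires the real source Hamiltonian to be $2$-local. I would start from a QMA-complete real $2$-local Hamiltonian, for example the real $XX{+}ZZ$ or Heisenberg-type families of Cubitt--Montanaro and Piddock--Montanaro, which are real and QMA-complete. The sign embedding adds one qubit, giving locality $3$. I would then check three things: the shared sign qubit acts consistently across all terms, which holds because $X_s$ commutes with every term and the $\Ket{\pm}$ sectors decouple globally; the norm and precision bounds stay polynomial; and the thresholds $a,b$ transfer unchanged.
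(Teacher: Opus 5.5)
Your proposal is correct and uses the same mechanism the paper reviews for this cited result and then applies in \cref{cor:real-stoquastic-embedding} and \cref{thm:pinned-stoq-sih4}: take a real QMA-complete $2$-local source such as the positive-weight Heisenberg model, form the termwise lift $R^-\otimes I_c - R^+\otimes X_c$ with a single shared sign qubit pinned to $\Ket{-}$, and obtain containment by having the verifier prepare the pinned qubit itself. Two small slips. Your first displayed lift, with $+\mathrm{pos}_{xy}\otimes X_s$, reproduces $H_i$ on $\Ket{+}_s$ and becomes stoquastic only after conjugation by $Z_s$. In $A\otimes I-B\otimes X_s$ it is $B$ that is entrywise nonnegative, while $A$ has nonpositive off-diagonals. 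Your ``concretely'' form is the right one, and the realification step is unnecessary once you start from a real source.
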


Consequently, the problem remains QMA-complete for every fixed $p\geq1$ and $k\geq3$: additional pinned qubits may be left unused, and a $3$-local Hamiltonian is also $k$-local for every $k\geq3$.

Thus stoquasticity of the local terms does not preclude QMA-hardness when the energy minimization is restricted to a pinned sector. 
This observation motivates the pinned stoquastic single-interaction problems considered later in the paper.

\paragraph{Exponentially precise Local Hamiltonian.}
For the exponentially precise setting, the Local Hamiltonian promise gap is allowed to be inverse exponential rather than inverse polynomial.

\begin{defn}[Precise $k$-Local Hamiltonian Problem, adapted from \cite{FeffermanLin2016}] \label{def:precise-local-hamiltonian} The \emph{Precise $k$-Local Hamiltonian Problem} is defined identically to the $k$-Local Hamiltonian Problem, except that the promise-gap condition is replaced by
\[
b-a \ge 2^{-p(n)}
\]
for some fixed polynomial $p$.
\end{defn}

Fefferman and Lin proved that allowing an inverse-exponentially small promise gap raises the complexity of Local Hamiltonian from QMA to PSPACE.

\begin{thm}[Fefferman--Lin~\cite{FeffermanLin2016}]
\label{thm:precise-local-hamiltonian}
For every fixed $k\ge 3$, the Precise $k$-Local Hamiltonian problem is PSPACE-complete.
\end{thm}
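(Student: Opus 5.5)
The plan has two halves: containment in PSPACE via the exponentially precise verifier class, and PSPACE-hardness by a circuit-to-Hamiltonian construction that keeps track of energy scales. Both halves use \cref{thm:qma-exp-pspace} ($\mathrm{QMA}_{\exp}=\mathrm{PSPACE}$).

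\textbf{Containment.} I would show that Precise $k$-Local Hamiltonian lies in $\mathrm{QMA}_{\exp}$. Normalize $H$ so that $0\preceq H'\preceq 1$, with $H'=(H+\sum_i\|H_i\|)/(2\sum_i\|H_i\|)$. The gap $b-a$ then shrinks by only a polynomial factor, so it is still at least $2^{-\operatorname{poly}}$. The verifier receives the claimed ground state $\Ket{\phi}$, samples a term $i$, and measures $H'_i$ in Kitaev's standard way. This gives acceptance probability $1-\Bra{\phi}H'\Ket{\phi}/m$ up to affine rescaling, so completeness and soundness differ by $(b-a)/\operatorname{poly}$, which is exponentially small but admissible. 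One subtlety is that the local measurements must be implemented exactly, or with error well below $2^{-p(n)}$. I would handle this by allowing circuits whose gates have polynomially many bits of description, or, equivalently, by invoking directly the space-bounded spectral algorithm of Fefferman--Lin: estimating $\lambda_{\min}$ to exponential precision in polynomial space, for instance via powering $H$ and approximating its trace with polynomial-space linear algebra. I expect the direct PSPACE route to be cleaner, since exact linear algebra on $2^n$-dimensional matrices with entries of polynomially many bits can be done in polynomial space.

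\textbf{Hardness.} Take a PSPACE problem and, by \cref{thm:qma-exp-pspace}, a $\mathrm{QMA}_{\exp}$ verifier with $c-s\ge 2^{-p}$. Pad it so that $T$ is large, as in the notation section. Then apply Kitaev's 5-local construction, or the 3-local construction of \cite{Kempe2004}, without amplification. The key steps are as follows.

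\begin{itemize}
\item On YES instances, the history state has energy at most $(1-c)/(T+1)$.
\item On NO instances, the projection lemma together with the spectral gap $\Omega(1/T^2)$ of the propagation term gives energy at least $(1-s)/\operatorname{poly}(T)-\delta$. The penalty weights are polynomial and are chosen so that the error $\delta$ is negligible relative to $(c-s)/\operatorname{poly}$.
\end{itemize}

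\textbf{Main obstacle.} The hard step is exactly this: relating the two thresholds when $1-c$ itself may be non-negligible. The YES energy bound carries a factor $1/(T+1)$, while the NO lower bound carries a factor $1/\operatorname{poly}(T)$, and these factors differ. With $c$ and $s$ both close to $1/2$, the bounds need not separate.

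My first attempt would be to use the exact Kitaev analysis. The lowest eigenvalue restricted to the history subspace is governed by the largest eigenvalue of a product of projectors, of the form $1-\cos\theta$ with $\cos^2\theta$ equal to the maximum acceptance probability. The ground energy is then a monotone function of the maximum acceptance probability with inverse-polynomial derivative. Combined with perturbative control at exponential precision, which holds because the penalties can be made exponentially large while still specified by polynomially many bits, this yields an exponentially small but nonzero gap.

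The remaining step is to make the bound $\|H_i\|\le\operatorname{poly}(n)$ compatible with this control. I would achieve that by rescaling the whole Hamiltonian and shrinking $b-a$ by a further exponential factor, which is still allowed.
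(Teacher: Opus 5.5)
Your hardness argument works once it is stated cleanly, but it follows a different route from the one the paper relies on. The paper only cites Fefferman--Lin, and its own reuse of their construction in the proof of \cref{thm:precise-geometric-sih8} shows the mechanism. There the PSPACE verifier has $1-c=\epsilon$ and $1-s=d-\epsilon$, with $d=2^{-g(|x|)}$ fixed and $\epsilon$ free to be made exponentially smaller than $d/T^{2}$. With that near-perfect completeness, the plain unit-weight Kitaev bounds $a\le \alpha/(T+1)$ and $b\ge c_0(d-\alpha)/(2T^3)$ already separate. The $1/(T+1)$ versus $1/\operatorname{poly}(T)$ mismatch that you call the main obstacle is then simply dominated.

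You instead accept an arbitrary $\mathrm{QMA}_{\exp}$ verifier and remove the mismatch by weighting the terms. Your route needs no special completeness structure from the PSPACE protocol. The Fefferman--Lin route needs that structure, but it yields a Hamiltonian whose terms all have unit weight. That difference matters in this paper. In \cref{thm:precise-geometric-sih8} every term is an unweighted copy of one fixed interaction. An exponentially small output coefficient cannot be emulated by polynomially many repeated occurrences, so your construction would not transfer there.

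State the weighting step directly, not through the claimed ``monotone function with inverse-polynomial derivative,'' which you assert but never derive for the unweighted Hamiltonian.
\begin{itemize}
\item Put weight $J=2^{\operatorname{poly}}$ on $H_{\mathrm{in}}+H_{\mathrm{prop}}+H_{\mathrm{clock}}$. Its null space is spanned by correctly initialized legal history states, and its gap is $J/\operatorname{poly}(T)$.
\item The compression of $H_{\mathrm{out}}$ to that null space is exactly $\frac{1}{T+1}$ times the rejection operator, with minimum $\frac{1-p_{\max}}{T+1}$. The dependence on $p_{\max}$ is therefore affine.
\item The projection lemma then gives $\frac{1-p_{\max}}{T+1}-\frac{\operatorname{poly}(T)}{J}\le\lambda_{\min}\le\frac{1-p_{\max}}{T+1}$.
\item Take $J\gg\operatorname{poly}(T)/(c-s)$ and divide the whole Hamiltonian by $J$.
\end{itemize}

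Your bullet asserting polynomial penalty weights contradicts this. With polynomial weights the projection-lemma error is only inverse polynomial, and it swamps the exponentially small $c-s$.

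For $k=3,4$ you must use the Kempe--Kitaev--Regev construction with every nested projection step at exponential weight, since Kitaev's construction only covers $k\ge 5$.

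Containment is fine via the term-sampling verifier. Solovay--Kitaev reaches accuracy $2^{-q}$ with $\operatorname{poly}(q)$ gates, so you need neither a special gate model nor a separate spectral algorithm.
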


\paragraph{Constant-relative gaps and the Hamiltonian quantum PCP conjecture.}
For a Local Hamiltonian
\[
H=\sum_{i=1}^{M}H_i,
\qquad
0\preceq H_i\preceq I,
\]
the promise gap is said to be \emph{constant relative} when
\[
b-a\ge cM
\]
for some constant $c>0$ independent of the instance. 
Equivalently, the promise gap is a constant fraction of the natural extensive energy
scale set by the number of local terms.

The Hamiltonian quantum PCP conjecture asks whether there exists a constant locality $k$ for which the $k$-Local Hamiltonian problem remains QMA-hard in this constant-relative-gap regime; see, for example, \cite{AharonovAradVidick2013}.

\paragraph{Guided Local Hamiltonians.}
The Guided Local Hamiltonian problem supplements a Local Hamiltonian instance with a classically described state that is promised to have substantial overlap with its ground space. We use the following class of guiding states.

\begin{defn}[Semi-classical state~\cite{Gharibian2021}]
\label{def:semiclassical-subset-state}
An $n$-qubit state $\Ket{u}$ is \emph{semi-classical} if there exists
a nonempty subset
\[
S\subseteq\{0,1\}^{n},
\qquad
|S|=\operatorname{poly}(n),
\]
such that
\[
\Ket{u}
=
\frac{1}{\sqrt{|S|}}
\sum_{x\in S}\Ket{x}.
\]
The classical description of $\Ket{u}$ is the explicit enumeration of the elements of $S$.
\end{defn}

Following the terminology of Cade et al.~\cite{CadeEtAl2023}, we also refer to such states as \emph{semi-classical subset states}.

\begin{defn}[Guided $k$-Local Hamiltonian Problem~\cite{Gharibian2021}]
\label{def:guided-local-hamiltonian}
Let $k\ge 2$, let $a,b\in[0,1]$ satisfy $a<b$, and let
$\delta\in(0,1]$.

An instance of the \emph{Guided $k$-Local Hamiltonian Problem} consists of a $k$-local Hamiltonian $H$ acting on $n$ qubits with
\[
\|H\|\le 1,
\]
together with the classical description of a semi-classical state $\Ket{u}$.

Let $\lambda_H$ denote the ground-state energy of $H$, and let $\Pi_H$ denote the orthogonal projector onto its ground space. 
The instance is promised to satisfy
\[
\|\Pi_H\Ket{u}\|\ge\delta,
\]
and
\[
\lambda_H\le a
\qquad\text{or}\qquad
\lambda_H\ge b.
\]
The task is to decide which of the two cases holds.
\end{defn}

Cade et al.~\cite{CadeEtAl2023} established BQP-hardness for Guided 2-Local Hamiltonian even when the guiding state has overlap $1-1/\operatorname{poly}(n)$ with the ground space.

\begin{cor}[Cade et al.~\cite{CadeEtAl2023}]
\label{cor:guided-local-hamiltonian-hardness}
There exist polynomials $p$ and $q$ and thresholds $a,b\in[0,1]$ satisfying
\[
b-a\ge \frac{1}{p(n)}
\]
such that the Guided $2$-Local Hamiltonian problem is BQP-hard even under the promise
\[
\|\Pi_H\Ket{u}\|
\ge
1-\frac{1}{q(n)},
\]
where $\Ket{u}$ is a semi-classical subset state.
\end{cor}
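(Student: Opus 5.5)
This statement is quoted from Cade et al.\ rather than proved here, but the proof I would reconstruct has three stages. First, build a circuit-to-Hamiltonian instance whose unique ground state is a history state concentrated on an easily described classical configuration. Second, read off a semi-classical guiding state from that history state. Third, reduce locality to two while keeping both the ground-state overlap and the promise gap under control.

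\textbf{Stage 1: the Hamiltonian.} Start from a BQP circuit $U_x$ in the sense of \cref{def:BQP}, amplified so that its acceptance probability is exponentially close to $1$ or $0$. Pad it at the front with $L=\operatorname{poly}(|x|)$ identity gates; this is the pre-idling step, and the padding convention fixed in the notation section allows it. Since there is no witness, every qubit is initialized. The Feynman--Kitaev operator $H_0=H_{\mathrm{in}}+H_{\mathrm{clock}}+H_{\mathrm{prop}}$ then has a unique frustration-free ground state, the history state $\Ket{\eta}=\frac{1}{\sqrt{T+1}}\sum_t U_t\cdots U_1\Ket{0^n}\Ket{t}$, where $T$ is the padded length. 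Its spectral gap is $\Delta\ge 1/\operatorname{poly}(T)$. Set $H=H_0+\epsilon H_{\mathrm{out}}$ with $\epsilon\ll\Delta$, where $H_{\mathrm{out}}$ penalizes rejection at the final clock time.

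\textbf{Stage 2: promise gap and guiding state.} First-order perturbation theory gives
\[
\lambda_H=\epsilon\Bra{\eta}H_{\mathrm{out}}\Ket{\eta}+O\!\left(\epsilon^2/\Delta\right)=\frac{\epsilon\,(1-p_{\mathrm{acc}})}{T+1}+O\!\left(\epsilon^2/\Delta\right).
\]
So YES and NO instances are separated by an inverse-polynomial gap $b-a$. Rescaling by the total norm then gives $\|H\|\le1$ with $a,b\in[0,1]$.

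For the guiding state, note that during the first $L$ steps the work register stays $\Ket{0^n}$. Take
\[
\Ket{u}=\Ket{0^n}\otimes\frac{1}{\sqrt{L+1}}\sum_{t=0}^{L}\Ket{t}_{\mathrm{clock}},
\]
in a unary clock encoding. This is a uniform superposition over $L+1$ basis strings, hence semi-classical in the sense of \cref{def:semiclassical-subset-state}. Its overlap with the history state is $\sqrt{(L+1)/(T+1)}$. Choosing $L\gg T_{\mathrm{orig}}$ makes this $1-1/\operatorname{poly}$.

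The true ground state of $H$ differs from $\Ket{\eta}$ by $O(\epsilon/\Delta)$ by the Davis--Kahan bound. Choosing $\epsilon$ as a sufficiently small inverse polynomial in $\Delta$ keeps the total loss within $1/q(n)$ while $b-a$ stays inverse polynomial.

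\textbf{Stage 3: locality reduction (main obstacle).} The Kitaev terms are $5$-local, or roughly $3$-local with a unary clock. To reach two-local terms I would first try the Kempe--Kitaev--Regev subdivision and three-to-two gadgets, or a Schrieffer--Wolff analysis, to produce a $2$-local $\tilde H$. The requirement is that the low-energy subspace of $\tilde H$ is close, via an isometry, to that of $H$ tensored with a fixed mediator state.

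The difficulties, and how I would handle each:
\begin{itemize}
\item \emph{Keeping the guiding state semi-classical.} Choose gadgets whose unperturbed mediator ground states are computational-basis product states, so that appending them to $\Ket{u}$ keeps it semi-classical.
\item \emph{Keeping the overlap.} The gadget analysis must bound the ground-state deviation, not only the energy shift. The induced error has to be $1/\operatorname{poly}$ smaller than the gap of the effective Hamiltonian, so that the overlap stays $1-1/q(n)$.
\item \emph{Keeping the promise gap.} The gadget energy errors must remain below $b-a$.
\item \emph{Final normalization.} Rescale $\tilde H$ to norm at most $1$, which shrinks $b-a$ by only a polynomial factor.
\end{itemize}
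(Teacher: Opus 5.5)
The paper gives no proof of this statement. It imports the result from Cade et al.\ and adds only the remark that their instance has a unique, inverse-polynomially gapped ground state. Your reconstruction follows essentially the route of that source. Pre-idling makes $\Ket{0^n}\otimes\frac{1}{\sqrt{L+1}}\sum_{t\le L}\Ket{t}$ a semi-classical state with overlap $\sqrt{(L+1)/(T+1)}$ with the history state. The weak output penalty $\epsilon\ll\Delta$ keeps the unique ground state near that history state in both YES and NO cases, and it also supplies the uniqueness and gap that the paper later relies on in \cref{thm:guided-sih9}. Perturbative locality reduction then finishes the argument.

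One caution on Stage~3: the Kempe--Kitaev--Regev three-to-two gadget violates your own first bullet. Its mediator code space is $\mathrm{span}\{\Ket{000},\Ket{111}\}$ and the simulated term carries a logical $X$, so the low-energy mediator states are cat states. With $G=\operatorname{poly}(n)$ such gadgets, the encoded ground state is a signed superposition over $|S|\,2^{G}$ strings, which is not semi-classical. Its overlap with $\Ket{u}\otimes\Ket{0\cdots0}$ also drops to roughly $2^{-G/2}$.

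You should instead use only single-mediator gadgets with penalty $\Lambda\,\Pi^{\Ket{1}}_w$, namely subdivision gadgets together with the Oliveira--Terhal three-to-two gadget. Take $\Lambda$ polynomially large in $G$ so that the summed eigenvector deviation of all gadgets stays below $1/q(n)$ and the effective-Hamiltonian error stays below both the target gap and $b-a$.
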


The construction of Cade et al.~\cite{CadeEtAl2023} may moreover be taken to have a unique ground state separated from the remainder of the spectrum by an inverse-polynomial spectral gap.

For semi-classical guiding states, Guided Local Hamiltonian is in $\mathrm{BQP}$ by quantum phase estimation, as observed by Gharibian and Le Gall~\cite{Gharibian2021}. 
Hence the problem is $\mathrm{BQP}$-complete in the above parameter regime.

\subsubsection{Kitaev's Circuit-to-Hamiltonian Construction}\label{subsubsec: Circuit-to-Hamiltonian}

Kitaev's circuit-to-Hamiltonian construction maps a quantum verification circuit to a local Hamiltonian whose ground-state energy distinguishes accepting from rejecting instances. In particular, it yields the following foundational result~\cite{Kitaev2002}.

\begin{thm} [\cite{Kitaev2002}]
The 5-Local Hamiltonian problem is QMA-complete.
\end{thm}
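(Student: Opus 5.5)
The statement is Kitaev's theorem that 5-Local Hamiltonian is QMA-complete. The proof has two directions: containment in QMA and QMA-hardness via the circuit-to-Hamiltonian construction.

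\textbf{Containment.} The verifier receives copies of a low-energy witness state and estimates $\Bra{\phi}H\Ket{\phi}$. First rescale and shift $H=\sum_i H_i$ so that each term satisfies $0\preceq H_i\preceq I$. The verifier then samples a term $i$ uniformly and measures the two-outcome POVM $\{H_i, I-H_i\}$ on the witness, implemented with polynomially many gates since $H_i$ acts on only $5$ qubits. It accepts when the outcome is $I-H_i$, so the acceptance probability is $1-\Bra{\phi}H\Ket{\phi}/m$. The two cases of the promise are then separated by an inverse-polynomial gap, and standard QMA amplification (Marriott--Watrous, which permits reuse of a single witness) brings the error to the regime of \cref{def:QMA}.

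\textbf{Hardness.} Given a verifier $U_x=U_T\cdots U_1$ from \cref{def:QMA}, build the history-state Hamiltonian on a work register $W$ and a unary clock register $C$ of $T$ qubits:
\[
H=H_{\mathrm{in}}+H_{\mathrm{out}}+H_{\mathrm{prop}}+H_{\mathrm{clock}}.
\]
The terms play the following roles:
\begin{itemize}
\item $H_{\mathrm{clock}}$ penalizes every pattern $01$ in consecutive clock qubits, so it is 2-local and enforces legal unary clock states.
\item $H_{\mathrm{in}}$ penalizes the fixed ancillas deviating from $\Ket{0}^{\otimes n_0}\Ket{1}^{\otimes n_1}$ when the clock reads $0$. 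This check needs only clock qubit $C_1$, so it is 2-local.
\item $H_{\mathrm{out}}$ penalizes rejection at time $T$ by applying $\Pi^{\Ket{0}}$ on the output qubit together with the last clock qubit.
\item $H_{\mathrm{prop}}=\sum_t H_t$ uses terms acting on the two gate qubits and the three clock qubits $t-1,t,t+1$, giving 5-locality. Each term is of the form
\[
\tfrac12\bigl(\Ket{100}\Bra{100}+\Ket{110}\Bra{110}-U_t\otimes\Ket{110}\Bra{100}-U_t^\dagger\otimes\Ket{100}\Bra{110}\bigr).
\]
\end{itemize}

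Completeness is a direct check. The history state of an accepting witness has energy at most $\varepsilon/(T+1)$, which gives the threshold $a$.

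For soundness, restrict to the legal-clock subspace; $H_{\mathrm{clock}}$ gives energy at least $1$ outside it, and this subspace is invariant under the other terms. Apply the unitary $V=\sum_t U_t\cdots U_1\otimes\Ket{t}\Bra{t}$. This conjugates $H_{\mathrm{prop}}$ to $I\otimes L$, where $L$ is the path-graph Laplacian, whose spectral gap is $\Omega(1/T^2)$.

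Now apply the projection lemma, or equivalently the Kitaev geometric lemma, to $A=H_{\mathrm{prop}}$ and $B=H_{\mathrm{in}}+H_{\mathrm{out}}$. The lemma states that if the null spaces of $A$ and $B$ meet at angle $\theta$, then $\lambda_{\min}(A+B)\ge 2\min(\lambda_2(A),\lambda_2(B))\sin^2(\theta/2)$. Soundness of the verifier forces $\cos^2\theta\le 1-\Omega(1/T)\cdot(1-\sqrt{\varepsilon})$. Combining this with the $\Omega(1/T^2)$ gap gives $b=\Omega(1/T^3)$, and padding $T$ makes $b-a$ inverse polynomial.

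The main obstacle is this soundness bound on the angle. The null space of $A$ consists of history states with an arbitrary initial state. One must show that any such state either violates the input check, contributing weight of order $1/(T+1)$, or, if the input is correct, is rejected with probability at least $1-\varepsilon$ at the output step. Making the cross term between these two failure modes rigorous is where the care lies, and I would follow Kempe--Kitaev--Regev's cleaner projection-lemma presentation to handle it.
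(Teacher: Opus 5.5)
Your proposal is correct and follows the route the paper records for this cited result: Kitaev's unary-clock Hamiltonian $H_{\mathrm{in}}+H_{\mathrm{out}}+H_{\mathrm{prop}}+H_{\mathrm{clock}}$ with propagation terms on two work qubits and three clock qubits, the completeness bound $\varepsilon/(T+1)$, and the soundness bound $\Omega\bigl((1-\sqrt{\varepsilon})/T^{3}\bigr)$ obtained from Kitaev's geometric lemma together with the $\Omega(1/T^{2})$ path-Laplacian gap. The cross term you flag is settled in Kitaev's argument by the two-projector fact $\lambda_{\max}(\Pi_{\mathrm{in}}+U^{\dagger}\Pi_{\mathrm{acc}}U)=1+\|\Pi_{\mathrm{in}}U^{\dagger}\Pi_{\mathrm{acc}}U\|\le 1+\sqrt{\varepsilon}$ (Jordan's lemma plus verifier soundness), which is exactly where the factor $1-\sqrt{\varepsilon}$ comes from; note also that the Kempe--Kitaev--Regev projection lemma is a different tool, not equivalent to the geometric lemma, and it only applies once the terms are given large relative weights.
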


\paragraph{Construction.}
Let $L\in\mathrm{QMA}$ and let $x$ be an input. 
By \cref{def:QMA}, after standard amplification there is a polynomial-size verification circuit

\[
U_x=U_T\cdots U_1,
\]

where $T=\operatorname{poly}(|x|)$ and each $U_t$ acts on at most two qubits. 
The circuit acts on $ N=n_0+n_1+m $ qubits initialized as $ \Ket{0}^{\otimes n_0} \otimes \Ket{1}^{\otimes n_1} \otimes \Ket{\xi}, $ where $\Ket{\xi}$ is an $m$-qubit witness. 
If $x\in L_{\mathrm{yes}}$, some witness is accepted with probability at least $1-\varepsilon$, whereas if $x\in L_{\mathrm{no}}$, every witness is accepted with probability at most $\varepsilon$, where $\varepsilon=2^{-\Omega(|x|)}$.

The Hamiltonian acts on

\[
\mathcal{H}_{\mathrm{comp}}
\otimes
\mathcal{H}_{\mathrm{clock}},
\]

where the work register $W$ contains the computational qubits and the clock register $C$ contains $T$ qubits encoding the times $t\in\{0,\ldots,T\}$ in unary.

The Hamiltonian is
\begin{align}
    H
    =
    H_{\mathrm{in}}
    +
    H_{\mathrm{out}}
    +
    H_{\mathrm{prop}}
    +
    H_{\mathrm{clock}}.
\end{align}

These terms act as follows:

\paragraph{Initialization.} Ensures that the prescribed ancilla qubits are correctly initialized at time $t=0$.
\begin{align}
    H_{\mathrm{in}} = \sum_{i=1}^{n_0} \Pi^{\Ket{1}}_{W_i} \otimes \Pi^{\Ket{0}}_{C_1} + \sum_{i=n_0 + 1}^{n_0+n_1} \Pi^{\Ket{0}}_{W_i} \otimes \Pi^{\Ket{0}}_{C_1}
\end{align}

\paragraph{Output.} Penalizes the system if the output qubit (designated as qubit 1) indicates rejection at the final time step $t=T$.

\begin{align}
    H_{\mathrm{out}} = \Pi^{\Ket{0}}_{W_1} \otimes\Pi^{\Ket{1}}_{C_T}
\end{align}

\paragraph{Propagation.} Enforces that the state evolves according to the unitary gates $U_t$. It is a sum over all time steps

\begin{align}
    H_{\mathrm{prop}} = \sum_{t=1}^T H_{\mathrm{prop}}(t) \label{eq:hprop}
\end{align}

where for $2\leq t\leq T-1$:

\begin{align}
    H_{\mathrm{prop}}(t) =\frac{1}{2}\Pi^{\Ket{1}}_{C_{t-1}}\otimes \left( I\otimes\Ket{1}\Bra{1}_{C_t} + I\otimes\Ket{0}\Bra{0}_{C_t}-U_{t}\otimes\Ket{1}\Bra{0}_{C_t}-U_{t}^{\dagger}\otimes\Ket{0}\Bra{1}_{C_t} \right) \otimes \Pi^{\Ket{0}}_{C_{t+1}}
\end{align}

and

\begin{align}
    H_{\mathrm{prop}}(1) &=\frac{1}{2}\left( I\otimes\Ket{1}\Bra{1}_{C_1} + I\otimes\Ket{0}\Bra{0}_{C_1}-U_{1}\otimes\Ket{1}\Bra{0}_{C_1}-U_{1}^{\dagger}\otimes\Ket{0}\Bra{1}_{C_1} \right) \otimes \Pi^{\Ket{0}}_{C_{2}}\\
    H_{\mathrm{prop}}(T) &=\frac{1}{2}\Pi^{\Ket{1}}_{C_{T-1}}\otimes \left( I\otimes\Ket{1}\Bra{1}_{C_T} + I\otimes\Ket{0}\Bra{0}_{C_T}-U_{T}\otimes\Ket{1}\Bra{0}_{C_T}-U_{T}^{\dagger}\otimes\Ket{0}\Bra{1}_{C_T} \right)
\end{align}

\paragraph{Clock.} Penalizes clock states outside the valid unary-clock subspace.
\begin{align}
    H_{\mathrm{clock}} = \sum_{t=1}^{T-1} \Pi^{\Ket{0}}_{C_{t}} \otimes \Pi^{\Ket{1}}_{C_{t+1}} \label{eq:hclock}
\end{align}

\paragraph{History state and energy bounds.}

For a fixed witness $\Ket{\xi}$, define

\[
\Ket{\psi_t}
=
U_t\cdots U_1
\Ket{0}^{\otimes n_0}
\otimes
\Ket{1}^{\otimes n_1}
\otimes
\Ket{\xi},
\]

with the convention that the empty product at $t=0$ is the identity.
The corresponding history state is
\begin{equation}
\label{def: history state}
\Ket{\eta}
=
\frac{1}{\sqrt{T+1}}
\sum_{t=0}^{T}
\Ket{\psi_t}\otimes\Ket{t}_C .
\end{equation}
Such history states lie in the null space of $H_{\mathrm{in}} + H_{\mathrm{prop}} + H_{\mathrm{clock}}$.

The following two lemmas give the completeness and soundness bounds for Kitaev's construction.

\begin{lemma} [Completeness, \cite{Kitaev2002}] \label{lemma: kitaev completness}  
If $x\in L_{\mathrm{yes}}$, then there exists a witness $\Ket{\xi}$ whose corresponding history state satisfies
\[
\Bra{\eta}H\Ket{\eta}
\leq
\frac{\varepsilon}{T+1}.
\]

\end{lemma}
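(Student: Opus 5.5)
The plan is to take the witness $\Ket{\xi}$ supplied by the YES case of \cref{def:QMA}, for which $U_x$ accepts with probability at least $1-\varepsilon$, and to evaluate the energy of its history state $\Ket{\eta}$ term by term. Each of the four terms $H_{\mathrm{in}}$, $H_{\mathrm{prop}}$, $H_{\mathrm{clock}}$, $H_{\mathrm{out}}$ will be handled separately. The first three will be shown to annihilate $\Ket{\eta}$, so that the whole energy comes from $H_{\mathrm{out}}$.

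\emph{Clock term.} $\Ket{\eta}$ is supported only on legal unary clock states $\Ket{t}_C=\Ket{1^t0^{T-t}}$. No such state contains the pattern $0$ at position $t$ followed by $1$ at position $t+1$, so $H_{\mathrm{clock}}\Ket{\eta}=0$.

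\emph{Input term.} The projector $\Pi^{\Ket{0}}_{C_1}$ selects only the $t=0$ component. That component is $\Ket{0}^{\otimes n_0}\otimes\Ket{1}^{\otimes n_1}\otimes\Ket{\xi}$, on which every penalty $\Pi^{\Ket{1}}_{W_i}$ with $i\le n_0$ and every penalty $\Pi^{\Ket{0}}_{W_i}$ with $n_0<i\le n_0+n_1$ vanishes. Hence $H_{\mathrm{in}}\Ket{\eta}=0$.

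\emph{Propagation term.} This step needs the most bookkeeping, though it is still routine. Restricted to legal clock states, the boundary projectors $\Pi^{\Ket{1}}_{C_{t-1}}$ and $\Pi^{\Ket{0}}_{C_{t+1}}$ confine $H_{\mathrm{prop}}(t)$ to the two-dimensional clock span of $\Ket{t-1}$ and $\Ket{t}$; the cases $t=1$ and $t=T$ are checked the same way with one boundary factor absent. On that span the operator acts as
\[
\tfrac12\bigl(\Ket{t-1}\Bra{t-1}+\Ket{t}\Bra{t}-U_t\otimes\Ket{t}\Bra{t-1}-U_t^\dagger\otimes\Ket{t-1}\Bra{t}\bigr).
\]
Applying it to $\Ket{\psi_{t-1}}\Ket{t-1}+\Ket{\psi_t}\Ket{t}$ and using $\Ket{\psi_t}=U_t\Ket{\psi_{t-1}}$ gives zero. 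All other components of $\Ket{\eta}$ are killed by the boundary projectors, so $H_{\mathrm{prop}}\Ket{\eta}=0$.

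\emph{Output term.} Only the $t=T$ component survives $\Pi^{\Ket{1}}_{C_T}$, since on legal clocks $C_T=1$ forces $t=T$. This gives
\[
\Bra{\eta}H_{\mathrm{out}}\Ket{\eta}=\frac{1}{T+1}\Bra{\psi_T}\Pi^{\Ket{0}}_{W_1}\Ket{\psi_T}=\frac{1-\Pr(U_x\text{ accepts})}{T+1}\le\frac{\varepsilon}{T+1}.
\]
Summing the four contributions yields the claimed bound $\Bra{\eta}H\Ket{\eta}\le\varepsilon/(T+1)$.
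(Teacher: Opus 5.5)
Your proof is correct and follows the same route the paper relies on: the paper notes that history states lie in the null space of $H_{\mathrm{in}}+H_{\mathrm{prop}}+H_{\mathrm{clock}}$ and proves its geometric analogue, \cref{lemma: OT completness}, the same way, so only the $t=T$ component contributes through $H_{\mathrm{out}}$, giving $(1-\Pr(U_x\text{ accepts}))/(T+1)\le\varepsilon/(T+1)$. Your explicit check that the boundary projectors confine $H_{\mathrm{prop}}(t)$ to the legal clock states $\Ket{t-1},\Ket{t}$, including the endpoint cases $t=1$ and $t=T$, correctly fills in bookkeeping that the paper leaves implicit.
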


\begin{lemma} [Soundness, \cite{Kitaev2002}] \label{lemma: kitaev soundness} 
If $x\in L_{\mathrm{no}}$, then for every $\Ket{\Psi}\in\mathcal{H}$,
\[
\Bra{\Psi}H\Ket{\Psi}
\geq
\frac{c}{T^3}
\left(1-\sqrt{\varepsilon}\right),
\]
for some constant $c>0$ independent of $T$.
\end{lemma}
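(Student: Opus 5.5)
The statement is Kitaev's soundness bound (\cref{lemma: kitaev soundness}). I will prove it by the standard geometric-lemma argument. Write $H_1=H_{\mathrm{in}}+H_{\mathrm{out}}+H_{\mathrm{prop}}$ restricted to the legal-clock subspace $\mathcal{S}$. This is the span of $\Ket{\phi}\otimes\Ket{t}_C$ with unary clock states $\Ket{t}_C$. The first step is to handle $H_{\mathrm{clock}}$. Every $H_{\mathrm{in}}$, $H_{\mathrm{out}}$ and $H_{\mathrm{prop}}(t)$ term maps $\mathcal{S}$ into itself and $\mathcal{S}^\perp$ into itself, because the propagation terms only flip clock qubit $t$ when $C_{t-1}=1$ and $C_{t+1}=0$. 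Hence $H$ is block diagonal with respect to $\mathcal{S}\oplus\mathcal{S}^\perp$. On $\mathcal{S}^\perp$ we have $H_{\mathrm{clock}}\succeq I$ and the other terms are positive semidefinite, so the energy there is at least $1$. It therefore suffices to bound $\lambda_{\min}(H|_{\mathcal{S}})$.

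On $\mathcal{S}$, I will apply the unitary change of basis $W=\sum_t U_t\cdots U_1\otimes\Ket{t}\Bra{t}$. It maps $H_{\mathrm{prop}}|_{\mathcal{S}}$ to $I\otimes E$, where $E$ is the $(T+1)\times(T+1)$ path-graph Laplacian with factor $\tfrac12$. Its kernel is spanned by the uniform superposition over $t$, and its second eigenvalue is $1-\cos(\pi/(T+1))\ge c_1/T^2$. Set $A=H_{\mathrm{in}}+H_{\mathrm{out}}$ (conjugated) and $B=H_{\mathrm{prop}}$. Both are positive semidefinite. Next I identify the null spaces. The null space of $B$ consists of the history states $\Ket{\eta}$ with arbitrary initial state. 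The null space of $A$ consists of states whose time-$0$ component has correctly initialized ancillas and whose time-$T$ component is accepting. The nonzero eigenvalues of $A$ are at least $1$, since its terms are commuting projectors.

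Then I invoke Kitaev's geometric lemma: for positive semidefinite $A,B$ whose nonzero spectra are at least $v$, we have $A+B\succeq v\cdot 2\sin^2(\theta/2)$, where $\theta$ is the angle between $\ker A$ and $\ker B$. Here $v\ge\min(1,c_1/T^2)$. The main obstacle is the angle bound $\cos^2\theta\le \varepsilon$-dependent; specifically, I will show $\cos^2\theta\le \frac{T}{T+1}\cdots$. For a history state $\Ket{\eta}$ with initial state $\Ket{\xi_0}$, I decompose $\Ket{\xi_0}$ into a correctly initialized part and an orthogonal part. I then compute $\max\|\Pi_{\ker A}\Ket{\eta}\|^2$. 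The initial-check projector at $t=0$ contributes a factor tied to the weight on correct ancillas. The output check at $t=T$ contributes the acceptance probability, which is at most $\varepsilon$ on valid inputs because $x$ is a NO instance. Intermediate times are unconstrained. Keeping cross terms under control via Cauchy--Schwarz should give $\cos^2\theta\le 1-\frac{1-\sqrt{\varepsilon}}{T+1}$. Hence $\sin^2(\theta/2)\ge (1-\cos\theta)/2\ge \frac{1-\sqrt\varepsilon}{4(T+1)}$.

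Combining these, the energy on $\mathcal{S}$ is at least $\frac{c_1}{T^2}\cdot\frac{1-\sqrt\varepsilon}{2(T+1)}\ge \frac{c}{T^3}(1-\sqrt\varepsilon)$. This also dominates the $\ge 1$ bound on $\mathcal{S}^\perp$, which finishes the proof. The delicate step is the angle bound: the $\sqrt{\varepsilon}$ comes precisely from the cross term between the accepting amplitude and its complement.
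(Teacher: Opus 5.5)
Your proposal is correct and follows Kitaev's geometric-lemma proof: block diagonality with respect to legal clock states, conjugation by $W$ to a path Laplacian with gap $\Omega(T^{-2})$, and the angle bound $\cos^2\theta\le 1-\frac{1-\sqrt{\varepsilon}}{T+1}$; the paper only cites this result and gives no proof of its own. The step you leave as ``should give'' is the endpoint estimate $\lambda_{\max}(P+Q)\le 1+\sqrt{\varepsilon}$, where $P$ is the projector onto correctly initialized inputs and $Q=U_x^\dagger\Pi^{\Ket{1}}_{W_1}U_x$ (so $\|QP\|^2=\|PQP\|\le\varepsilon$); splitting $\Ket{\xi_0}$ and using the triangle inequality only gives $1+\sqrt{\varepsilon}+O(\varepsilon)$, but the exact bound follows from Jordan's lemma, or for unit $\Ket{\xi}$ with $s=\Bra{\xi}(P+Q)\Ket{\xi}$, from the Cauchy--Schwarz bound $s\le\|(P+Q)\Ket{\xi}\|$ together with $\|(P+Q)\Ket{\xi}\|^2=s+2\,\mathrm{Re}\Bra{\xi}PQ\Ket{\xi}\le(1+\sqrt{\varepsilon})s$.
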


Since $T=\operatorname{poly}(|x|)$ and $\varepsilon=2^{-\Omega(|x|)}$, these bounds give an inverse-polynomial separation between the YES and NO ground-energy thresholds.

\subsubsection{A Geometrically Local QMA-Hard Source Hamiltonian}
\label{subsubsec: Geometrically Local}

Kitaev's circuit-to-Hamiltonian construction does not in general produce a geometrically local interaction hypergraph: circuit gates may act on distant work qubits, and individual work or clock qubits may participate in a number of Hamiltonian terms that grows with the circuit size.
We therefore use a spatially localized circuit-to-Hamiltonian construction in which both interaction range and vertex degree are bounded independently of the system size.

\begin{defn}[Geometrically Local Hypergraph] \label{def: geometrically local}
An interaction hypergraph $G=(V,E)$ is geometrically local if there exists an injective mapping of the vertices $V$ into a $D$-dimensional regular lattice $\mathbb{Z}^D$, for some fixed constant dimension $D=O(1)$, such that:

\begin{enumerate}
    \item \textbf{Bounded Degree:} Every vertex $v\in V$ participates
    in $O(1)$ hyperedges.

    \item \textbf{Bounded Interaction Range:} For every hyperedge $e\in E$, the lattice distance between any two vertices $u,v\in e$ is bounded by $O(1)$.

\end{enumerate}

All constants implicit in the $O(1)$ bounds are independent of the system size.
\end{defn}

\begin{rem}[Comparison with Spatially Sparse Hamiltonians]
\label{rem:spatial_sparsity_comparison}
Oliveira and Terhal~\cite{Oliveira2005} formulate their geometric restriction in terms of \emph{spatially sparse} interaction hypergraphs. 
We use the slightly stronger metric convention of \cref{def: geometrically local}, which places the qubits on a fixed-dimensional lattice and directly requires bounded degree and bounded interaction range.
\end{rem}

Oliveira and Terhal~\cite{Oliveira2005} proved that the $5$-Local Hamiltonian problem remains QMA-complete for spatially sparse interaction hypergraphs. 
Their explicit two-dimensional layouts satisfy \cref{def: geometrically local} after a constant rescaling of the lattice, giving the following source problem.

\begin{cor}
\label{thm:geo-local-qma}
The $5$-Local Hamiltonian problem remains QMA-complete when restricted to interaction hypergraphs satisfying \cref{def: geometrically local}.
\end{cor}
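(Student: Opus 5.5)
The corollary follows from the Oliveira--Terhal theorem~\cite{Oliveira2005} once we check that their spatially sparse instances meet the stronger metric convention of \cref{def: geometrically local}. Containment in QMA is immediate: a geometrically local $5$-local instance is in particular a $5$-Local Hamiltonian instance, and $k$-Local Hamiltonian lies in QMA by Kitaev's phase-estimation verifier~\cite{Kitaev2002}. So the work lies entirely in hardness.

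For hardness, I would take the reduction of~\cite{Oliveira2005} from an arbitrary QMA promise problem, with the verifier amplified as in \cref{def:QMA}. It outputs a $5$-local Hamiltonian $H$ whose work and clock qubits are laid out in an explicit two-dimensional arrangement. Gates are first routed by swaps so that every gate acts on neighbouring positions of a grid of work qubits. Each gate then receives a constant-size block of clock qubits placed adjacent to the work qubits it acts on. Completeness and soundness are inherited from their theorem: a YES-threshold $a$ and a NO-threshold $b$ with $b-a\ge 1/\operatorname{poly}(|x|)$. No spectral analysis needs to be redone.

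The remaining step is to exhibit an injective map $V\to\mathbb Z^2$ satisfying both conditions of \cref{def: geometrically local}. In their layout each qubit occupies a position in a plane with a bounded number of qubits per unit cell. Every term couples qubits in a bounded neighbourhood, and each qubit lies in $O(1)$ terms: a work qubit touches only the constant number of gate blocks passing over it in each round, and a clock qubit touches only its own block and its neighbours.

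I would refine the lattice by a constant factor $c$. Each unit cell becomes a $c\times c$ patch, and its qubits are assigned distinct sites in that patch. This makes the map injective. It multiplies every interaction range by at most $c$, so ranges stay $O(1)$, and it leaves degrees unchanged. All constants depend only on the fixed layout gadgets, not on $T$ or $|x|$.

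The main obstacle is the bounded-degree condition rather than the range condition. Spatial sparsity as defined in~\cite{Oliveira2005} controls how many hyperedges meet at a point, but one must confirm from their explicit construction that no qubit, for example a work qubit across many time steps, accumulates a growing number of terms. I would handle this by tracing their construction, in which each particle's interactions are confined to a fixed number of neighbouring gate blocks. Failing that, I would resolve it by the standard device of fresh qubits per time layer, which their layout already uses.
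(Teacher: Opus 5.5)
Your proposal is correct and follows essentially the same route as the paper: QMA containment is standard, and hardness comes from citing Oliveira--Terhal and checking that their explicit column-and-SWAP layout embeds injectively into a constant-rescaled lattice with $O(1)$ interaction range and $O(1)$ degree, with the unary clock placed along the snaking path. The degree concern you flag is resolved by the fresh-column device you mention at the end: each physical work qubit lives in a single column and touches at most one computational operation and two SWAPs. You should therefore drop your earlier phrasing about a work qubit touching gate blocks ``in each round'' rather than rely on it, since persistent work qubits would give degree growing with the number of rounds.
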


For the geometric reductions below, we use the standard two-dimensional column-and-SWAP realization underlying the Oliveira--Terhal construction.
The verifier is first compiled to a one-dimensional nearest-neighbor circuit and then spatialized across a two-dimensional array. 
Successive computational rounds occupy adjacent columns, with nearest-neighbor SWAPs transporting the logical state between columns. The unary clock is placed along the same snaking computation path.

After padding the circuit by initial and final identity rounds, every physical work and clock qubit participates in only $O(1)$ Hamiltonian terms, and every local term has constant geometric diameter. 
For each initialized work qubit $W_i$, let $t_i$ denote the first time at which it participates in a nonidentity operation. 
Since all earlier operations on $W_i$ are identities, its initialization condition may be checked locally immediately before time $t_i$ using
\[
\Pi^{\Ket{1}}_{C_{t_i-1}}
\otimes
\Pi^{\Ket{0}}_{C_{t_i}}.
\]

We denote the resulting geometrically local source Hamiltonian by
\[
H_{\mathrm{geo}}
=
H^{\mathrm{geo}}_{\mathrm{in}}
+
H^{\mathrm{geo}}_{\mathrm{out}}
+
H^{\mathrm{geo}}_{\mathrm{prop}}
+
H^{\mathrm{geo}}_{\mathrm{clock}}.
\]
It is $5$-local, has bounded interaction range and bounded degree, and retains the standard inverse-polynomial completeness-soundness separation. 
The explicit spatial layout, localized initialization terms, and the corresponding energy bounds are recorded in \cref{app:geometric-source}.

\subsubsection{The Stoquastic Circuit-to-Hamiltonian Construction} \label{subsection:stoq construction}
The standard circuit-to-Hamiltonian construction of \cref{subsubsec: Circuit-to-Hamiltonian} is not generally stoquastic, because the propagation terms inherit the signs and phases of the verifier gates. 
Bravyi et al.~\cite{Bravyi2006} showed that restricting the verifier to classical reversible gates yields a stoquastic circuit-to-Hamiltonian construction.

\begin{thm}[\cite{Bravyi2006}]
The Stoquastic $6$-Local Hamiltonian problem is StoqMA-complete.
\end{thm}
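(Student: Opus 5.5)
We prove the two directions separately, following Bravyi et al.~\cite{Bravyi2006}.

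\emph{Containment in StoqMA.} Let $H=\sum_i H_i$ be a stoquastic $6$-local instance. After shifting each term by a multiple of the identity and rescaling by a known polynomial factor, we may assume $0\preceq H_i\preceq I$ and that each $I-H_i$ is entrywise nonnegative. Choosing a term uniformly at random and applying $I-H_i$ then gives a sampling primitive whose mean is $I-H/M$, where $M$ is the number of terms. Each $I-H_i$ is a nonnegative matrix, so a local decomposition into a classical reversible circuit on the support plus a few $\Ket{0}$/$\Ket{+}$ ancillas lets us implement it as a StoqMA test with final measurement in the Hadamard basis. The acceptance probability is then an affine function of $\Bra{\xi}H\Ket{\xi}$.

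The key point is that the ground state of a stoquastic Hamiltonian can be taken entrywise nonnegative, by Perron--Frobenius. Nonnegative states are precisely the ones that the $\Ket{+}$-projective measurement rewards. The thresholds $a<b$ therefore become completeness and soundness parameters separated by $(b-a)/\operatorname{poly}$, which is inverse-polynomial. This is exactly the regime \cref{def: StoqMA} allows, with no amplification needed.

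\emph{Hardness.} Apply Kitaev's construction of \cref{subsubsec: Circuit-to-Hamiltonian} to a StoqMA verifier built from classical reversible gates (Toffoli, CNOT, $X$), with the following modifications.
\begin{itemize}
\item Because every $U_t$ is a permutation matrix, the off-diagonal blocks $-U_t\otimes\Ket{1}\Bra{0}$ have entries in $\{0,-1\}$, so each $H_{\mathrm{prop}}(t)$ is stoquastic.
\item Toffoli acts on three qubits, and the unary clock check uses three clock qubits, so each propagation term is $6$-local.
\item $H_{\mathrm{in}}$ and $H_{\mathrm{clock}}$ are diagonal, hence stoquastic.
\item The $\Ket{+}$ ancillas are enforced by $\Pi^{\Ket{-}}\otimes\Pi^{\Ket{0}}_{C_1}$. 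Since $\Pi^{\Ket{-}}=(I-X)/2$ has off-diagonal entry $-1/2$, this term is stoquastic.
\item The output term becomes $\Pi^{\Ket{-}}_{W_1}\otimes\Pi^{\Ket{1}}_{C_T}$, which is stoquastic for the same reason.
\end{itemize}

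\emph{Main obstacle.} Soundness is the hard step. StoqMA cannot be amplified, so $\epsilon_{\mathrm{yes}}$ and $\epsilon_{\mathrm{no}}$ are separated only inverse-polynomially, and \cref{lemma: kitaev soundness} does not apply as stated because it relies on $\varepsilon$ being exponentially small.

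For completeness, the history state of a good witness has energy $(1-\epsilon_{\mathrm{yes}})/(T+1)$ plus the contribution of the $\Ket{+}$-ancilla terms, which vanishes on the intended initialization.

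For soundness, I would use the projection lemma or the geometric lemma. This gives a lower bound on the ground energy of $(1-\epsilon_{\mathrm{no}})/(T+1)$ minus corrections of order $O(\Delta^{-1})$, where $\Delta$ is the gap of $H_{\mathrm{in}}+H_{\mathrm{prop}}+H_{\mathrm{clock}}$ and is $\Omega(T^{-2})$. To make these corrections smaller than the completeness--soundness gap $(\epsilon_{\mathrm{yes}}-\epsilon_{\mathrm{no}})/(T+1)$, I would multiply the in, prop and clock terms by a large polynomial factor, that is, take the output term with a small weight $\delta=1/\operatorname{poly}$.

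Precisely, the ground energy restricted to the history subspace equals $\delta(1-p_{\max})/(T+1)$, where $p_{\max}$ is the maximal acceptance probability. The first-order perturbative error is then $O(\delta^2 T^3)$, and choosing $\delta\ll(\epsilon_{\mathrm{yes}}-\epsilon_{\mathrm{no}})/T^3$ leaves an inverse-polynomial promise gap. The remaining bookkeeping is the polynomial bit precision of $a$ and $b$.
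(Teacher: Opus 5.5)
Your proposal is correct and follows essentially the same route as the paper's account of Bravyi et al.: Kitaev's construction over permutation gates, $\Pi^{\Ket{-}}$ initialization and output checks, an output term weighted by a small $\delta$, and a Projection-Lemma soundness bound $\frac{\delta}{T+1}(1-p_{\max})-\frac{\delta^2}{\Delta-2\delta}$; the paper only cites containment, which your sampling sketch supplies. There is one bookkeeping slip: the correction is $O(\delta^2/\Delta)$, so with your $\Delta=\Omega(T^{-2})$ it is $O(\delta^2T^2)$ and $\delta\ll(\epsilon_{\mathrm{yes}}-\epsilon_{\mathrm{no}})/T^3$ suffices, whereas the $O(\delta^2T^3)$ you wrote would require $\delta\ll(\epsilon_{\mathrm{yes}}-\epsilon_{\mathrm{no}})/T^4$.
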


\paragraph{Construction.}
Let $L\in\mathrm{StoqMA}$ and let $U_x=U_T\cdots U_1$ be a verifier as in \cref{def: StoqMA}. 
The circuit acts on $n_0+n_1+n_++m$ qubits initialized as
$
\Ket{0}^{\otimes n_0}
\otimes
\Ket{1}^{\otimes n_1}
\otimes
\Ket{+}^{\otimes n_+}
\otimes
\Ket{\xi},
$
where the gates $U_t$ are classical reversible gates, hence permutation matrices in the computational basis.

The Hamiltonian is:
\begin{align}
    H^{\mathrm{stoq}} = H_{\mathrm{in}}^{\mathrm{stoq}} + H_{\mathrm{out}}^{\mathrm{stoq}} + H_{\mathrm{prop}}^{\mathrm{stoq}} + H_{\mathrm{clock}}^{\mathrm{stoq}}
\end{align}

These terms enforce the correct execution of the verifier circuit as follows:
\paragraph{Clock.}
The clock Hamiltonian is the same unary-clock penalty as in \cref{eq:hclock}. It is diagonal in the computational basis and is therefore stoquastic.

\paragraph{Propagation.}
The propagation terms have the same form as in \cref{eq:hprop}.
Because each $U_t$ is a permutation matrix, the off-diagonal entries contributed by $-\frac12 U_t$ are non-positive. Hence every propagation term is stoquastic.

\paragraph{Initialization.}
In addition to the computational-basis initialization checks, the $\Ket{+}$ ancillas are enforced by projectors onto $\Ket{-}$:
\begin{align}
    H_{\mathrm{in}}^{\mathrm{stoq}} = \sum_{i=1}^{n_0} \Pi^{\Ket{1}}_{W_i} \otimes \Pi^{\Ket{0}}_{C_1} + \sum_{i=n_0 + 1}^{n_0+n_1} \Pi^{\Ket{0}}_{W_i} \otimes \Pi^{\Ket{0}}_{C_1} + \sum_{i=n_0 +n_1 + 1}^{n_0+n_1+n_+} \Pi^{\Ket{-}}_{W_i} \otimes \Pi^{\Ket{0}}_{C_1} 
\end{align}

\paragraph{Output.}
Since general StoqMA amplification is not available in the form used for QMA, the output penalty is treated as a small perturbation. 
We take $\delta=o(T^{-3})$ and define
\begin{align}
    H_{\mathrm{out}}^{\mathrm{stoq}} = \delta \cdot \left(\Pi^{\Ket{-}}_{W_1} \otimes \Pi^{\Ket{1}}_{C_T} \right)
\end{align}

\paragraph{Stoquastic history state and energy bounds.}
For a fixed witness $\Ket{\xi}$, the corresponding propagation-consistent history state is
\begin{align}
\Ket{\eta_{\mathrm{stoq}}}
=
\frac{1}{\sqrt{T+1}}
\sum_{t=0}^{T}
U_t\cdots U_1
\Ket{0}^{\otimes n_0}
\otimes
\Ket{1}^{\otimes n_1}
\otimes
\Ket{+}^{\otimes n_+}
\otimes
\Ket{\xi}
\otimes
\Ket{t}_C .
\end{align}

To complete the reduction, we establish the energy bounds for the Stoquastic Hamiltonian.

\begin{lemma}[Completeness, \cite{Bravyi2006}]
\label{lemma: bravyi StoqMA completness}
If $x\in L_{\mathrm{yes}}$, then there exists a witness $\Ket{\xi}$ whose corresponding history state satisfies
\[
\Bra{\eta_{\mathrm{stoq}}}
H^{\mathrm{stoq}}
\Ket{\eta_{\mathrm{stoq}}}
\leq
\frac{\delta}{T+1}
\left(1-\epsilon_{\mathrm{yes}}\right).
\]
\end{lemma}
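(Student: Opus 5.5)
We take the witness $\Ket{\xi}$ supplied by the YES case of \cref{def: StoqMA}, for which the acceptance probability is at least $\epsilon_{\mathrm{yes}}$, and evaluate each of the four terms of $H^{\mathrm{stoq}}$ on the history state $\Ket{\eta_{\mathrm{stoq}}}$. Write $\Ket{\psi_t}=U_t\cdots U_1\Ket{\psi_0}$ with $\Ket{\psi_0}=\Ket{0}^{\otimes n_0}\otimes\Ket{1}^{\otimes n_1}\otimes\Ket{+}^{\otimes n_+}\otimes\Ket{\xi}$. The goal is to show that clock, propagation and initialization terms all annihilate $\Ket{\eta_{\mathrm{stoq}}}$. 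Then only the output term contributes, and it contributes exactly $\frac{\delta}{T+1}$ times the rejection probability.

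\textbf{Clock term.} Every clock component $\Ket{t}_C$ is a valid unary string $1^t0^{T-t}$. Such a string never contains the pattern $0$ at $C_t$ followed by $1$ at $C_{t+1}$, so each $\Pi^{\Ket{0}}_{C_t}\otimes\Pi^{\Ket{1}}_{C_{t+1}}$ annihilates $\Ket{\eta_{\mathrm{stoq}}}$.

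\textbf{Propagation term.} Restricted to the legal clock states $\Ket{t-1}$ and $\Ket{t}$, the term $H_{\mathrm{prop}}(t)$ acts as
\[
\tfrac12\bigl(\Ket{t-1}\Bra{t-1}+\Ket{t}\Bra{t}-U_t\otimes\Ket{t}\Bra{t-1}-U_t^\dagger\otimes\Ket{t-1}\Bra{t}\bigr).
\]
On all other legal clock states it vanishes. The boundary cases $t=1$ and $t=T$ are handled identically. Applying this to $\Ket{\psi_{t-1}}\Ket{t-1}+\Ket{\psi_t}\Ket{t}$ and using $\Ket{\psi_t}=U_t\Ket{\psi_{t-1}}$ gives zero. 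This is exactly as in the standard Kitaev argument, and unitarity of the permutation $U_t$ is all that is needed.

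\textbf{Initialization term.} Each initialization term carries $\Pi^{\Ket{0}}_{C_1}$, so it only sees the $t=0$ component $\Ket{\psi_0}\Ket{0}_C$. It is the only component with $C_1=0$, since for $t\ge1$ we have $C_1=1$. In $\Ket{\psi_0}$ the qubits $W_1,\dots,W_{n_0}$ are in $\Ket{0}$ and $W_{n_0+1},\dots,W_{n_0+n_1}$ are in $\Ket{1}$. The $\Ket{+}$-ancillas are in $\Ket{+}$, which is orthogonal to $\Ket{-}$. Hence $\Pi^{\Ket{1}}$, $\Pi^{\Ket{0}}$ and $\Pi^{\Ket{-}}$ respectively annihilate the state.

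\textbf{Output term.} The factor $\Pi^{\Ket{1}}_{C_T}$ selects only the $t=T$ component. This gives
\[
\Bra{\eta_{\mathrm{stoq}}}H^{\mathrm{stoq}}_{\mathrm{out}}\Ket{\eta_{\mathrm{stoq}}}=\frac{\delta}{T+1}\Bra{\psi_T}\Pi^{\Ket{-}}_{W_1}\Ket{\psi_T}.
\]
Here $\Bra{\psi_T}\Pi^{\Ket{-}}_{W_1}\Ket{\psi_T}$ is precisely the probability that the Hadamard-basis measurement of the output qubit yields $-$, that is, $1-\Pr(U_x\text{ accepts})\le 1-\epsilon_{\mathrm{yes}}$.

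Summing the four contributions gives the bound. The only point needing care is the cross-term bookkeeping in the propagation step, together with checking that each $\Ket{t}$ of the history state is orthogonal to the illegal clock states. Both are routine, so I expect no genuine obstacle.
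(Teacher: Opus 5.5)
Your proposal is correct and follows the same route as the paper (see its completeness proof for the geometrically local stoquastic source in the appendix): the clock, propagation, and initialization terms annihilate the history state, and the output term contributes $\frac{\delta}{T+1}$ times the Hadamard-basis rejection probability, which is at most $1-\epsilon_{\mathrm{yes}}$. Your term-by-term checks supply the standard Kitaev-style details that the paper leaves to the citation.
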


\begin{lemma}[Soundness, adapted from~\cite{Bravyi2006}]
\label{lem:stoq-soundness}
If $x\in L_{\mathrm{no}}$, then

\[
\lambda_{\min}(H^{\mathrm{stoq}})
=
\frac{\delta}{T+1}
\left(
1-p_{\max}
\right)
+
O(\delta^2),
\]

where $p_{\max}\leq\epsilon_{\mathrm{no}}$ is the maximum acceptance probability over all witnesses. 
Since
\[
\epsilon_{\mathrm{yes}}-\epsilon_{\mathrm{no}}
\geq
\frac{1}{\operatorname{poly}(|x|)},
\]
and the unperturbed history-state Hamiltonian has an inverse-polynomial spectral gap, $\delta$ may be chosen sufficiently small that the $O(\delta^2)$ correction is asymptotically smaller than the first-order YES-NO energy separation. 
This yields an inverse-polynomial promise gap.

We do not require an explicit bound on the remainder here. 
In \cref{sec:stoq-dsih-com-sound}, where a quantitative lower bound is needed, we derive one directly using the Projection Lemma.

\end{lemma}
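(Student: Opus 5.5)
The plan is a perturbative analysis of $H^{\mathrm{stoq}} = H_0 + \delta V$. Here $H_0 = H^{\mathrm{stoq}}_{\mathrm{in}} + H^{\mathrm{stoq}}_{\mathrm{prop}} + H^{\mathrm{stoq}}_{\mathrm{clock}}$ and $V = \Pi^{\Ket{-}}_{W_1}\otimes\Pi^{\Ket{1}}_{C_T}$.

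\textbf{Step 1: the unperturbed ground space.} Conjugate $H_0$ by the standard unitary $W = \sum_t U_t\cdots U_1\otimes \Ket{t}\Bra{t}_C$ on the legal clock subspace. Under this conjugation $H_{\mathrm{prop}}$ becomes $I\otimes \frac12 L$, where $L$ is the path-graph Laplacian on $T+1$ vertices. $H_{\mathrm{in}}$ stays a sum of projectors acting at $t=0$. $H_{\mathrm{clock}}$ is at least $1$ on illegal clock states.

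The kernel of $H_0$ is therefore the set of history states $\Ket{\eta_\xi}$, with $\xi$ ranging over all witnesses and the ancillas correctly initialized to $\Ket{0}^{\otimes n_0}\Ket{1}^{\otimes n_1}\Ket{+}^{\otimes n_+}$. Call the projector onto this kernel $P_0$.

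For the gap of $H_0$ above $P_0$, use the Laplacian gap $\Omega(T^{-2})$ together with Kitaev's geometric (angle) lemma applied to the in-term and the propagation term. This gives a spectral gap $\Delta = \Omega(T^{-3})$. This is the standard bound underlying \cref{lemma: kitaev soundness}.

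\textbf{Step 2: first order.} Compute $P_0 V P_0$ on this kernel. For a history state built from $\xi$, the only contribution comes from the $t=T$ component, so
\[
\Bra{\eta_\xi} V \Ket{\eta_\xi} = \frac{1}{T+1}\Bra{\psi_T}\Pi^{\Ket{-}}_{W_1}\Ket{\psi_T} = \frac{1}{T+1}\bigl(1-p(\xi)\bigr).
\]
Here $p(\xi)$ is the acceptance probability. Because $P_0 V P_0$ is a quadratic form in $\xi$, its minimum over the kernel is $\frac{1}{T+1}(1-p_{\max})$. This holds because $p_{\max}$ is defined as the maximum of the corresponding acceptance operator, so the minimum is attained at the top eigenvector of that operator.

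\textbf{Step 3: the second-order remainder.} Invoke standard degenerate perturbation theory (a Schrieffer--Wolff or Kato expansion). Since $\|\delta V\|\le\delta=o(T^{-3})$ is below $\Delta/2$ for large $T$, it gives
\[
\lambda_{\min}(H_0+\delta V) = \lambda_{\min}(\delta P_0VP_0) + O\!\left(\frac{\delta^2\|V\|^2}{\Delta}\right).
\]
The remainder is $O(\delta^2 T^3)$. Since $\delta T^3=o(1)$, this is $o(\delta)$ and, more precisely, of order $\delta^2$ up to the polynomial factor $T^3$. That gives the stated form with the constant absorbed.

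The upper bound follows directly from the variational principle. For the lower bound, the cleanest route is the Projection Lemma of Kempe--Kitaev--Regev. It gives $\lambda_{\min}\ge \lambda_{\min}(\delta P_0VP_0) - \delta^2\|V\|^2/(\Delta - 2\delta\|V\|)$, which is exactly the needed estimate.

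\textbf{Step 4: the promise gap.} Compare the YES bound from \cref{lemma: bravyi StoqMA completness} with the NO value $\frac{\delta}{T+1}(1-\epsilon_{\mathrm{no}}) - O(\delta^2T^3)$. The separation is $\frac{\delta}{T+1}(\epsilon_{\mathrm{yes}}-\epsilon_{\mathrm{no}})$, which is $\delta/\mathrm{poly}$. Choose $\delta = T^{-3}/q(T)$ with $q$ a polynomial large enough that $\delta^2T^3 \ll \delta(\epsilon_{\mathrm{yes}}-\epsilon_{\mathrm{no}})/(T+1)$. Then the promise gap is inverse-polynomial.

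The main obstacle is Step 1: establishing the $\Omega(T^{-3})$ gap of $H_0$ above its full degenerate kernel, including the illegal-clock and wrong-ancilla sectors. Everything else is bookkeeping once that gap is in hand. For the gap I would follow Kitaev's angle-lemma argument verbatim, noting that the $\Ket{-}$ projectors do not change it.
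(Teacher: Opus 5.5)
Your proposal is correct and follows essentially the paper's route. The paper takes this expansion from Bravyi et al., and where it needs a quantitative lower bound (\cref{sec:stoq-dsih-com-sound}) it uses exactly your Step~3: the Kempe--Kitaev--Regev Projection Lemma with the $\Omega(T^{-3})$ history-state gap, giving $\lambda_{\min}\ge\frac{\delta}{T+1}(1-\epsilon_{\mathrm{no}})-\frac{\delta^2}{\Gamma_T-2\delta}$, together with the variational upper bound. Your Steps~1--2 (identifying the kernel and deriving the gap via Kitaev's angle lemma on the complement of the common kernel) spell out what the paper imports from the literature, and you correctly note that the constant hidden in $O(\delta^2)$ scales like $T^3$.
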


\subsubsection{A Geometrically Local Stoquastic Source Hamiltonian}
\label{subsubsec: Stoq Geometrically Local}

Waite and Bremner~\cite{Waite2025} extend the spatially sparse circuit-to-Hamiltonian construction of Oliveira and Terhal to StoqMA verification circuits and prove that the $6$-Local Stoquastic Hamiltonian problem remains StoqMA-complete on spatially sparse interaction hypergraphs. 
Their explicit construction uses the same two-dimensional column-and-SWAP architecture described in \cref{subsubsec: Geometrically Local}. 
After a constant rescaling of the lattice, every interaction has constant geometric range and every physical qubit participates in only a constant number of terms.
Consequently, their construction also satisfies \cref{def: geometrically local}.

\begin{cor}
\label{thm:geo-stoq}
The $6$-Local Stoquastic Hamiltonian problem remains StoqMA-complete when restricted to the geometrically local interaction hypergraphs of \cref{def: geometrically local}.
\end{cor}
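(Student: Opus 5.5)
Membership and hardness are the two directions to establish for \cref{thm:geo-stoq}.

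\textbf{Membership.} Membership in StoqMA needs no new argument. A geometrically local instance of the Stoquastic $6$-Local Hamiltonian problem is in particular an instance of the unrestricted problem, and that problem lies in StoqMA by \cite{Bravyi2006}. So the entire task is hardness: every StoqMA language must reduce to stoquastic $6$-local instances whose interaction hypergraphs satisfy \cref{def: geometrically local}.

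\textbf{Hardness: construction.} For hardness I would follow the same route as \cref{thm:geo-local-qma}, starting from the Waite--Bremner construction~\cite{Waite2025}. First take a StoqMA verifier $U_x$ as in \cref{def: StoqMA}, built from classical reversible gates. Compile it into a one-dimensional nearest-neighbour reversible circuit by inserting SWAPs. SWAPs are permutation matrices, so this preserves stoquasticity of the propagation terms, and it changes the acceptance probabilities by nothing.

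Next spatialize the circuit on the two-dimensional column-and-SWAP array described in \cref{subsubsec: Geometrically Local}. The unary clock snakes along the computation path. Pad the circuit with initial and final identity rounds, so that each initialization check, including the $\Pi^{\Ket{-}}$ checks on $\Ket{+}$ ancillas, is applied locally just before that qubit's first nontrivial gate. The output penalty $\delta\,\Pi^{\Ket{-}}_{W_1}\otimes\Pi^{\Ket{1}}_{C_T}$ is applied at the final column, where the output qubit and the last clock qubit sit adjacent.

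\textbf{Hardness: geometric locality.} I would embed the array into $\mathbb{Z}^2$ and rescale lattice spacing by a constant, so that every term, each of which involves a bounded number of clock and work qubits from adjacent columns, has $O(1)$ diameter. Each physical qubit lies in only $O(1)$ columns' worth of terms, which gives bounded degree. Every term stays stoquastic: the clock terms are diagonal, the propagation terms carry $-\tfrac12 U_t$ with $U_t$ a permutation, and the $\Pi^{\Ket{-}}$ terms have off-diagonal entry $-\tfrac12$.

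\textbf{Main obstacle: the promise gap.} The completeness--soundness separation must survive spatialization, and StoqMA offers no amplification. Completeness is immediate from \cref{lemma: bravyi StoqMA completness} applied to the spatialized history state. For soundness I would invoke \cite{Waite2025}, which shows the spatially sparse propagation and clock Hamiltonian retains an inverse-polynomial gap above its history-state ground space. The perturbative argument of \cref{lem:stoq-soundness} then goes through with $\delta$ chosen polynomially smaller than that gap and than $\epsilon_{\mathrm{yes}}-\epsilon_{\mathrm{no}}$. Circuit length and locality are polynomial, so the final thresholds are computable in polynomial time and separated inverse-polynomially, and the reduction is complete.
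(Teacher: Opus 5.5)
Your proposal is correct and follows essentially the same route as the paper, which invokes the Waite--Bremner spatially sparse stoquastic construction on the Oliveira--Terhal column-and-SWAP layout and observes that after a constant rescaling it has bounded range and bounded degree. The appendix spells this out with the same padding, localized $\Pi^{\Ket{-}}$ initialization checks and imported $\Omega(T^{-3})$ gap, and membership is inherited from the unrestricted problem. Two small points to tighten: the unperturbed Hamiltonian whose gap you import must include the localized initialization terms, not just propagation and clock, so that its ground space consists of correctly initialized history states; and the output qubit should be routed to the last-operated site of the final column (the paper's $W_L$), since otherwise it is not adjacent to $C_T$.
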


For the reduction in \cref{sec:stoq-n-dependent-hardness}, we use the explicit spatialized StoqMA construction underlying this result. 
It has the same two-dimensional column-and-SWAP architecture as the QMA source above, with the unary clock placed along the snaking computation path.

The initialized physical work qubits are partitioned into sets $I_0$, $I_1$, and $I_+$ according to whether their prescribed states are $\Ket{0}$, $\Ket{1}$, or $\Ket{+}$. 
As in the QMA construction, each initialized qubit is checked immediately before its first nonidentity operation. 
The resulting Hamiltonian has the form
\[
H^{\mathrm{stoq}_g}
=
H^{\mathrm{stoq}_g}_{\mathrm{in}}
+
H^{\mathrm{stoq}_g}_{\mathrm{prop}}
+
H^{\mathrm{stoq}_g}_{\mathrm{clock}}
+
\delta H^{\mathrm{stoq}_g}_{\mathrm{out}},
\]
is $6$-local and stoquastic, and has bounded interaction range and bounded degree.

Writing
\[
H_0
=
H^{\mathrm{stoq}_g}_{\mathrm{in}}
+
H^{\mathrm{stoq}_g}_{\mathrm{prop}}
+
H^{\mathrm{stoq}_g}_{\mathrm{clock}},
\]
the unperturbed history-state Hamiltonian has spectral gap
\[
\Gamma_T=\Omega(T^{-3}).
\]
Choosing $\delta>0$ sufficiently small relative to $\Gamma_T$ yields the inverse-polynomial YES-NO energy separation required below.
The explicit local terms and energy bounds are recorded in \cref{app:geometric-stoq-source}.

\subsubsection{Weighted \texorpdfstring{$S$}{S}-Hamiltonians}
\label{subsubsec:s-hamiltonians}

The $S$-Hamiltonian framework restricts the local interactions of a Hamiltonian to a fixed finite set while allowing independently specified coupling strengths~\cite{Cubitt2013}. 
We use the following explicit weighted formulation.

Let
\[
S=\{s^{(1)},\ldots,s^{(m)}\}
\]
be a fixed finite set of $k$-local Hermitian interactions. 
A weighted $S$-Hamiltonian has the form
\[
H
=
\sum_{i=1}^{M}
\alpha_i s^{(j_i)}_{V_i},
\]
where $M=\operatorname{poly}(n)$, the coefficients $\alpha_i\in\mathbb R$ are polynomially bounded and specified with polynomially many bits, and $V_i$ specifies the qubits on which the interaction is applied.

We write $S^{+}$-Hamiltonian for the restriction in which all coupling strengths are nonnegative,
\[
\alpha_i\geq 0.
\]
When the interaction set is a singleton $S=\{h\}$, we write $\{h\}$-Hamiltonian and $\{h\}^{+}$-Hamiltonian for the unrestricted and positive-weight problems, respectively.

For the positive-weight singleton problems used later, we require the following classification of Piddock and Montanaro.

\begin{thm}[\cite{Piddock2015}]
\label{thm:piddock-montanaro-positive}
Let
\[
h
=
\alpha X\otimes X
+
\beta Y\otimes Y
+
\gamma Z\otimes Z.
\]
Then:
\begin{enumerate}
    \item If
    \[
    \alpha+\beta>0,
    \qquad
    \alpha+\gamma>0,
    \qquad
    \beta+\gamma>0,
    \]
    then $\{h\}^{+}$-Hamiltonian is QMA-complete.

    \item If
    \[
    \alpha=-\beta\neq0,
    \qquad
    \alpha+\gamma>0,
    \qquad
    \beta+\gamma>0,
    \]
    then $\{h\}^{+}$-Hamiltonian is StoqMA-complete.
\end{enumerate}
\end{thm}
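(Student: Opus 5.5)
The statement is imported from \cite{Piddock2015}. I would prove containment and hardness separately, with hardness obtained by perturbative simulation from a positive-weight source problem already known to be hard.

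\textbf{Containment.} Membership of every $\{h\}^{+}$-Hamiltonian problem in QMA is immediate from \cite{Kitaev2002}. For case~2, I would show the instance is stoquastic after a uniform single-qubit basis change, which places it in StoqMA by \cite{Bravyi2006}. Since $\alpha=-\beta$,
\[
\alpha(X\otimes X-Y\otimes Y)=2\alpha\left(\Ket{00}\Bra{11}+\Ket{11}\Bra{00}\right),
\]
and $Z\otimes Z$ is diagonal, so the only off-diagonal entries of $h$ equal $2\alpha$. Conjugating every qubit by $\mathrm{diag}(1,i)$ maps $X\mapsto Y$ and $Y\mapsto -X$, so $X\otimes X\mapsto Y\otimes Y$ and $Y\otimes Y\mapsto X\otimes X$. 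This replaces $\alpha$ by $-\alpha$ on every edge simultaneously. Hence, whichever sign $\alpha$ has, some uniform local unitary makes every positively weighted term, and therefore the whole Hamiltonian, stoquastic. Importantly, this works on arbitrary interaction graphs because the same rotation is applied to every qubit, so no bipartition is needed.

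\textbf{Hardness, case 1.} The plan is a chain of perturbative-gadget reductions using only positive weights. First, I would fix a base problem: QMA-hardness of the positive-weight antiferromagnetic Heisenberg model $\{XX+YY+ZZ\}^{+}$, obtained through \cite{Cubitt2013}-style simulation by showing that mixed-sign Heisenberg couplings can be generated effectively from positive ones. Second, I would simulate this target with the given $h$. To do so, I would group physical qubits into small clusters joined by strong couplings $\Delta h$, and check that the three inequalities $\alpha+\beta,\alpha+\gamma,\beta+\gamma>0$ make the cluster ground space a degenerate, spin-$\tfrac12$-like logical qubit, for example because the positive combination of the three cyclic permutations of $h$ has the singlet as its ground state. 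Weak positive couplings between clusters would then produce, at first or second order in the Schrieffer--Wolff expansion, an effective interaction on the logical qubits that is a positive multiple of the isotropic Heisenberg interaction plus an overall energy shift. The standard bounds of Bravyi--DiVincenzo--Loss and Oliveira--Terhal type then keep the ground energy within inverse-polynomial error, with all weights polynomially bounded.

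\textbf{Hardness, case 2.} The same gadget scheme, applied to the XY-type interaction, would simulate a known StoqMA-hard stoquastic source Hamiltonian, such as a transverse-field Ising-type problem, inside the stoquastic class.

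\textbf{Main obstacle.} The hardest step is the sign problem in the effective couplings. Positive weights alone naturally yield only antiferromagnetic effective interactions, yet hardness requires simulating interactions of both signs, or local fields. I would first try third-order mediator gadgets with a mediator cluster, exploiting the sign $(-1)^{r+1}$ of odd-order perturbative terms to obtain ferromagnetic effective couplings. I would then verify that the three positivity conditions are exactly what keeps the gadget ground spaces nondegenerate enough for these expansions to be controlled.
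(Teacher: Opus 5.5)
The paper gives no proof of this statement. It is imported from \cite{Piddock2015} and used only as a black box for the $\mathrm{SIH}_3$ and $\mathrm{Stoq\text{-}SIH}_3$ results. Judged on its own, your containment argument is sound. Uniform conjugation by $\mathrm{diag}(1,i)^{\otimes n}$ does flip the sign of the only off-diagonal entries $2\alpha$, which connect $\Ket{00}$ and $\Ket{11}$, and no bipartiteness is needed. The hardness part, however, is a plan rather than a proof. The decisive step is producing effective couplings of the opposite sign from nonnegative multiples of $h$, and you defer exactly this as the ``main obstacle''. Without it, neither your base claim that $\{XX+YY+ZZ\}^{+}$ is QMA-hard nor the reduction from general $h$ to that base is established, and this step is essentially the whole content of the cited theorem.

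The heuristics you give for that step also point the wrong way.
\begin{itemize}
\item In the Bell basis, $h$ has eigenvalue $-(\alpha+\beta+\gamma)$ on $\Ket{\Psi^-}$ and $\alpha+\beta-\gamma$, $\alpha-\beta+\gamma$, $-\alpha+\beta+\gamma$ on $\Ket{\Psi^+}$, $\Ket{\Phi^+}$, $\Ket{\Phi^-}$. So the three inequalities say precisely that the singlet is the \emph{unique} ground state of a pair: they make a strongly coupled pair nondegenerate rather than producing a logical qubit. The even degeneracy of odd clusters comes from the anticommuting symmetries $X^{\otimes r},Y^{\otimes r},Z^{\otimes r}$ and holds for all $\alpha,\beta,\gamma$. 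It is in case 2, where $\alpha+\beta=0$, that the pair itself becomes degenerate.
\item Cyclically permuted copies of $h$ are not available in $\{h\}^{+}$.
\item For anisotropic $h$, nothing forces an effective inter-cluster coupling to be isotropic Heisenberg.
\item Your premise that positive weights yield only antiferromagnetic couplings is false. With the sign $(-1)^{r+1}$ you quote, odd orders carry $+1$, so that prefactor does not supply the sign you want; the explicit minus sign sits at second order.
\end{itemize}

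A gadget that uses this second-order sign works as follows. Take a pair $c,d$ with strong coupling $\Delta h_{cd}$, sitting in its unique singlet, and attach $h_{ac}+h_{bc}$ to the same qubit $c$. First order vanishes. Since $X_c,Y_c,Z_c$ map $\Ket{\Psi^-}$ to (multiples of) $\Ket{\Phi^-},\Ket{\Phi^+},\Ket{\Psi^+}$, second order gives, up to a constant, $-\tfrac{1}{\Delta}\bigl(\tfrac{\alpha^2}{\beta+\gamma}X_aX_b+\tfrac{\beta^2}{\alpha+\gamma}Y_aY_b+\tfrac{\gamma^2}{\alpha+\beta}Z_aZ_b\bigr)$. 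Its denominators are positive exactly under your hypotheses.

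This term is $-h/(2\Delta)$ for Heisenberg and $-h/\Delta$ for XY. In those two cases it gives arbitrary-sign weights, and hence QMA-hardness via the arbitrary-weight classification of \cite{Cubitt2013}. For anisotropic $h$ it is not proportional to $-h$, so you would still have to show that the interactions reachable this way suffice for some QMA-hard target.

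Finally, case 2 hardness is a single sentence. You must name the StoqMA-hard target and show how couplings of both signs and transverse fields arise from nonnegative multiples of $\alpha(XX-YY)+\gamma ZZ$, which is not the XY interaction. Every such Hamiltonian commutes with $X^{\otimes n}$ and $Z^{\otimes n}$, so the one-body terms require an encoding of the logical qubits.
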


\subsubsection{The XY Model} \label{subsec: XY model}

Childs, Gosset, and Webb~\cite{Childs2015} study the antiferromagnetic XY model on a simple graph at fixed magnetization. Unlike general Local Hamiltonian, every edge carries the same two-qubit interaction; the computational problem is defined within a prescribed Hamming-weight sector, with the sector supplied as part of the instance.

The Hamiltonian is defined on a simple graph $G=(V,E)$, by applying the $XY$ interaction on its edges:
\begin{align}
    H_{XY} = \frac{1}{2} \sum_{(u,v) \in E} (X_u X_v + Y_u Y_v)
\end{align}
where $X$ and $Y$ are the Pauli matrices. Equivalently, in the
computational basis,
\begin{align}
    H_{XY} = \sum_{(u,v) \in E} \left( \Ket{01}\Bra{10}_{u,v} + \Ket{10}\Bra{01}_{u,v} \right)
\end{align}
The XY Hamiltonian preserves Hamming weight. Writing
\begin{align}
    \hat{N}
    =
    \sum_{v\in V}\frac{I-Z_v}{2}
    =
    \sum_{v\in V}\Pi^{\Ket{1}}_v,
    \qquad
    [H_{XY},\hat{N}]=0,
\end{align}
the Hilbert space decomposes into invariant subspaces $\mathcal{S}_k$ of fixed Hamming weight,
\begin{align}
    \mathcal{S}_k = \text{span} \left\{ \Ket{z} : z \in \{0,1\}^{|V|}, \text{wt}(z) = k \right\}
\end{align}
where the total magnetization corresponds to $M=|V|-2k$.

\begin{defn}[XY Hamiltonian Problem at Fixed Hamming Weight, adapted from \cite{Childs2015}] \label{def: XY Hamiltonian problem} \mbox{}\\
    \noindent \textbf{Problem Input:} A simple graph $G=(V,E)$, an integer $k \leq |V|$ specifying the Hamming weight, and two real numbers $a, b$, specified with $\operatorname{poly}(|V|)$ precision, such that $b - a \geq 1/\operatorname{poly}(|V|)$.
    
    \noindent \textbf{Promise:} Let $H_{XY}$ be the Hamiltonian resulting from applying the XY interaction to all edges of $G$. Restricting $H_{XY}$ to the subspace $\mathcal{S}_k$, one of the following holds:
    \begin{itemize}
        \item There exists a quantum state $\Ket{\psi} \in \mathcal{S}_k$ such that $\Bra{\psi} H_{XY} \Ket{\psi} \leq a$.
        \item For all quantum states $\Ket{\psi} \in \mathcal{S}_k$, $\Bra{\psi} H_{XY} \Ket{\psi} \geq b$.
    \end{itemize}
    
    \noindent \textbf{Output:} Determine which of the two cases holds.
\end{defn}

\begin{thm}[\cite{Childs2015}] \label{thm: XY hamiltonian QMA complete}
The XY Hamiltonian problem on a simple graph at fixed Hamming weight is QMA-complete.
\end{thm}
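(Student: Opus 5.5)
Containment in QMA and QMA-hardness need separate arguments. For containment, I would turn the sector restriction into a local energetic term and then invoke the standard Local Hamiltonian upper bound. Given an instance $(G,k,a,b)$, set
\[
H'=H_{XY}+\lambda\,(\hat N-k)^2 ,
\]
where $\hat N=\sum_v \Pi^{\Ket{1}}_v$. The term $(\hat N-k)^2$ expands into a sum of $O(|V|^2)$ terms, each at most $2$-local, and their coefficients are polynomially bounded. Because $[H_{XY},\hat N]=0$, each sector $\mathcal S_j$ is invariant under $H'$, and on $\mathcal S_j$ we have $H'=H_{XY}+\lambda(j-k)^2$.

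Since $\|H_{XY}\|\le |E|$, choosing $\lambda = 2|E|+b+1$ pushes every sector $j\neq k$ above $b$. Hence $\lambda_{\min}(H')$ lies below $a$ exactly when $\lambda_{\min}(H_{XY}|_{\mathcal S_k})$ does, and above $b$ exactly when it does. The thresholds are therefore unchanged. Membership then follows from Kitaev's verifier for $2$-local Hamiltonians; the polynomially bounded norm is allowed by the definition of $k$-Local Hamiltonian.

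For QMA-hardness I would follow Childs, Gosset and Webb. In the computational basis, the restriction of $H_{XY}$ to $\mathcal S_k$ is exactly the Hamiltonian of $k$ hard-core bosons hopping on $G$, since each edge term $\Ket{01}\Bra{10}+\Ket{10}\Bra{01}$ moves an excitation across the edge. So it suffices to show that the ground energy of $k$ hard-core bosons on a simple graph is QMA-hard. The reduction proceeds in four steps.

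\begin{enumerate}
\item Start from a QMA-complete frustration-free-style circuit Hamiltonian for a verifier built from a fixed gate set, for example $\{H, \mathrm{CNOT}\text{-type}, T\}$ compiled into few-qubit gates.
\item Replace each gate by a \emph{gate graph}. This is a fixed finite graph whose single-particle adjacency spectrum has a designated eigenvalue, with eigenvectors supported on designated input and output ports. Scattering through the gadget implements the gate on logical qubits, each logical qubit being encoded as a particle occupying one of two paths.
\item Connect the gadgets into a graph $G_x$ following the circuit. The frustration-free, history-state-like eigenspaces of the single-particle problem then encode propagation.
\item Take $k$ equal to the number of logical qubits and add small \emph{occupancy-constraint} gadgets, so that valid encodings (exactly one particle per logical qubit wire) achieve energy $k\mu$ plus an acceptance-dependent shift. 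Invalid configurations, including two particles meeting in a way that is not a legal two-qubit gate interaction, pay an inverse-polynomial penalty.
\end{enumerate}

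The thresholds $a,b$ are then read off from completeness and soundness of the encoded verifier, using a projection-lemma style argument to separate the valid subspace from the rest.

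I expect the main obstacle to be the soundness analysis in the multi-particle setting. One must lower-bound the energy of \emph{every} state in $\mathcal S_k$, including states in which particles collide or wander outside the intended logical subspace. This requires a spectral-gap bound for the $k$-particle hard-core Hamiltonian restricted to the complement of the frustration-free logical subspace. My first approach would be the nullspace projection lemma applied iteratively. I would first show that each gadget has a constant single-particle gap around its designated eigenvalue. I would then show that configurations violating the occupancy constraints have energy raised by at least an inverse polynomial, and finally combine these with Kitaev's geometric-lemma bound on the induced propagation Hamiltonian, arguing as in \cref{lemma: kitaev soundness}.
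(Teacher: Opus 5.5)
The paper does not prove this theorem. It imports it from \cite{Childs2015} and uses it only as a source problem, so your proposal should be measured against that citation.

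Your containment half is correct. It is essentially the reduction the paper itself uses in \cref{lem:xy-local-penalty}, where the XY part is first shifted to $\frac12(H_{XY}+|E|I)\succeq 0$ and $\Lambda=2|E|$. The map $(G,k,a,b)\mapsto H'$ is a Karp reduction to $2$-Local Hamiltonian with unchanged thresholds. The only repair is bookkeeping on $b$. Your $\lambda=2|E|+b+1$ must be nonnegative and polynomially bounded, and this fails for $b<-2|E|-1$ or for superpolynomially large $|b|$. Since $\|H_{XY}\|\le|E|$, an instance with $b\le -|E|$ is necessarily NO and one with $b>|E|$ is necessarily YES, so you may output a fixed instance in those cases. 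Once $-|E|<b\le|E|$, the choice $\lambda=2|E|+1$ already pushes every sector $j\neq k$ to energy at least $|E|+1>b$.

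For hardness you, like the paper, defer to Childs, Gosset and Webb, so relative to the paper nothing is missing. Your outline is a citation with commentary rather than a proof, however, and two points in it would mislead you if you tried to carry it out.

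\emph{First}, their QMA-hardness construction is not scattering-based. A gate realized by scattering at a fixed momentum acts on states in the interior of the spectrum and gives no control over the ground energy. Instead they use gate graphs assembled from copies of a fixed finite graph whose \emph{lowest} adjacency eigenvalue $\mu$ has a degenerate eigenspace. The computation is encoded in the frustration-free $k$-particle states, i.e.\ hard-core states lying in the symmetric product of the single-particle ground space, which have energy exactly $k\mu$. Initialization, output, and the coupling of particles at two-qubit gates must all be implemented through graph structure and occupancy constraints, because the XY model offers no separate penalty terms. Your choice of the Nullspace Projection Lemma for soundness does match their analysis.

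\emph{Second}, the specific content of \cite{Childs2015} beyond their earlier Bose--Hubbard result is the \emph{simple}-graph requirement in the statement. The earlier construction allowed self-loops, which in spin language are on-site terms $\Pi^{\Ket{1}}_v$ that $H_{XY}$ on a simple graph cannot express. Your sketch does not say how these are removed.
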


\subsection{Embedding a Complex Hamiltonian into a Stoquastic One} \label{sec: Embedding Hamiltonian}

We use a two-step embedding to convert a general complex Hamiltonian into a stoquastic Hamiltonian on an enlarged Hilbert space. 
First, realification replaces complex matrix entries by a real representation using one auxiliary qubit. 
A second auxiliary qubit is then used to replace positive off-diagonal entries while recovering the original interaction in a pinned $\Ket{-}$ sector. 
Each step increases locality by one.

\subsubsection{Realification of Complex Hamiltonians}

\begin{defn}[Realification Map, adapted from
\cite{Rudolph2002, McKague2010, McKague2011}]
\label{def: real to complex iso}
Let $\mathcal{H}$ be a complex Hilbert space with a fixed orthonormal basis, and let $\mathcal{L}$ be a two-dimensional auxiliary space with basis $\{\Ket{0},\Ket{1}\}$.

For a vector $\Ket{\psi}\in\mathcal{H}$, define
\begin{align}
\Phi(\Ket{\psi})
=
\operatorname{Re}(\Ket{\psi})\otimes\Ket{0}_{\mathcal L}
+
\operatorname{Im}(\Ket{\psi})\otimes\Ket{1}_{\mathcal L},
\end{align}
where the real and imaginary parts are taken componentwise in the fixed basis, and we set

\[
\Phi(\Bra{\psi})
=
\Phi(\Ket{\psi})^\dagger .
\]

Define

\[
J_{\mathcal L}
=
\Ket{1}\!\Bra{0}_{\mathcal L}
-
\Ket{0}\!\Bra{1}_{\mathcal L}.
\]

For a linear operator $M$ on $\mathcal{H}$, define
\begin{align}
\Phi(M)
=
\operatorname{Re}(M)\otimes I_{\mathcal L}
+
\operatorname{Im}(M)\otimes J_{\mathcal L}.
\end{align}
The operator map is an injective real-algebra homomorphism.
\end{defn}

The following identities will be used repeatedly.

\begin{lemma}[\cite{McKague2011}]
\label{lemma: real to complex iso prop}
Let $M,N$ be linear operators on a Hilbert space $\mathcal H$ and
$\Ket{\varphi}, \Ket{\psi} \in \mathcal{H}$. Then:
\begin{align}
    \Phi\left(MN\right)
    &= \Phi\left(M\right)\Phi\left(N\right), \\
    \Phi\left(M\Ket{\psi}\right)
    &= \Phi\left(M\right)\Phi\left(\Ket{\psi}\right), \\
    \Phi\left(\Bra{\psi}\right)\Phi\left(M\right)\Phi\left(\Ket{\varphi}\right)
    &= \operatorname{Re}\left(\Bra{\psi}M\Ket{\varphi}\right).
    \label{eq:realification-matrix-element}
\end{align}
\end{lemma}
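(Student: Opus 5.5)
We prove the three identities in the order listed, each by direct computation from \cref{def: real to complex iso}. The tools are the decompositions $M=\operatorname{Re}(M)+i\operatorname{Im}(M)$ and $\Ket{\psi}=\operatorname{Re}(\Ket{\psi})+i\operatorname{Im}(\Ket{\psi})$ into real matrices and real vectors, together with the relations
\[
J_{\mathcal L}^2=-I_{\mathcal L},
\qquad
J_{\mathcal L}\Ket{0}=\Ket{1},
\qquad
J_{\mathcal L}\Ket{1}=-\Ket{0},
\qquad
J_{\mathcal L}^{T}=-J_{\mathcal L}.
\]
All three identities rest on one observation: $\Phi$ sends multiplication by $i$ to multiplication by $J_{\mathcal L}$.

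\textbf{Multiplicativity.} Write $A=\operatorname{Re}(M)$, $B=\operatorname{Im}(M)$, $C=\operatorname{Re}(N)$, $D=\operatorname{Im}(N)$. Then
\[
\operatorname{Re}(MN)=AC-BD,
\qquad
\operatorname{Im}(MN)=AD+BC.
\]
Expanding the right-hand side gives
\[
\Phi(M)\Phi(N)=(A\otimes I+B\otimes J)(C\otimes I+D\otimes J)=AC\otimes I+(AD+BC)\otimes J+BD\otimes J^2.
\]
Using $J^2=-I$, this equals $\Phi(MN)$.

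\textbf{Action on vectors.} Write $\Ket{\psi}=\Ket{a}+i\Ket{b}$ with $\Ket{a},\Ket{b}$ real. Then
\[
\operatorname{Re}(M\Ket{\psi})=A\Ket{a}-B\Ket{b},
\qquad
\operatorname{Im}(M\Ket{\psi})=B\Ket{a}+A\Ket{b}.
\]
Apply $A\otimes I+B\otimes J$ to $\Ket{a}\otimes\Ket{0}+\Ket{b}\otimes\Ket{1}$. Using $J\Ket{0}=\Ket{1}$ and $J\Ket{1}=-\Ket{0}$, the $\Ket{0}$ component is $A\Ket{a}-B\Ket{b}$ and the $\Ket{1}$ component is $B\Ket{a}+A\Ket{b}$. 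These match $\Phi(M\Ket{\psi})$.

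\textbf{Matrix elements.} By the second identity,
\[
\Phi(\Bra{\psi})\Phi(M)\Phi(\Ket{\varphi})=\Phi(\Ket{\psi})^\dagger\,\Phi(M\Ket{\varphi}).
\]
So it suffices to show that $\Phi(\Ket{u})^\dagger\Phi(\Ket{v})=\operatorname{Re}\Braket{u|v}$ for all vectors $\Ket{u},\Ket{v}$. Write $\Ket{u}=\Ket{a}+i\Ket{b}$ and $\Ket{v}=\Ket{c}+i\Ket{d}$ with real parts. Because the $\mathcal L$ components are orthonormal and all entries are real,
\[
\Phi(\Ket{u})^\dagger\Phi(\Ket{v})=a^Tc+b^Td.
\]
On the other hand,
\[
\Braket{u|v}=(a^T-ib^T)(c+id)=a^Tc+b^Td+i(a^Td-b^Tc),
\]
whose real part is exactly $a^Tc+b^Td$.

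The computations involve no analytic difficulty. The only point needing care is the sign convention for $J_{\mathcal L}$: its action $\Ket{0}\mapsto\Ket{1}$ must match the vector embedding, which places $\operatorname{Im}$ on $\Ket{1}$. With the stated definition this holds, so the second identity goes through. The step most prone to error is tracking the conjugate transpose in the third identity, which is why we reduce it to the real inner product of two embedded vectors before expanding.
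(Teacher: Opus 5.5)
Your proof is correct. All three identities follow from your direct expansions: $J_{\mathcal L}^2=-I_{\mathcal L}$ gives multiplicativity, $J_{\mathcal L}\Ket{0}=\Ket{1}$ and $J_{\mathcal L}\Ket{1}=-\Ket{0}$ give the action on vectors, and reducing the matrix element to $\Phi(\Ket{u})^\dagger\Phi(\Ket{v})=\operatorname{Re}\Braket{u|v}$ gives the third. The paper gives no argument of its own and simply cites McKague, so your routine computation is the standard verification and nothing further is needed.
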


\begin{lemma}[Spectrum under realification]
\label{lem:realification-spectrum}
Let $H$ be a Hermitian operator on an $N$-dimensional complex Hilbert space. 
Then $\Phi(H)$ is a real symmetric $2N\times 2N$ matrix. 
When regarded as a Hermitian operator on the complex Hilbert space $\mathbb{C}^{2N}$, its spectrum is the spectrum of $H$ with every eigenvalue appearing with twice its original multiplicity.
In particular,
\[
\lambda_{\min}(\Phi(H))
=
\lambda_{\min}(H).
\]
\end{lemma}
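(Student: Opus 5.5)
The plan is to reduce everything to a concrete $2\times 2$ block structure. Write $H=A+iB$, where $A=\operatorname{Re}(H)$ and $B=\operatorname{Im}(H)$ are taken entrywise in the fixed basis. Hermiticity $H^\dagger=H$ gives $A^{T}-iB^{T}=A+iB$, so $A$ is real symmetric and $B$ is real antisymmetric. By \cref{def: real to complex iso},
\[
\Phi(H)=A\otimes I_{\mathcal L}+B\otimes J_{\mathcal L}.
\]
Both summands are real. Their transposes are $A^T\otimes I=A\otimes I$ and $B^T\otimes J^T=(-B)\otimes(-J)=B\otimes J$, since $J_{\mathcal L}$ is antisymmetric. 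Hence $\Phi(H)$ is real symmetric of size $2N\times 2N$, which proves the first claim.

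For the spectrum, I would diagonalize $J_{\mathcal L}$ over $\mathbb C$. Its eigenvectors are $\Ket{\pm}_J=(\Ket{0}\pm i\Ket{1})/\sqrt2$, with $J\Ket{\pm}_J=\pm i\Ket{\pm}_J$; this is a one-line check, since $J\Ket0=\Ket1$ and $J\Ket1=-\Ket0$. Let $U$ be the unitary on $\mathcal L$ that maps $\Ket{0},\Ket{1}$ to $\Ket{+}_J,\Ket{-}_J$. Conjugating by $I\otimes U$ turns $\Phi(H)$ into a block-diagonal operator:
\[
(I\otimes U)^\dagger\,\Phi(H)\,(I\otimes U)=(A+iB)\oplus(A-iB)=H\oplus\overline{H}.
\]
So, as a complex Hermitian operator, $\Phi(H)$ is unitarily equivalent to $H\oplus \overline H$, where $\overline H$ is the entrywise complex conjugate.

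Next I would show that $\overline H$ has the same spectrum as $H$, with multiplicities. If $H\Ket v=\lambda\Ket v$ with $\lambda$ real, then conjugating entrywise gives $\overline H\,\overline{\Ket v}=\lambda\overline{\Ket v}$. Complex conjugation is antiunitary, so it sends an orthonormal basis of each eigenspace to an orthonormal set, and the eigenspace dimensions agree. Equivalently, $\overline H=H^{T}$, which has the same characteristic polynomial as $H$.

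Combining these, the spectrum of $\Phi(H)$ is that of $H$ with every multiplicity doubled, and in particular $\lambda_{\min}(\Phi(H))=\lambda_{\min}(H)$.

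An alternative is to exhibit eigenvectors directly through \cref{lemma: real to complex iso prop}. For each eigenvector $\Ket v$ of $H$, both $\Phi(\Ket v)$ and $\Phi(i\Ket v)$ are eigenvectors of $\Phi(H)$ with eigenvalue $\lambda$, and they are orthogonal by \eqref{eq:realification-matrix-element}. A dimension count then finishes the argument. I prefer the block-diagonalization because it gives multiplicities immediately.

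The only real obstacle is bookkeeping. One must check that $\Phi(H)$ is symmetric, which needs $B$ antisymmetric and $J$ antisymmetric together, and one must treat $\Phi(H)$ as a complex operator so that $J$ can be diagonalized. The conjugation step itself is routine.
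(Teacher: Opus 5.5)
Your argument is correct, but your main route differs from the paper's. The paper stays over the reals. For an eigenvector $\Ket{\psi}=\Ket{x}+i\Ket{y}$, it uses the homomorphism property $\Phi(H)\Phi(\Ket{\psi})=\Phi(H\Ket{\psi})$ to obtain the two real eigenvectors $\Phi(\Ket{\psi})$ and $\Phi(i\Ket{\psi})$. It then argues that each $d$-dimensional complex eigenspace yields a $2d$-dimensional real eigenspace, exhausts $\mathbb{R}^{2N}$ by a dimension count, and notes that complexifying a real symmetric matrix leaves its spectrum unchanged. This is essentially the alternative in your penultimate paragraph.

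Your block-diagonalization instead works over $\mathbb{C}$ from the start. Diagonalizing $J_{\mathcal L}$ shows that $\Phi(H)$ is unitarily equivalent to a direct sum of $H$ and $\overline{H}$. The doubled multiplicities then follow at once, and you avoid the real-versus-complex dimension bookkeeping, namely that $\Phi$ is injective on a whole eigenspace rather than merely that $\Phi(\Ket{\psi})$ and $\Phi(i\Ket{\psi})$ are independent. Your route also works unchanged for non-Hermitian $M$, giving that $\operatorname{spec}\Phi(M)$ is $\operatorname{spec}(M)$ together with its complex conjugate. You also prove the real-symmetry claim, which the paper's proof leaves implicit. The paper's route, in exchange, never leaves the real setting and directly exhibits real eigenvectors of the realified Hamiltonian.

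One small slip: $J_{\mathcal L}(\Ket{0}+i\Ket{1})=\Ket{1}-i\Ket{0}=-i(\Ket{0}+i\Ket{1})$, so $J_{\mathcal L}\Ket{\pm}_J=\mp i\Ket{\pm}_J$. Your conjugated operator is therefore $\overline{H}\oplus H$ rather than $H\oplus\overline{H}$. The order of the summands does not affect the spectrum, so nothing downstream changes.
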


\begin{proof}
Write an eigenvector of $H$ as
\[
\Ket{\psi}
=
\Ket{x}+i\Ket{y},
\]
where $\Ket{x}$ and $\Ket{y}$ are real vectors, and suppose
\[
H\Ket{\psi}
=
\lambda\Ket{\psi}.
\]
Since $H$ is Hermitian, $\lambda\in\mathbb{R}$.

By \cref{lemma: real to complex iso prop},
\[
\Phi(H)\Phi(\Ket{\psi})
=
\Phi(H\Ket{\psi})
=
\lambda\Phi(\Ket{\psi}).
\]
Thus
\[
\Ket{x}\otimes\Ket{0}
+
\Ket{y}\otimes\Ket{1}
\]
is a real eigenvector of $\Phi(H)$ with eigenvalue $\lambda$.

The vector corresponding to $i\Ket{\psi}$ is
\[
\Phi(i\Ket{\psi})
=
-\Ket{y}\otimes\Ket{0}
+
\Ket{x}\otimes\Ket{1},
\]
which is linearly independent over $\mathbb{R}$ from $\Phi(\Ket{\psi})$ whenever $\Ket{\psi}\neq0$, and has the same eigenvalue.

Hence every $d$-dimensional complex eigenspace of $H$ gives a $2d$-dimensional real eigenspace of $\Phi(H)$. Since the total real dimension is $2N$, these eigenspaces exhaust the space. 
Viewing the same real symmetric matrix as an operator over $\mathbb{C}^{2N}$ does not change its eigenvalues or their multiplicities. 
Therefore every eigenvalue of $H$ occurs twice in $\Phi(H)$.
\end{proof}

\subsubsection{Stoquastic Sign Embedding}

We next convert a real Hamiltonian into a stoquastic one using a single auxiliary sign qubit. The underlying construction originates with Janzing and Wocjan~\cite{Janzing2006}, was applied to stoquastic Hamiltonians by Jordan, Gosset, and Love~\cite{Jordan2009}, and is used here in the pinned formulation of Nagaj et al.~\cite{Nagaj2020}.

Let

\[
R=\Phi(h_A)
\]

be the realification of a Hermitian operator $h_A$. Decompose

\[
R=R^-+R^+,
\]

where $R^-$ contains the diagonal entries and all non-positive off-diagonal entries of $R$, with the strictly positive off-diagonal entries set to zero, while $R^+$ contains precisely the strictly positive off-diagonal entries of $R$.

Introduce an auxiliary sign qubit $c$ and define
\begin{equation}
\label{eq: make stoquastic}
\mathcal{S}(R)
=
R^-\otimes I_c-
R^+\otimes X_c.
\end{equation}
The operator $\mathcal{S}(R)$ is stoquastic in the computational basis.
Indeed, the off-diagonal entries contributed by $R^-\otimes I_c$ are non-positive, and those contributed by $-R^+\otimes X_c$ are likewise non-positive.

The subspace in which the sign qubit is fixed to
$
\Ket{-}_c
=
\frac{\Ket{0}-\Ket{1}}{\sqrt{2}}
$
is invariant under $\mathcal{S}(R)$. 
Since

\[
X_c\Ket{-}_c=-\Ket{-}_c,
\]

for every state $\Ket{\psi}$ on the original system together with the phase qubit,
\begin{align}
\mathcal{S}(R)
\left(
\Ket{\psi}\otimes\Ket{-}_c
\right)
&=
\left(
R^-\otimes I_c
-
R^+\otimes X_c
\right)
\left(
\Ket{\psi}\otimes\Ket{-}_c
\right)
\nonumber\\
&=
\left(
R^-+R^+
\right)
\Ket{\psi}
\otimes
\Ket{-}_c
\nonumber\\
&=
R\Ket{\psi}
\otimes
\Ket{-}_c.
\end{align}

Equivalently, if

\[
V_-:\Ket{\psi}\longmapsto
\Ket{\psi}\otimes\Ket{-}_c
\]

denotes the corresponding isometry, then
\begin{equation}
\label{eq:stoq-sign-restriction}
V_-^\dagger
\mathcal{S}(R)
V_-
=
R.
\end{equation}

Thus the restriction of $\mathcal{S}(R)$ to the pinned $\Ket{-}_c$ sector coincides with $R$. 
In particular,

\[
V_-^\dagger
\mathcal{S}\!\left(\Phi(h_A)\right)
V_-
=
\Phi(h_A).
\]

By \cref{lem:realification-spectrum}, this restricted operator has the same spectrum as $h_A$, with every eigenvalue appearing twice.

\section{A Framework for Single-Interaction Hamiltonians}
\label{sec:sih-framework}

We now formalize the single-interaction framework. 
In every instance, each hyperedge carries an unweighted copy of the same $k$-qubit interaction. 
In $\mathrm{SIH}_k$, this interaction is fixed independently of the instance and the system size; in $\mathrm{DSIH}_k$, it belongs to a uniformly computable family $\{h^{(n)}\}_{n\geq1}$ indexed only by the final physical system size.

\subsection{The SIH and DSIH Framework}
\label{subsec:sih-dsih-framework}

\begin{defn}[Single-Interaction $k$-Local Hamiltonian, denoted $\mathrm{SIH}_k$] \label{def: SIH} Fix a $k$-qubit Hermitian matrix
\[
h \in \mathrm{Herm}\!\left((\mathbb{C}^2)^{\otimes k}\right).
\]
The interaction $h$ is fixed as part of the definition of the promise problem and is independent of both the input instance and the system size $n$. 
We require the entries of $h$ to be polynomial-time computable. 
More precisely, there exists a deterministic algorithm which, given a precision parameter $q$, outputs rational approximations to all entries of $h$ with additive error at most $2^{-q}$ in time polynomial in $q$.

\textbf{Problem Input:}
The system size $n$, encoded in unary, a set $L$ of $M=\operatorname{poly}(n)$ pairwise distinct ordered $k$-tuples
\[
e=(v_1,\ldots,v_k)
\]

of physical qubits, where $v_1,\ldots,v_k$ are pairwise distinct within each tuple, and two real numbers $a,b$ specified with $\operatorname{poly}(n)$ bits of precision such that

\[
b-a\geq\frac{1}{\operatorname{poly}(n)}.
\]

\textbf{Promise:}
For each $e\in L$, let $h_e$ denote the application of $h$ to the ordered tuple $e$, acting as the identity on all remaining qubits. Define
\[
H=\sum_{e\in L} h_e.
\]
Exactly one of the following holds:
\begin{itemize}
    \item There exists an $n$-qubit state $\Ket{\phi}$ such that
    \[
    \bra{\phi}H\Ket{\phi}\leq a.
    \]
    \item For every $n$-qubit state $\Ket{\phi}$,
    \[
    \bra{\phi}H\Ket{\phi}\geq b.
    \]
\end{itemize}

\textbf{Output:}
Determine which of the two cases holds.
\end{defn}

\begin{defn}[$n$-Dependent Single-Interaction $k$-Local Hamiltonian,
denoted $\mathrm{DSIH}_k$]
\label{def: DSIH}

The problem $\mathrm{DSIH}_k$ is defined as in \cref{def: SIH}, except that the fixed interaction $h$ is replaced by a uniform family of $k$-qubit Hermitian matrices
\[
\{h^{(n)}\}_{n\geq1}.
\]

For each system size $n$, the Hamiltonian associated with an interaction set $L$ is
\[
H^{(n)}
=
\sum_{e\in L} h^{(n)}_e.
\]

The family $\{h^{(n)}\}_{n\geq1}$ is fixed as part of the definition of the promise problem.  For an instance of final physical size $n$, the common interaction is $h^{(n)}$; no other instance-specific parameter may enter the interaction matrix. 
We require a deterministic algorithm which, given $1^n$ and a precision parameter $q$, outputs rational approximations to all entries of $h^{(n)}$ with additive error at most $2^{-q}$ in time polynomial in $n$ and $q$. 
We additionally require
\[
\|h^{(n)}\|
\leq
\operatorname{poly}(n).
\]

All remaining input and promise conditions are those of \cref{def: SIH}, including the pairwise-distinct ordered interaction tuples and the inverse-polynomial promise gap.
\end{defn}

\paragraph{Final-system-size convention.}
Throughout the paper, $n$ in every $n$-dependent single-interaction problem denotes the total number of physical qubits in the \emph{final} Hamiltonian instance, including all auxiliary and pinned qubits. 
Thus $h^{(n)}$ may depend only on this final system size and not directly on any other instance-specific quantity.

\subsection{Pinned and Stoquastic Variants}

We next define stoquastic and pinned restrictions of the SIH and DSIH frameworks. 
Stoquastic variants require the common interaction to be stoquastic in the computational basis.

\begin{defn}[Stoquastic Single-Interaction $k$-Local Hamiltonian, denoted $\mathrm{Stoq\text{-}SIH}_k$]
The $\mathrm{Stoq\text{-}SIH}_k$ problem is the restriction of $\mathrm{SIH}_k$ to the case in which the fixed interaction $h$ is stoquastic in the computational basis; that is, $h$ is real and
\[
\bra{x}h\Ket{y} \leq 0
\]
for all distinct computational-basis states $\Ket{x}$ and $\Ket{y}$.
\end{defn}

\begin{defn}[Stoquastic $n$-Dependent Single-Interaction $k$-Local Hamiltonian, denoted $\mathrm{Stoq\text{-}DSIH}_k$]
The $\mathrm{Stoq\text{-}DSIH}_k$ problem is the restriction of $\mathrm{DSIH}_k$ to uniform families $\{h^{(n)}\}_{n\geq 1}$ for which every $h^{(n)}$ is stoquastic in the computational basis.
\end{defn}

Pinned variants restrict the energy minimization to states with a prescribed state on a constant-size register.

\begin{defn}[$p$-Pinned Single-Interaction $k$-Local Hamiltonian, denoted $p$-$\mathrm{Pinned\text{-}SIH}_k$]
\label{def:pinned-sih}

Let $p=O(1)$ be fixed. The problem is defined as $\mathrm{SIH}_k$ with the following additional input.

\textbf{Additional Input:}
An ordered set
\[
A=(a_1,\ldots,a_p)
\]
of $p$ distinct physical qubits, called the pinned register, together with a classical description of a polynomial-size quantum circuit $C_{\mathrm{pin}}$ over a fixed finite gate set acting on $p$ qubits.
The pinned state is
\[
\Ket{\psi_{\mathrm{pin}}}_A
=
C_{\mathrm{pin}}\Ket{0^p}_A.
\]

Let $W$ denote the remaining $n-p$ qubits and define the isometry
\[
V_{\mathrm{pin}}
:
\mathcal H_W
\longrightarrow
\mathcal H_W\otimes\mathcal H_A,
\qquad
V_{\mathrm{pin}}\Ket{\phi}
=
\Ket{\phi}_W\otimes\Ket{\psi_{\mathrm{pin}}}_A.
\]
For the SIH Hamiltonian $H$, define its compression to the pinned sector by
\[
H_{\mathrm{pin}}
=
V_{\mathrm{pin}}^\dagger H V_{\mathrm{pin}}.
\]

The promise is that exactly one of
\[
\lambda_{\min}(H_{\mathrm{pin}})\leq a
\qquad\text{or}\qquad
\lambda_{\min}(H_{\mathrm{pin}})\geq b
\]
holds. The task is to determine which case holds. All other input and
promise requirements are those of \cref{def: SIH}.
\end{defn}

\begin{defn}[$p$-Pinned $n$-Dependent Single-Interaction
$k$-Local Hamiltonian, denoted
$p$-$\mathrm{Pinned\text{-}DSIH}_k$]
\label{def:pinned-dsih}

The problem is defined identically to
\cref{def:pinned-sih}, with $\mathrm{SIH}_k$ replaced by
$\mathrm{DSIH}_k$. In particular, for
\[
H^{(n)}
=
\sum_{e\in L}h^{(n)}_e,
\]
the relevant compressed Hamiltonian is
\[
H_{\mathrm{pin}}^{(n)}
=
V_{\mathrm{pin}}^\dagger
H^{(n)}
V_{\mathrm{pin}},
\]
and the promise is that exactly one of
\[
\lambda_{\min}\!\left(H_{\mathrm{pin}}^{(n)}\right)\leq a
\qquad\text{or}\qquad
\lambda_{\min}\!\left(H_{\mathrm{pin}}^{(n)}\right)\geq b
\]
holds.

\end{defn}

Finally, we combine the two restrictions to obtain the pinned stoquastic
variants.

\begin{defn}[$p$-Pinned Stoquastic Single-Interaction $k$-Local
Hamiltonian, denoted $p$-$\mathrm{Pinned\text{-}Stoq\text{-}SIH}_k$]
The $p$-$\mathrm{Pinned\text{-}Stoq\text{-}SIH}_k$ problem is the
restriction of $p$-$\mathrm{Pinned\text{-}SIH}_k$ to the case in which
the fixed interaction matrix $h$ is stoquastic in the computational basis.
\end{defn}

\begin{defn}[$p$-Pinned Stoquastic $\mathrm{DSIH}_k$, denoted
$p$-$\mathrm{Pinned\text{-}Stoq\text{-}DSIH}_k$]
This problem is the restriction of the pinned DSIH problem of
\cref{def:pinned-dsih} to uniform families
$\{h^{(n)}\}_{n\geq1}$ for which every $h^{(n)}$ is stoquastic in
the computational basis.
\end{defn}

\subsection{Additional Promise Variants}
\label{subsec:additional-sih-variants}

We also use several variants that retain the same interaction
uniformity while modifying the promise or supplying additional
low-energy information.  We define only those needed later.

\begin{defn}[Frustration-Free Single-Interaction Hamiltonian,
denoted $\mathrm{FF\text{-}SIH}_k$]
\label{def:ff-sih}

The problem $\mathrm{FF\text{-}SIH}_k$ is defined as in
\cref{def: SIH} for a fixed positive-semidefinite interaction
\[
h\succeq0,
\]
except that the lower threshold is fixed to $a=0$ and need not be
supplied as part of the input. The remaining threshold satisfies
\[
b\geq\frac{1}{\operatorname{poly}(n)}.
\]
The promise is that exactly one of
\[
\lambda_{\min}(H)=0
\qquad\text{or}\qquad
\lambda_{\min}(H)\geq b
\]
holds. The task is to distinguish these two cases.
\end{defn}

Since every local term is positive semidefinite, the YES condition is
equivalent to the existence of a common zero-energy state,
\[
h_e\Ket{\psi}=0
\qquad
\text{for every }e\in L.
\]
Thus the YES instances are precisely the frustration-free instances.

\begin{defn}[Precise Single-Interaction Hamiltonian, denoted $\mathrm{Precise\text{-}SIH}_k$]
\label{def:precise-sih}
The problem $\mathrm{Precise\text{-}SIH}_k$ is defined identically to
$\mathrm{SIH}_k$, except that the inverse-polynomial promise-gap
condition is replaced by
\[
b-a \ge 2^{-p(n)}
\]
for some fixed polynomial $p$.

\end{defn}

\begin{defn}[Guided Single-Interaction $k$-Local Hamiltonian,
denoted $\mathrm{Guided\text{-}SIH}_k$]
\label{def:guided-sih}
Fix a $k$-qubit Hermitian interaction $h$ as in
\cref{def: SIH}.

\textbf{Problem Input:}
An $\mathrm{SIH}_k$ instance $(n,L,a,b)$, together with a classical description
of an $n$-qubit semi-classical state $\Ket{u}$ as in \cref{def:semiclassical-subset-state} and
a number $\delta\in(0,1]$ specified with $\operatorname{poly}(n)$ bits of precision.

We require
\[
b-a\ge\frac{1}{\operatorname{poly}(n)}
\qquad\text{and}\qquad
\delta\ge\frac{1}{\operatorname{poly}(n)}.
\]

Let
\[
H=\sum_{e\in L}h_e,
\]
and let $\Pi_H$ denote the orthogonal projector onto the ground space
of $H$.

\textbf{Promise:}
The guiding state satisfies
\[
\|\Pi_H\Ket{u}\|\ge\delta,
\]
and exactly one of the following holds:
\begin{itemize}
    \item[\textup{YES:}]
    $\lambda_{\min}(H)\le a$;
    \item[\textup{NO:}]
    $\lambda_{\min}(H)\ge b$.
\end{itemize}

\textbf{Output:}
Determine which of the two cases holds.
\end{defn}

\subsection{Basic Containment Results}
\label{subsec:basic-containment}

We conclude the framework section by recording the basic upper bounds
used later.  Ordinary SIH and DSIH variants are contained in QMA,
unpinned stoquastic variants in StoqMA, precise variants in PSPACE, and
Guided-SIH in BQP; pinned variants remain in QMA\@. For $n$-dependent
problems, these containments rely on the uniform computability
requirement of \cref{def: DSIH}.

\begin{lemma}[Containment of Unpinned Variants]
\label{lem:unpinned-containment}
For constant $k$,
\[
\mathrm{SIH}_k,\mathrm{DSIH}_k \in \mathrm{QMA}.
\]
\end{lemma}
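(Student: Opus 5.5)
The plan is to reduce both problems to the standard $k$-Local Hamiltonian problem, which lies in QMA for constant $k$ by Kitaev's verification procedure \cite{Kitaev2002,Kempe2004}. Given an instance $(n,L,a,b)$ of $\mathrm{SIH}_k$ or $\mathrm{DSIH}_k$, the Hamiltonian $H=\sum_{e\in L}h_e$ (respectively $h^{(n)}_e$) is already a sum of $M=\operatorname{poly}(n)$ terms, each acting nontrivially on exactly $k$ qubits. What remains to check are the input-format requirements of the $k$-Local Hamiltonian definition: polynomially bounded norms and entries specified with polynomially many bits.

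The first step handles the norms. For $\mathrm{SIH}_k$, $\|h\|$ is a constant independent of $n$. For $\mathrm{DSIH}_k$, \cref{def: DSIH} explicitly requires $\|h^{(n)}\|\le\operatorname{poly}(n)$. Since reordering the tuple $e$ is a qubit permutation, it preserves the norm, so $\|h_e\|$ inherits these bounds.

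The second step, which I expect to be the only real subtlety, handles the entries. These need not be rational, so I will replace them by approximations. Using the uniform computability algorithm with precision $q=\operatorname{poly}(n)$, compute a rational Hermitian matrix $\tilde h$ (symmetrize the approximation if needed) whose entries are within $2^{-q}$ of those of $h$ or $h^{(n)}$. Since $\tilde h$ is $2^k\times 2^k$ with $k$ constant, $\|\tilde h-h\|\le 2^k\cdot 2^{-q}$. Summing over the $M$ terms gives
\[
\|\tilde H-H\|\le M2^{k-q}.
\]
Choose $q$ polynomial so that this error is below $(b-a)/4$. Weyl's inequality then shifts $\lambda_{\min}$ by at most $(b-a)/4$. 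The thresholds $a'=a+(b-a)/4$ and $b'=b-(b-a)/4$ therefore preserve the promise and retain a gap of $(b-a)/2\ge1/\operatorname{poly}(n)$. For $\mathrm{DSIH}_k$, computing $\tilde h$ takes time $\operatorname{poly}(n,q)$; this is where the uniformity requirement is used.

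The resulting instance $(\tilde H,a',b')$ is a valid $k$-Local Hamiltonian instance, computed in polynomial time. YES instances map to YES and NO instances to NO. Since $k$-Local Hamiltonian is in QMA and QMA is closed under polynomial-time Karp reductions of promise problems, both $\mathrm{SIH}_k$ and $\mathrm{DSIH}_k$ lie in QMA. Alternatively, one can run Kitaev's phase-estimation verifier directly on $H$ and absorb the approximation error into the inverse-polynomial promise gap.
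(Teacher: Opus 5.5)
Your proposal is correct and follows essentially the same route as the paper: approximate the fixed (or uniformly computable) interaction to high precision, bound the total operator-norm error by the number of terms times the per-term error so that it is a small fraction of the inverse-polynomial promise gap, and reduce to an efficiently specified $k$-Local Hamiltonian instance, which lies in QMA. You additionally make explicit several details the paper leaves implicit: the norm bound from \cref{def: DSIH}, the entrywise-to-operator-norm estimate $\|\tilde h-h\|\le 2^{k-q}$, and the Weyl-shifted thresholds $a'$, $b'$ (which the paper writes out only in the containment lemma for the precise variant).
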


\begin{proof}
By \cref{def: SIH}, the fixed interaction $h$ can be approximated to
inverse-exponential precision in polynomial time. Since an instance
contains only $\operatorname{poly}(n)$ applications of $h$, the total
operator-norm approximation error can be made smaller than any fixed
fraction of the inverse-polynomial promise gap.

For $\mathrm{DSIH}_k$, the same conclusion follows from the uniform
computability requirement of \cref{def: DSIH}, which allows
$h^{(n)}$ to be generated to the required precision in polynomial time.

Thus both problems reduce directly to efficiently specified instances of
the standard $k$-Local Hamiltonian problem and are contained in QMA.
\end{proof}

\begin{lemma}[Containment of Stoquastic Variants]
\label{lem:stoquastic-containment}
For constant $k$,
\[
\mathrm{Stoq\text{-}SIH}_k,
\mathrm{Stoq\text{-}DSIH}_k
\in
\mathrm{StoqMA}.
\]
\end{lemma}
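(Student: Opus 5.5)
The plan mirrors the proof of \cref{lem:unpinned-containment}, with the Stoquastic $k$-Local Hamiltonian problem as target instead of $k$-Local Hamiltonian. We then invoke the StoqMA-containment half of the Bravyi et al.\ result~\cite{Bravyi2006}: the Stoquastic $k$-Local Hamiltonian problem lies in StoqMA for every constant $k$, so any polynomial-time promise-preserving reduction into it gives membership. Given an instance $(n,L,a,b)$ of $\mathrm{Stoq\text{-}SIH}_k$, the output instance will consist of one term per ordered tuple $e\in L$, namely a rational approximation $\tilde h_e$ of $h_e$, together with adjusted thresholds $\tilde a,\tilde b$.

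\textbf{Step 1: approximate while staying stoquastic.} We compute $h$ (or $h^{(n)}$, using the uniform computability of \cref{def: DSIH}) to precision $2^{-q}$ with $q=\operatorname{poly}(n)$. A naive rounding could turn a zero off-diagonal entry into a small positive one, so we post-process the approximation $\tilde h$:
\begin{itemize}
\item symmetrize it, $\tilde h\mapsto(\tilde h+\tilde h^{T})/2$;
\item take real parts, which is valid because $h$ is real;
\item replace every off-diagonal entry by $\min(\text{entry},0)$.
\end{itemize}
Each of these operations moves every entry no farther from the true value, since the true off-diagonal entries are real and non-positive. The result is a real symmetric stoquastic matrix $\tilde h$ whose entries are within $2^{-q}$ of those of $h$. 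Hence $\|\tilde h-h\|\le 4^k 2^{-q}$, a crude entrywise-to-operator-norm bound for a $2^k\times 2^k$ matrix. This is the one point where care is needed: the rounding must preserve the sign pattern, and the clipping step handles that.

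\textbf{Step 2: shift the thresholds.} Set $\tilde H=\sum_{e\in L}\tilde h_e$. Then $\|\tilde H-H\|\le M\,4^k2^{-q}=:\eta$, and by Weyl's inequality $|\lambda_{\min}(\tilde H)-\lambda_{\min}(H)|\le\eta$. Choosing $q$ polynomially large so that $\eta\le(b-a)/4$, we take $\tilde a=a+\eta$ and $\tilde b=b-\eta$. Then $\tilde b-\tilde a\ge(b-a)/2\ge 1/\operatorname{poly}(n)$, and YES instances map to YES while NO instances map to NO.

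\textbf{Step 3: check the target conditions.} The term norms satisfy $\|\tilde h_e\|\le\|h\|+1$, which is $\operatorname{poly}(n)$ by the norm requirement in the DSIH case and constant in the SIH case. The entries have $\operatorname{poly}(n)$ bits. Each term is a stoquastic $k$-local operator, since tensoring a stoquastic matrix with identity keeps it stoquastic. The output is therefore a valid Stoquastic $k$-Local Hamiltonian instance, and containment in StoqMA follows.
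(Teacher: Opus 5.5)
Your proposal is correct and follows the paper's argument: approximate the fixed (or uniformly $n$-dependent) interaction to inverse-exponential precision, clip positive off-diagonal entries to zero (which cannot increase the error because the true off-diagonal entries are non-positive), absorb the resulting operator-norm error into the thresholds, and invoke the StoqMA containment of Stoquastic Local Hamiltonian due to Bravyi et al. You supply more detail than the paper (symmetrization, taking real parts, and the explicit Weyl-based threshold shift); one small correction is that symmetrization keeps the maximum entrywise error at most $2^{-q}$ rather than moving each individual entry closer to its true value, but that bound is all your argument uses.
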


\begin{proof}
By \cref{def: SIH,def: DSIH}, the relevant fixed or uniformly
$n$-dependent interaction can be computed to sufficient precision in
polynomial time. The approximation may be chosen stoquastic by replacing
any positive off-diagonal approximation by zero; since the exact
off-diagonal entries are non-positive, this changes the error by at most
the chosen approximation precision.

Hence every instance is an efficiently specified Stoquastic
$k$-Local Hamiltonian instance. Since Stoquastic Local Hamiltonian with
inverse-polynomial promise gap is contained in StoqMA
\cite{Bravyi2006}, the claim follows.
\end{proof}

\begin{lemma}[Containment of Pinned Variants]
\label{lem:pinned-containment}
For every fixed constant $p$ and constant $k$, the problems
\[
p\text{-}\mathrm{Pinned\text{-}SIH}_k,\qquad
p\text{-}\mathrm{Pinned\text{-}DSIH}_k,
\]
\[
p\text{-}\mathrm{Pinned\text{-}Stoq\text{-}SIH}_k,\qquad
p\text{-}\mathrm{Pinned\text{-}Stoq\text{-}DSIH}_k
\]
are contained in QMA.
\end{lemma}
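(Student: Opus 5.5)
The plan is to give Merlin only the state of the unpinned register $W$ and have Arthur prepare the pinned state himself. This makes the pinned problem equivalent to an ordinary Local Hamiltonian instance on the enlarged system, where the pinned qubits are initialized by a known polynomial-size circuit rather than supplied by the prover. In a pinned instance, the compressed Hamiltonian $H_{\mathrm{pin}}=V_{\mathrm{pin}}^\dagger H V_{\mathrm{pin}}$ satisfies
\[
\lambda_{\min}(H_{\mathrm{pin}})=\min_{\Ket{\phi}}\bigl(\Bra{\phi}\otimes\Bra{\psi_{\mathrm{pin}}}\bigr)H\bigl(\Ket{\phi}\otimes\Ket{\psi_{\mathrm{pin}}}\bigr),
\]
with the minimum taken over states $\Ket{\phi}$ on $W$. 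The verifier receives $\Ket{\phi}$ on $n-p$ qubits. It appends $p$ fresh qubits in $\Ket{0^p}$ and applies $C_{\mathrm{pin}}$, which is given explicitly as a polynomial-size circuit over a fixed gate set, so it prepares exactly $\Ket{\psi_{\mathrm{pin}}}_A$ with no error. The joint state is then exactly $V_{\mathrm{pin}}\Ket{\phi}$.

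Next, I would run the standard QMA energy-estimation verifier of Kitaev~\cite{Kitaev2002} for the $k$-local Hamiltonian $H=\sum_{e\in L}h_e$ on this product state. The verifier picks a uniformly random term $e\in L$ and performs a two-outcome measurement whose acceptance probability is affine in $\Bra{\Phi}h_e\Ket{\Phi}/\|h\|$. As in \cref{lem:unpinned-containment}, $h$ (respectively $h^{(n)}$) is replaced by a rational approximation computed to inverse-exponential precision, using \cref{def: SIH} or the uniform computability and the bound $\|h^{(n)}\|\le\operatorname{poly}(n)$ from \cref{def: DSIH}. The accumulated error over $M=\operatorname{poly}(n)$ terms is then a negligible fraction of $b-a$. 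The resulting acceptance probability is $1-\bigl(\Bra{\Phi}H\Ket{\Phi}+\text{const}\bigr)/\operatorname{poly}$, which gives an inverse-polynomial completeness–soundness gap. Standard QMA amplification~\cite{Marriott2005} then finishes the argument. Marriott–Watrous amplification applies because it reuses one copy of the witness and regenerates the pinned ancilla deterministically.

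Soundness is immediate. Every witness, including mixed or entangled ones, yields a state on $W$ that the verifier extends as $\rho_W\otimes\Ket{\psi_{\mathrm{pin}}}\!\Bra{\psi_{\mathrm{pin}}}_A$. Its energy is a convex combination of values $\Bra{\phi}H_{\mathrm{pin}}\Ket{\phi}\ge\lambda_{\min}(H_{\mathrm{pin}})\ge b$ in NO instances. Completeness uses the minimizing $\Ket{\phi}$. The stoquastic pinned variants are special cases of the general ones, so their containment follows directly. Note that the result is only containment in QMA, not StoqMA, because $\Ket{\psi_{\mathrm{pin}}}$ need not be stoquastically preparable.

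The main obstacle is the bookkeeping needed to keep the two error sources below a constant fraction of $b-a$: the approximation of $h^{(n)}$ entries, and the fact that the pinned circuit is exact only over the fixed gate set. I would handle this exactly as in \cref{lem:unpinned-containment}, choosing the precision parameter $q=\operatorname{poly}(n)$ large enough. No other step is delicate.
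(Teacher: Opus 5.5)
Your proposal is correct and follows the same route as the paper. The verifier takes the witness on $W$, prepares $\Ket{\psi_{\mathrm{pin}}}_A$ with the supplied circuit $C_{\mathrm{pin}}$, and runs the standard Local Hamiltonian energy test on the joint state, so the YES/NO bounds are exactly those of $H_{\mathrm{pin}}$; the stoquastic and $n$-dependent variants follow as special cases or via uniform computability, and your extra detail (convexity for mixed witnesses, precision bookkeeping as in \cref{lem:unpinned-containment}, and amplification) fills in what the paper's short proof leaves implicit.
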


\begin{proof}
Let $A$ be the constant-size pinned register and $W$ the remaining
qubits. A QMA verifier receives a witness on $W$, prepares the prescribed
state $\Ket{\psi_{\mathrm{pin}}}_A$ using its supplied preparation
circuit, and applies the standard Local Hamiltonian energy test to the
joint state
\[
\Ket{\phi}_W\otimes\Ket{\psi_{\mathrm{pin}}}_A.
\]
The YES and NO energy bounds are exactly those of the compressed
Hamiltonian defined in \cref{def:pinned-sih}. The same argument applies
to the $n$-dependent and stoquastic restrictions. Hence all four pinned
variants lie in QMA.
\end{proof}

\begin{lemma}[Containment of Frustration-Free SIH]
\label{lem:ff-sih-containment}
For constant $k$,
\[
\mathrm{FF\text{-}SIH}_k\in\mathrm{QMA}.
\]
\end{lemma}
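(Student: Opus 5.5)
The plan is to reduce $\mathrm{FF\text{-}SIH}_k$ to an ordinary $\mathrm{SIH}_k$ instance with explicit thresholds and then invoke \cref{lem:unpinned-containment}. Given an instance $(n,L,b)$ with $b\geq 1/\operatorname{poly}(n)$, we set $a=0$. The promise of \cref{def:ff-sih} says that either $\lambda_{\min}(H)=0\leq a$ or $\lambda_{\min}(H)\geq b$. The gap $b-a=b$ is at least inverse polynomial, and $b$ is specified with polynomially many bits. So $(n,L,0,b)$ is a valid instance of \cref{def: SIH}, for the same fixed interaction $h$.

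Under the map $(n,L,b)\mapsto(n,L,0,b)$, YES instances go to YES instances and NO instances go to NO instances. In the YES case, a ground state $\Ket{\phi}$ has $\Bra{\phi}H\Ket{\phi}=0\le a$. In the NO case, every state has energy at least $\lambda_{\min}(H)\ge b$. Since QMA is closed under polynomial-time (here trivial) promise-preserving reductions, \cref{lem:unpinned-containment} then gives $\mathrm{FF\text{-}SIH}_k\in\mathrm{QMA}$.

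The only point needing care is the approximation step inside \cref{lem:unpinned-containment}. In the YES case the energy is exactly $0$, but the verifier works with rational approximations of $h$. I would note that that lemma already handles this: the total operator-norm error is at most $M\cdot 2^{-q}\cdot 2^{k}$. Choosing $q$ polynomially large makes this error below $b/4$, so the approximated instance has thresholds $b/4$ and $3b/4$ with an inverse-polynomial gap.

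Positive semidefiniteness of $h$ is not needed for containment at all; it only fixes the meaning of the YES case. I expect no real obstacle here; the proof is essentially a direct appeal to the earlier containment lemma.
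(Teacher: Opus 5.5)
Your proposal is correct and follows the same route as the paper: the paper's proof also observes that an $\mathrm{FF\text{-}SIH}_k$ instance is just an $\mathrm{SIH}_k$ instance with YES threshold $a=0$ and an inverse-polynomial gap, and appeals directly to \cref{lem:unpinned-containment}. Your extra remarks on the approximation of $h$ and on positive semidefiniteness not being needed are accurate but already covered by that lemma.
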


\begin{proof}
An $\mathrm{FF\text{-}SIH}_k$ instance is a special case of an
$\mathrm{SIH}_k$ instance with positive-semidefinite fixed interaction
$h$, YES threshold $a=0$, and inverse-polynomial promise gap.
The claim therefore follows directly from
\cref{lem:unpinned-containment}.
\end{proof}

\begin{lemma}[Containment of Precise Variants]
\label{lem:precise-sih-containment}
For constant $k$,

\[
\mathrm{Precise\text{-}SIH}_k\in\mathrm{PSPACE}.
\]

\end{lemma}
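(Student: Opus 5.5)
The plan is to reduce $\mathrm{Precise\text{-}SIH}_k$ to the Precise $k$-Local Hamiltonian problem of \cref{def:precise-local-hamiltonian}. That problem lies in $\mathrm{PSPACE}$ by \cref{thm:precise-local-hamiltonian}; alternatively, one can argue directly via $\mathrm{QMA}_{\exp}=\mathrm{PSPACE}$ (\cref{thm:qma-exp-pspace}) using the standard energy-estimation verifier.

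The reduction mirrors \cref{lem:unpinned-containment}, but now the precision must beat an inverse-exponential gap. Given an instance $(n,L,a,b)$ with $b-a\ge 2^{-p(n)}$, I use the polynomial-time computability requirement of \cref{def: SIH}. With precision parameter $q=p(n)+\lceil\log_2 |L|\rceil+k+3$, which is still polynomial in $n$, this produces a rational matrix $\tilde h$. I symmetrize it to $(\tilde h+\tilde h^\dagger)/2$ so that it is Hermitian. Each entry is within $2^{-q}$ of $h$, so $\|\tilde h-h\|\le 2^{k}\cdot 2^{-q}$, by bounding the operator norm by the dimension times the maximal entry error. Summing over the $M=|L|=\operatorname{poly}(n)$ terms gives
\[
\|\tilde H-H\|\le M2^{k-q}\le 2^{-p(n)}/8.
\]
I then output the Precise $k$-Local Hamiltonian instance $\tilde H=\sum_{e\in L}\tilde h_e$ with thresholds $\tilde a=a+(b-a)/4$ and $\tilde b=b-(b-a)/4$. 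By Weyl's inequality, YES instances map to YES instances and NO instances map to NO instances. The new gap is $(b-a)/2\ge 2^{-p(n)-1}$, which is still inverse exponential. Entries have polynomially many bits, and $\|\tilde h_e\|=O(1)$ since $h$ is fixed.

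The one point needing care is the locality requirement in \cref{thm:precise-local-hamiltonian}, which is stated for $k\ge 3$. For $k\le 2$, I regard a $k$-local instance as a $3$-local one, since every $k$-local term is trivially also $3$-local; membership needs only the upper bound for some constant locality. More robustly, I can bypass the locality issue by citing the direct argument: the exponentially precise energy-estimation verifier (phase estimation or Kitaev's verification with polynomially many repetitions on a witness ground state) is a $\mathrm{QMA}_{\exp}$ protocol for any constant-locality Hamiltonian, and \cref{thm:qma-exp-pspace} then gives membership in $\mathrm{PSPACE}$. The main obstacle, such as it is, is checking that the polynomial precision budget from \cref{def: SIH} suffices, which the choice of $q$ above handles.
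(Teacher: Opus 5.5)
Your proposal is correct and follows essentially the same route as the paper: approximate the fixed interaction $h$ to polynomially many bits so that the total error is a small fraction of $b-a$, shift the thresholds inward by $(b-a)/4$ using Weyl's inequality, and conclude membership via Precise Local Hamiltonian and $\mathrm{QMA}_{\exp}=\mathrm{PSPACE}$. The explicit choice of $q$, the symmetrization step, and the remark on $k\le 2$ only spell out the precision and locality bookkeeping that the paper states more briefly.
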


\begin{proof}
Let

\[
H=\sum_{e\in L}h_e
\]

be a Precise-$\mathrm{SIH}_k$ instance with
$M:=|L|=\operatorname{poly}(n)$ and promise gap

\[
b-a\ge 2^{-p(n)}
\]

for some fixed polynomial $p$.

The only additional issue relative to standard Precise Local
Hamiltonian is that the entries of the fixed interaction $h$ are
efficiently computable rather than supplied explicitly with the
instance.
By \cref{def: SIH}, we may approximate $h$ in polynomial time to
operator-norm error at most

\[
\frac{b-a}{4M}.
\]

Indeed, this requires only $\operatorname{poly}(n)$ bits of precision.
The resulting Hamiltonian $\widetilde H$ satisfies
\[
\|H-\widetilde H\|
\le
\frac{b-a}{4}.
\]
Hence, by Weyl's inequality, setting
\[
\widetilde a:=a+\frac{b-a}{4},
\qquad
\widetilde b:=b-\frac{b-a}{4},
\]
preserves the YES and NO cases, while
\[
\widetilde b-\widetilde a
=
\frac{b-a}{2}
\ge
2^{-\operatorname{poly}(n)}.
\]

Thus Precise-$\mathrm{SIH}_k$ reduces to an efficiently specified
constant-locality Precise Local Hamiltonian instance, and hence lies in
$\mathrm{QMA}_{\exp}=\mathrm{PSPACE}$ by
\cref{thm:qma-exp-pspace}.

The $n$-dependent case follows identically from the uniform
computability requirement of \cref{def: DSIH}; pinning is handled as
in \cref{lem:pinned-containment}; and stoquasticity is merely an
additional restriction on the allowed instances.
\end{proof}

\begin{lemma}[Containment of Guided SIH]
\label{lem:guided-sih-containment}
For constant $k$,
\[
\mathrm{Guided\text{-}SIH}_k\in\mathrm{BQP}.
\]
\end{lemma}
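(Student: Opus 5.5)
The plan is to reduce Guided-$\mathrm{SIH}_k$ to the standard Guided Local Hamiltonian problem, which lies in $\mathrm{BQP}$ for semi-classical guiding states by the quantum phase estimation argument of Gharibian and Le Gall~\cite{Gharibian2021}. Equivalently, we can describe the $\mathrm{BQP}$ algorithm directly: prepare the guiding state, run phase estimation on a time-evolution operator for a rescaled copy of $H$, and threshold the measured energy. The steps mirror \cref{lem:unpinned-containment,lem:precise-sih-containment}: first replace the efficiently computable interaction by an explicit approximation, then rescale to meet the normalization $\|H\|\le 1$ of \cref{def:guided-local-hamiltonian}, and finally invoke the known containment.

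\textbf{Approximation.} Let $M=|L|=\operatorname{poly}(n)$. By \cref{def: SIH}, we compute in polynomial time an approximation $\tilde h$ with $\|h-\tilde h\|\le (b-a)/(8M)$, and set $\tilde H=\sum_{e\in L}\tilde h_e$, so that $\|H-\tilde H\|\le (b-a)/8$. Weyl's inequality then preserves the YES/NO dichotomy with thresholds shifted inward, leaving a gap of at least $(b-a)/2$.

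\textbf{Rescaling.} Since $h$ is fixed, $\|H\|\le M\|h\|=\operatorname{poly}(n)$. Choose $\Lambda=M(\|\tilde h\|+1)$ and consider $(\tilde H+\Lambda I)/(2\Lambda)$, which has norm at most $1$ and spectrum in $[0,1]$. Its promise gap is $(b-a)/(4\Lambda)\ge 1/\operatorname{poly}(n)$. The operator is still $k$-local: the identity shift can be absorbed into the local terms, or simply handled as a classical energy offset in the algorithm.

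\textbf{Guiding state (main obstacle).} The delicate point is that the guiding-state promise $\|\Pi_H\Ket{u}\|\ge\delta$ refers to the exact ground space of $H$, while perturbing $h$ to $\tilde h$ can split a degenerate ground space, so the promise need not hold literally for $\tilde H$. I would avoid this by not relying on the exact-ground-space form of the promise. Instead, argue directly through phase estimation on $e^{i\tilde H t}$, or equivalently on an efficiently simulable $e^{iHt}$ obtained via Hamiltonian simulation from the computable entries of $h$. The mass of $\Ket{u}$ on eigenvectors of $\tilde H$ with energy within $(b-a)/8$ of $\lambda_{\min}(H)$ is at least $\delta^2-O(\|H-\tilde H\|/\text{gap})$. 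To avoid any dependence on a spectral gap, I would instead take simulation precision inverse-exponentially small, so that $\|e^{iHt}-e^{i\tilde Ht}\|$ is negligible over the polynomial evolution time.

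\textbf{Algorithm and amplification.} Phase estimation with precision $(b-a)/(8\Lambda)$ applied to $\Ket{u}$ then reports an energy at most $a+(b-a)/4$ with probability at least about $\delta^2$ in the YES case. In the NO case, no outcome falls below $b-(b-a)/4$ except with negligible probability. Since $\delta\ge1/\operatorname{poly}(n)$, repeating polynomially many times amplifies this separation to bounded error. Finally, $\Ket{u}$ is preparable in polynomial time from its explicit polynomial-size support, by standard state preparation of a uniform superposition over a listed set, which places the problem in $\mathrm{BQP}$.
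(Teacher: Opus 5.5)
Your proposal is correct and takes essentially the same route as the paper: bound $\|H\|\le|L|\,\|h\|=\operatorname{poly}(n)$, rescale, prepare the semi-classical guiding state, run phase estimation, and convert the $\delta^2$ ground-space weight into bounded error by polynomially many repetitions. Your explicit handling of the approximation of $h$ is a sound refinement of a point the paper leaves implicit in ``standard Hamiltonian simulation''. However, neither a spectral gap nor inverse-exponential precision is needed there: for $\Ket{\phi}=\Pi_H\Ket{u}/\|\Pi_H\Ket{u}\|$, the component of $\Ket{\phi}$ outside the $\tilde H$-eigenvalue window of half-width $w$ around $\lambda_{\min}(H)$ has norm at most $\|H-\tilde H\|/w$, so that window still carries amplitude at least $\delta-\|H-\tilde H\|/w$.
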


\begin{proof}
Let
\[
H=\sum_{e\in L}h_e
\]
be a Guided-$\mathrm{SIH}_k$ instance. Since $h$ is fixed and
$|L|=\operatorname{poly}(n)$,
\[
\|H\|\le |L|\,\|h\|=\operatorname{poly}(n).
\]
Moreover, the semi-classical guiding state $\Ket{u}$ is efficiently
preparable~\cite{Gharibian2021}.

After rescaling $H$ by a known polynomial bound on its norm, standard
Hamiltonian simulation and phase estimation can estimate its energy to
precision smaller than a constant fraction of the inverse-polynomial
promise gap. The guiding-state promise
\[
\|\Pi_H\Ket{u}\|\geq\delta
\]
gives ground-space measurement probability at least $\delta^2$.
Since $\delta\geq 1/\operatorname{poly(n)}$, polynomially many repetitions suffice to
distinguish the YES and NO cases with bounded error.

Hence
\[
\mathrm{Guided\text{-}SIH}_k\in\mathrm{BQP}.
\]
\end{proof}

\section{General Single-Interaction Compilers}
\label{sec:general_methodology}

We develop the single-interaction compilation framework in two stages.
First, \cref{sec:pinned-compiler,sec:unpinned-compiler} give exact
pinned and unpinned compilers for Hamiltonians whose local terms are
drawn from a fixed finite interaction alphabet.  We then treat weighted
and arbitrary local Hamiltonians separately in
\cref{sec:universal-finite-alphabet-normalization}, where coupling
strengths and arbitrary local matrices are reduced to a fixed finite
alphabet before the exact compiler is applied.  Thus the
finite-alphabet compilation itself is exact; approximation enters only
in the preceding normalization step.

Let
\[
\mathcal{A}
=
\{h^{(1)},h^{(2)},\ldots,h^{(m)}\}
\]
be a finite collection of $k$-local Hermitian matrices, and consider
\[
K
=
\sum_{i=1}^{M} h^{(j_i)}_{V_i},
\qquad
j_i\in\{1,\ldots,m\},
\]
where $V_i$ is an ordered $k$-tuple of pairwise distinct working qubits.
Interactions that differ by a scalar coefficient are regarded as
distinct elements of $\mathcal A$.

For the constant-locality constructions considered here,
$\mathcal A$ is fixed independently of the target instance and
$m=O(1)$.  The compiled interaction may depend on $\mathcal A$, but not
on the particular target Hamiltonian.  The pinned compiler requires the
pairs $(V_i,j_i)$ to be pairwise distinct: two different alphabet
elements may act on the same working tuple, but the same interaction
may not occur twice on that tuple.  The unpinned compiler removes this
restriction by assigning a dedicated auxiliary block to every
occurrence.

Both constructions produce a Hamiltonian $\widetilde K$ consisting
entirely of applications of one interaction $\widetilde h$.  Auxiliary
qubits act as selectors for the members of $\mathcal A$.  In the pinned
compiler their selector states are fixed by the pinning resource,
whereas in the unpinned compiler the valid selector configuration is
enforced energetically.  The specialized constructions in later
sections use the same selector principle while exploiting additional
structure to reduce the locality overhead.

\subsection{The Pinned Single-Interaction Reduction}
\label{sec:pinned-compiler}

The pinned compiler uses a constant-size pinned auxiliary register to
select among the interactions in $\mathcal A$.  Each
$h^{(j)}\in\mathcal A$ is assigned a tag, and the common interaction is
constructed so that compression to the corresponding pinned selector
state yields exactly $h^{(j)}$ on the working register.

We first give a binary selector and then a base-$3$ Pauli selector,
which reduces the selector locality at the cost of additional pinned
resource qubits per selector digit.

\subsubsection{The Binary Selector}

For each interaction $h^{(j)}\in\mathcal A$, choose a distinct binary
string
\[
x^{(j)}
=
\left(
x^{(j)}_1,\ldots,x^{(j)}_\ell
\right),
\qquad
\ell=\left\lceil\log_2m\right\rceil.
\]
Using the computational-basis projectors
\[
\Pi^0:=\Pi^{\Ket{0}},
\qquad
\Pi^1:=\Pi^{\Ket{1}},
\]
define the $(k+\ell)$-local interaction
\[
\widetilde h_{\mathrm B}
(v_1,\ldots,v_k,u_1,\ldots,u_\ell)
=
\sum_{j=1}^{m}
h^{(j)}_{v_1,\ldots,v_k}
\otimes
\Pi^{x^{(j)}_1}_{u_1}
\otimes\cdots\otimes
\Pi^{x^{(j)}_\ell}_{u_\ell}.
\]

The pinned auxiliary register contains $\ell$ pairs, each fixed to
$\Ket{01}$. At each selector position, routing the first or second qubit of the
corresponding pair supplies $\Ket{0}$ or $\Ket{1}$, respectively. Hence any tag $\Ket{x^{(j)}}$ can be routed to the selector
slots using the same $2\ell$ pinned qubits.

By orthogonality of the computational-basis projectors, compression to
that tag selects exactly the desired interaction:
\[
\left(
I\otimes\Bra{x^{(j)}}
\right)
\widetilde h_{\mathrm B}
\left(
I\otimes\Ket{x^{(j)}}
\right)
=
h^{(j)}.
\]

\subsubsection{The Base-3 Pauli Selector}

The binary construction uses
$\lceil\log_2m\rceil$ selector qubits.  A base-$3$ construction reduces
the selector length to $\lceil\log_3m\rceil$ by using the three Pauli
operators and their positive eigenstates, at the cost of three pinned
resource qubits per selector digit.

Define
\begin{align}
    \Ket{s_1}
    &= \Ket{+}
    =\frac{1}{\sqrt{2}}(\Ket{0}+\Ket{1}),
    &
    P^{(1)}&=X,
    \nonumber\\
    \Ket{s_2}
    &= \Ket{+i}
    =\frac{1}{\sqrt{2}}(\Ket{0}+i\Ket{1}),
    &
    P^{(2)}&=Y,
    \nonumber\\
    \Ket{s_3}
    &= \Ket{0},
    &
    P^{(3)}&=Z.
\end{align}
These states satisfy
\[
\Bra{s_q}P^{(r)}\Ket{s_q}
=
\delta_{qr},
\qquad
q,r\in\{1,2,3\}.
\]

We instead encode each interaction index
$j\in\{1,\dots,m\}$ by a base-$3$ string
$x^{(j)}$ of length
\[
\ell=\left\lceil\log_3 m\right\rceil,
\]
where $x^{(j)}_r\in\{1,2,3\}$. Replacing the computational-basis
projectors by the Pauli selector operators gives
\[
\widetilde h_{\mathrm{P}}
(v_1,\dots,v_k,u_1,\dots,u_\ell)
=
\sum_{j=1}^{m}
h^{(j)}_{v_1,\dots,v_k}
\otimes
P^{(x^{(j)}_1)}_{u_1}
\otimes\cdots\otimes
P^{(x^{(j)}_\ell)}_{u_\ell}.
\]

To route an arbitrary base-$3$ tag, the pinned auxiliary register
contains $\ell$ triplets, each fixed to
\[
\Ket{s_1}\otimes\Ket{s_2}\otimes\Ket{s_3}.
\]
For the $r$-th selector digit, we route the qubit in state
$\Ket{s_{x^{(j)}_r}}$ into the corresponding selector slot. Thus the
routed selector state is
\[
\Ket{x^{(j)}}
=
\bigotimes_{r=1}^{\ell}
\Ket{s_{x^{(j)}_r}}.
\]
Using
\[
\Bra{s_q}P^{(r)}\Ket{s_q}
=
\delta_{qr},
\]
compression to this tag gives
\[
\left(
I\otimes\Bra{x^{(j)}}
\right)
\widetilde h_{\mathrm P}
\left(
I\otimes\Ket{x^{(j)}}
\right)
=
h^{(j)}.
\]

\subsubsection{Global Construction and Resource Bounds}

Let $\widetilde h$ denote either of the two selector interactions above.
For each target term
\[
h^{(j_i)}_{V_i},
\]
route an $\ell$-tuple $U_i$ of pinned selector qubits whose joint state
is the corresponding tag $\Ket{x^{(j_i)}}$, and define
\[
\widetilde K
=
\sum_{i=1}^{M}
\widetilde h_{V_i,U_i}.
\]

Let $\Ket{\Psi_{\mathrm{ancilla}}}$ denote the global pinned auxiliary
state and define
\[
V\Ket{\psi}
=
\Ket{\psi}_W
\otimes
\Ket{\Psi_{\mathrm{ancilla}}}_A.
\]
The selector identities above give
\[
V^\dagger\widetilde K V
=
\sum_{i=1}^{M}
h^{(j_i)}_{V_i}
=
K.
\]
Thus the compression of $\widetilde K$ to the pinned sector reproduces
the finite-alphabet Hamiltonian exactly.

\begin{thm}[Pinned Single-Interaction Compiler]
\label{thm:pinned-resource-bounds}
Let
\[
\mathcal{A}
=
\{h^{(1)},\ldots,h^{(m)}\}
\]
be a fixed finite alphabet of $k$-local Hermitian interactions, and let
\[
K=\sum_{i=1}^{M} h^{(j_i)}_{V_i}
\]
be a Hamiltonian whose local terms are drawn from $\mathcal{A}$. Assume that the pairs $(V_i,j_i)$ are pairwise distinct; equivalently,
no interaction $h^{(j)}$ occurs more than once on the same ordered
working tuple.

There exists a single interaction matrix $\widetilde h$ and a pinned
auxiliary register such that, if

\[
V\Ket{\psi}
=
\Ket{\psi}_W\otimes\Ket{\Psi_{\mathrm{ancilla}}}_A,
\]

then the resulting Hamiltonian $\widetilde K$ satisfies

\[
V^\dagger \widetilde K V = K.
\]

Moreover, if the entries of every interaction in $\mathcal A$ are
polynomial-time computable, then the entries of $\widetilde h$ are
polynomial-time computable as well.

For the binary selector, the required resources are
\[
k' = k+\left\lceil\log_2 m\right\rceil,
\qquad
p=2\left\lceil\log_2m\right\rceil.
\]
For the base-$3$ Pauli selector, they are
\[
k' = k+\left\lceil\log_3m\right\rceil,
\qquad
p=3\left\lceil\log_3m\right\rceil.
\]

In particular, if $k$ and $m$ are constants, then both the locality
overhead and the number of pinned qubits are constant.
\end{thm}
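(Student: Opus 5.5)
The plan is to assemble the theorem from the selector identities already established, checking three things in turn: that each local compression is exact, that the local compressions add up to the global one, and that the resource and computability claims hold.

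\textbf{Local selector identities.} For the binary selector, I would expand $(I\otimes\Bra{x^{(j)}})\widetilde h_{\mathrm B}(I\otimes\Ket{x^{(j)}})$ term by term. Each summand indexed by $j'$ contributes $h^{(j')}\prod_r \Bra{x^{(j)}_r}\Pi^{x^{(j')}_r}\Ket{x^{(j)}_r}$. This product is the Kronecker delta $\delta_{x^{(j)},x^{(j')}}$, which equals $\delta_{jj'}$ because the tags are distinct.

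For the Pauli selector, the corresponding product is $\prod_r \Bra{s_{x^{(j)}_r}}P^{(x^{(j')}_r)}\Ket{s_{x^{(j)}_r}}=\prod_r\delta_{x^{(j)}_r x^{(j')}_r}$. This uses $\langle +|X|+\rangle=1$, $\langle +|Y|+\rangle=\langle+|Z|+\rangle=0$, and the analogous identities for $\Ket{+i}$ and $\Ket{0}$, which I would verify directly. The product again reduces to $\delta_{jj'}$.

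\textbf{Passage to the global identity.} Write the global pinned state as a product state $\Ket{\Psi_{\mathrm{ancilla}}}$ of the $\ell$ pairs (respectively triplets). Each term $\widetilde h_{V_i,U_i}$ acts on $W$ and on the single-qubit slots $U_i$. Each slot of $U_i$ is one qubit taken from a distinct pair (or triplet), in a definite single-qubit state, and the different slots of $U_i$ come from different pairs. Hence the reduced state of $\Ket{\Psi_{\mathrm{ancilla}}}$ on $U_i$ is exactly the product $\Ket{x^{(j_i)}}$. Since the ancilla is unentangled, $V^\dagger\widetilde h_{V_i,U_i}V=(I\otimes\Bra{x^{(j_i)}})\widetilde h(I\otimes\Ket{x^{(j_i)}})$ acting on $V_i$. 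Summing over $i$ by linearity gives $V^\dagger\widetilde K V=K$.

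\textbf{Main obstacle: validity of the instance.} The step I expect to need care is checking that $\widetilde K$ is a valid SIH instance, meaning its ordered tuples $(V_i,U_i)$ are pairwise distinct and contain no repeated qubits. No repeats within a tuple: $V_i$ consists of distinct working qubits, and the slots of $U_i$ are distinct ancilla qubits from different pairs, disjoint from $W$. Pairwise distinctness: the routing $U_i$ is determined by the tag $x^{(j_i)}$, which is determined by $j_i$. So the distinctness assumption on the pairs $(V_i,j_i)$ gives distinct tuples. Conversely, a repeated pair would force a repeated tuple, which is why that hypothesis is imposed.

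\textbf{Resource bounds and computability.} The locality is $k+\ell$ with $\ell=\lceil\log_2 m\rceil$ or $\lceil\log_3 m\rceil$. The number of pinned qubits is $2\ell$ or $3\ell$ respectively, and both counts are constant when $k$ and $m$ are fixed. The pinned product state is preparable from $\Ket{0^p}$ by a constant-size circuit using $X$, $H$, and $S$, as the definition requires. For computability, each entry of $\widetilde h$ is a finite sum of products of entries of the $h^{(j)}$ with the fixed entries of $\Pi^{0},\Pi^{1},X,Y,Z$ (all in $\{0,\pm1,\pm i\}$). Rational approximations with error $2^{-q}$ therefore come from computing each $h^{(j)}$ to precision $2^{-q-c}$ for a constant $c$ depending on $m$ and $\ell$.
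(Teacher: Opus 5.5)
Your proposal is correct and follows essentially the same route as the paper: the selector identities (which you verify in more detail, including the explicit product-state compression and the preparability of the pinned state), the case split on $V_i=V_{i'}$ versus $V_i\neq V_{i'}$ for distinctness of the compiled tuples, the resource count $k+\ell$ with $2\ell$ or $3\ell$ pinned qubits, and computability via finite sums and tensor products with fixed matrices. The one step to state more carefully is distinctness. You need that distinct tags route a \emph{different} physical qubit of the same pair or triplet at some selector position, i.e.\ that the map from tags to routed selector tuples is injective and not merely well defined; this is exactly how the paper closes that case.
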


\begin{proof}
The selector identities established above give
\[
V^\dagger\widetilde K V=K.
\]

For the binary construction,
\[
\ell=\left\lceil\log_2m\right\rceil,
\]
so the common interaction acts on $k+\ell$ qubits, while the pinned
resource pool contains two qubits per selector digit. Hence
\[
k'
=
k+\left\lceil\log_2m\right\rceil,
\qquad
p
=
2\left\lceil\log_2m\right\rceil.
\]

For the base-$3$ construction,
\[
\ell=\left\lceil\log_3m\right\rceil,
\]
and each selector digit requires one pinned qubit in each of the three
states $\Ket{s_1},\Ket{s_2},\Ket{s_3}$. Therefore
\[
k'
=
k+\left\lceil\log_3m\right\rceil,
\qquad
p
=
3\left\lceil\log_3m\right\rceil.
\]

It remains only to verify the ordered-hyperedge condition. Consider two
target terms indexed by $i\neq i'$. If $V_i\neq V_{i'}$, their compiled
tuples already differ in the working-qubit positions. If
$V_i=V_{i'}$, the hypothesis that $(V_i,j_i)$ and
$(V_{i'},j_{i'})$ are distinct implies $j_i\neq j_{i'}$. Their selector
tags therefore differ in at least one position, where a different
pinned physical qubit is routed into the compiled tuple. Hence all
compiled ordered tuples are pairwise distinct. Each tuple also contains
pairwise distinct physical qubits because the working tuple has distinct
entries, different selector positions use different auxiliary
pairs or triplets, and the working and auxiliary registers are disjoint.

Finally, the entries of $\widetilde h$ are obtained from the finitely
many matrices in $\mathcal A$ by finite sums and tensor products with
fixed projectors or Pauli matrices. Polynomial-time computability of
$\mathcal A$ therefore implies polynomial-time computability of
$\widetilde h$.
\end{proof}

\subsection{The Unpinned Single-Interaction Reduction}
\label{sec:unpinned-compiler}

We now remove the pinning resource from the binary-selector compiler.
Rather than restricting the auxiliary register to the desired selector
state, we enforce that state energetically. Since the selector and check
operators are diagonal in the computational basis, the resulting
Hamiltonian decomposes into invariant auxiliary sectors, allowing the
valid and invalid configurations to be analyzed separately.

\subsubsection{Penalty Construction and Dedicated Auxiliary Blocks}

We use the binary selector with
\[
\ell=\lceil\log_2 m\rceil.
\]
The base-$3$ Pauli selector does not provide the computational-basis
sector decomposition required below, since its selector operators are
not diagonal in that basis.

Each target occurrence receives its own dedicated block of
$\ell+1$ auxiliary pairs.  This prevents the degree of any auxiliary
qubit from growing with the number of target terms and also
distinguishes repeated occurrences of the same interaction on the same
working tuple. 
The complete auxiliary register contains
\[
M(\ell+1)
\]
pairs, each intended to lie in the state $\Ket{01}$.

To enforce this configuration, define the two-qubit check operator
\[
h_{\mathrm{check}}(u_{\ell+1},u_{\ell+2})
=
I_{u_{\ell+1},u_{\ell+2}}
-
\Pi^{\Ket{0}}_{u_{\ell+1}}
\otimes
\Pi^{\Ket{1}}_{u_{\ell+2}}.
\]
Thus $h_{\mathrm{check}}$ projects onto the orthogonal complement of
the valid pair state $\Ket{01}$.  We incorporate this check into the
common interaction by defining the modified interaction matrix $\widetilde h_{\mathrm{unp}}$ acting on $k$ working qubits, $\ell$ selector qubits, and 2 check qubits as:

\begin{align}
    \widetilde h_{\mathrm{unp}}(v_1,\dots,v_k,u_1,\dots,u_{\ell+2})
    &= \frac{1}{\ell+1}
    \sum_{j=1}^m
    \left(
    h^{(j)}_{v_1,\dots,v_k}
    \otimes
    \Pi^{x^{(j)}_1}_{u_1}
    \otimes \cdots \otimes
    \Pi^{x^{(j)}_\ell}_{u_\ell}
    \otimes I_{u_{\ell+1},u_{\ell+2}}
    \right)
    \nonumber\\
    &\quad
    +\Delta
    \left(
    I_{v_1,\dots,v_k,u_1,\dots,u_\ell}
    \otimes
    h_{\mathrm{check}}(u_{\ell+1},u_{\ell+2})
    \right).
    \label{eq:unpinned-single-interaction}
\end{align}

Let
\[
J:=\max_{j\in[m]}\|h^{(j)}\|.
\]
Fix a parameter $\gamma>0$ and choose the check penalty
\[
\Delta = 2J+\gamma.
\]
The factor $1/(\ell+1)$ bounds the total computational contribution of
a target block by $J$ in every fixed auxiliary sector.  Choosing
$\Delta=2J+\gamma$ then leaves an energy separation of at least
$\gamma$ whenever that block contains an invalid auxiliary pair.

\subsubsection{Exact Spectral Preservation}

For each target occurrence $i$, let its dedicated auxiliary block
contain $\ell+1$ ordered pairs.  We apply
$\widetilde h_{\mathrm{unp}}$ once for each pair:
\[
\widetilde K
=
\sum_{i=1}^{M}
\sum_{r=1}^{\ell+1}
\left(\widetilde h_{\mathrm{unp}}\right)_{V_i,U_i^{(r)}}.
\]
In the $r$-th application, the $r$-th pair occupies the two check
positions and the remaining $\ell$ pairs encode the tag of
$h^{(j_i)}$.  Thus every pair is checked exactly once, while in the
valid auxiliary configuration all $\ell+1$ applications select the
same target interaction.

For each computational-basis configuration
\[
z\in\{0,1\}^{2M(\ell+1)}
\]
of the auxiliary register, define
\[
\mathcal H_z
=
\mathcal H_W
\otimes
\operatorname{span}\{\Ket z_A\}.
\]
Since every operator acting on the auxiliary register is diagonal in
the computational basis, each $\mathcal H_z$ is invariant under
$\widetilde K$. Let
\[
\widetilde K(z)
=
\left(I_W\otimes\Bra z_A\right)
\widetilde K
\left(I_W\otimes\Ket z_A\right)
\]
denote the induced operator on the working register. The unique valid
configuration is
\[
z_{\mathrm{val}}
=
(01)^{M(\ell+1)};
\]
all other configurations are called invalid.

\begin{thm}[Exact Low-Energy Preservation of the Single-Interaction Compiler]
\label{thm:unpinned-spectral-preservation}
Let
\[
\mathcal{A}
=
\{h^{(1)},\ldots,h^{(m)}\}
\]
be a fixed finite alphabet of $k$-local Hermitian interactions, let
\[
\ell=\left\lceil\log_2m\right\rceil,
\qquad
J=\max_{j\in[m]}\|h^{(j)}\|,
\]
and consider
\[
K=\sum_{i=1}^{M} h^{(j_i)}_{V_i}.
\]
No distinctness assumption is imposed on the pairs $(V_i,j_i)$:
the same interaction may occur more than once on the same ordered
working tuple.
Fix $\gamma>0$, set
\[
\Delta=2J+\gamma,
\]
and construct $\widetilde K$ using the interaction in
\cref{eq:unpinned-single-interaction}.

Let
\[
\Ket{\Omega}_A
=
\Ket{01}^{\otimes M(\ell+1)}
\]
denote the valid auxiliary configuration, and define the isometry
\[
V:\mathcal{H}_W\longrightarrow\mathcal{H}_W\otimes\mathcal{H}_A,
\qquad
V\Ket{\psi}
=
\Ket{\psi}\otimes\Ket{\Omega}_A.
\]
Let
\[
E_0=\lambda_{\min}(K).
\]

Then the following properties hold:
\begin{enumerate}
    \item The valid subspace $\operatorname{im}(V)$ is invariant under
    $\widetilde K$, and
    \[
    V^\dagger \widetilde K V=K.
    \]

    \item For every invalid computational-basis auxiliary sector $z$,
    \[
    \widetilde K(z)\succeq K+\gamma I,
    \]
    where $\widetilde K(z)$ denotes the operator induced on the working
    register in sector $z$.

    \item The ground space is preserved exactly:
    \[
    \operatorname{Ground}(\widetilde K)
    =
    V\,\operatorname{Ground}(K).
    \]
    In particular, the ground-state degeneracy is unchanged.

    \item Every eigenvalue of $K$ strictly below $E_0+\gamma$ appears in
    $\widetilde K$ with exactly the same multiplicity, and
    $\widetilde K$ has no additional eigenvalues below this cutoff.

    \item Consequently,
    \[
    \lambda_{\min}(\widetilde K)=\lambda_{\min}(K),
    \]
    so the same Local Hamiltonian promise thresholds may be used before
    and after compilation.

    \item If $K$ has spectral gap $\delta>0$ above its ground space and
    $\widetilde\delta$ denotes the corresponding gap of $\widetilde K$, then
    \[
    \min\{\delta,\gamma\}
    \leq
    \widetilde\delta
    \leq
    \delta.
    \]
    In particular, if $\gamma\geq\delta$, then
    \[
    \widetilde\delta=\delta.
    \]
\end{enumerate}

The constructed interaction has locality
\[
k'
=
k+\ell+2
=
k+\left\lceil\log_2m\right\rceil+2.
\]
Hence the locality overhead is constant whenever $k$ and $m$ are
constant.

The construction introduces $2M(\ell+1)$ auxiliary qubits. Hence, if the
target Hamiltonian acts on $N$ qubits, the compiled Hamiltonian acts on
\[
N'
=
N+2M(\ell+1)
\]
qubits. 
In particular, for constant $m$ and $M=\operatorname{poly}(N)$, the construction has polynomial size. 
Moreover, all ordered interaction tuples in the compiled Hamiltonian are pairwise distinct, even when identical pairs $(V_i,j_i)$ occur multiple times in $K$. 

\end{thm}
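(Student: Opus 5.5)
The proof rests on the block-diagonal structure of $\widetilde K$ with respect to the computational-basis configurations $z$ of the auxiliary register. Every operator that $\widetilde h_{\mathrm{unp}}$ applies to auxiliary qubits is a product of projectors $\Pi^{0},\Pi^{1}$ or $h_{\mathrm{check}}$, all diagonal in that basis. Hence
\[
\widetilde K=\bigoplus_z \widetilde K(z)\otimes\Ket{z}\!\Bra{z}_A .
\]
The plan is to compute each $\widetilde K(z)$ explicitly, block by block. For the target occurrence $i$ and application $r$, the sector $z$ fixes the $\ell$ selector bits routed into the selector slots. The computational part therefore reduces to $\frac{1}{\ell+1}h^{(j)}_{V_i}$ if those bits equal some tag $x^{(j)}$, and to $0$ otherwise. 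Here we use orthogonality of the binary tags, and any unused strings in $\{0,1\}^\ell$ give zero. The check part contributes the scalar $\Delta\,[\text{pair } r \text{ of block } i \ne 01]$.

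\textbf{Item 1.} In $z_{\mathrm{val}}$ every pair reads $01$, so every application's selector reads the tag of $j_i$. The $\ell+1$ copies of $\frac{1}{\ell+1}h^{(j_i)}_{V_i}$ then sum to $h^{(j_i)}_{V_i}$, and all checks vanish. This gives $V^\dagger\widetilde K V=K$, and invariance follows from the block decomposition.

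\textbf{Item 2.} For an invalid $z$, write $\widetilde K(z)=\sum_i B_i(z)$ and compare blockwise with $h^{(j_i)}_{V_i}$. If block $i$ is entirely valid, $B_i(z)=h^{(j_i)}_{V_i}$. If block $i$ contains $c_i\ge 1$ invalid pairs, then $B_i(z)$ is a sum of at most $\ell+1$ terms of norm at most $J/(\ell+1)$, plus $c_i\Delta$. Hence
\[
B_i(z)\succeq -J+c_i\Delta \succeq h^{(j_i)}_{V_i}-2J+\Delta = h^{(j_i)}_{V_i}+\gamma .
\]
Summing over blocks, with at least one invalid block, gives $\widetilde K(z)\succeq K+\gamma I$. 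Getting this bookkeeping exactly right is the main point to be careful about. The key is that the comparison is made term by term against the \emph{same} $h^{(j_i)}_{V_i}$, so that operators on overlapping working tuples never need to commute.

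\textbf{Items 3--6.} These follow from the direct sum. The spectrum of $\widetilde K$ is the union of $\operatorname{spec}(K)$, from the valid sector, and the spectra of the $\widetilde K(z)$. Each $\widetilde K(z)$ has all eigenvalues at least $E_0+\gamma$, by Weyl monotonicity from item 2. So below $E_0+\gamma$ the eigenvalues and their multiplicities are exactly those of $K$, which is item 4. Since $\gamma>0$, the ground energy is $E_0$, with ground space $V\,\operatorname{Ground}(K)$; this gives items 3 and 5. For item 6, the first excited level of $\widetilde K$ is the minimum of $E_0+\delta$ (from the valid sector) and the invalid-sector minimum, which is at least $E_0+\gamma$. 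This yields $\min\{\delta,\gamma\}\le\widetilde\delta\le\delta$.

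\textbf{Resource counts and distinctness.} The locality $k+\ell+2$ and the qubit count $N+2M(\ell+1)$ are read off the construction. Compiled tuples are pairwise distinct because distinct occurrences $i\ne i'$ use disjoint dedicated auxiliary blocks, and distinct $r$ within a block place different pairs in the check slots. Within each tuple the qubits are distinct because the working and auxiliary registers are disjoint and the working tuple $V_i$ already has distinct entries.
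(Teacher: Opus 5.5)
Your proposal is correct and follows essentially the same route as the paper. Like the paper's proof, it decomposes into computational-basis auxiliary sectors, shows that the valid sector reproduces $K$ exactly, and compares each invalid block against the same target term to get $B_i(z)\succeq h^{(j_i)}_{V_i}+\gamma I$; from there it reads off the low-lying spectrum, ground space and gap bounds from the direct sum, and proves tuple distinctness through the dedicated auxiliary blocks.
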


\begin{lemma}[Exact Embedding of the Valid Sector]
\label{lem:unpinned-valid-sector}
The valid auxiliary subspace is invariant under $\widetilde K$, and
\[
V^\dagger\widetilde K V=K.
\]
\end{lemma}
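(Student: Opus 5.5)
The plan is a direct sector computation, using that every auxiliary operator in $\widetilde h_{\mathrm{unp}}$ is diagonal in the computational basis. First, invariance: each application $(\widetilde h_{\mathrm{unp}})_{V_i,U_i^{(r)}}$ has the form $\sum_j h^{(j)}_{V_i}\otimes D_j + \Delta\, I\otimes D'$, with $D_j,D'$ diagonal projectors on the auxiliary register. Such an operator maps $\mathcal H_W\otimes\operatorname{span}\{\Ket{z}_A\}$ into itself for every $z$. In particular, $\operatorname{im}(V)=\mathcal H_{z_{\mathrm{val}}}$ is invariant, and $V^\dagger\widetilde K V=\widetilde K(z_{\mathrm{val}})$.

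Second, evaluate $\widetilde K(z_{\mathrm{val}})$ term by term. Fix an occurrence $i$ and an index $r$. In the $r$-th application, the check slots hold the $r$-th pair of block $i$, which is in state $\Ket{01}$, so
\[
\Bra{01}h_{\mathrm{check}}\Ket{01}=0
\]
and the penalty term vanishes. The $\ell$ selector slots are filled by routing from the other $\ell$ pairs of the block. As in the binary pinned selector, the qubit supplying $\Ket{0}$ or $\Ket{1}$ is chosen according to the bits of $x^{(j_i)}$, so the selector state is exactly $\Ket{x^{(j_i)}}$. Orthogonality of the computational-basis projectors then gives
\[
\Bra{x^{(j_i)}}\,\Pi^{x^{(j)}_1}\otimes\cdots\otimes\Pi^{x^{(j)}_\ell}\,\Ket{x^{(j_i)}}=\delta_{j,j_i},
\]
using that the tags $x^{(j)}$ are distinct. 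Hence the $(i,r)$ application compresses to $\frac{1}{\ell+1}h^{(j_i)}_{V_i}$.

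Third, sum the contributions. For each $i$ there are $\ell+1$ such applications, each contributing $\frac{1}{\ell+1}h^{(j_i)}_{V_i}$, so occurrence $i$ contributes exactly $h^{(j_i)}_{V_i}$. Summing over $i$ yields $K$. Repeated occurrences of the same pair $(V_i,j_i)$ pose no problem: each uses its own dedicated block and is simply counted again, which is exactly what the multiset sum $K$ requires.

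The main point needing care is the routing bookkeeping in the second step. For each $r$ one must specify which qubit of which of the remaining $\ell$ pairs fills each selector slot, so that the valid state $\Ket{01}$ of those pairs yields the intended tag bit. One must also check that the selector slots never reuse the pair currently in the check position. This requires $\ell$ other pairs to be available, which holds because the block has $\ell+1$ pairs. Once an explicit assignment is fixed, for instance cyclically skipping pair $r$, the remaining computations are immediate.
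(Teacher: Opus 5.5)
Your proposal is correct and follows the paper's proof essentially step for step. Both arguments get invariance from the fact that the auxiliary operators are diagonal, note that $h_{\mathrm{check}}$ vanishes on $\Ket{01}$, use that the selector matches the tag in each of the $\ell+1$ applications so each contributes $\frac{1}{\ell+1}h^{(j_i)}_{V_i}$, and then sum over applications and blocks. Your explicit orthogonality computation and your remark on drawing the selector qubits from the $\ell$ non-check pairs make precise what the paper leaves implicit in its routing construction.
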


\begin{proof}
All operators acting on the auxiliary register are diagonal in the
computational basis, so every computational-basis auxiliary sector is
invariant under $\widetilde K$.

In the valid configuration $\Ket{\Omega}_A$, every check contribution
vanishes.
For each target interaction $h^{(j_i)}_{V_i}$, the selector qubits match the
assigned binary tag in each of the $\ell+1$ applications belonging to the
$i$-th target block. Hence each such application contributes
\[
\frac{1}{\ell+1}h^{(j_i)}_{V_i}.
\]
Summing the $\ell+1$ applications gives
\[
h^{(j_i)}_{V_i}.
\]
Summing over all target blocks therefore yields
\[
V^\dagger\widetilde KV
=
\sum_{i=1}^{M}h^{(j_i)}_{V_i}
=
K.
\]
\end{proof}

\begin{lemma}[Separation of Invalid Sectors]
\label{lem:unpinned-invalid-sector}
Let $z$ be a computational-basis configuration of the auxiliary register,
and let $r(z)$ denote the number of target blocks containing at least one
invalid auxiliary pair. Then
\[
\widetilde K(z)
\succeq
K+r(z)\gamma I.
\]
In particular, if $z$ is an invalid auxiliary sector, equivalently $z\neq z_{\mathrm{val}}$, then $r(z)\geq 1$ and hence
\[
\widetilde K(z)
\succeq
K+\gamma I.
\]
\end{lemma}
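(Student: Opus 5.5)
The operator $\widetilde K(z)$ decomposes as a sum over target blocks, so I will prove the bound one block at a time and then add. Fix a configuration $z$. For block $i$, let $z_i$ be its restriction to the $\ell+1$ dedicated pairs, and let $B_i(z)$ be the operator that the $\ell+1$ applications of $\widetilde h_{\mathrm{unp}}$ in block $i$ induce on the working register. Because all auxiliary factors are diagonal, compressing to $\Ket{z}_A$ simply replaces each projector $\Pi^{x}_{u}$ by the scalar $\delta_{x,z_u}$ and each $h_{\mathrm{check}}$ by a scalar in $\{0,1\}$. Hence
\[
\widetilde K(z)=\sum_{i=1}^M B_i(z),\qquad B_i(z)=\frac{1}{\ell+1}\sum_{r=1}^{\ell+1}\sum_{j=1}^m c^{(i,r)}_j(z)\,h^{(j)}_{V_i}+\Delta\, n_i(z)\,I,
\]
where each $c^{(i,r)}_j(z)\in\{0,1\}$ and $n_i(z)$ is the number of invalid pairs in block $i$. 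The claim then reduces to the per-block inequality
\[
B_i(z)\succeq h^{(j_i)}_{V_i}+\gamma\,[\,n_i(z)\ge1\,]\,I.
\]
Summing over $i$ gives $\widetilde K(z)\succeq K+r(z)\gamma I$.

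When block $i$ is valid ($n_i(z)=0$), the calculation in \cref{lem:unpinned-valid-sector} gives $B_i(z)=h^{(j_i)}_{V_i}$ exactly. In that case the selectors in every application read exactly the tag $x^{(j_i)}$.

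When block $i$ is invalid, the argument rests on one key observation. In a fixed $r$-th application the selector bits $z_{u_1},\dots,z_{u_\ell}$ form a single binary string. Since the tags $x^{(j)}$ are distinct, at most one $j$ has $c^{(i,r)}_j(z)=1$. The computational part of each application therefore has norm at most $J/(\ell+1)$, and the whole computational part of $B_i(z)$ has norm at most $J$. This is exactly why the normalization $1/(\ell+1)$ was chosen. I read the selector bits as the first qubit of each pair, fed into $u_1,\dots,u_\ell$. Invalid pairs can also be $00$ or $11$, but the selector still reads one well-defined bit per slot, so this observation does not depend on the convention. Putting these together gives $B_i(z)\succeq -J\,I+\Delta n_i(z) I\succeq (-J+2J+\gamma)I=(J+\gamma)I$. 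Since $h^{(j_i)}_{V_i}\preceq J I$, this yields $B_i(z)\succeq h^{(j_i)}_{V_i}+\gamma I$.

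I expect the main obstacle to be the bookkeeping of the compression rather than any hard estimate. Two points need care. First, the two check qubits are traced with $I$ in the selector term, so they contribute no factor there. Second, the working operators from different blocks act on possibly overlapping tuples $V_i$. That overlap is harmless, because the inequalities are operator inequalities on the full working space and are summed termwise. Finally, if $z\neq z_{\mathrm{val}}$, some pair is not $01$, so some block has $n_i(z)\ge1$, which gives $r(z)\ge1$.
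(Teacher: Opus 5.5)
Your proposal is correct and follows the paper's proof. Both decompose $\widetilde K(z)$ block by block, bound the computational part by $\|C_i(z_i)\|\le J$ using the $1/(\ell+1)$ normalization, and use $\Delta=2J+\gamma$ to absorb the worst-case deficit $-2J$ relative to $h^{(j_i)}_{V_i}$ in any block with an invalid pair, then sum over blocks. Your explicit argument that distinct tags allow at most one branch per application fills in a step the paper only asserts. Your remark about which pair qubit feeds a selector slot is harmless, since the bound only needs each slot to carry a definite bit.
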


\begin{proof}
Because every operator acting on the auxiliary register is diagonal in the
computational basis,
\[
\widetilde K
=
\bigoplus_z \widetilde K(z).
\]

Fix an auxiliary sector $z$. For target block $i$, let $z_i$ denote its
local auxiliary configuration and write
\[
\widetilde K_i(z_i)
=
C_i(z_i)+t_i(z_i)\Delta I,
\]
where $C_i(z_i)$ is the computational contribution and
$t_i(z_i)$ is the number of invalid auxiliary pairs in that block.

For fixed $z_i$, each of the $\ell+1$ applications in block $i$
contributes either zero or
\[
\frac{1}{\ell+1}h^{(j)}_{V_i}
\]
for some $j\in[m]$.
Therefore
\[
\|C_i(z_i)\|
\leq
J.
\]
Since also
\[
\|h^{(j_i)}_{V_i}\|
\leq J,
\]
we obtain
\[
C_i(z_i)-h^{(j_i)}_{V_i}
\succeq
-2JI.
\]

If block $i$ is invalid, then $t_i(z_i)\geq1$. Hence
\[
\begin{aligned}
\widetilde K_i(z_i)-h^{(j_i)}_{V_i}
&=
C_i(z_i)-h^{(j_i)}_{V_i}
+t_i(z_i)\Delta I\\
&\succeq
-2JI+(2J+\gamma)I\\
&=
\gamma I.
\end{aligned}
\]

If block $i$ is valid, then $t_i(z_i)=0$ and the routing construction gives
exactly
\[
\widetilde K_i(z_i)
=
h^{(j_i)}_{V_i}.
\]

Summing these operator inequalities over all target blocks gives
\[
\widetilde K(z)-K
\succeq
r(z)\gamma I,
\]
which proves the claim.
\end{proof}

\begin{proof}[Proof of \cref{thm:unpinned-spectral-preservation}]
By \cref{lem:unpinned-valid-sector}, the restriction of $\widetilde K$ to the valid auxiliary
sector is exactly isometric to $K$.  Hence every eigenvalue of $K$
appears in $\widetilde K$ with its original multiplicity.

By \cref{lem:unpinned-invalid-sector}, every invalid auxiliary sector obeys
\[
\widetilde K(z)
\succeq
K+\gamma I.
\]
In particular,
\[
\lambda_{\min}\!\left(\widetilde K(z)\right)
\geq
E_0+\gamma.
\]
Thus no invalid sector contains an eigenvalue strictly below
$E_0+\gamma$.

It follows that the spectrum of $\widetilde K$ below this cutoff consists
exactly of the corresponding eigenvalues of $K$ in the valid sector, with
the same multiplicities. In particular,
\[
\operatorname{Ground}(\widetilde K)
=
V\,\operatorname{Ground}(K),
\]
and therefore
\[
\lambda_{\min}(\widetilde K)
=
\lambda_{\min}(K).
\]

Now suppose that $K$ has spectral gap $\delta>0$ above its ground space.
The valid sector contains an excited eigenvalue at energy
\[
E_0+\delta,
\]
so
\[
\widetilde\delta\leq\delta.
\]
On the other hand, every excited state in the valid sector has energy at
least $E_0+\delta$, while every invalid sector has energy at least
$E_0+\gamma$. Hence
\[
\widetilde\delta
\geq
\min\{\delta,\gamma\}.
\]
If $\gamma\geq\delta$, both bounds coincide and
\[
\widetilde\delta=\delta.
\]

It remains to verify that all compiled ordered hyperedges are valid and pairwise distinct. 
Each occurrence $i$ is assigned its own dedicated auxiliary block, and the blocks for
different occurrences are disjoint. Hence applications associated with
different occurrences have different auxiliary entries, even if the
corresponding pairs $(V_i,j_i)$ are identical.

For a fixed occurrence $i$, the $\ell+1$ applications use different
auxiliary pairs in the two check positions. Therefore these applications
also give pairwise distinct ordered tuples. Thus every application of
$\widetilde h_{\mathrm{unp}}$ occurs on a distinct ordered hyperedge.
Each such tuple also has pairwise distinct entries: the working tuple
contains distinct qubits, the selector qubits are drawn from distinct
auxiliary pairs, the two check qubits form another auxiliary pair, and
the working and auxiliary registers are disjoint.

The locality statement follows directly from the construction in
\cref{eq:unpinned-single-interaction}.
\end{proof}

\subsubsection{Preserved Structural Properties}

\begin{lemma}[Preservation of Geometric Locality]
\label{lem:geometric-locality-preservation}
Suppose the target interaction hypergraph is geometrically local according
to \cref{def: geometrically local}, the interaction alphabet has constant
size $m=O(1)$, and every ordered working tuple occurs in at most $O(1)$
target-interaction occurrences, counting multiplicity. Then the unpinned
single-interaction construction preserves geometric locality and bounded
degree up to constant overhead.
\end{lemma}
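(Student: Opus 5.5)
The plan is to exhibit an explicit lattice embedding of the compiled hypergraph, obtained from the given embedding of the target hypergraph by refining the lattice. Fix the target embedding $\iota:V_W\to\mathbb Z^D$ witnessing \cref{def: geometrically local}, with degree bound $d$ and range bound $R$. Each target occurrence $i$ has a nonempty working tuple $V_i$. We assign occurrence $i$ an anchor vertex $w_i\in V_i$; for definiteness, $w_i$ is the first entry of $V_i$.

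By geometric locality of the target, each working qubit $w$ lies in at most $d$ target hyperedges. By hypothesis, each ordered working tuple carries at most $c=O(1)$ occurrences. Hence each $w$ is the anchor of at most $B:=d\cdot c=O(1)$ occurrences, and each of these owns a block of $2(\ell+1)$ auxiliary qubits, with $\ell=\lceil\log_2 m\rceil=O(1)$. So at most $B_{\mathrm{aux}}:=2B(\ell+1)=O(1)$ auxiliary qubits have to be placed near each working qubit.

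We then rescale. We work in $\mathbb Z^{D}$, or in $\mathbb Z^{D+1}$ if convenient, and map each working qubit $w$ to $s\,\iota(w)$, where the integer $s$ satisfies $s^D> B_{\mathrm{aux}}+1$. The cube $s\,\iota(w)+\{0,\dots,s-1\}^D$ then contains enough free lattice sites to hold all auxiliary qubits anchored at $w$. Cubes belonging to distinct working qubits are disjoint because $\iota$ is injective, so the new map is injective.

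Next we bound the degree.

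\begin{itemize}
\item A working qubit $w$ appears in the compiled hyperedges of every occurrence whose tuple contains $w$. There are at most $d c$ such occurrences, each contributing $\ell+1$ applications of $\widetilde h_{\mathrm{unp}}$, so the degree of $w$ is at most $dc(\ell+1)=O(1)$.
\item An auxiliary qubit belongs to exactly one block, and that block has $\ell+1$ applications, so its degree is at most $\ell+1$.
\end{itemize}

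Next we bound the range. A compiled hyperedge for occurrence $i$ consists of $V_i$ together with qubits in the cube of $w_i$. Any two working qubits $u,v\in V_i$ are at distance at most $sR$ after rescaling. An auxiliary qubit is within $D(s-1)$ of $s\,\iota(w_i)$, hence within $sR+D s$ of any qubit of $V_i$. Two auxiliary qubits are within $D s$ of each other. All of these bounds are constants independent of system size. This establishes \cref{def: geometrically local} for the compiled hypergraph.

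The main obstacle is the bookkeeping of the counting step: bounding the number of occurrences charged to a single anchor requires both the bounded-degree property and the bounded-multiplicity hypothesis. Without the multiplicity hypothesis, repeated occurrences on one tuple, such as those produced by the multiplicity encoding, would force unboundedly many auxiliary blocks near one site, and the construction would fail. This is the reason for the hypothesis. The remaining steps (injectivity after rescaling and the range estimates) are routine.
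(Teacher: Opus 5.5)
Your proof is correct and takes the same route as the paper. After a constant rescaling, each occurrence's constant-size auxiliary block is placed next to its working support, and degrees are bounded by (target degree)$\times$(multiplicity)$\times(\ell+1)$ for working qubits and by $\ell+1$ for auxiliary qubits. Your anchor-and-cube placement makes explicit the paper's informal claim that only $O(1)$ blocks need room near any site, and also shows where the multiplicity hypothesis enters the placement step. One small caveat: if target hyperedges are unordered sets rather than ordered tuples, your count $dc$ picks up a harmless factor of at most $k!$.
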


\begin{proof}
Let
\[
\ell=\left\lceil\log_2m\right\rceil.
\]
Each target interaction is assigned a dedicated auxiliary block consisting
of $\ell+1$ pairs of qubits. Since $m=O(1)$, every such block contains only
$O(1)$ auxiliary qubits.

Consider a target interaction supported on the tuple $V_i$. By geometric
locality, all computational qubits in $V_i$ lie within a region of constant
diameter. Since the underlying lattice has fixed dimension and the original
interaction hypergraph has bounded degree, only $O(1)$ interaction sites
occur within any constant-radius neighborhood. 
After a constant-factor rescaling of the lattice, the dedicated auxiliary blocks can therefore be placed within constant distance of their corresponding computational supports.

Every application of the constructed single interaction associated with
$V_i$ acts only on the qubits of $V_i$ and on qubits from its dedicated
nearby auxiliary block. Hence every resulting hyperedge has constant
geometric diameter.

Furthermore, each auxiliary qubit participates only in the $O(1)$
applications associated with its own block. Since the target interaction
hypergraph has bounded degree and each ordered working tuple occurs with
only $O(1)$ multiplicity, every computational qubit participates in only
$O(1)$ target occurrences. Each target occurrence is replaced by
$\ell+1=O(1)$ applications of the compiled interaction. Hence every
computational and auxiliary qubit has bounded degree in the compiled
interaction hypergraph.

Therefore the resulting interaction hypergraph is geometrically local.
\end{proof}

\paragraph{Uniform $n$-dependent alphabets.}
The pinned and unpinned finite-alphabet compilers apply pointwise to a
constant-size family
\[
\mathcal A^{(n)}
=
\{h^{(1,n)},\ldots,h^{(m,n)}\},
\qquad m=O(1),
\]
provided that every parameter entering the compiled interaction is
uniformly computable from the final physical system size $n$.  In
particular, the matrices in $\mathcal A^{(n)}$ and, for the unpinned
compiler, the penalty strength must be computable in polynomial time
from $1^n$ and the requested precision.

The final-system-size convention is essential: a source-system size or
other instance-specific quantity may enter the construction only if it
can itself be recovered uniformly from the final physical system size,
including all auxiliary and pinned qubits. \Cref{sec:n-dependent-hardness}
implements this bookkeeping explicitly.

\paragraph{Stoquasticity.}
The binary pinned and unpinned compilers preserve stoquasticity.
If every interaction in the source alphabet is stoquastic, tensoring
with the nonnegative diagonal selector projectors preserves
non-positive off-diagonal entries, while the check operator is
diagonal.  Hence the resulting single interaction is stoquastic.

\subsection{Weighted and Universal Finite-Alphabet Normalization}
\label{sec:universal-finite-alphabet-normalization}

The exact compilers of the preceding sections assume that the source
Hamiltonian is already expressed as an unweighted sum over a fixed
finite interaction alphabet.  We now explain how to reach that setting
from weighted and, ultimately, arbitrary fixed-locality Hamiltonians.

The reductions in this section have two conceptually distinct stages.
First, scalar coefficients are rounded to a common inverse-polynomial
scale and represented by multiplicity; this normalization step is
approximate.  The resulting occurrence Hamiltonian is then converted
to strict single-interaction form by an exact compiler.  For a
positive-weight singleton source, a specialized exact occurrence
compiler improves the locality overhead by one qubit.

\subsubsection{Weighted finite alphabets}

As reviewed in \cref{subsubsec:s-hamiltonians}, an
$S$-Hamiltonian allows independently specified scalar weights
multiplying interactions from a fixed finite set $S$.  We first remove
these weights, to inverse-polynomial accuracy, by rounding them to a
common scale and representing the resulting integer coefficients by
multiplicity.

\begin{lemma}[Weighted finite-alphabet normalization]
\label{lem:weighted-finite-alphabet-normalization}
Let
\[
S=\{s^{(1)},\ldots,s^{(m)}\}
\]
be a fixed finite set of $k$-local Hermitian interactions, and define
\[
J=\max_{j\in[m]}\|s^{(j)}\|.
\]
Consider a Hamiltonian
\[
H
=
\sum_{i=1}^{M}
\alpha_i s^{(j_i)}_{V_i},
\]
where $M=\operatorname{poly}(n)$, the coefficients $\alpha_i$ are
specified with $\operatorname{poly}(n)$ bits, and
\[
|\alpha_i|\le B
\]
for some $B=\operatorname{poly}(n)$.

For every $\varepsilon>0$, define
\[
Q
=
\max\left\{
1,
\left\lceil
\frac{MJ}{2\varepsilon}
\right\rceil
\right\}.
\]
Then there is a Hamiltonian $K$, written as a sum of occurrences from
the fixed alphabet
\[
S\cup(-S),
\]
such that
\[
\left\|
H-\frac{1}{Q}K
\right\|
\le \varepsilon.
\]

If every $\alpha_i\ge 0$, then $K$ may be chosen using interactions from
$S$ alone.

Moreover, if $1/\varepsilon= O( \operatorname{poly}(n))$, then the number of
occurrences in $K$ and the time required to construct $K$ are
polynomial in $n$.
\end{lemma}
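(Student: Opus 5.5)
The plan is to round each coefficient to the nearest integer multiple of $1/Q$ and to realise that integer as a multiplicity. For each $i$, set
\[
c_i=\operatorname{round}(Q\alpha_i)\in\mathbb Z,
\]
with ties broken toward zero, so that $|Q\alpha_i-c_i|\le 1/2$ and $\operatorname{sign}(c_i)\in\{\operatorname{sign}(\alpha_i),0\}$. We then define
\[
K=\sum_{i=1}^{M}\sum_{r=1}^{|c_i|}\operatorname{sign}(c_i)\,s^{(j_i)}_{V_i}.
\]
Each occurrence is either $s^{(j_i)}_{V_i}$ or $-s^{(j_i)}_{V_i}$, so $K$ is a sum of occurrences from $S\cup(-S)$. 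If every $\alpha_i\ge0$, then every $c_i\ge0$, and only $S$ is used.

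For the error bound, the triangle inequality gives
\[
\Bigl\|H-\tfrac1Q K\Bigr\|\le\sum_{i=1}^M\Bigl|\alpha_i-\tfrac{c_i}{Q}\Bigr|\,\|s^{(j_i)}\|\le \frac{MJ}{2Q}\le\varepsilon,
\]
and the last inequality follows from the choice of $Q$. If $J=0$, then $H=0$ and $K$ is a multiple of zero operators, so the bound holds trivially.

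For efficiency, note that $Q=O(MJ/\varepsilon)=\operatorname{poly}(n)$ when $1/\varepsilon=O(\operatorname{poly}(n))$, because $J$ is a constant of the fixed alphabet. Then $|c_i|\le Q|\alpha_i|+1\le QB+1=\operatorname{poly}(n)$, so $K$ has at most $M(QB+1)=\operatorname{poly}(n)$ occurrences.

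Computing $c_i$ requires only polynomial-bit arithmetic on the given rational $\alpha_i$. Computing $Q$ requires $J$, or more precisely an upper bound on it. This is the only mildly delicate step: $J$ is a fixed constant, so it can be hardwired into the reduction. Alternatively, replacing $J$ by any efficiently computable upper bound $J'\ge J$, for example one obtained from entry approximations, only increases $Q$, and the error bound still holds. I expect no real obstacle here beyond this bookkeeping and the tie-breaking convention that keeps signs consistent in the nonnegative case.
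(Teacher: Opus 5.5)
Your proposal is correct and matches the paper's proof essentially step for step: round $Q\alpha_i$ to the nearest integer, realize it as a signed multiplicity over $S\cup(-S)$, bound the error by the triangle inequality as $MJ/(2Q)\le\varepsilon$, and bound the number of occurrences by $M(QB+O(1))$. Your remarks on tie-breaking and on hardwiring $J$ are harmless extras; the tie-breaking rule is not actually needed, since any nearest integer to a nonnegative number is already nonnegative.
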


\begin{proof}
For every $i$, let
\[
z_i
=
\operatorname{round}(Q\alpha_i),
\]
so that
\[
\left|
\alpha_i-\frac{z_i}{Q}
\right|
\le
\frac{1}{2Q}.
\]
Represent the integer coefficient $z_i$ by multiplicity:
\[
K
=
\sum_{i=1}^{M}
\sum_{r=1}^{|z_i|}
\bigl(
\operatorname{sgn}(z_i)s^{(j_i)}
\bigr)_{V_i},
\]
where the inner sum is absent when $z_i=0$.

Since $\|s^{(j_i)}\|\le J$,
\[
\begin{aligned}
\left\|
H-\frac{1}{Q}K
\right\|
&\le
\sum_{i=1}^{M}
\left|
\alpha_i-\frac{z_i}{Q}
\right|
\|s^{(j_i)}\| \\
&\le
\frac{MJ}{2Q}
\le
\varepsilon.
\end{aligned}
\]

If every $\alpha_i\ge0$, then every $z_i\ge0$, so only interactions from
$S$ occur.

Finally,
\[
|z_i|
\le
QB+\frac12,
\]
and hence the total number $R$ of occurrences satisfies
\[
R
\le
M\left(QB+\frac12\right).
\]
Thus $R=\operatorname{poly}(n)$ whenever $M$, $B$, and
$1/\varepsilon$ are polynomially bounded. The rounded coefficients and
the complete occurrence list can likewise be computed in polynomial
time.
\end{proof}

\begin{cor}[Generic weighted single-interaction compilation]
\label{cor:weighted-finite-alphabet-sih}
Let
\[
S=\{s^{(1)},\ldots,s^{(m)}\}
\]
be a fixed finite set of $k$-local Hermitian interactions, and consider
Hamiltonians
\[
H
=
\sum_{i=1}^{M}
\alpha_i s^{(j_i)}_{V_i}
\]
with polynomially bounded, polynomial-bit coupling strengths.

For arbitrary signed coefficients, there is a polynomial-time
promise-preserving reduction, to inverse-polynomial accuracy and up to a
known global energy scale, to
\[
\mathrm{SIH}_{
k+\lceil\log_2(2m)\rceil+2
}.
\]

If all coefficients are nonnegative, the locality improves to
\[
\mathrm{SIH}_{
k+\lceil\log_2m\rceil+2
}.
\]

In both cases the final single interaction is fixed independently of the
source instance, the system size, the rounding scale $Q$, and the chosen
inverse-polynomial accuracy.
\end{cor}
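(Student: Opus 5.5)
The plan is to compose \cref{lem:weighted-finite-alphabet-normalization} with the unpinned compiler of \cref{thm:unpinned-spectral-preservation}, and then translate thresholds.

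\emph{Normalization.} Given an instance $(H,a,b)$ with promise gap $b-a\ge 1/\operatorname{poly}(n)$, choose $\varepsilon=(b-a)/4$, which is inverse polynomial. \Cref{lem:weighted-finite-alphabet-normalization} then yields an integer $Q=\operatorname{poly}(n)$ and an occurrence Hamiltonian $K$ over the alphabet $\mathcal A=S\cup(-S)$ with $\|H-K/Q\|\le\varepsilon$. When all weights are nonnegative, the alphabet can be taken to be $\mathcal A=S$. The alphabet has size at most $2m$ (respectively $m$), and it is fixed independently of the instance, of $Q$, and of $\varepsilon$. If some $s^{(j)}$ coincides with some $-s^{(j')}$, we keep them as formally distinct letters, or merge them; either way the size bound remains valid. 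The number of occurrences is polynomial, and repeated occurrences on the same tuple are allowed.

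\emph{Exact compilation.} Apply the unpinned compiler of \cref{thm:unpinned-spectral-preservation} to $K$ with a fixed constant separation, say $\gamma=1$. Then $\Delta=2J+1$, where $J=\max_j\|s^{(j)}\|$ is a constant depending only on $S$. The resulting interaction $\widetilde h_{\mathrm{unp}}$ depends only on $\mathcal A$ and $\Delta$, so it is fixed independently of the instance, $n$, $Q$, and $\varepsilon$. Its locality is $k+\lceil\log_2|\mathcal A|\rceil+2$, which gives the two claimed localities. Item~5 of the theorem gives $\lambda_{\min}(\widetilde K)=\lambda_{\min}(K)$. The theorem also guarantees pairwise-distinct ordered tuples and polynomial size. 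Polynomial-time computability of the entries of $\widetilde h_{\mathrm{unp}}$ follows from that of $S$, as in the proof of \cref{thm:pinned-resource-bounds}.

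\emph{Thresholds.} By Weyl's inequality, $|\lambda_{\min}(H)-\lambda_{\min}(K)/Q|\le\varepsilon$. Set $a'=Q(a+\varepsilon)$ and $b'=Q(b-\varepsilon)$. In a YES instance $\lambda_{\min}(\widetilde K)\le a'$, and in a NO instance $\lambda_{\min}(\widetilde K)\ge b'$. The new gap is $b'-a'=Q(b-a)/2\ge(b-a)/2$, which is still inverse polynomial, and $a',b'$ have polynomial bit length. This is precisely a reduction ``up to the known global energy scale $Q$ and inverse-polynomial accuracy.''

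The main subtlety, rather than a true obstacle, is making sure that nothing instance-dependent leaks into the interaction. The $1/(\ell+1)$ normalization and $\Delta$ must depend only on $S$ and not on $Q$. This is why I fix $\gamma$ as a constant instead of tying it to the promise gap: the invariance of the ground energy (item~5) holds for every $\gamma>0$. I would also verify that the rounding may produce $z_i=0$, which simply drops occurrences without affecting any of the bounds.
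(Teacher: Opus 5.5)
Your proposal is correct and follows the paper's proof essentially step for step. You normalize via \cref{lem:weighted-finite-alphabet-normalization} over $S\cup(-S)$ (or $S$ alone for nonnegative weights), apply the exact unpinned compiler with a fixed, $Q$-independent penalty, and use the thresholds $Q(a+\varepsilon)$ and $Q(b-\varepsilon)$; your explicit appeal to item~5 of \cref{thm:unpinned-spectral-preservation}, to pass from the compressed operator $V^\dagger\widetilde K V$ to $\lambda_{\min}(\widetilde K)$, makes explicit a step the paper's short proof leaves implicit.
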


\begin{proof}
Apply \cref{lem:weighted-finite-alphabet-normalization} to obtain the rounded
finite-alphabet Hamiltonian $K$.  For arbitrary signed weights, $K$
uses the fixed alphabet
\[
    S\cup(-S),
\]
whose cardinality is at most $2m$.  Applying the exact unpinned
compiler of \cref{thm:unpinned-spectral-preservation} therefore gives locality at
most
\[
    k+\lceil\log_2(2m)\rceil+2.
\]
For nonnegative weights, $K$ uses $S$ itself, giving locality
\[
    k+\lceil\log_2m\rceil+2.
\]

Let $\widetilde K$ denote the single-interaction Hamiltonian produced
by the unpinned compiler, and let $V$ be its encoding isometry. Then
\[
V^\dagger \widetilde K V = K.
\]

The compilation of $K$ is exact; the only approximation is the
preceding rounding of the source weights. Hence
\[
\left\|
H-\frac{1}{Q}V^\dagger\widetilde K V
\right\|
\le\varepsilon.
\]
If $(H,a,b)$ is a Local Hamiltonian instance and
\[
0<\varepsilon<\frac{b-a}{4},
\]
one may therefore use the thresholds
\[
\widetilde a=Q(a+\varepsilon),
\qquad
\widetilde b=Q(b-\varepsilon).
\]

Repeated rounded occurrences cause no conflict with the strict SIH
definition: by
\cref{thm:unpinned-spectral-preservation}, every occurrence receives a
dedicated auxiliary block, so all final ordered interaction tuples are
pairwise distinct.

Finally, the alphabet supplied to the unpinned compiler is fixed before
the instance is given. Hence its selector structure and penalty strength
are fixed as well, and the resulting interaction does not depend on
$Q$.
\end{proof}

\subsubsection{Sharper singleton unweighting}

When the weighted source uses a single fixed interaction with
nonnegative coefficients, the generic weighted construction can be
sharpened.  After the weights have been converted to multiplicities,
a particularly simple exact occurrence compiler increases the
locality by only one.

\begin{lemma}[Positive-weight singleton unweighting]
\label{lem:positive-singleton-unweighting}
Fix a $k$-qubit Hermitian interaction $h$. Choose a fixed rational
constant
\[
c>\lambda_{\max}(h)
\]
and define
\[
\delta
=
c-\lambda_{\max}(h)>0.
\]
Let
\[
g_h
=
h\otimes\Pi^{\Ket{1}}
+
cI\otimes\Pi^{\Ket{0}}.
\]

Consider any occurrence Hamiltonian
\[
K
=
\sum_{t=1}^{R}h_{V_t},
\]
where repeated occurrences on the same ordered working tuple are
allowed. Introduce one fresh auxiliary qubit $a_t$ for every occurrence
and define
\[
\widetilde K
=
\sum_{t=1}^{R}
(g_h)_{V_t,a_t}.
\]

Let
\[
\Ket{\Omega}_A
=
\Ket{1}^{\otimes R}
\]
and define
\[
V\Ket{\psi}
=
\Ket{\psi}\otimes\Ket{\Omega}_A.
\]
Then
\[
V^\dagger\widetilde K V=K.
\]

Moreover, for every computational-basis auxiliary sector
$z\in\{0,1\}^{R}$, if
\[
r(z)
=
|\{t:z_t=0\}|,
\]
then
\[
\widetilde K(z)
\succeq
K+r(z)\delta I.
\]
Consequently,
\[
\operatorname{Ground}(\widetilde K)
=
V\,\operatorname{Ground}(K),
\qquad
\lambda_{\min}(\widetilde K)
=
\lambda_{\min}(K).
\]

All ordered interaction tuples in $\widetilde K$ are pairwise distinct.
Thus $g_h$ defines a strict
\[
\mathrm{SIH}_{k+1}
\]
interaction.

If $h$ is stoquastic in the computational basis, then $g_h$ is
stoquastic as well.
\end{lemma}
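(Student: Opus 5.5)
The proof follows the template of \cref{lem:unpinned-valid-sector,lem:unpinned-invalid-sector}, but each occurrence now has its own one-qubit selector instead of a selector block.

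\textbf{Sector decomposition and the valid sector.} Every operator acting on the auxiliary register is diagonal, namely $\Pi^{\Ket{1}}$ or $\Pi^{\Ket{0}}$ on $a_t$. Hence $\widetilde K=\bigoplus_z \widetilde K(z)$ over computational-basis configurations $z\in\{0,1\}^R$. In sector $z$, occurrence $t$ contributes $h_{V_t}$ when $z_t=1$ and $cI$ when $z_t=0$. Therefore
\[
\widetilde K(z)=\sum_{t:z_t=1}h_{V_t}+r(z)\,cI .
\]
Setting $z=1^R$ gives $V^\dagger\widetilde KV=K$, and this sector is invariant because the decomposition is block diagonal.

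\textbf{Invalid sectors.} The key step is to subtract $K$, which gives
\[
\widetilde K(z)-K=\sum_{t:z_t=0}\bigl(cI-h_{V_t}\bigr).
\]
Each summand satisfies $cI-h_{V_t}\succeq (c-\lambda_{\max}(h))I=\delta I$. Summing over the $r(z)$ terms yields $\widetilde K(z)\succeq K+r(z)\delta I$. Unlike \cref{lem:unpinned-invalid-sector}, this needs no norm bound of the form $2J$: positivity of the weights means a switched-off occurrence simply replaces $h$ by the larger constant $c$.

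\textbf{Ground space and energy.} For $z\neq 1^R$ we have $r(z)\ge1$, so $\lambda_{\min}(\widetilde K(z))\ge\lambda_{\min}(K)+\delta>\lambda_{\min}(K)$. All ground vectors of $\widetilde K$ therefore lie in the valid sector, where $\widetilde K$ acts as $K$. This gives $\operatorname{Ground}(\widetilde K)=V\operatorname{Ground}(K)$ and equality of the minimum eigenvalues. The step needing the most care is this ground-space equality: one has to argue that a ground vector with nonzero component in an invalid sector would have energy strictly above $\lambda_{\min}(K)$. That follows because the sectors are orthogonal invariant subspaces and each invalid block has a strictly larger minimum eigenvalue.

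\textbf{Bookkeeping.} Distinct occurrences use distinct fresh qubits $a_t$, so the ordered $(k+1)$-tuples $(V_t,a_t)$ are pairwise distinct even when repeated $V_t$ coincide. Each tuple has distinct entries because $V_t$ does and $a_t$ is disjoint from the working register. For stoquasticity, the off-diagonal entries of $g_h$ come only from $h\otimes\Pi^{\Ket{1}}$, since $\Pi^{\Ket{1}}$ is diagonal with a nonnegative entry and $cI\otimes\Pi^{\Ket{0}}$ is diagonal. These entries are entries of $h$, hence real and non-positive when $h$ is stoquastic. The entries of $g_h$ are computable because $c$ is rational.
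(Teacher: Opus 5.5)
Your proposal is correct and follows essentially the same route as the paper. The diagonal auxiliary operators give a sector decomposition, and the identity $\widetilde K(z)=K+\sum_{t:z_t=0}(cI-h_{V_t})$ with $cI-h_{V_t}\succeq\delta I$ gives the invalid-sector bound, while fresh qubits $a_t$ give distinct tuples and the off-diagonal entries of $g_h$ come only from the $\Pi^{\Ket{1}}$ block; your explicit argument that no ground vector can have support in an invalid sector fills in the step the paper compresses into ``Consequently.''
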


\begin{proof}
Because the auxiliary operators are diagonal, $\widetilde K$ decomposes
into computational-basis sectors of the auxiliary register.

In the all-$1$ sector, every occurrence contributes $h_{V_t}$, so
\[
V^\dagger\widetilde K V
=
K.
\]

If $z_t=0$, the $t$-th occurrence contributes $cI$ instead of
$h_{V_t}$. Hence
\[
\widetilde K(z)
=
K
+
\sum_{t:z_t=0}
\left(
cI-h_{V_t}
\right).
\]
Since
\[
cI-h_{V_t}
\succeq
\left(
c-\lambda_{\max}(h)
\right)I
=
\delta I,
\]
we obtain
\[
\widetilde K(z)
\succeq
K+r(z)\delta I.
\]
Thus every invalid auxiliary sector is separated from the
corresponding valid-sector Hamiltonian by at least the fixed constant
$\delta$.

Finally, occurrence $t$ uses its own fresh auxiliary qubit $a_t$.
Therefore distinct occurrences give distinct ordered
$(k+1)$-tuples even when their working tuples $V_t$ coincide.

If $h$ is stoquastic, the only off-diagonal matrix elements of $g_h$
occur in the auxiliary-$1$ block and are exactly those of $h$.
Therefore $g_h$ is stoquastic.
\end{proof}

\begin{cor}[Positive-weight singleton Hamiltonians]
\label{cor:positive-weight-singleton-sih}
Fix a $k$-qubit Hermitian interaction $h$. A positive-weight singleton
Hamiltonian
\[
H
=
\sum_{i=1}^{M}
\alpha_i h_{V_i},
\qquad
\alpha_i\ge0,
\]
with polynomially bounded, polynomial-bit weights admits a
polynomial-time promise-preserving reduction, to inverse-polynomial
accuracy and up to a known global energy scale, to
\[
\mathrm{SIH}_{k+1}.
\]

Equivalently,
\[
\{h\}^{+}\text{-Hamiltonian}
\longrightarrow
\mathrm{SIH}_{k+1}.
\]
The final interaction is fixed independently of the rounding scale $Q$
and of the source instance.
\end{cor}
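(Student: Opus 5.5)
The reduction composes two results already in the paper: the positive-weight case of \cref{lem:weighted-finite-alphabet-normalization} with the singleton alphabet $S=\{h\}$, followed by the exact occurrence compiler of \cref{lem:positive-singleton-unweighting}.

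\textbf{Step 1: normalization.} Let $(H,a,b)$ be an instance with $H=\sum_i\alpha_i h_{V_i}$, $\alpha_i\ge 0$. Choose an inverse-polynomial accuracy $\varepsilon$ with
\[
0<\varepsilon<\tfrac{b-a}{4},
\qquad
Q=\max\{1,\lceil M\|h\|/(2\varepsilon)\rceil\}.
\]
Apply \cref{lem:weighted-finite-alphabet-normalization}. Since every weight is nonnegative, the rounded integers $z_i=\operatorname{round}(Q\alpha_i)$ are nonnegative. The resulting Hamiltonian is therefore an occurrence Hamiltonian $K=\sum_{t=1}^R h_{V_t}$ using $h$ itself, with no sign flips. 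It satisfies $\|H-\tfrac1Q K\|\le\varepsilon$ and $R=\operatorname{poly}(n)$, and it is computable in polynomial time.

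\textbf{Step 2: exact compilation.} Fix, once and for all, a rational $c>\lambda_{\max}(h)$; for instance $c=\lceil\|h\|_F\rceil+1$. Compile $K$ with $g_h$ as in \cref{lem:positive-singleton-unweighting}, giving $\widetilde K$ on the working qubits plus $R$ auxiliary qubits. That lemma yields three facts:
\[
\lambda_{\min}(\widetilde K)=\lambda_{\min}(K),
\]
all ordered $(k+1)$-tuples in $\widetilde K$ are pairwise distinct with distinct entries, and the final system size $N+R$ is polynomial.

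\textbf{Step 3: thresholds.} Set $\widetilde a=Q(a+\varepsilon)$ and $\widetilde b=Q(b-\varepsilon)$. By Weyl's inequality, $\lambda_{\min}(H)\le a$ implies $\lambda_{\min}(K)\le Q(a+\varepsilon)$, and $\lambda_{\min}(H)\ge b$ implies $\lambda_{\min}(K)\ge Q(b-\varepsilon)$. The new gap is
\[
\widetilde b-\widetilde a=Q(b-a-2\varepsilon)\ge Q(b-a)/2\ge (b-a)/2,
\]
which is inverse-polynomial. Because the promise is stated in terms of $\lambda_{\min}$, as in \cref{def: SIH}, this transfers directly.

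\textbf{Checks on the fixed interaction.} Two things remain. First, $g_h$ depends only on $h$ and the fixed constant $c$, not on $Q$, $\varepsilon$, or the instance. Second, its entries are polynomial-time computable whenever those of $h$ are. I expect the only delicate point to be that the choice of $c$ must be instance-independent: it has to be a fixed rational exceeding $\lambda_{\max}(h)$, which any norm bound supplies. The remaining steps are bookkeeping.
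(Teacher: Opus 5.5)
Your proposal is correct and follows the paper's proof essentially step for step: normalize with $S=\{h\}$ via \cref{lem:weighted-finite-alphabet-normalization}, compile the resulting occurrence Hamiltonian exactly with $g_h$ from \cref{lem:positive-singleton-unweighting}, and use the thresholds $\widetilde a=Q(a+\varepsilon)$, $\widetilde b=Q(b-\varepsilon)$ with $\varepsilon<(b-a)/4$. You also invoke $\lambda_{\min}(\widetilde K)=\lambda_{\min}(K)$ explicitly, rather than only the compression identity $V^\dagger\widetilde K V=K$; this is what the NO case needs, and the paper's proof uses it only implicitly.
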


\begin{proof}
Apply
\cref{lem:weighted-finite-alphabet-normalization}
with $S=\{h\}$ and nonnegative coefficients. This gives
\[
\left\|
H-\frac1Q K
\right\|
\le\varepsilon,
\]
where
\[
K=\sum_{t=1}^{R}h_{V_t}
\]
is represented entirely by multiplicity.

Apply
\cref{lem:positive-singleton-unweighting}
to $K$. For its encoding isometry $V$,
\[
V^\dagger\widetilde K V=K.
\]
Thus the singleton occurrence compilation is exact; as in
\cref{cor:weighted-finite-alphabet-sih}, the only approximation comes
from rounding the original coupling strengths. Hence
\[
\left\|
H-\frac1Q V^\dagger\widetilde K V
\right\|
\le\varepsilon.
\]
For a source promise $(a,b)$, choosing
$\varepsilon<(b-a)/4$ gives output thresholds
\[
\widetilde a=Q(a+\varepsilon),
\qquad
\widetilde b=Q(b-\varepsilon).
\]
The interaction $g_h$ depends only on the fixed matrix $h$ and the fixed
constant $c$, not on $Q$.
\end{proof}

\subsubsection{Universal normalization of arbitrary local terms}

We now remove the remaining assumption that the source terms are drawn
from a fixed interaction set.  Because $k$ is constant, every local
term can be expanded in the fixed $k$-qubit Pauli basis.  Rounding all
Pauli coefficients to a common scale and representing the resulting
integers by multiplicity reduces an arbitrary $k$-local Hamiltonian to
an unweighted occurrence Hamiltonian over a fixed signed Pauli
alphabet.

\begin{lemma}[Universal finite-alphabet normalization]
\label{lem:universal-finite-alphabet-normalization}
Fix $k=O(1)$ and let
\[
H=\sum_{i=1}^{M} H_i
\]
be a $k$-local Hamiltonian on $n$ qubits, where
$M=\operatorname{poly}(n)$ and
\[
\|H_i\|\le B
\]
for some $B=\operatorname{poly}(n)$.

Define the fixed signed Pauli alphabet
\[
\mathcal A_k^\star
=
\left\{
\pm P:
P\in\{I,X,Y,Z\}^{\otimes k}
\right\}.
\]
In particular,
\[
|\mathcal A_k^\star|=2\cdot 4^k.
\]

For every $\varepsilon>0$, there is an integer
\[
Q
=
\max\left\{
1,
\left\lceil
\frac{M4^k}{2\varepsilon}
\right\rceil
\right\}
\]
and a finite-alphabet Hamiltonian $K$, written as a sum of occurrences
of interactions from $\mathcal A_k^\star$, such that
\[
\left\|
H-\frac{1}{Q}K
\right\|
\le \varepsilon.
\]

Moreover, if $1/\varepsilon= O( \operatorname{poly}(n))$ and
$\varepsilon$ is specified with $\operatorname{poly}(n)$ bits, then
$Q$, the number of occurrences in $K$, and the time required to
construct $K$ are all polynomial in $n$.
\end{lemma}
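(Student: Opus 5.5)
The approach is to reduce to \cref{lem:weighted-finite-alphabet-normalization} by expanding each local term in the Pauli basis. For each $i$, $H_i$ acts on an ordered $k$-tuple $V_i$ (padding with arbitrary distinct qubits if it acts on fewer). Since $H_i$ is Hermitian, we can write
\[
H_i=\sum_{P\in\{I,X,Y,Z\}^{\otimes k}} c_{i,P}\,P_{V_i},
\qquad
c_{i,P}=2^{-k}\operatorname{tr}(P H_i)\in\mathbb R .
\]
The coefficients are real because $P$ and $H_i$ are Hermitian. Moreover,
\[
|c_{i,P}|\le 2^{-k}\|P\|_1\|H_i\|=\|H_i\|\le B .
\]
This gives $H$ as a weighted $S$-Hamiltonian with $S=\{I,X,Y,Z\}^{\otimes k}$, $|S|=4^k$, $J=\max\|P\|=1$, at most $M'=M4^k$ weighted terms, and coefficients bounded by $B$.

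Next I will apply the rounding directly rather than quoting the lemma's constant verbatim. Set $z_{i,P}=\operatorname{round}(Qc_{i,P})$ and let $K$ consist of $|z_{i,P}|$ occurrences of $\operatorname{sgn}(z_{i,P})P$ on $V_i$. Terms with $z_{i,P}=0$ are omitted, and identity Paulis are allowed as alphabet elements. By the triangle inequality,
\[
\Bigl\|H-\tfrac1QK\Bigr\|\le\sum_{i,P}\frac{1}{2Q}=\frac{M4^k}{2Q}\le\varepsilon
\]
with the stated $Q$; this is exactly \cref{lem:weighted-finite-alphabet-normalization} with $M\to M4^k$ and $J=1$. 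All occurrences lie in $\mathcal A_k^\star=\{\pm P\}$, and $|\mathcal A_k^\star|=2\cdot4^k$ is immediate.

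For efficiency: since $k$ is constant, each $H_i$ is a $2^k\times2^k$ matrix with polynomial-bit entries, so each $c_{i,P}$ is an exactly computable rational, obtained by a trace of constant size. $Q$ is computable in polynomial time from $M$ and the polynomial-bit $\varepsilon$, and $Q=\operatorname{poly}(n)$ when $1/\varepsilon=\operatorname{poly}(n)$. The number of occurrences is at most $M4^k(QB+\tfrac12)=\operatorname{poly}(n)$, so listing them takes polynomial time.

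The only point needing care, and the main obstacle, is exactness of the coefficients. If entries of $H_i$ were given only approximately, rounding $Qc_{i,P}$ could be off by one. Because the input entries are specified with $\operatorname{poly}(n)$ bits, $c_{i,P}$ is an exact rational and the rounding is exact; otherwise one would absorb an additional $O(1/Q)$ error into $\varepsilon$ by halving the target accuracy.
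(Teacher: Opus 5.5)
Your proposal is correct and follows essentially the same route as the paper: pad each term to an ordered $k$-tuple, expand it in the Pauli basis with real coefficients bounded by $\|H_i\|$, round $Q c_{i,P}$ to the nearest integer, encode the integers as multiplicities of $\pm P$, and bound the error by $M4^k/(2Q)\le\varepsilon$ and the occurrence count by $M4^k(QB+\tfrac12)$. The one detail the paper makes explicit and you leave implicit is that padding to a $k$-tuple of distinct qubits requires $n\ge k$; the paper arranges this by adjoining $O(1)$ isolated qubits.
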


\begin{proof}

Since $k$ is fixed, we may assume without loss of generality that $n\geq k$, adjoining at most $k-n=O(1)$ isolated qubits if necessary and continuing to denote the resulting system size by $n$.

A $k$-local term may act nontrivially on fewer than $k$ qubits. To express every $H_i$ in the same $k$-qubit Pauli basis, choose an ordered $k$-tuple $V_i$ of distinct physical qubits containing the support of $H_i$, and regard $H_i$ as an operator on $V_i$ by tensoring with the identity on the additional qubits. This enlargement does not change the action of $H_i$ on the full system and is used only to obtain a uniform $k$-qubit representation of all local terms.

Let
\[
\mathcal P_k=\{I,X,Y,Z\}^{\otimes k}.
\]
Using the Pauli basis, write
\[
H_i
=
\sum_{P\in\mathcal P_k}
\alpha_{i,P} P,
\qquad
\alpha_{i,P}
=
\frac{1}{2^k}\operatorname{Tr}(P H_i).
\]
Since $H_i$ and every $P\in\mathcal P_k$ are Hermitian,
$\alpha_{i,P}\in\mathbb R$.

Fix $\varepsilon>0$, and set
\[
Q
=
\max\left\{
1,
\left\lceil
\frac{M4^k}{2\varepsilon}
\right\rceil
\right\}.
\]
For every pair $(i,P)$, choose the nearest integer
\[
z_{i,P}
=
\operatorname{round}(Q\alpha_{i,P}),
\]
so that
\[
\left|
\alpha_{i,P}
-
\frac{z_{i,P}}{Q}
\right|
\le
\frac{1}{2Q}.
\]

We represent each integer coefficient $z_{i,P}$ by multiplicity. If $z_{i,P}>0$, include
$z_{i,P}$ separate occurrences of $+P$ on $V_i$; if
$z_{i,P}<0$, include $|z_{i,P}|$ separate occurrences of
$-P$ on $V_i$; and if $z_{i,P}=0$, include no occurrence.
Equivalently,
\[
K
=
\sum_{i=1}^{M}
\sum_{P\in\mathcal P_k}
\sum_{r=1}^{|z_{i,P}|}
\bigl(\operatorname{sgn}(z_{i,P})P\bigr)_{V_i},
\]
with the convention that the innermost sum is absent when
$z_{i,P}=0$.

The identity Pauli $I^{\otimes k}$ is included in
$\mathcal P_k$, so this construction retains the scalar component
of every local term.

Since every Pauli operator has operator norm one,
\[
\begin{aligned}
\left\|
H-\frac{1}{Q}K
\right\|
&=
\left\|
\sum_{i=1}^{M}
\sum_{P\in\mathcal P_k}
\left(
\alpha_{i,P}
-
\frac{z_{i,P}}{Q}
\right)
P_{V_i}
\right\| \\
&\le
\sum_{i=1}^{M}
\sum_{P\in\mathcal P_k}
\left|
\alpha_{i,P}
-
\frac{z_{i,P}}{Q}
\right| \\
&\le
\frac{M4^k}{2Q}
\le
\varepsilon.
\end{aligned}
\]

It remains to bound the size of the occurrence list.
For every Pauli coefficient,
\[
|\alpha_{i,P}|
=
\frac{1}{2^k}
\left|
\operatorname{Tr}(P H_i)
\right|
\le
\|H_i\|
\le B.
\]
Hence
\[
|z_{i,P}|
\le
QB+\frac{1}{2}.
\]
Therefore the total number $R$ of occurrences in $K$ satisfies
\begin{align}
R
=
\sum_{i=1}^{M}
\sum_{P\in\mathcal P_k}
|z_{i,P}|
\le
M4^k
\left(
QB+\frac{1}{2}
\right).
\end{align}
For fixed $k$, if $M$, $B$, and $1/\varepsilon$ are polynomially
bounded in $n$, then both $Q$ and $R$ are polynomially bounded.

Finally, because $k$ is fixed, each coefficient
$\alpha_{i,P}$ is obtained from a constant-dimensional local matrix
using a constant number of arithmetic operations on polynomial-bit
input data. Thus the coefficients, the integers $z_{i,P}$, and the
complete occurrence list defining $K$ can all be computed in
polynomial time.
\end{proof}

\begin{thm}[Universal single-interaction reduction]
\label{thm:universal-single-interaction-reduction}
Fix $k=O(1)$. Let
\[
H=\sum_{i=1}^{M} H_i
\]
be a $k$-local Hamiltonian on $n$ qubits satisfying the assumptions of
\cref{lem:universal-finite-alphabet-normalization}, and let $\varepsilon>0$ be specified with $\operatorname{poly}(n)$
bits, with $1/\varepsilon= O( \operatorname{poly}(n))$.

Then there is a polynomial-time construction of a $(3k+3)$-local
single-interaction Hamiltonian $\widetilde H$, together with an integer
$Q=\operatorname{poly}(n)$ and an isometry $V$, such that
\[
\left\|
H-\frac{1}{Q}V^\dagger \widetilde H V
\right\|
\le \varepsilon.
\]

For every fixed $k$, the single interaction used in
$\widetilde H$ is fixed independently of $H$, $n$, $Q$, and
$\varepsilon$.

Moreover, if $(H,a,b)$ is a $k$-Local Hamiltonian instance and
\[
0<\varepsilon<\frac{b-a}{4},
\]
then the construction is promise preserving with output thresholds
\[
\widetilde a=Q(a+\varepsilon),
\qquad
\widetilde b=Q(b-\varepsilon).
\]

\end{thm}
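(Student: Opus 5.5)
We compose the universal finite-alphabet normalization with the exact unpinned compiler. First apply \cref{lem:universal-finite-alphabet-normalization} to $H$ with the given $\varepsilon$. This produces an integer $Q=\operatorname{poly}(n)$ and an occurrence Hamiltonian $K$ over the fixed signed Pauli alphabet $\mathcal A_k^\star$, with $|\mathcal A_k^\star|=2\cdot4^k$, such that $\|H-\frac1Q K\|\le\varepsilon$. The lemma also gives polynomially many occurrences $R$, computable in polynomial time.

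Next, apply \cref{thm:unpinned-spectral-preservation} to $K$ with alphabet $\mathcal A_k^\star$. Since every signed Pauli has norm one, we have $J=1$. Choose a fixed constant $\gamma$, say $\gamma=1$, so that $\Delta=3$. The selector length is
\[
\ell=\lceil\log_2(2\cdot4^k)\rceil=2k+1,
\]
so the compiled locality is $k+\ell+2=3k+3$, which matches the claim. Item~1 of that theorem gives an isometry $V$ with $V^\dagger\widetilde H V=K$, where $\widetilde H$ is the compiled Hamiltonian. Hence
\[
\|H-\tfrac1Q V^\dagger\widetilde H V\|=\|H-\tfrac1Q K\|\le\varepsilon .
\]
The compiled interaction is determined by $\mathcal A_k^\star$, the fixed binary tag assignment, and $\Delta=3$, so it is independent of $H$, $n$, $Q$ and $\varepsilon$. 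Repeated Pauli occurrences on the same tuple are allowed, because the theorem imposes no distinctness condition and guarantees pairwise distinct compiled tuples. The system size is $N+2R(\ell+1)=\operatorname{poly}(n)$.

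For promise preservation, Weyl's inequality applied to $\|H-\frac1Q K\|\le\varepsilon$ gives $|\lambda_{\min}(H)-\lambda_{\min}(K)/Q|\le\varepsilon$. Item~5 gives $\lambda_{\min}(\widetilde H)=\lambda_{\min}(K)$. In the YES case this yields $\lambda_{\min}(\widetilde H)\le Q(a+\varepsilon)$, and in the NO case $\lambda_{\min}(\widetilde H)\ge Q(b-\varepsilon)$. The gap is $Q(b-a-2\varepsilon)\ge Q(b-a)/2$, which is still at least inverse polynomial.

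The main care point is the last step: the statement compares $H$ only to the compression $V^\dagger\widetilde HV$, but the promise concerns the full spectrum of $\widetilde H$. Invalid auxiliary sectors therefore must not lower the ground energy. This is exactly what item~5, via \cref{lem:unpinned-invalid-sector}, supplies, so it has to be invoked explicitly rather than relying on the compression bound alone. A minor check is that the SIH input format is satisfied: the tuples must be ordered and distinct with distinct entries, and the thresholds must have polynomially many bits, which holds because $Q$ and $\varepsilon$ do.
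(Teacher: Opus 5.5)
Your proposal is correct and follows the paper's proof essentially step for step. Like the paper, you normalize to the signed Pauli alphabet, apply the exact unpinned compiler with $J=1$, $\gamma=1$, $\Delta=3$ and $\ell=2k+1$, and then transfer the thresholds using Weyl's inequality together with exact ground-energy preservation (item~5 of \cref{thm:unpinned-spectral-preservation}, which rests on \cref{lem:unpinned-invalid-sector}), which is exactly how the paper closes the argument.
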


\begin{proof}
First apply \cref{lem:universal-finite-alphabet-normalization} to
replace $H$, to accuracy $\varepsilon$, by a multiplicity Hamiltonian
$K$ over the fixed signed Pauli alphabet
\[
\left\|
H-\frac{1}{Q}K
\right\|
\le\varepsilon,
\]
where $K$ is written as a polynomial number of occurrences from the
fixed signed Pauli alphabet
\[
\mathcal A_k^\star
=
\left\{
\pm P:
P\in\{I,X,Y,Z\}^{\otimes k}
\right\}.
\]

Since
\[
|\mathcal A_k^\star|
=
2\cdot4^k
=
2^{2k+1},
\]
the binary selector length in the unpinned compiler is
\[
\ell
=
\left\lceil
\log_2|\mathcal A_k^\star|
\right\rceil
=
2k+1.
\]
Every element of $\mathcal A_k^\star$ has operator norm one, so in
\cref{thm:unpinned-spectral-preservation} we may take
\[
J=1
\]
and, for example, fix
\[
\gamma=1,
\qquad
\Delta=2J+\gamma=3.
\]
Thus the resulting single interaction depends only on $k$.

We now apply the exact compiler of
\cref{thm:unpinned-spectral-preservation} to $K$.
\[
V^\dagger\widetilde H V=K.
\]
The compiler permits repeated occurrences of the same signed Pauli
operator on the same ordered working tuple, since each occurrence is
assigned its own auxiliary block. Hence the multiplicities introduced
in the normalization lemma are compatible with the pairwise-distinct
ordered-hyperedge requirement.

The locality of the compiled interaction is
\[
k+\ell+2
=
k+(2k+1)+2
=
3k+3.
\]
Combining the exact compiler identity with the normalization bound gives
\[
\left\|
H-\frac{1}{Q}V^\dagger\widetilde H V
\right\|
=
\left\|
H-\frac{1}{Q}K
\right\|
\le\varepsilon.
\]

Let $R$ denote the number of occurrences in $K$. Since
$\ell=2k+1$, the unpinned compiler introduces $2(\ell+1)=4k+4$
auxiliary qubits and $\ell+1=2k+2$ interaction applications per
occurrence. Hence
\[
n_{\mathrm{out}}
=
n+(4k+4)R,
\]
and the total number of applications is
\[
R(2k+2).
\]
By \cref{lem:universal-finite-alphabet-normalization},
$R=\operatorname{poly}(n)$ under the stated assumptions.  Hence both the final
system size and the number of interaction applications are polynomial
in the source size.

Finally, Weyl's inequality and the exact ground-energy preservation of
the unpinned compiler give
\[
\left|
\lambda_{\min}(H)
-
\frac{1}{Q}
\lambda_{\min}(\widetilde H)
\right|
\leq
\varepsilon.
\]
Therefore, for a Local Hamiltonian instance $(H,a,b)$ with
\[
0<\varepsilon<\frac{b-a}{4},
\]
the compiled instance may use thresholds
\[
\widetilde a
=
Q(a+\varepsilon),
\qquad
\widetilde b
=
Q(b-\varepsilon),
\]
with
\[
\widetilde b-\widetilde a
=
Q\bigl(b-a-2\varepsilon\bigr)
>0.
\]

\end{proof}

\begin{rem}
The two stages of \cref{thm:universal-single-interaction-reduction}
have different preservation properties.  The approximation occurs
only when the arbitrary source Hamiltonian $H$ is replaced by the
finite-alphabet Hamiltonian $K$; the subsequent compilation of $K$ to
a single interaction is exact.

Consequently, exact properties of $K$ preserved by the
finite-alphabet compiler need not be preserved relative to the
original Hamiltonian $H$.  In particular, the normalization step need
not preserve ground-space degeneracy, frustration freeness, or the
exact spectral gap.  Representing coefficients by multiplicity may
increase the degree of working qubits by a polynomial factor, so
bounded degree and geometric locality are not automatically
preserved.  Likewise, a general Pauli expansion need not preserve
stoquasticity or positivity term by term.

Thus the universal construction is a complexity-theoretic reduction
with a known global energy scale $Q$, rather than an exact
structure-preserving simulation of $H$.
\end{rem}

\begin{cor}
\label{cor:universal-sih-qma-completeness}
For every fixed $k\ge 2$, there exists a fixed $(3k+3)$-local
Hermitian interaction $h_k$ such that the corresponding
$\mathrm{SIH}_{3k+3}$ problem is QMA-complete.
In particular, for $k=2$ there exists a fixed $9$-local interaction
giving a QMA-complete $\mathrm{SIH}_9$ problem.
\end{cor}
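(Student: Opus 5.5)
The plan is to prove the two directions separately. Containment is immediate from \cref{lem:unpinned-containment}. For hardness, we compose the QMA-hardness of $k$-Local Hamiltonian (Kitaev, Kempe--Kitaev--Regev for every fixed $k\ge 2$) with the universal reduction of \cref{thm:universal-single-interaction-reduction}.

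\textbf{Fixing the interaction.} We take $h_k$ to be the interaction $\widetilde h_{\mathrm{unp}}$ of \cref{eq:unpinned-single-interaction}, instantiated with:
\begin{itemize}
\item the signed Pauli alphabet $\mathcal A_k^\star$, listed in a fixed order with fixed binary tags of length $\ell=2k+1$;
\item $J=1$, $\gamma=1$, and $\Delta=3$.
\end{itemize}
Its entries are rational combinations of Pauli matrices and diagonal projectors, so they are polynomial-time computable as \cref{def: SIH} requires. It depends only on $k$, and its locality is $3k+3$.

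\textbf{Mapping instances.} Given a $k$-Local Hamiltonian instance $(H,a,b)$ with $b-a\ge 1/\operatorname{poly}(n)$ and $\|H_i\|\le\operatorname{poly}(n)$:
\begin{itemize}
\item Choose a rational $\varepsilon=(b-a)/8$. Then $1/\varepsilon=\operatorname{poly}(n)$ and $\varepsilon$ is described with polynomially many bits.
\item Apply \cref{thm:universal-single-interaction-reduction} to obtain $\widetilde H$, $Q=\operatorname{poly}(n)$, and the thresholds $\widetilde a=Q(a+\varepsilon)$, $\widetilde b=Q(b-\varepsilon)$.
\item The output instance consists of the final system size $n_{\mathrm{out}}=n+(4k+4)R$ in unary, the list $L$ of ordered $(3k+3)$-tuples, and $(\widetilde a,\widetilde b)$.
\end{itemize}

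\textbf{Checking the promise.} By Weyl's inequality and $\lambda_{\min}(\widetilde H)=\lambda_{\min}(K)$, the YES case gives $\lambda_{\min}(\widetilde H)\le\widetilde a$ and the NO case gives $\lambda_{\min}(\widetilde H)\ge\widetilde b$.

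The promise gap is $\widetilde b-\widetilde a=Q(b-a-2\varepsilon)\ge \tfrac34(b-a)$, using $Q\ge1$. Since $n_{\mathrm{out}}\ge n$ and $n_{\mathrm{out}}=\operatorname{poly}(n)$, this gap is at least $1/\operatorname{poly}(n_{\mathrm{out}})$.

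The theorem also guarantees the remaining SIH requirements:
\begin{itemize}
\item the tuples in $L$ are pairwise distinct, and each tuple has distinct entries;
\item $|L|=R(2k+2)=\operatorname{poly}(n_{\mathrm{out}})$;
\item the thresholds have polynomial bit length.
\end{itemize}

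\textbf{Main obstacle.} The main subtlety is bookkeeping rather than new analysis. The SIH promise is stated in terms of the final size $n_{\mathrm{out}}$, so all parameters — the gap, the number of terms, and the precision — must be polynomially related to it; this follows from the polynomial relation between $n$ and $n_{\mathrm{out}}$. The other point to check is that the source problem stays QMA-hard under the normalization $\|H_i\|\le\operatorname{poly}(n)$, which holds for the standard constructions. The case $k=2$ gives $3\cdot2+3=9$.
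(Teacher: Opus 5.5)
Your proposal is correct and follows the paper's proof. The paper composes QMA-hardness of $k$-Local Hamiltonian with \cref{thm:universal-single-interaction-reduction} and obtains containment from \cref{lem:unpinned-containment}, exactly as you do. Your choice $\varepsilon=(b-a)/8$ differs from the paper's $(b-a)/4$ only in that it satisfies the theorem's strict hypothesis $\varepsilon<(b-a)/4$, and your bookkeeping of $n_{\mathrm{out}}$, $|L|$, and the output gap spells out what the paper states more briefly.
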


\begin{proof}
For every fixed $k\ge 2$, $k$-Local Hamiltonian is QMA-hard.
Given an instance with promise gap
\[
b-a\ge \frac{1}{\operatorname{poly}(n)},
\]
choose, for example,
\[
\varepsilon=\frac{b-a}{4}.
\]

Then $\varepsilon$ has a polynomial-bit description and
\[
\frac{1}{\varepsilon}
=
\operatorname{poly}(n),
\]
so \cref{thm:universal-single-interaction-reduction} gives a
polynomial-time promise-preserving reduction to
$\mathrm{SIH}_{3k+3}$.

The output promise gap satisfies
\[
\widetilde b-\widetilde a
=
\frac{Q(b-a)}{2}
\ge
\frac{b-a}{2}
\ge
\frac{1}{\operatorname{poly}(n)}.
\]
Since the output system size satisfies
$n_{\mathrm{out}}=\operatorname{poly}(n)$, this is also an
inverse-polynomial promise gap in $n_{\mathrm{out}}$.

QMA containment follows from
\cref{lem:unpinned-containment}.

\end{proof}

\section{Hardness of Constant Single-Interaction Hamiltonians}
\label{sec:hardness_single_interaction}

This section turns from the general single-interaction normal form of
\cref{sec:general_methodology} to sharper fixed-interaction hardness
results. The universal reduction applies to arbitrary fixed-locality
Hamiltonians, but its normalization is not locality-optimal and need
not preserve bounded degree. We therefore develop two complementary
specialized constructions.

First, we combine the positive-weight singleton unweighting of
\cref{sec:universal-finite-alphabet-normalization} with a QMA-complete
positive-weight two-qubit interaction to obtain an unpinned
$\mathrm{SIH}_3$ problem. We then treat geometric locality separately,
combining a spatially local fixed-gate verifier with local energetic
auxiliary checks to obtain geometrically local $\mathrm{SIH}_8$.
The first construction gives the sharper locality bound, while the
second preserves both bounded interaction range and bounded vertex
degree.

\subsection{The \texorpdfstring{$\mathrm{SIH}_3$}{SIH3} Construction}
\label{subsec:sih3-construction}

We obtain the three-local result by applying the positive-weight
singleton construction of
\cref{cor:positive-weight-singleton-sih} to a QMA-complete
two-qubit interaction.

Consider the antiferromagnetic Heisenberg interaction
\[
h_{\mathrm H}
=
X\otimes X
+
Y\otimes Y
+
Z\otimes Z.
\]
Since $h_{\mathrm H}$ has
\[
\alpha=\beta=\gamma=1,
\]
\cref{thm:piddock-montanaro-positive} implies that
$\{h_{\mathrm H}\}^{+}$-Hamiltonian is QMA-complete.

\begin{thm}
\label{thm:sih3-qma-complete}
There exists a fixed $3$-local Hermitian interaction
$g_{\mathrm H}$ such that the corresponding $\mathrm{SIH}_3$
problem is QMA-complete. In particular, one may take
\[
g_{\mathrm H}
=
h_{\mathrm H}\otimes\Pi^{\Ket{1}}
+
2I\otimes\Pi^{\Ket{0}}.
\]
\end{thm}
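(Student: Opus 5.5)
The plan is to obtain hardness by composing three ingredients already in place: the QMA-completeness of $\{h_{\mathrm H}\}^{+}$-Hamiltonian from \cref{thm:piddock-montanaro-positive}, the weighted normalization of \cref{lem:weighted-finite-alphabet-normalization} with $S=\{h_{\mathrm H}\}$, and the exact singleton unweighting of \cref{lem:positive-singleton-unweighting}. Containment in QMA is immediate from \cref{lem:unpinned-containment}, since $g_{\mathrm H}$ has rational (indeed integer) entries and is trivially polynomial-time computable.

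The first step is to check that the displayed $g_{\mathrm H}$ is exactly the gadget $g_h$ of \cref{lem:positive-singleton-unweighting} for a legal choice of $c$. The Heisenberg interaction satisfies $h_{\mathrm H}=2\,\mathrm{SWAP}-I$, so its spectrum is $\{1,1,1,-3\}$ and $\lambda_{\max}(h_{\mathrm H})=1$. The choice $c=2$ is therefore a fixed rational constant with $\delta=c-\lambda_{\max}=1>0$. Hence $g_{\mathrm H}=g_{h_{\mathrm H}}$ with $c=2$, and it acts on $2+1=3$ qubits.

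Next, given an instance $H=\sum_i\alpha_i (h_{\mathrm H})_{V_i}$ with $\alpha_i\ge0$, polynomially bounded and specified with polynomially many bits, together with thresholds $a<b$ and $b-a\ge 1/\operatorname{poly}(n)$, apply \cref{cor:positive-weight-singleton-sih} with $\varepsilon=(b-a)/8$. This yields $Q=\operatorname{poly}(n)$ and an occurrence Hamiltonian $K=\sum_{t=1}^R(h_{\mathrm H})_{V_t}$ with $R=\operatorname{poly}(n)$ and $\|H-K/Q\|\le\varepsilon$. We then attach one fresh auxiliary qubit per occurrence to form $\widetilde K=\sum_t (g_{\mathrm H})_{V_t,a_t}$. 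By \cref{lem:positive-singleton-unweighting}, $\lambda_{\min}(\widetilde K)=\lambda_{\min}(K)$, and all ordered triples are pairwise distinct with distinct entries, so $\widetilde K$ is a valid $\mathrm{SIH}_3$ instance. Weyl's inequality then shows that the thresholds $\widetilde a=Q(a+\varepsilon)$ and $\widetilde b=Q(b-\varepsilon)$ preserve YES and NO instances, with gap $\widetilde b-\widetilde a\ge (b-a)\cdot\tfrac{3}{4}\ge 1/\operatorname{poly}(n_{\mathrm{out}})$ because $n_{\mathrm{out}}=n+R=\operatorname{poly}(n)$.

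The main point needing care is the interface with Piddock--Montanaro. Their hardness must be stated for instances whose weights are polynomially bounded, given with polynomially many bits, and separated by an inverse-polynomial promise gap, so that $1/\varepsilon$, $Q$ and $R$ all remain polynomial. I would verify this against their statement, and if necessary rescale their weights by a known polynomial factor, which simply rescales $a$ and $b$. The remaining bookkeeping, namely polynomial-time construction and the fact that $g_{\mathrm H}$ is independent of $Q$ and of the instance, follows directly from the cited lemmas.
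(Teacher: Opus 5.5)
Your proposal is correct and follows the paper's proof. The paper checks $\lambda_{\max}(h_{\mathrm H})=1$ so that $c=2$ is admissible, applies the positive-weight singleton reduction (\cref{cor:positive-weight-singleton-sih}) to the QMA-complete $\{h_{\mathrm H}\}^{+}$-Hamiltonian problem, and takes containment from \cref{lem:unpinned-containment}; you do the same, except that you unroll that corollary (rounding to multiplicities, one fresh auxiliary qubit per occurrence, Weyl-shifted thresholds with $\varepsilon=(b-a)/8$) rather than citing it, which is harmless.
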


\begin{proof}
The Heisenberg interaction satisfies
\[
\lambda_{\max}(h_{\mathrm H})=1,
\]
so the choice $c=2$ satisfies the hypothesis of
\cref{lem:positive-singleton-unweighting}.

Applying \cref{cor:positive-weight-singleton-sih} therefore gives the
polynomial-time promise-preserving reduction
\[
\{h_{\mathrm H}\}^{+}\text{-Hamiltonian}
\longrightarrow
\mathrm{SIH}_3
\]
with fixed interaction
\[
g_{\mathrm H}
=
h_{\mathrm H}\otimes\Pi^{\Ket{1}}
+
2I\otimes\Pi^{\Ket{0}}.
\]
Since the source problem is QMA-complete by
\cref{thm:piddock-montanaro-positive}, this proves QMA-hardness.
Containment in QMA follows from \cref{lem:unpinned-containment}.
\end{proof}

\subsection{The Geometrically Local \texorpdfstring{$\mathrm{SIH}_8$}{SIH8} Construction}
\label{subsec:unpinned_geometric}

We now turn to the single-interaction setting under strict spatial
constraints.

\begin{thm}
\label{thm:unpinned-fixed}
There exists a fixed $8$-local Hermitian interaction $h$ for which
geometrically local $\mathrm{SIH}_8$ is QMA-complete.
\end{thm}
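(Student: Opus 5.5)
The plan is to run the exact unpinned compiler of \cref{thm:unpinned-spectral-preservation} on the geometrically local source Hamiltonian $H_{\mathrm{geo}}$ of \cref{subsubsec: Geometrically Local}. I will use the generic compiler rather than the universal normalization, because the normalization may destroy bounded degree. The locality count is the starting point. The compiled interaction has locality $k+\lceil\log_2 m\rceil+2$. With $k=5$, reaching $8$ forces $\lceil\log_2 m\rceil=1$, so the source must be rewritten over an alphabet of at most $m=2$ distinct $5$-qubit matrices, applied to ordered $5$-tuples. Every term of $H_{\mathrm{geo}}$ must therefore be an unweighted occurrence of one of two fixed matrices $h^{(1)},h^{(2)}$, with constant multiplicity per tuple. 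Once this holds, \cref{thm:unpinned-spectral-preservation} gives $\lambda_{\min}(\widetilde K)=\lambda_{\min}(K)$ exactly, so the source thresholds carry over unchanged. \Cref{lem:geometric-locality-preservation} then gives bounded range and bounded degree.

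Step one is to fix the verifier gate set as a single fixed universal two-qubit gate $G$ together with nearest-neighbour SWAP. This set is padded, as allowed in the notation section, so that idle rounds are realized by cancelling pairs. All terms then come from a finite list:
\begin{itemize}
\item propagation terms with $G$, with $G$ in either orientation, or with SWAP, each $5$-local (two work qubits and three clock qubits);
\item clock checks $\Pi^{\Ket0}\otimes\Pi^{\Ket1}$;
\item localized initialization checks $\Pi^{\Ket{b}}_{W}\otimes\Pi^{\Ket1}_{C_{t_i-1}}\otimes\Pi^{\Ket0}_{C_{t_i}}$;
\item the output check.
\end{itemize}
All coefficients in this list are fixed constants, independent of the instance.

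Step two merges this finite list into two $5$-qubit matrices. The idea is to exploit that the input is an ordered tuple, so one fixed matrix can play several roles through the order in which qubits are fed to it. Concretely, set $h^{(1)}$ to be the $G$-propagation term plus a clock check and an initialization-type projector placed on designated slots. Set $h^{(2)}$ to be the analogous matrix with SWAP. Orientation reversal of $G$ is absorbed by permuting tuple entries. The clock checks that the Oliveira--Terhal layout needs are supplied as by-products of the propagation terms attached to each clock site.

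Any leftover discrepancy between $K=\sum h^{(j_i)}_{V_i}$ and $H_{\mathrm{geo}}$ must be harmless. It should consist of extra positive semidefinite penalties that annihilate the history state, or of constant multiples of terms already present. I would recheck the completeness bound (\cref{lemma: kitaev completness}) directly. For soundness I would use the monotonicity $K\succeq cH_{\mathrm{geo}}$ for a constant $c>0$, obtained termwise, to recover the $\Omega(T^{-3})$ bound from \cref{app:geometric-source}.

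Step three applies the unpinned compiler with $\ell=1$, $J=\max\|h^{(j)}\|$, $\gamma=1$ and $\Delta=2J+1$. This yields a fixed $8$-local $\widetilde h_{\mathrm{unp}}$. Since each tuple carries only $O(1)$ occurrences and the source hypergraph is geometrically local with bounded degree, \cref{lem:geometric-locality-preservation} places the two auxiliary pairs of each occurrence next to their support. Containment in QMA follows from \cref{lem:unpinned-containment}.

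The main obstacle is step two: compressing the source to exactly two alphabet elements without breaking the completeness–soundness separation. If the merging trick fails for the initialization and output terms, my fallback is to move them into the work qubits themselves. I would replace each initialized ancilla check by a dedicated SWAP-propagation step from a neighbouring qubit, whose state is forced by the tuple ordering of an $h^{(1)}$ occurrence. I would also fold the output penalty into the last $h^{(1)}$ application via a designated slot. In both cases I would then verify that the Projection Lemma analysis underlying \cref{lemma: kitaev soundness} still applies to the modified clock/propagation structure.
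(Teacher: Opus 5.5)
\textbf{The gap is your step two.} It carries essentially all of the content of the theorem.

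To reach locality $8$ through \cref{thm:unpinned-spectral-preservation} with $k=5$ you need $m\le 2$. Even before counting the endpoint propagation terms at $t=1$ and $t=T$, $H^{\mathrm{geo}}$ has at least five inequivalent term types: $G$-propagation, SWAP-propagation, $\Pi^{\Ket{0}}\otimes\Pi^{\Ket{1}}$ (clock and output), and the two initialization checks with $\Pi^{\Ket{1}}_{W_i}$ and $\Pi^{\Ket{0}}_{W_i}$. The endpoint terms, which your list omits, add more: they have no left (respectively right) neighbouring clock projector. Fed directly to the generic compiler, this gives $\ell\ge 3$ and locality at least $10$.

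Your merge does not close this difference. A fixed matrix contributes all of its parts on every tuple it is applied to, and the single selector bit of the $m=2$ compiler only chooses between $h^{(1)}$ and $h^{(2)}$.
\begin{itemize}
\item Suppose the initialization projector is placed in $h^{(1)}$ so that it reproduces the check on the propagation tuple at time $t_i$. Then every other $G$-propagation occurrence at time $t$ also contributes $\Pi^{\Ket{b}}_{W^t_1}\otimes\Pi^{\Ket{1}}_{C_{t-1}}\otimes\Pi^{\Ket{0}}_{C_t}$. This penalizes the history state whenever the gate's input qubit is in $\Ket{b}$ just before the gate, so completeness fails.
\item Placed anywhere else, the projector must be applied to tuples on which the accompanying propagation part is not a term of $H^{\mathrm{geo}}$. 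You give no argument that this stray part is harmless.
\end{itemize}

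More basically, the tuples $V_i$ in the compiler theorem contain only work and clock qubits, and none of them has a known value. Nothing can act as the fixed $\Ket{1}$ required in the $C_{t-1}$ slot at $t=1$. Nothing can collapse a propagation-shaped matrix to a $2$- or $3$-qubit check. Your fallback has the same defect: reordering a tuple changes which qubit sits in which slot, but it cannot force any qubit into a particular state.

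\textbf{What is missing is a supply of known-value qubits in the computational positions.} The paper gets this without invoking the generic compiler. It builds the bespoke interaction \cref{eq:fixed-h-def}: Barenco, SWAP and identity propagation branches plus a six-qubit projector branch on $v_1,\ldots,v_6$, scaled by $1/4$, with $h_{\mathrm{check}}$ on $(v_7,v_8)$. Each time step gets its own block of four $\Ket{01}$ pairs.
\begin{itemize}
\item On the valid configuration, auxiliary qubits are routed into projector positions, where they act both as branch selectors and as fixed $\Ket{0}$ or $\Ket{1}$ fillers.
\item The same projector factors on $v_3,\ldots,v_6$ double as the neighbouring-clock projectors $\Pi^{\Ket{1}}_{C_{t-1}}$ and $\Pi^{\Ket{0}}_{C_{t+1}}$.
\item The fillers reduce the projector branch to the clock, initialization and output checks, and supply the missing boundary factors of the endpoint propagation terms (\cref{app:geometric-sih-routing}).
\end{itemize}
Because auxiliary qubits then occupy non-selector slots, \cref{thm:unpinned-spectral-preservation} no longer applies. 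The paper instead controls the invalid sector directly: $h\succeq I\otimes h_{\mathrm{check}}$, together with the four cyclic applications per term, gives $\widetilde H^{\mathrm{geo}}|_{\mathcal H_{\mathrm{inv}}}\succeq I$ (\cref{lem:geometric-sih-auxiliary-subspaces}).
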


For hardness, we start from the geometrically local $5$-Local
Hamiltonian construction of
\cref{subsubsec: Geometrically Local}.  We first compile the verifier
to a fixed computational gate and then spatialize it using native
SWAP and identity operations.  The resulting geometrically local
Hamiltonian is encoded by one fixed $8$-local interaction, with the
valid auxiliary configuration enforced energetically.

\subsubsection{Fixed-Gate Verifier Compilation}
\label{subsubsec:fixed-gate-verifier-compilation}

Let $L$ be a promise problem in QMA\@. By \cref{def:QMA}, there exists a
deterministic polynomial-time algorithm that, given
$x\in\{0,1\}^*$, outputs a quantum verification circuit $U_x$.
After amplification, we may assume that $U_x$ contains
$T=\operatorname{poly}(|x|)$ two-qubit gates and acts on
\[
N=n_0+n_1+m=\operatorname{poly}(|x|)
\]
qubits, initialized as
\[
\Ket{0}^{\otimes n_0}
\otimes
\Ket{1}^{\otimes n_1}
\otimes
\Ket{\xi},
\]
where $\Ket{\xi}$ is an $m$-qubit witness. For
$\varepsilon=2^{-\Omega(|x|)}$, if $x\in L_{\mathrm{yes}}$ then some
witness is accepted with probability at least $1-\varepsilon$, whereas
if $x\in L_{\mathrm{no}}$ then every witness is accepted with
probability at most $\varepsilon$.

Before spatialization, we compile the verifier so that every
computational gate is an ordered application of a single fixed
two-qubit gate $\mathbb{B}$.  We take $\mathbb{B}$ to be the Barenco
gate~\cite{Barenco1995}, using Barenco's parameter ordering
$\mathbb{B}(\phi,\alpha,\theta)$:
\begin{align*}
\mathbb{B}(\phi,\alpha,\theta)=
\begin{pmatrix}
1 & 0 & 0 & 0\\
0 & 1 & 0 & 0\\
0 & 0 & e^{i\alpha}\cos\theta
& -ie^{i(\alpha-\phi)}\sin\theta \\
0 & 0 & -ie^{i(\alpha+\phi)}\sin\theta
& e^{i\alpha}\cos\theta
\end{pmatrix}.
\end{align*}

We fix
\[
\phi=\frac{\pi}{2},
\qquad
\alpha=\frac{\pi}{4},
\qquad
\theta=\frac{\pi}{\sqrt{2}},
\]
and write
\[
\mathbb{B}
=
\mathbb{B}\left(
\frac{\pi}{2},
\frac{\pi}{4},
\frac{\pi}{\sqrt{2}}
\right).
\]

Our fixed gate $\mathbb{B}$ belongs to the special universal family
$A(\pi/2,\pi/4,\theta)$ identified in~\cite{Barenco1995}; here
$\theta/\pi=1/\sqrt{2}$ is irrational.  Hence ordered applications of
$\mathbb{B}$ are universal for quantum computation.

For efficient compilation, consider a fixed three-qubit register and
the finite gate set obtained by applying $\mathbb{B}$ to its ordered
pairs of qubits.  After phase-normalizing these gates to determinant
one, the inverse-free Solovay--Kitaev theorem
of~\cite{BoulandGiurgicaTiron2021} applies.  Every gate from a fixed
universal verifier gate set can therefore be approximated to error
$\delta$ using $\operatorname{polylog}(1/\delta)$ ordered applications
of $\mathbb{B}$, with the same asymptotic classical compilation time.

Choosing $\delta$ exponentially small in $|x|$ preserves the
exponentially small completeness and soundness error of the amplified
verifier while increasing its size by only a polynomial factor.

\subsubsection{Spatially Local Fixed-Gate Circuit}

We now spatialize the fixed-gate verifier using the construction of
\cref{subsubsec: Geometrically Local} and
\cref{app:geometric-source}.  Spatialization introduces exact SWAP
operations to transport the logical register between neighboring
columns and identity operations to advance the local time cursor.
Thus the resulting polynomial-length spatial circuit
\[
U_1,\ldots,U_T
\]
uses the fixed operation set
\[
\{\mathbb{B},\mathrm{SWAP},I\},
\]
and every physical work qubit participates in only $O(1)$ circuit
operations.

We keep $\mathrm{SWAP}$ and $I$ as native operations rather than
compiling them into sequences of $\mathbb{B}$ gates.  This preserves
the constant participation bound needed for bounded degree in the
circuit-to-Hamiltonian construction.

Finally, by padding with an initial and a final identity round, we may
assume without loss of generality that the first and last computational
rounds consist entirely of identity operations. The explicit lattice
layout, operation count, and work-register bookkeeping are given in
\cref{app:geometric-source}.

\subsubsection{The Single-Interaction Matrix}

We encode $H^{\mathrm{geo}}$ using a fixed $8$-local Hermitian
interaction $h$. Its computational branches implement Barenco,
SWAP, and identity propagation together with the initialization,
output, and clock penalties, while an additional two-qubit check term
enforces the valid auxiliary configuration.

The interaction acts on eight qubits $(v_{1},\ldots,v_{8})$ and is defined by
\begin{align} \label{eq:fixed-h-def}
    h(v_{1},\dots,v_{8}) &= \frac{1}{4} \Big( h_{\mathrm{prop}_1}(v_{1},v_{2},v_{3}) \otimes \Pi_{v_{4}}^{\Ket{0}} \otimes \Pi_{v_{5}}^{\Ket{1}}\otimes \Pi_{v_{6}}^{\Ket{1}} \nonumber \\
    &\quad + h_{\mathrm{prop}_2}(v_{1},v_{2},v_{3}) \otimes \Pi_{v_{4}}^{\Ket{1}} \otimes \Pi_{v_{5}}^{\Ket{1}}\otimes \Pi_{v_{6}}^{\Ket{0}} \nonumber \\
    &\quad + h_{\mathrm{prop}_3}(v_{1},v_{2}) \otimes \Pi_{v_{3}}^{\Ket{1}} \otimes \Pi_{v_{4}}^{\Ket{0}} \otimes \Pi_{v_{5}}^{\Ket{0}}\otimes \Pi_{v_{6}}^{\Ket{0}} \nonumber \\
    &\quad + \Pi_{v_{1}}^{\Ket{1}} \otimes\Pi_{v_{2}}^{\Ket{0}} \otimes\Pi_{v_{3}}^{\Ket{0}} \otimes \Pi_{v_{4}}^{\Ket{0}} \otimes \Pi_{v_{5}}^{\Ket{1}} \otimes \Pi_{v_{6}}^{\Ket{0}}  \nonumber \Big) \otimes I_{v_7,v_8} \\
    &\quad + I_{v_1,\dots,v_6} \otimes h_{\text{check}}(v_7,v_8)
\end{align}

where $h_{\mathrm{prop}_1}$ is the propagation check for a Barenco gate on qubits $v_{1},v_{2}$ with $v_{3}$ acting as the clock transition qubit:
\begin{align} \label{eq:fixed-h-prop-barenco}
    h_{\mathrm{prop}_1}(v_{1},v_{2},v_{3}) = \frac{1}{2}\left(I_{v_1,v_2} \otimes \Ket{1}\Bra{1}_{v_3} + I_{v_1,v_2} \otimes \Ket{0}\Bra{0}_{v_3} - \mathbb{B}_{v_{1},v_{2}} \otimes \Ket{1}\Bra{0}_{v_3} - \mathbb{B}_{v_{1},v_{2}}^{\dagger} \otimes \Ket{0}\Bra{1}_{v_3}\right)
\end{align} 

$h_{\mathrm{prop}_2}$ is the propagation check for a SWAP gate on qubits $v_{1},v_{2}$ with $v_{3}$ acting as the clock transition qubit:
\begin{align} \label{eq:fixed-h-prop-swap}
    h_{\mathrm{prop}_2}(v_{1},v_{2},v_{3}) = \frac{1}{2}\left(I_{v_1,v_2} \otimes \Ket{1}\Bra{1}_{v_3} + I_{v_1,v_2} \otimes \Ket{0}\Bra{0}_{v_3} - \mathrm{SWAP}_{v_{1},v_{2}} \otimes \Ket{1}\Bra{0}_{v_3} - \mathrm{SWAP}_{v_{1},v_{2}}^{\dagger} \otimes \Ket{0}\Bra{1}_{v_3}\right) 
\end{align}

$h_{\mathrm{prop}_3}$ is the propagation check for an Identity gate on qubit $v_{1}$ with $v_{2}$ acting as the clock transition qubit:
\begin{align} \label{eq:fixed-h-prop-identity}
    h_{\mathrm{prop}_3}(v_{1},v_{2}) = \frac{1}{2}\left(I_{v_1} \otimes \Ket{1}\Bra{1}_{v_2} + I_{v_1} \otimes \Ket{0}\Bra{0}_{v_2} - I_{v_{1}} \otimes \Ket{1}\Bra{0}_{v_2} - I_{v_{1}} \otimes \Ket{0}\Bra{1}_{v_2}\right)
\end{align}

and $h_{\text{check}}(v_7,v_8)$ is a local penalty term that introduces an energy cost for any state orthogonal to $\Ket{01}$:
\begin{align} \label{eq:fixed-h-check}
    h_{\text{check}}(v_{7},v_{8}) &= I_{v_7,v_8} - \Pi_{v_{7}}^{\Ket{0}} \otimes \Pi_{v_{8}}^{\Ket{1}} 
\end{align}

The four computational branches of $h$ are designed so that selector
projectors already present in the desired Hamiltonian terms can be shared
with the branch-selection mechanism. Thus the tensor-factor positions of $h$ do not have fixed
\emph{selector} and \emph{computational} roles.

On the valid auxiliary configuration, even-indexed auxiliary qubits are
in $\Ket{0}$ and odd-indexed auxiliary qubits are in $\Ket{1}$. The four
branches are selected as follows.

For Barenco propagation, an odd auxiliary qubit is routed to $v_6$.
Since the first branch contains the factor $\Pi^{\Ket{1}}_{v_6}$ while
each of the other three branches contains
$\Pi^{\Ket{0}}_{v_6}$, this annihilates all branches except the first.
The projectors on $v_4$ and $v_5$ are then left available to implement
the neighboring clock projectors appearing in the propagation term.

For SWAP propagation, an odd auxiliary qubit is routed to $v_4$.
Only the second branch contains $\Pi^{\Ket{1}}_{v_4}$; the remaining
three contain $\Pi^{\Ket{0}}_{v_4}$. Hence the second branch is selected,
while the $v_5$ and $v_6$ positions remain available for the neighboring
clock projectors.

For identity propagation, an even auxiliary qubit is routed to $v_5$.
The first, second, and fourth branches contain
$\Pi^{\Ket{1}}_{v_5}$ and are therefore annihilated, leaving the third
branch. The remaining projector positions in that branch can again be
used as part of the clock constraint.

Finally, the fourth branch is used for the initialization, output, and
clock penalties. In these applications the auxiliary qubits are routed
so that each of the first three branches is annihilated, while some of
the projector positions in the fourth branch are occupied by work or
clock qubits and therefore become part of the desired penalty operator.

This sharing of selector and computational projector positions is what
allows the specialized construction to achieve lower locality than the
general compiler of \cref{sec:general_methodology}.

\subsubsection{Auxiliary-Block Construction and Spatial Embedding}

We use the clock placement of \cref{app:geometric-source}, so that the
clock qubit $C_t$ lies within constant distance of the work qubits
involved in the operation $U_t$.

For each time step $t$, we place in the same local neighborhood a
dedicated auxiliary block
\[
A^{(t)}
=
\left(
A^{(t)}_0,\ldots,A^{(t)}_7
\right),
\]
consisting of four ordered pairs. Its valid state is
\[
\Ket{\Omega_t}_A
=
\Ket{01}_{A^{(t)}_0,A^{(t)}_1}
\Ket{01}_{A^{(t)}_2,A^{(t)}_3}
\Ket{01}_{A^{(t)}_4,A^{(t)}_5}
\Ket{01}_{A^{(t)}_6,A^{(t)}_7}.
\]
The valid state of the complete auxiliary register is therefore
\[
\Ket{\Omega}_A
=
\bigotimes_{t=1}^{T}\Ket{\Omega_t}_A.
\]

The prefactor $1/4$ in the computational part of $h$ is compensated by
using four applications of $h$ for every target term. These applications
are indexed by $j\in\{0,1,2,3\}$. In the valid auxiliary state, all four
applications select the same computational branch of $h$, so each
contributes exactly one quarter of the desired target operator. Their
sum therefore reproduces that operator with coefficient one.

At the same time, the check positions $(v_7,v_8)$ are occupied by
\[
\left(
A^{(t)}_{2j+6},
A^{(t)}_{2j+7}
\right),
\]
where auxiliary indices are interpreted modulo $8$. As $j$ ranges from
$0$ to $3$, these positions contain, respectively,
\[
(A^{(t)}_6,A^{(t)}_7),\quad
(A^{(t)}_0,A^{(t)}_1),\quad
(A^{(t)}_2,A^{(t)}_3),\quad
(A^{(t)}_4,A^{(t)}_5).
\]
Hence the same four applications that restore the coefficient of the
computational term also route every auxiliary pair through
$h_{\mathrm{check}}$ exactly once.

We define the valid auxiliary subspace by
\[
\mathcal{H}_{\mathrm{val}}
=
\mathcal{H}_{W,C}
\otimes
\operatorname{span}\{\Ket{\Omega}_A\},
\]

and denote its orthogonal complement by
\[
\mathcal{H}_{\mathrm{inv}}
=
\mathcal{H}_{\mathrm{val}}^\perp.
\]

Let
\[
V_{\mathrm{val}}:\mathcal{H}_{W,C}\longrightarrow
\mathcal{H}_{\mathrm{val}}
\]
be the isometry
\[
V_{\mathrm{val}}\Ket{\psi}
=
\Ket{\psi}_{W,C}\otimes\Ket{\Omega}_A.
\]

With this local layout and auxiliary structure established, the total
unpinned Hamiltonian is
\begin{align}
    \widetilde{H}^{\mathrm{geo}}
    &=
    \widetilde{H}^{\mathrm{geo}}_{\mathrm{in}}
    +
    \widetilde{H}^{\mathrm{geo}}_{\mathrm{out}}
    +
    \widetilde{H}^{\mathrm{geo}}_{\mathrm{prop}}
    +
    \widetilde{H}^{\mathrm{geo}}_{\mathrm{clock}}.
\end{align}

Let $I_0$ and $I_1$ denote the initialized physical work qubits whose
prescribed states are $\Ket{0}$ and $\Ket{1}$, respectively. For
$W_i\in I_0\cup I_1$, let $t_i$ be its first nonidentity time step,
with $\mathrm{SWAP}$ counted as nonidentity.

By the identity padding and localized-initialization argument of
\cref{app:geometric-source},
\[
2\leq t_i\leq T-1,
\]
and the local clock projector
\[
\Pi^{\Ket{1}}_{C_{t_i-1}}
\otimes
\Pi^{\Ket{0}}_{C_{t_i}}
\]
selects the computational snapshot at which the prescribed initial
state of $W_i$ may be checked.

Each target term of $H^{\mathrm{geo}}$ is implemented by four
applications of $h$.  The appropriate work, clock, and auxiliary
qubits are routed into its tensor-factor positions so that, on the
valid auxiliary state, all four applications select the same
computational branch.  Their computational contributions sum to the
target operator, while the four auxiliary pairs are simultaneously
cycled through the check positions.

The branch assignment is
\[
\begin{array}{c|c}
\text{target term} & \text{branch of }h\\
\hline
\mathbb{B}\text{-propagation} & h_{\mathrm{prop}_1}\\
\mathrm{SWAP}\text{-propagation} & h_{\mathrm{prop}_2}\\
I\text{-propagation} & h_{\mathrm{prop}_3}\\
\text{initialization, output, and clock penalties} & \text{fourth branch}
\end{array}
\]

For initialization terms, the check is placed at the local clock
transition immediately preceding the first nonidentity operation on the
corresponding work qubit, as described above. The output term is placed
at the final clock position, and the clock and propagation terms are
routed within the corresponding constant-radius time-step neighborhood.

The explicit ordered $8$-tuples implementing each of these terms,
including the endpoint propagation terms, are given in
\cref{app:geometric-sih-routing}.

\begin{lemma}[Valid and invalid auxiliary subspaces]
\label{lem:geometric-sih-auxiliary-subspaces}
The restriction of $\widetilde{H}^{\mathrm{geo}}$ to the valid auxiliary
subspace is exactly the target geometrically local Hamiltonian:
\[
V_{\mathrm{val}}^\dagger
\widetilde{H}^{\mathrm{geo}}
V_{\mathrm{val}}
=
H^{\mathrm{geo}}.
\]

Moreover,
\[
\left.
\widetilde{H}^{\mathrm{geo}}
\right|_{\mathcal{H}_{\mathrm{inv}}}
\succeq I_{\mathcal{H}_{\mathrm{inv}}}.
\]
\end{lemma}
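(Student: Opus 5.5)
The plan follows the proof of \cref{thm:unpinned-spectral-preservation}: decompose into computational-basis auxiliary sectors, show the valid sector reproduces $H^{\mathrm{geo}}$ exactly, and show every invalid sector is penalized. All auxiliary-dependent factors of $h$ are computational-basis projectors (the selector projectors $\Pi^{\Ket{0}},\Pi^{\Ket{1}}$ on $v_4,v_5,v_6$ and $h_{\mathrm{check}}$ on $v_7,v_8$). So whenever an auxiliary qubit occupies a position of $h$, it enters only through a diagonal operator. Hence every sector $\mathcal H_{W,C}\otimes\operatorname{span}\{\Ket z_A\}$ is invariant, and
\[
\widetilde H^{\mathrm{geo}}=\bigoplus_z \widetilde H^{\mathrm{geo}}(z).
\]
I will check this routing property against the explicit tuples of \cref{app:geometric-sih-routing}, since it must hold for every application. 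It gives both claims sectorwise, with $\mathcal H_{\mathrm{inv}}=\bigoplus_{z\neq z_{\mathrm{val}}}$.

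\textbf{Valid sector.} In $\Ket{\Omega}_A$ every $h_{\mathrm{check}}$ contribution vanishes. For each target term, I will go through the branch table and the selector discussion after \cref{eq:fixed-h-check}. Routing an odd auxiliary qubit ($\Ket1$) to $v_6$, respectively $v_4$, or an even one ($\Ket0$) to $v_5$, kills all branches but the intended one. The remaining projector positions are filled by clock qubits, realizing the neighbouring projectors $\Pi^{\Ket1}_{C_{t-1}}$ and $\Pi^{\Ket0}_{C_{t+1}}$. For the fourth branch they are filled by work or clock qubits, realizing the initialization, output and clock penalties. Endpoint terms are handled by filling the unused projector slots with auxiliary qubits of the right parity. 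The four applications $j=0,\dots,3$ each contribute $\tfrac14$ of the target operator. Summing over terms gives $V_{\mathrm{val}}^\dagger\widetilde H^{\mathrm{geo}}V_{\mathrm{val}}=H^{\mathrm{geo}}$.

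\textbf{Invalid sectors.} Fix $z\neq z_{\mathrm{val}}$, so some block $A^{(t)}$ contains a pair not in $\Ket{01}$. Since each pair is routed through the check positions exactly once, the check part of $\widetilde H^{\mathrm{geo}}(z)$ equals $\#\{\text{invalid pairs}\}\cdot I\succeq I$. I then need the computational part to be $\succeq 0$ in every sector. The key observation is that each branch is positive semidefinite. Each $h_{\mathrm{prop}_i}$ is $\tfrac12(I-\text{unitary reflection-type off-diagonal})$ and has spectrum $\{0,1\}$, so it is a projector. The fourth branch is a product of projectors. Compressing a tensor product of a PSD operator with projectors onto a basis state $\Ket z$ on some slots leaves a PSD operator. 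Hence every application contributes $\succeq 0$ on $\mathcal H_{W,C}$ in every sector, regardless of which branches survive, including several branches at once. Therefore $\widetilde H^{\mathrm{geo}}(z)\succeq I$, which gives the second claim.

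\textbf{Main obstacle.} The hard part is the bookkeeping rather than the inequality. I must confirm that in every routed application, auxiliary qubits only ever occupy projector slots (never $v_1,v_2,v_3$ inside a propagation operator where they would not be diagonal). I must also confirm that each of the four pairs of every block lands in $(v_7,v_8)$ exactly once. Both facts depend on the explicit tuples in \cref{app:geometric-sih-routing}, so I would verify them term type by term type there. A cross-check of the PSD claim in the case where an auxiliary sits in $v_3$ of $h_{\mathrm{prop}_3}$ would be needed if that ever occurs.
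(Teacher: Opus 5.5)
Your valid-sector computation is fine, and your positivity observation (each $h_{\mathrm{prop}_i}=\tfrac12(I-R)$ with $R$ a Hermitian reflection, so every branch of $h$ is positive semidefinite) is exactly the fact the paper's proof rests on. The gap is the premise your invalid-sector bound depends on. You assume auxiliary qubits enter $h$ only through diagonal factors, so that $\widetilde H^{\mathrm{geo}}=\bigoplus_z\widetilde H^{\mathrm{geo}}(z)$ over computational-basis auxiliary configurations. The check you defer to \cref{app:geometric-sih-routing} fails, because whether a slot is diagonal depends on the branch. Position $v_3$ is a projector slot in the third and fourth branches, but it is the clock-transition qubit of $h_{\mathrm{prop}_1}$ and $h_{\mathrm{prop}_2}$. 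In the $I_0$-initialization, output and clock-penalty tuples the auxiliary qubit $A^{(t)}_{2j}$ occupies $v_3$, and in $\widetilde H^{\mathrm{geo}}_{\mathrm{prop}}(1)$ so does $A^{(1)}_{2j+1}$. The first two branches of these applications therefore contain $\Ket{1}\Bra{0}$ and $\Ket{0}\Bra{1}$ on an auxiliary qubit. They vanish on $\Ket{\Omega}_A$ only because a projector $\Pi^{\Ket{1}}$ on a different auxiliary qubit ($A^{(t)}_{2j+2}$ or $A^{(t)}_{2j+4}$) kills them. On invalid configurations, however, they couple sectors $z$ that differ in the $v_3$ qubit. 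So the computational-basis sectors are not invariant. Your bounds $\widetilde H^{\mathrm{geo}}(z)\succeq I$ on the diagonal blocks therefore do not imply $\widetilde H^{\mathrm{geo}}|_{\mathcal H_{\mathrm{inv}}}\succeq I$, since off-diagonal couplings between invalid sectors could lower the energy.

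The repair is the paper's argument, and it uses only your positivity fact, applied on the full space rather than after compression to a sector. Every application satisfies $h\succeq I_{v_1,\ldots,v_6}\otimes h_{\mathrm{check}}(v_7,v_8)$, so $\widetilde H^{\mathrm{geo}}\succeq H_A$ globally. Here $H_A$ is a sum of commuting projectors $I-\Pi^{\Ket{0}}\otimes\Pi^{\Ket{1}}$ over all auxiliary pairs; each pair appears at least once, via the propagation term at its time step. The kernel of $H_A$ is $\mathcal H_{\mathrm{val}}$ and $H_A\succeq I$ on $\mathcal H_{\mathrm{inv}}$, which gives the second claim with no sector decomposition. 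Invariance of $\mathcal H_{\mathrm{val}}$ itself, needed later in \cref{cor:geometric-sih-low-energy-spectrum}, follows because each non-surviving branch is Hermitian and annihilates $\Ket{\Omega}_A$. Relatedly, your statement that each pair passes through $(v_7,v_8)$ exactly once is wrong. A pair passes through once per target term using the block $A^{(t)}$ (propagation, clock, output, initialization). The check part is therefore only bounded below by the number of invalid pairs, which is all you need.
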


\begin{proof}

On the valid auxiliary state $\Ket{\Omega}_A$, the routing construction
of the preceding subsection selects the intended computational branch
for every target term, while all unwanted branches are annihilated by
their selector projectors. The four applications associated with each
target term contribute four copies of one quarter of the desired
operator and simultaneously route all four auxiliary pairs through
$h_{\mathrm{check}}$, whose contribution vanishes on $\Ket{01}$.
Therefore
\[
V_{\mathrm{val}}^\dagger
\widetilde{H}^{\mathrm{geo}}
V_{\mathrm{val}}
=
H^{\mathrm{geo}}.
\]

For the invalid subspace, define the auxiliary penalty operator, extended
trivially over the work and clock registers, by
\[
H_A
=
I_{W,C}
\otimes
\sum_{t=1}^{T}
\sum_{r=0}^{3}
\left(
I-
\Pi^{\Ket{0}}_{A^{(t)}_{2r}}
\otimes
\Pi^{\Ket{1}}_{A^{(t)}_{2r+1}}
\right).
\]

Each $h_{\mathrm{prop}_i}$ is positive semidefinite, as is every
computational-basis projector appearing in \cref{eq:fixed-h-def}.
Together with $h_{\mathrm{check}}\succeq0$, this implies that every
application of $h$ dominates its $h_{\mathrm{check}}$ contribution.

In particular, the four applications implementing the propagation term
at time $t$ route all four pairs of $A^{(t)}$ through
$h_{\mathrm{check}}$. Therefore
\[
\widetilde{H}^{\mathrm{geo}}
\succeq
H_A.
\]

The kernel of $H_A$ is precisely
$\mathcal{H}_{\mathrm{val}}$, while every state orthogonal to that
subspace has $H_A$-energy at least $1$. Equivalently,
\[
\left.H_A\right|_{\mathcal{H}_{\mathrm{inv}}}
\succeq
I_{\mathcal{H}_{\mathrm{inv}}}.
\]
Thus
\[
\left.
\widetilde{H}^{\mathrm{geo}}
\right|_{\mathcal{H}_{\mathrm{inv}}}
\succeq
I_{\mathcal{H}_{\mathrm{inv}}},
\]
as claimed.
\end{proof}

\begin{cor}
\label{cor:geometric-sih-low-energy-spectrum}
The spectrum of $\widetilde{H}^{\mathrm{geo}}$ below energy $1$ is
exactly the spectrum of $H^{\mathrm{geo}}$ below energy $1$, including
multiplicities. In particular,
\[
\lambda_{\min}\!\left(\widetilde{H}^{\mathrm{geo}}\right)
=
\lambda_{\min}\!\left(H^{\mathrm{geo}}\right).
\]
\end{cor}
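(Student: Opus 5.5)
The plan is to deduce the corollary directly from \cref{lem:geometric-sih-auxiliary-subspaces} by a block-decomposition argument. The key structural fact is that every operator acting on the auxiliary register in \cref{eq:fixed-h-def} is diagonal in the computational basis. This holds because the selector projectors and $h_{\mathrm{check}}$ are diagonal. There is one subtlety: in the shared-position design, auxiliary qubits only ever occupy projector slots or check slots, never the non-diagonal clock-transition slot $v_3$ (or $v_2$ in the identity branch). I would verify this against the routing table of \cref{app:geometric-sih-routing}. Granting it, $\widetilde H^{\mathrm{geo}}$ commutes with every computational-basis projector on $A$. Hence $\mathcal H_{\mathrm{val}}$ and $\mathcal H_{\mathrm{inv}}$ are both invariant, and
\[
\widetilde H^{\mathrm{geo}}
=
\widetilde H^{\mathrm{geo}}\big|_{\mathcal H_{\mathrm{val}}}
\oplus
\widetilde H^{\mathrm{geo}}\big|_{\mathcal H_{\mathrm{inv}}}.
\]

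Next, I use the first part of the lemma. Since $V_{\mathrm{val}}$ is a unitary from $\mathcal H_{W,C}$ onto $\mathcal H_{\mathrm{val}}$ and the block is invariant, the identity $V_{\mathrm{val}}^\dagger\widetilde H^{\mathrm{geo}}V_{\mathrm{val}}=H^{\mathrm{geo}}$ shows that the valid block is unitarily equivalent to $H^{\mathrm{geo}}$. It therefore has exactly the spectrum of $H^{\mathrm{geo}}$, with multiplicities. By the second part of the lemma, the invalid block satisfies $\widetilde H^{\mathrm{geo}}|_{\mathcal H_{\mathrm{inv}}}\succeq I$, so all of its eigenvalues are at least $1$. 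Combining the two blocks, the eigenvalues of $\widetilde H^{\mathrm{geo}}$ strictly below $1$ are exactly those of $H^{\mathrm{geo}}$ strictly below $1$, with the same multiplicities.

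For the ground energy, I still need $\lambda_{\min}(H^{\mathrm{geo}})<1$; otherwise the invalid block could tie with the valid one, and the minimum claim would still hold but would need a separate argument. To get this bound, I take the history state of \cref{def: history state} for any witness. It is annihilated by the in, prop and clock terms, and the output term contributes at most $1/(T+1)<1$. The geometric analogue recorded in \cref{app:geometric-source} gives the same bound. Therefore $\lambda_{\min}(\widetilde H^{\mathrm{geo}})=\lambda_{\min}(H^{\mathrm{geo}})$. Even without this bound, the minimum over the two blocks is $\min\{\lambda_{\min}(H^{\mathrm{geo}}),\ge 1\}$, which suffices whenever $\lambda_{\min}(H^{\mathrm{geo}})\le 1$.

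The main obstacle is the invariance claim: confirming that no auxiliary qubit is ever routed into an off-diagonal slot. If that diagonality failed, $\mathcal H_{\mathrm{val}}$ need not be invariant, and the spectra could mix between the blocks.
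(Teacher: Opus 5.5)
Your overall plan is the paper's: split $\widetilde H^{\mathrm{geo}}$ into valid and invalid blocks, identify the valid block with $H^{\mathrm{geo}}$ via $V_{\mathrm{val}}$, bound the invalid block below by $1$, and use a history state to get $\lambda_{\min}(H^{\mathrm{geo}})\le 1/(T+1)<1$. However, the step you rely on for invariance fails, because auxiliary qubits \emph{are} routed into an off-diagonal slot. In the routings of \cref{app:geometric-sih-routing} for the $I_0$ initialization terms, the output term, the clock penalty, and the endpoint propagation term at $t=1$, position $v_3$ holds $A^{(t)}_{2j}$ (respectively $A^{(1)}_{2j+1}$). In the Barenco and SWAP branches, $v_3$ is the clock-transition slot. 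For example, the clock application $h(C_{t+1},C_t,A^{(t)}_{2j},A^{(t)}_{2j+2},A^{(t)}_{2j+3},A^{(t)}_{2j+4},A^{(t)}_{2j+6},A^{(t)}_{2j+7})$ contains the term
\[
-\tfrac18\,\mathbb{B}_{C_{t+1},C_t}\otimes\Ket{1}\Bra{0}_{A^{(t)}_{2j}}\otimes\Pi^{\Ket{0}}_{A^{(t)}_{2j+2}}\otimes\Pi^{\Ket{1}}_{A^{(t)}_{2j+3}}\otimes\Pi^{\Ket{1}}_{A^{(t)}_{2j+4}},
\]
which is nonzero on invalid configurations with $A^{(t)}_{2j+4}$ in $\Ket{1}$. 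So $\widetilde H^{\mathrm{geo}}$ does not commute with the computational-basis projectors on $A$. It genuinely couples distinct invalid auxiliary configurations, and your argument as written does not establish the block decomposition.

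The decomposition is still true; it just does not come from diagonality. Prove $\widetilde H^{\mathrm{geo}}\mathcal H_{\mathrm{val}}\subseteq\mathcal H_{\mathrm{val}}$ directly. Write each branch of each application as an operator on its non-projector slots tensored with computational-basis projectors on its remaining slots. In the four routings above, only the Barenco and SWAP branches act off-diagonally on an auxiliary qubit. Each of them carries a factor $\Pi^{\Ket{1}}$ on an even-indexed auxiliary qubit ($A^{(t)}_{2j+2}$ or $A^{(t)}_{2j+4}$), whose valid value is $\Ket{0}$, so they annihilate $\mathcal H_{\mathrm{val}}$. Every other piece, including the check term, is diagonal on $A$ and maps $\Ket{\psi}\otimes\Ket{\Omega}_A$ to a vector of the form $\Ket{\psi'}\otimes\Ket{\Omega}_A$. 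Write $P_{\mathrm{val}},P_{\mathrm{inv}}$ for the orthogonal projectors onto the two subspaces. Since $\widetilde H^{\mathrm{geo}}$ is Hermitian, invariance of $\mathcal H_{\mathrm{val}}$ gives $P_{\mathrm{val}}\widetilde H^{\mathrm{geo}}P_{\mathrm{inv}}=\bigl(P_{\mathrm{inv}}\widetilde H^{\mathrm{geo}}P_{\mathrm{val}}\bigr)^\dagger=0$, so $\mathcal H_{\mathrm{inv}}$ is invariant as well. This annihilation-on-the-valid-state mechanism is what underlies the paper's invariance claim. With it, the bound of \cref{lem:geometric-sih-auxiliary-subspaces} on $\mathcal H_{\mathrm{inv}}$ applies to a genuine block, and the rest of your proof goes through unchanged.
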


\begin{proof}
By \cref{lem:geometric-sih-auxiliary-subspaces},
$\widetilde{H}^{\mathrm{geo}}$ decomposes into the invariant valid and
invalid auxiliary subspaces. On the valid subspace it is exactly
$H^{\mathrm{geo}}$, whereas every eigenvalue in the invalid subspace is
at least $1$. Hence the spectra below $1$ coincide.

Moreover, for any witness state, the corresponding
propagation-consistent history state with correctly initialized
computational-basis ancillas is annihilated by the initialization,
propagation, and clock terms.  The output projector acts only at the
final clock time, so
\[
\Bra{\eta}H^{\mathrm{geo}}\Ket{\eta}
\leq
\frac{1}{T+1}
<1.
\]
Hence
\[
\lambda_{\min}\!\left(H^{\mathrm{geo}}\right)<1.
\]
Since every eigenvalue arising from
$\mathcal H_{\mathrm{inv}}$ is at least $1$, the ground energy of
$\widetilde H^{\mathrm{geo}}$ lies in the valid sector, proving
\[
\lambda_{\min}\!\left(\widetilde H^{\mathrm{geo}}\right)
=
\lambda_{\min}\!\left(H^{\mathrm{geo}}\right).
\]

\end{proof}

\subsubsection{Correctness and QMA-Completeness}

\begin{proof}[Proof of \cref{thm:unpinned-fixed}]
By \cref{lem:geometric-sih-auxiliary-subspaces}, the invariant valid
auxiliary subspace is unitarily equivalent, via $V_{\mathrm{val}}$, to
the target Hamiltonian $H^{\mathrm{geo}}$.

If $x\in L_{\mathrm{yes}}$, then by
\cref{lemma: OT completness} there exists a history state
$\Ket{\eta}$ such that
\[
\bra{\eta}H^{\mathrm{geo}}\Ket{\eta}
\leq
\frac{\varepsilon}{T+1}.
\]
Therefore
\[
\Ket{\eta}\otimes\Ket{\Omega}_A
\]
has the same energy under $\widetilde{H}^{\mathrm{geo}}$.

If $x\in L_{\mathrm{no}}$, then every state in the valid auxiliary
subspace has energy at least
\[
\frac{c}{T^3}
\left(
1-\varepsilon-\sqrt{\varepsilon}
\right)
\]
by \cref{lemma: OT soundness}. Every state in the invalid auxiliary subspace, on the other hand, has
energy at least $1$ by
\cref{lem:geometric-sih-auxiliary-subspaces}.

Hence
\[
\lambda_{\min}(\widetilde{H}^{\mathrm{geo}})
\geq
\min\left\{
\frac{c}{T^3}
\left(
1-\varepsilon-\sqrt{\varepsilon}
\right),
1
\right\}.
\]
Since $T=\operatorname{poly}(|x|)$ and
$\varepsilon=2^{-\Omega(|x|)}$, this lower bound is inverse polynomial
in $|x|$.

Thus the reduction preserves an inverse-polynomial promise gap and is
QMA-hard.

Geometric locality follows from the local routing construction.  Each
constant-size auxiliary block $A^{(t)}$ is placed within constant
distance of the work and clock qubits used at time $t$, so every
application of $h$ has constant geometric diameter.  Since each source
term is replaced by only four applications of $h$, the source
Hamiltonian has bounded degree, and every auxiliary block participates
in only $O(1)$ applications, the compiled Hamiltonian also has bounded
degree.

The spatial circuit and $H^{\mathrm{geo}}$ have polynomial size, so
the number of physical qubits and interaction applications remains
polynomial in $|x|$.  The explicit routing in
\cref{app:geometric-sih-routing} verifies that all ordered
$8$-tuples are pairwise distinct and contain eight distinct physical
qubits.

Finally, $h$ is a fixed $8$-qubit Hermitian matrix, independent of the
input and system size, with efficiently computable entries.  Thus the
reduction produces a valid geometrically local $\mathrm{SIH}_8$
instance and proves QMA-hardness.  Containment in QMA follows from
\cref{lem:unpinned-containment}.
\end{proof}

\section{Hardness of \texorpdfstring{$n$}{n}-Dependent Single-Interaction Hamiltonians}
\label{sec:n-dependent-hardness}

This section proves QMA-completeness results for $n$-dependent
single-interaction Hamiltonians, where the common interaction may depend
uniformly on the final physical system size $n$.  We begin with the
QMA-complete XY Hamiltonian problem at fixed Hamming weight and replace
its global sector restriction by a $2$-local energetic penalty.  We
then encode the resulting Hamiltonian first as a $2$-Pinned
$\mathrm{DSIH}_3$ instance and subsequently, using the exact unpinned
compiler, as an unpinned $\mathrm{DSIH}_5$ instance.

Throughout this section, $N$ denotes the number of graph qubits in the
source XY instance, whereas $n$ denotes the total number of physical
qubits in the final DSIH instance, including auxiliary, pinned, and
dummy qubits.

\subsection{Localizing the XY Hamming-Weight Constraint}
\label{subsec:xy-local-penalty}

We begin with the QMA-complete XY Hamiltonian problem at fixed Hamming
weight from \cref{def: XY Hamiltonian problem,thm: XY hamiltonian QMA complete}. Let
\[
G=(V,E),
\qquad
N=|V|,
\]
and let $k$ be the specified Hamming weight. Recall that
\[
H_{\mathrm{XY}}
=
\sum_{(u,v)\in E}
\left(
\Ket{01}\bra{10}_{u,v}
+
\Ket{10}\bra{01}_{u,v}
\right),
\]
and that
\[
[H_{\mathrm{XY}},\widehat N]=0,
\qquad
\widehat N
=
\sum_{u\in V}\Pi^{\Ket{1}}_u.
\]
Hence the Hilbert space decomposes into invariant Hamming-weight sectors
$\mathcal{S}_w$.

We replace the global restriction to $\mathcal{S}_k$ by an energetic
penalty that can be written using only $2$-local terms. Define
\begin{equation}
\label{eq:Hint}
H_{\mathrm{int}}
=
\frac{H_{\mathrm{XY}}+|E|I}{2}
+
\Lambda
\left(
\widehat N-kI
\right)^2,
\end{equation}
where
\begin{equation}
\label{eq:xy-penalty-strength}
\Lambda=2|E|.
\end{equation}

Each local XY interaction has operator norm $1$, and therefore
\[
\|H_{\mathrm{XY}}\|
\leq |E|.
\]
Consequently,
\[
\frac{H_{\mathrm{XY}}+|E|I}{2}\succeq0.
\]

To express the penalty using at most $2$-local terms, observe that
\[
\widehat N^2
=
\sum_u \Pi^{\Ket{1}}_u
+
\sum_{u\neq v}
\Pi^{\Ket{1}}_u\otimes\Pi^{\Ket{1}}_v,
\]
where throughout this section $ \sum_{u\neq v} $ denotes a sum over \emph{ordered} pairs of distinct vertices. Hence
\begin{equation}
\label{eq:xy-local-penalty}
\begin{aligned}
H_{\mathrm{penalty}}
&=
\Lambda(\widehat N-kI)^2\\
&=
\Lambda\left(
\sum_{u\neq v}
\Pi^{\Ket{1}}_u\otimes\Pi^{\Ket{1}}_v
+
(1-2k)
\sum_u\Pi^{\Ket{1}}_u
+
k^2I
\right).
\end{aligned}
\end{equation}
Thus $H_{\mathrm{int}}$ is $2$-local.

The shift and rescaling of the XY term induce the promise thresholds
\begin{equation}
\label{eq:xy-shifted-thresholds}
a'
=
\frac{a+|E|}{2},
\qquad
b'
=
\frac{b+|E|}{2}.
\end{equation}
In particular,
\[
b'-a'
=
\frac{b-a}{2},
\]
so the inverse-polynomial promise gap is preserved up to a constant factor.

\begin{lemma}[Local penalty for the Hamming-weight constraint]
\label{lem:xy-local-penalty}
The Local Hamiltonian problem defined by
$H_{\mathrm{int}}$ in \cref{eq:Hint}, with thresholds
$(a',b')$ from \cref{eq:xy-shifted-thresholds}, is QMA-complete.
\end{lemma}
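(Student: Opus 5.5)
Containment in QMA is immediate: $H_{\mathrm{int}}$ is an explicit $2$-local Hamiltonian with polynomially many terms, polynomially bounded norms (since $\Lambda=2|E|$ and $k\le N$), and polynomial-bit entries. The thresholds $(a',b')$ have gap $(b-a)/2\ge 1/\operatorname{poly}(N)$. So the instance is a standard $2$-Local Hamiltonian instance. The substance is hardness, which I would get by showing that the map $(G,k,a,b)\mapsto(H_{\mathrm{int}},a',b')$ is a promise-preserving reduction from the problem of \cref{def: XY Hamiltonian problem}. QMA-hardness then follows from \cref{thm: XY hamiltonian QMA complete}.

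The key structural fact is that $H_{\mathrm{XY}}$, $\widehat N$ and hence $H_{\mathrm{int}}$ all commute with $\widehat N$. Therefore $H_{\mathrm{int}}=\bigoplus_w H_{\mathrm{int}}|_{\mathcal S_w}$ and
\[
\lambda_{\min}(H_{\mathrm{int}})=\min_w \lambda_{\min}\bigl(H_{\mathrm{int}}|_{\mathcal S_w}\bigr).
\]
I would analyze the sectors separately.

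\emph{Sector $w=k$.} The penalty vanishes, so $H_{\mathrm{int}}|_{\mathcal S_k}=\tfrac12(H_{\mathrm{XY}}|_{\mathcal S_k}+|E|I)$. Its ground energy is exactly $(\lambda_k+|E|)/2$, where $\lambda_k=\lambda_{\min}(H_{\mathrm{XY}}|_{\mathcal S_k})$.

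\emph{YES case.} There is $\Ket\psi\in\mathcal S_k$ with energy at most $a$. Then $\Bra\psi H_{\mathrm{int}}\Ket\psi\le a'$, since the penalty term contributes zero on $\Ket\psi$.

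\emph{NO case.} The sector $w=k$ gives energy at least $b'$. For $w\neq k$, the positive semidefiniteness of $\tfrac12(H_{\mathrm{XY}}+|E|I)$ together with $(w-k)^2\ge1$ gives
\[
H_{\mathrm{int}}|_{\mathcal S_w}\succeq \Lambda I=2|E|\,I .
\]
It remains to check that $2|E|\ge b'$, i.e.\ $b\le 3|E|$. Any meaningful threshold satisfies $b\le\|H_{\mathrm{XY}}\|\le|E|$: if $b>|E|$, the NO case is impossible, the instance is trivially YES-or-vacuous, and one can clamp $b$ to $\min\{b,|E|\}$ without changing which case holds, keeping the gap inverse-polynomial (or map trivially to a fixed YES instance). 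Hence every $w\neq k$ sector has energy at least $2|E|\ge|E|\ge b'$, so $\lambda_{\min}(H_{\mathrm{int}})\ge b'$.

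The main obstacle is this threshold bookkeeping in the NO case. One must make sure the penalty $\Lambda=2|E|$ genuinely dominates $b'$ for all admissible thresholds. I would handle it by the normalization $a<b\le|E|$ (also $a\ge-|E|$ without loss of generality). Beyond that, I would check that the expansion in \cref{eq:xy-local-penalty} correctly produces the $2$-local form; the scalar $k^2I$ is a $0$-local term that can be absorbed into any single-qubit term. Finally, the reduction is polynomial-time, since it writes $O(N^2)$ terms.
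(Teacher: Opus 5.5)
Your proof is correct and follows the paper's argument essentially step for step: the sector decomposition via $[H_{\mathrm{int}},\widehat N]=0$, the vanishing penalty on $\mathcal S_k$, the bound $H_{\mathrm{int}}|_{\mathcal S_w}\succeq 2|E|\,I$ for $w\neq k$, and the observation that $b\le|E|$ in any NO instance, so $b'\le\Lambda$. The clamping of $b$ you propose is unnecessary, because a NO instance automatically has $b\le|E|$; you note this yourself, and it is exactly how the paper closes the argument.
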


\begin{proof}
Containment in QMA follows because $H_{\mathrm{int}}$ is a $2$-local
Hamiltonian with polynomially bounded coefficients and an
inverse-polynomial promise gap.

For QMA-hardness, we reduce from the XY Hamiltonian problem of
\cref{thm: XY hamiltonian QMA complete}. Since
\[
[H_{\mathrm{XY}},\widehat N]=0,
\]
the Hamiltonian $H_{\mathrm{int}}$ is block diagonal with respect to the
Hamming-weight sectors $\mathcal{S}_w$.

Suppose first that the XY instance is a YES instance. Then there exists
$\Ket{\psi}\in\mathcal{S}_k$ such that
\[
\bra{\psi}H_{\mathrm{XY}}\Ket{\psi}\leq a.
\]
On $\mathcal{S}_k$ the penalty vanishes, and therefore
\[
\bra{\psi}H_{\mathrm{int}}\Ket{\psi}
=
\frac{
\bra{\psi}H_{\mathrm{XY}}\Ket{\psi}+|E|
}{2}
\leq
\frac{a+|E|}{2}
=
a'.
\]
Hence the transformed instance is a YES instance.

Now suppose that the XY instance is a NO instance. Consider first a state
$\Ket{\psi}\in\mathcal{S}_k$. The penalty again vanishes, while the
NO-instance promise gives
\[
\bra{\psi}H_{\mathrm{XY}}\Ket{\psi}\geq b.
\]
Thus
\[
\bra{\psi}H_{\mathrm{int}}\Ket{\psi}
\geq
\frac{b+|E|}{2}
=
b'.
\]

Next consider a Hamming-weight sector $\mathcal{S}_w$ with $w\neq k$.
Since $w$ and $k$ are integers,
\[
(w-k)^2\geq1.
\]
Moreover,
\[
\frac{H_{\mathrm{XY}}+|E|I}{2}\succeq0.
\]
Therefore every normalized
$\Ket{\psi}\in\mathcal{S}_w$ satisfies
\[
\bra{\psi}H_{\mathrm{int}}\Ket{\psi}
\geq
\Lambda
=
2|E|.
\]

For a NO instance we necessarily have
\[
b\leq |E|,
\]
because $\|H_{\mathrm{XY}}\|\leq|E|$ and the spectrum of the restriction
of $H_{\mathrm{XY}}$ to $\mathcal{S}_k$ is contained in
$[-|E|,|E|]$. Hence
\[
b'
=
\frac{b+|E|}{2}
\leq
|E|
\leq
2|E|
=
\Lambda.
\]
Thus every state outside the valid Hamming-weight sector also has energy
at least $b'$.

Since $H_{\mathrm{int}}$ is block diagonal in the Hamming-weight
decomposition, the same lower bound holds for arbitrary superpositions
of different sectors. Therefore every NO instance has ground-state energy
at least $b'$, completing the reduction.
\end{proof}

\subsection{The \texorpdfstring{$2$}{2}-Pinned \texorpdfstring{$\mathrm{DSIH}_3$}{DSIH3} Construction}
\label{subsec: pinned n-dependent fixed}

The pinned construction is the $m=2$, $k=2$ binary specialization of
the compiler in \cref{thm:pinned-resource-bounds}, applied pointwise to
the $n$-dependent two-element interaction alphabet underlying
$H_{\mathrm{int}}$.  It therefore requires two pinned selector qubits
and has locality
\[
2+\left\lceil\log_2 2\right\rceil=3.
\]
We give the resulting interaction explicitly because the additional
issue in the DSIH setting is uniformity: every coefficient appearing
in the interaction must be recoverable from the \emph{final} physical
system size $n$.

\begin{thm}
\label{thm:pinned-n-dependent}
There exists a uniform family of $3$-local Hermitian matrices
$\{h^{(n)}\}_{n\geq1}$ for which the
$2$-Pinned $\mathrm{DSIH}_3$ problem is QMA-complete.
\end{thm}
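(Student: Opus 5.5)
Containment is immediate from \cref{lem:pinned-containment}, so the work is QMA-hardness. I would reduce from the $2$-local Hamiltonian $H_{\mathrm{int}}$ of \cref{lem:xy-local-penalty}, which is QMA-complete. First I rewrite it over a two-element alphabet of $2$-qubit interactions. Dropping the identity shift $\tfrac{|E|}{2}I+\Lambda k^2 I$ (absorbed into the thresholds), I would set
\[
h^{(1)}=\tfrac12\left(\Ket{01}\Bra{10}+\Ket{10}\Bra{01}\right)+\Lambda\,\Pi^{\Ket{1}}\otimes\Pi^{\Ket{1}}\ \text{on }(u,v)\in E,
\]
together with $h^{(2)}=\Lambda\,\Pi^{\Ket{1}}\otimes\Pi^{\Ket{1}}+c\,(\Pi^{\Ket{1}}\otimes I+I\otimes\Pi^{\Ket{1}})$ on non-edges. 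The single-site term $\Lambda(1-2k)\sum_u\Pi^{\Ket1}_u$ is absorbed by distributing it over the ordered pairs: each vertex lies in exactly $2(N-1)$ ordered pairs, so I take $c=\Lambda(1-2k)/(2(N-1))$ and add the same single-site part to $h^{(1)}$ as well. Edges may appear as both ordered pairs, splitting the XY term symmetrically. Then every ordered pair $(u,v)$ with $u\neq v$ carries exactly one alphabet element, $h^{(1)}$ or $h^{(2)}$, so the pairs $(V_i,j_i)$ are distinct as \cref{thm:pinned-resource-bounds} requires.

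Next I apply the binary pinned compiler of \cref{thm:pinned-resource-bounds} with $m=2$, $\ell=1$. This gives
\[
h^{(n)}=h^{(1)}\otimes\Pi^{\Ket0}+h^{(2)}\otimes\Pi^{\Ket1},
\]
with two pinned qubits in state $\Ket{01}$, which is preparable by a trivial circuit. The compressed Hamiltonian is then exactly $H_{\mathrm{int}}$ minus a known scalar, and the thresholds shift accordingly. Distinctness of the tuples follows from the theorem, and $n=N+2$.

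The main obstacle is uniformity. The matrix $h^{(n)}$ depends on $\Lambda=2|E|$, $k$ and $N$, but by the final-system-size convention it may depend only on $n$. I would remove the dependence on $|E|$ by choosing $\Lambda$ as a function of $N$ alone, say $\Lambda=N^2\ge 2|E|$; the proof of \cref{lem:xy-local-penalty} only needs $\Lambda\ge b'$. The harder part is encoding $k$. For this I would pad the instance with dummy isolated qubits so that $n$ determines both $N$ and $k$, for example $n=2+N+(\text{pairing index of }(N,k))$, via an efficiently invertible encoding such as $n=2+2^{\,\cdot}$ or a polynomial pairing $n=(N+1)^2+k+2$. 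The dummy qubits receive no terms except those needed to keep them inert. Since the penalty $(\widehat N-kI)^2$ must count only graph qubits, the dummies must not enter the $h^{(2)}$ pairs. A cleaner alternative is to force each dummy qubit into $\Ket0$ with a positive penalty built from $h^{(2)}$ occurrences; I would check that this only shifts energies by a known amount, and defer the decoding bookkeeping to the appendix.

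Finally I would verify that the entries of $h^{(n)}$ are computable in $\mathrm{poly}(n)$ time and that $\|h^{(n)}\|=\mathrm{poly}(n)$. I would also check that the inverse-polynomial gap $(b-a)/2$ remains inverse-polynomial in $n$, which holds because $n$ is polynomial in $N$.
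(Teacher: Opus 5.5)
Your proposal is correct and follows essentially the paper's route: reduce from $H_{\mathrm{int}}$, compile the two-element alphabet with the binary pinned compiler ($\ell=1$, pinned register $\Ket{01}$), and recover the needed parameters from the final system size by padding with isolated dummy qubits. Your choice $\Lambda=N^2$ is a mild simplification that keeps $|E|$ out of the encoding, so $n=(N+1)^2+k+2$ suffices where the paper uses $n=N^4+\Lambda(N+1)+k+2$. Three small corrections: drop the exponential encoding, since the instance must have polynomial size; drop the ``force dummies into $\Ket{0}$ with $h^{(2)}$'' alternative, because $c<0$ for $k\geq1$ means it does not penalize $\Ket{1}$, and isolated dummies already suffice; and note that $\tfrac12$ XY weight on both orientations totals $H_{\mathrm{XY}}$ rather than $\tfrac12 H_{\mathrm{XY}}$, which is harmless since $N^2-|E|\geq|E|\geq b$ still holds outside the weight-$k$ sector.
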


\subsubsection{Interaction Family and System-Size Encoding}
\label{subsec: pinned n-dep-family}

The intermediate Hamiltonian $H_{\mathrm{int}}$ of \cref{lem:xy-local-penalty} is built from two $2$-local interaction types. For a system size $n$ encoding parameters $(N,\Lambda,k)$, define
\[
g_0^{(n)}
=
\frac12\bigl(
\ket{01}\!\bra{10}
+
\ket{10}\!\bra{01}
+
I
\bigr),
\]
and
\[
g_1^{(n)}
=
\Lambda\left(
\Pi_{\ket{1}}\otimes\Pi_{\ket{1}}
+
\frac{1-2k}{N-1}\Pi_{\ket{1}}\otimes I
+
\frac{k^2}{N(N-1)}I
\right).
\]
Then
\[
H_{\mathrm{int}}
=
\sum_{(u,v)\in E} g^{(n)}_{0,u,v}
+
\sum_{u\neq v} g^{(n)}_{1,u,v}.
\]

We apply the binary pinned compiler of
\cref{thm:pinned-resource-bounds} pointwise to the two-element alphabet
\[
\mathcal A^{(n)}
=
\{g_0^{(n)},g_1^{(n)}\}.
\]
Since $|\mathcal A^{(n)}|=2$, the selector length is
\[
\ell=\lceil\log_2 2\rceil=1.
\]
Introduce a two-qubit pinned register
\[
A=(A_1,A_2),
\qquad
\ket{\psi_{\mathrm{pin}}}_A=\ket{01}_{A_1A_2}.
\]
Routing $A_1$ to the selector position selects $g_0^{(n)}$, while routing
$A_2$ selects $g_1^{(n)}$. Hence the resulting three-local interaction is

\begin{align}\label{eq:n-fixed-h-def}
    h^{(n)}(v_1,v_2,v_3)
    =
    g^{(n)}_{0,v_1,v_2}\otimes\Pi^{\ket{0}}_{v_3}
    +
    g^{(n)}_{1,v_1,v_2}\otimes\Pi^{\ket{1}}_{v_3}.
\end{align}

Equivalently,
\begin{align*}
    h^{(n)}(v_1,v_2,v_3)
    = &
    \frac12
    \left(
    \ket{01}\!\bra{10}_{v_1,v_2}
    +
    \ket{10}\!\bra{01}_{v_1,v_2}
    +
    I_{v_1,v_2}
    \right)
    \otimes\Pi^{\ket{0}}_{v_3}\\
    +&
    \Lambda\left(
    \Pi_{\ket{1}}{}_{v_1}\otimes\Pi_{\ket{1}}{}_{v_2}
    +
    \frac{1-2k}{N-1}
    \Pi_{\ket{1}}{}_{v_1}\otimes I_{v_2}
    +
    \frac{k^2}{N(N-1)}I_{v_1,v_2}
    \right)
    \otimes\Pi^{\ket{1}}_{v_3}.
\end{align*}

The matrix $h^{(n)}$ depends on the source parameters
$N$, $\Lambda=2|E|$, and $k$.  For this to define a valid DSIH family,
these parameters must be recoverable uniformly from the final physical
system size $n$, rather than supplied independently with the instance. Since the source graph is simple,
\[
0\leq k\leq N,
\qquad
\Lambda=2|E|\leq N(N-1).
\]

We therefore encode $(N,\Lambda,k)$ into the final system size by setting

\[
n=N^4+\Lambda(N+1)+k+2.
\]

\begin{lemma}[Uniform decoding of the system-size encoding]
\label{lem:DSIH-size-decoding}
Let
\[
n=N^4+\Lambda(N+1)+k+2,
\]
where $N\geq2$, $0\leq k\leq N$, and
\[
\Lambda=2|E|\leq N(N-1).
\]
Then the tuple $(N,\Lambda,k)$ is uniquely determined by $n$ and can be
computed in time polynomial in $n$.
\end{lemma}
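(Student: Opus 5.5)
We decode $n$ in three stages: $N$ first, then $\Lambda$, then $k$. Each stage compares $n$ against quantities that depend only on the values already recovered. Write $r=n-N^4-2=\Lambda(N+1)+k$.

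The first step is to bound $r$. Since $0\le k\le N$ and $0\le\Lambda\le N(N-1)$,
\[
0\le r\le N(N-1)(N+1)+N=N^3.
\]
Hence
\[
N^4+2\le n\le N^4+N^3+2<(N+1)^4+2,
\]
where the last inequality uses $(N+1)^4-N^4=4N^3+6N^2+4N+1>N^3$. So $N$ is the unique integer with $N^4\le n-2<(N+1)^4$, that is,
\[
N=\left\lfloor (n-2)^{1/4}\right\rfloor.
\]
This floor can be computed exactly in time polynomial in $\log n$ by integer binary search over candidates $N'$, testing $N'^4\le n-2$. In particular this is polynomial in $n$. The hypothesis $N\ge 2$ also guarantees $n-2\ge 16$, so the search range is well defined.

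The second step uses the recovered $N$ to form $r=n-N^4-2$. Since $0\le k\le N<N+1$, the identity $r=\Lambda(N+1)+k$ is exactly Euclidean division of $r$ by $N+1$. Therefore
\[
\Lambda=\left\lfloor \frac{r}{N+1}\right\rfloor,\qquad k=r \bmod (N+1).
\]
The quotient and remainder are unique by the division algorithm, so the tuple $(N,\Lambda,k)$ is uniquely determined by $n$. Every operation is integer arithmetic on numbers of magnitude at most $n$, which gives the polynomial-time claim.

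The only step requiring real care is the first one: showing that the perturbation $\Lambda(N+1)+k$ never pushes $n-2$ past $(N+1)^4$. This is exactly where the bound $\Lambda=2|E|\le N(N-1)$ for simple graphs is needed, together with $k\le N$. The remaining steps are routine.
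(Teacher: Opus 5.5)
Your proposal is correct and follows the paper's proof essentially line by line: the bound $\Lambda(N+1)+k\le N^3<(N+1)^4-N^4$ forces $N=\lfloor (n-2)^{1/4}\rfloor$, and Euclidean division of $n-2-N^4$ by $N+1$ then recovers $\Lambda$ and $k$ uniquely. Your remark on computing the integer fourth root by binary search is a harmless elaboration of the paper's polynomial-time claim.
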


The proof is given in \cref{app:dsih-size-decoding}.

For an arbitrary system size $n$, apply the decoding procedure of
\cref{lem:DSIH-size-decoding}. If $n$ does not encode a valid tuple
$(N,\Lambda,k)$ satisfying the conditions of that lemma, define
\[
h^{(n)}=0.
\]
Otherwise define $h^{(n)}$ by \cref{eq:n-fixed-h-def} using the decoded
parameters.

By \cref{lem:DSIH-size-decoding}, the parameters are recoverable from
$n$ in polynomial time. Since $h^{(n)}$ acts on only three qubits, its
entries can then be computed to additive error at most $2^{-q}$ in time
polynomial in $n$ and $q$, and all coefficients in
\cref{eq:n-fixed-h-def} are polynomially bounded in $n$. Hence
$\{h^{(n)}\}_{n\geq1}$ is a uniform family in the sense of
\cref{def: DSIH}.

The active construction uses only the $N$ graph qubits together with the
two pinned qubits. We therefore add
\[
n_{\mathrm{dummy}}=n-(N+2)
\]
isolated qubits so that the total physical system size is exactly the
encoded value $n$. No interaction tuple contains a dummy qubit.  The dummy register
therefore changes only spectral multiplicities: it leaves the set of
eigenvalues, and in particular the ground-state energy and promise
gap, unchanged.

\subsubsection{Hamiltonian Construction}

Fix an arbitrary deterministic orientation of every edge
$\{u,v\}\in E$ and denote the resulting ordered edge by $(u,v)$.
Because the $XY$ interaction is symmetric under exchanging its two
qubits, this choice does not change $H_{\mathrm{XY}}$.

The compiled Hamiltonian is
\[
H^{(n)}
=
H^{(n)}_{\mathrm{XY}}
+
H^{(n)}_{\mathrm{penalty}}.
\]

\paragraph{XY terms.}
For every oriented graph edge $(u,v)\in E$, route the pinned qubit
$A_1$ into the selector slot:
\[
H^{(n)}_{\mathrm{XY}}
=
\sum_{(u,v)\in E}
h^{(n)}(u,v,A_1).
\]

\paragraph{Penalty terms.}
For every ordered pair of distinct graph qubits, route $A_2$ into the
selector slot:
\[
H^{(n)}_{\mathrm{penalty}}
=
\sum_{u\neq v}
h^{(n)}(u,v,A_2).
\]

Every interaction tuple used in the construction contains three pairwise
distinct physical qubits. The $XY$ terms use tuples $(u,v,A_1)$ with
$u\neq v$, while the penalty terms use tuples $(u,v,A_2)$ with $u\neq v$.
Within each family the tuples are pairwise distinct because their ordered
graph-qubit pairs are distinct, and an $XY$ tuple cannot coincide with a
penalty tuple because their third entries are respectively $A_1$ and
$A_2$. Hence the construction satisfies the ordered-tuple requirements
of \cref{def: DSIH}.

\subsubsection{Correctness and QMA-Completeness}
\label{subsubsec: pinned n dependent fixed comp sound}

\begin{proof}[Proof of \cref{thm:pinned-n-dependent}]
Fix a valid encoded system size $n$ and let $(N,\Lambda,k)$ be the
parameters decoded according to \cref{lem:DSIH-size-decoding}.
The two interactions $g_0^{(n)}$ and $g_1^{(n)}$ form the alphabet
$\mathcal A^{(n)}$ used in the binary pinned compiler above.

The target interaction occurrences satisfy the distinctness hypothesis
of \cref{thm:pinned-resource-bounds}, as verified in the preceding
subsection. Applying that theorem for this fixed $n$, and using the fact
that the added dummy qubits are isolated, gives
\[
\lambda_{\min}\!\left(
V_{\mathrm{pin}}^\dagger H^{(n)}V_{\mathrm{pin}}
\right)
=
\lambda_{\min}(H_{\mathrm{int}}).
\]
Thus the thresholds $(a',b')$ of \cref{lem:xy-local-penalty}
are preserved exactly.
QMA-hardness therefore follows from that lemma.

The uniformity analysis above established that the interaction family
$\{h^{(n)}\}_{n\geq1}$ is uniformly computable from the final system
size and that the construction uses exactly $n$ physical qubits.
Hence the reduction produces valid $2$-Pinned DSIH$_3$ instances.
Containment in QMA follows from \cref{lem:pinned-containment}.
Therefore $2$-Pinned DSIH$_3$ is QMA-complete.
\end{proof}

\subsection{The Unpinned \texorpdfstring{$\mathrm{DSIH}_5$}{DSIH5} Construction}
\label{subsec:unpinned-dsih5}

We now remove the pinning resource by applying the exact unpinned finite-alphabet compiler of \cref{thm:unpinned-spectral-preservation} pointwise to the two interaction types appearing in $H_{\mathrm{int}}$.

\begin{thm}
\label{thm:unpinned-n-dependent}
There exists a uniform family of $5$-local Hermitian matrices $\{\widehat h^{(n)}\}_{n\geq1}$ such that $\mathrm{DSIH}_5$ is QMA-complete.
\end{thm}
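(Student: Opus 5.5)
We apply the exact unpinned compiler of \cref{thm:unpinned-spectral-preservation} pointwise in $n$ to the two-element alphabet $\mathcal A^{(n)}=\{g_0^{(n)},g_1^{(n)}\}$ underlying $H_{\mathrm{int}}$ of \cref{lem:xy-local-penalty}. With $m=2$ the selector length is $\ell=1$, so the compiled interaction acts on $2+1+2=5$ qubits. Explicitly, following \cref{eq:unpinned-single-interaction},
\[
\widehat h^{(n)}(v_1,\ldots,v_5)=\tfrac12\bigl(g^{(n)}_{0,v_1v_2}\otimes\Pi^{\Ket0}_{v_3}+g^{(n)}_{1,v_1v_2}\otimes\Pi^{\Ket1}_{v_3}\bigr)\otimes I_{v_4v_5}+\Delta^{(n)}\,I_{v_1v_2v_3}\otimes h_{\mathrm{check}}(v_4,v_5),
\]
with $\Delta^{(n)}=2J^{(n)}+\gamma$. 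Here $J^{(n)}=\max\{\|g_0^{(n)}\|,\|g_1^{(n)}\|\}$ and we take $\gamma=1$. Since $\Lambda\le N(N-1)$ and $|1-2k|/(N-1)$ is bounded, $J^{(n)}=O(\Lambda)=\operatorname{poly}(n)$. An explicit upper bound such as $J^{(n)}\le 1+3\Lambda$ can be used in place of the exact norm to keep $\Delta^{(n)}$ rational and trivially computable. Using a bound $\ge J$ only strengthens the separation in \cref{lem:unpinned-invalid-sector}.

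The Hamiltonian is built as follows. Each occurrence in the list $\{(u,v,0):(u,v)\in E\}\cup\{(u,v,1):u\ne v\}$ receives a dedicated block of $\ell+1=2$ auxiliary pairs, and $\widehat h^{(n)}$ is applied twice per occurrence, cycling the pairs through the check positions. This gives $M=|E|+N(N-1)$ occurrences and $4M$ auxiliary qubits. \Cref{thm:unpinned-spectral-preservation} then gives $V^\dagger\widetilde K V=H_{\mathrm{int}}$ and $\lambda_{\min}(\widetilde K)=\lambda_{\min}(H_{\mathrm{int}})$. The same theorem certifies that all ordered tuples are distinct with distinct entries. Therefore the thresholds $(a',b')$ carry over unchanged, and QMA-hardness follows from \cref{lem:xy-local-penalty}. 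Containment in QMA is \cref{lem:unpinned-containment}.

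The main obstacle is uniformity under the final-system-size convention. The physical size is now $N+4M$ plus dummies, and $M$ itself depends on $|E|=\Lambda/2$. I would reuse the encoding of \cref{lem:DSIH-size-decoding}, or a variant with a larger leading power such as $n=N^6+\Lambda(N+1)+k+2$, to guarantee that
\[
n\ge N+4\bigl(\Lambda/2+N(N-1)\bigr).
\]
Since $\Lambda\le N(N-1)$, the active count is at most $N+6N(N-1)\le N^4$ for $N\ge2$, so the original encoding already suffices. Isolated dummy qubits then pad the system to exactly $n$. They change only multiplicities, not eigenvalues.

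On invalid encodings we set $\widehat h^{(n)}=0$. Otherwise we decode $(N,\Lambda,k)$ in polynomial time and output the explicit $32\times32$ matrix above. Its entries are rational and polynomially bounded, so the family is uniform in the sense of \cref{def: DSIH}, with $\|\widehat h^{(n)}\|\le\operatorname{poly}(n)$. Finally, the promise gap $(b-a)/2$ is inverse polynomial in $N$, hence in $n=\operatorname{poly}(N)$.
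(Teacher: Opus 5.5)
Your proposal is correct and follows essentially the same route as the paper. Both instantiate the unpinned compiler pointwise on $\{g_0^{(n)},g_1^{(n)}\}$ with $\ell=1$, and both reuse the encoding $n=N^4+\Lambda(N+1)+k+2$ with dummy padding; your bound $N+4M\le N+6N(N-1)\le N^4$ replaces the explicit computation of $n-(N+4M)$ in \cref{lem:unpinned-dsih-system-size}. Hardness and containment then come from \cref{thm:unpinned-spectral-preservation}, \cref{lem:xy-local-penalty} and \cref{lem:unpinned-containment}, as in the paper.

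Using a rational upper bound on $J^{(n)}$ instead of the exact norm is harmless, because $\Delta^{(n)}=2J^{(n)}+\gamma'$ with $\gamma'\ge1$ is again an instance of the compiler theorem. The specific constant in $1+3\Lambda$ does need the diagonal computation $\|g_1^{(n)}\|\le 2\Lambda$, since the naive triangle-inequality bound only gives $6\Lambda$.
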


\subsubsection{Compiler Instantiation and Uniformity}

We reduce from the QMA-complete family of intermediate Hamiltonians
$H_{\mathrm{int}}$ established in \cref{lem:xy-local-penalty}.
For a valid system-size encoding $n$, let $(N,\Lambda,k)$ be the parameters
decoded according to \cref{lem:DSIH-size-decoding}.

The Hamiltonian $H_{\mathrm{int}}$ is built from two $2$-local interaction
types. Define
\[
g^{(n)}_0
=
\frac{1}{2}
\left(
\Ket{01}\bra{10}
+
\Ket{10}\bra{01}
+
I
\right),
\]
and
\[
g^{(n)}_1
=
\Lambda
\left(
\Pi^{\Ket{1}}\otimes\Pi^{\Ket{1}}
+
\frac{1-2k}{N-1}
\Pi^{\Ket{1}}\otimes I
+
\frac{k^2}{N(N-1)}I
\right).
\]
Then
\[
H_{\mathrm{int}}
=
\sum_{(u,v)\in E}
g^{(n)}_{0,u,v}
+
\sum_{u\neq v}
g^{(n)}_{1,u,v}.
\]

We now instantiate the unpinned single-interaction compiler on the
two-element alphabet

\[
\mathcal A^{(n)}
=
\left\{
g_0^{(n)},g_1^{(n)}
\right\}.
\]

Since the alphabet has two elements, the binary compiler has selector
length
\[
\ell=\left\lceil\log_2 2\right\rceil=1.
\]
With separation parameter $\gamma=1$, set
\[
J_n
=
\max\left\{
\|g_0^{(n)}\|,
\|g_1^{(n)}\|
\right\},
\qquad
\Delta_n=2J_n+1.
\]
The resulting $5$-local interaction is

\begin{align}
\widehat h^{(n)}
(v_1,v_2,u,c_1,c_2)
&=
\frac{1}{2}
\left(
g^{(n)}_{0,v_1,v_2}\otimes\Pi^{\Ket{0}}_u
+
g^{(n)}_{1,v_1,v_2}\otimes\Pi^{\Ket{1}}_u
\right)
\otimes I_{c_1,c_2}
\nonumber\\
&\quad
+
\Delta_n
I_{v_1,v_2,u}
\otimes
\left(
I_{c_1,c_2}
-
\Pi^{\Ket{0}}_{c_1}
\otimes
\Pi^{\Ket{1}}_{c_2}
\right).
\label{eq:n-fixed-h5-def}
\end{align}

\begin{lemma}[Compatibility with the final system size]
\label{lem:unpinned-dsih-system-size}
For every valid encoded instance, the unpinned compiler construction can be
realized on exactly the same total system size
\[
n=N^4+\Lambda(N+1)+k+2
\]
used in \cref{lem:DSIH-size-decoding}.
\end{lemma}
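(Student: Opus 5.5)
The plan is to count the physical qubits the unpinned compiler actually needs and show that this count never exceeds the encoded value $n=N^4+\Lambda(N+1)+k+2$. Any remaining slack is then filled with isolated dummy qubits, exactly as in the pinned construction of \cref{subsec: pinned n-dep-family}.

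\textbf{Counting target occurrences.} The Hamiltonian $H_{\mathrm{int}}$ is written with
\[
M=|E|+N(N-1)
\]
occurrences from $\mathcal A^{(n)}$: one $g_0^{(n)}$ term per oriented edge, and one $g_1^{(n)}$ term per ordered pair of distinct vertices. With $\ell=1$, \cref{thm:unpinned-spectral-preservation} assigns each occurrence a dedicated block of $\ell+1=2$ pairs, that is, $4$ auxiliary qubits. Using $\Lambda=2|E|$, the active register therefore has size
\[
n_{\mathrm{act}}=N+4M=N+2\Lambda+4N(N-1).
\]

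\textbf{Bounding against the encoding.} Compare term by term with $n$:
\begin{itemize}
\item $2\Lambda\le\Lambda(N+1)$ holds because $N\ge 1$.
\item $N+4N(N-1)=4N^2-3N\le N^4$ holds for all $N\ge2$. It can be checked at $N=2$ ($10\le16$), and the right-hand side grows faster thereafter.
\end{itemize}
Hence $n_{\mathrm{act}}\le N^4+\Lambda(N+1)\le n$, and we add
\[
n_{\mathrm{dummy}}=n-n_{\mathrm{act}}\ge k+2\ge 0
\]
isolated qubits. The constant $+2$, which in the pinned case accounted for $A_1,A_2$, is now simply part of the slack. All of these quantities are computable in polynomial time from $(N,\Lambda,k)$, so the padding is uniform.

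\textbf{Why padding is harmless.} Dummy qubits appear in no interaction tuple, so they do not affect the ordered-tuple distinctness and distinct-entry guarantees of \cref{thm:unpinned-spectral-preservation}. The compiled Hamiltonian on $n$ qubits is $\widetilde K\otimes I_{\mathrm{dummy}}$. Consequently $\lambda_{\min}$, the promise thresholds $(a',b')$ of \cref{lem:xy-local-penalty}, and the low-energy spectrum are all unchanged, with every multiplicity multiplied by the same factor $2^{n_{\mathrm{dummy}}}$.

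\textbf{Expected obstacle.} The only delicate point is confirming that the inequality $n_{\mathrm{act}}\le n$ holds uniformly over every tuple admitted by \cref{lem:DSIH-size-decoding}, including the boundary cases $N=2$ and $\Lambda=0$. In particular, the quartic term must absorb the quadratic number of penalty occurrences. The elementary bounds above handle this, so I expect no real difficulty beyond the bookkeeping.
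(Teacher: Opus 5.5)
Your proposal is correct and matches the paper's proof: the paper also counts $N+4M=N+2\Lambda+4N(N-1)$ active qubits, shows $n-(N+4M)=N^4-4N^2+3N+k+2+\Lambda(N-1)\geq 0$ for $N\geq 2$ (equivalent to your term-by-term comparison), and pads the remainder with isolated dummy qubits. Your ``grows faster thereafter'' step can be made fully rigorous as $4N^2-3N\le 4N^2\le N^4$ for $N\ge 2$; this is only cosmetic.
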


The proof is given in \cref{app:unpinned-dsih-system-size}.

The intermediate Hamiltonian contains
\[
M
=
|E|+N(N-1)
=
O(N^2)
\]
interaction occurrences.  Since the binary compiler uses
$\ell+1=2$ applications of $\widehat h^{(n)}$ per occurrence, the
compiled Hamiltonian contains
\[
2M
=
2|E|+2N(N-1)
=
O(N^2)
\]
applications.
By \cref{lem:unpinned-dsih-system-size}, these interactions and their
dedicated ancillas fit within a system of exactly $n$ physical qubits.

For arbitrary $n$, define the family
$\{\widehat h^{(n)}\}_{n\geq1}$ uniformly using the decoding procedure
of \cref{lem:DSIH-size-decoding}.  If $n$ does not encode valid
parameters, set
\[
\widehat h^{(n)}=0.
\]
Otherwise compute $g_0^{(n)}$, $g_1^{(n)}$, $J_n$, and $\Delta_n$ from
the decoded tuple $(N,\Lambda,k)$ and define $\widehat h^{(n)}$ by
\cref{eq:n-fixed-h5-def}.

Both $g^{(n)}_0$ and $g^{(n)}_1$ act on only two qubits, and
$g^{(n)}_1$ is diagonal.  Their operator norms, and hence $\Delta_n,$
can therefore be computed to the requested precision in time
polynomial in $n$ and the precision parameter $q$.

All coefficients are polynomially bounded in $n$, so
$\{\widehat h^{(n)}\}_{n\geq1}$ is a uniform family in the sense of
\cref{def: DSIH}.

\subsubsection{Correctness and QMA-Completeness}

\begin{proof}[Proof of \cref{thm:unpinned-n-dependent}]
For each valid encoded system size $n$, the matrices
$g_0^{(n)}$ and $g_1^{(n)}$ form a fixed two-element interaction alphabet.
We may therefore apply \cref{thm:unpinned-spectral-preservation} separately
for each fixed $n$. This gives the exact spectral-preservation statement
for the corresponding compiled Hamiltonian. The preceding subsection
shows moreover that the compiled interaction $\widehat h^{(n)}$ can be
computed uniformly from $n$ in polynomial time.

Since
\[
\Delta_n=2J_n+1,
\]
the compiler theorem applies with
\[
\gamma=1.
\]
Therefore
\[
\lambda_{\min}\!\left(\widehat H^{(n)}\right)
=
\lambda_{\min}(H_{\mathrm{int}}),
\]
and the ground space is preserved exactly in the valid auxiliary sector.
In fact, the complete spectrum below
\[
\lambda_{\min}(H_{\mathrm{int}})+1
\]
is preserved with multiplicity.

By \cref{lem:xy-local-penalty}, the Local Hamiltonian problem defined by
$H_{\mathrm{int}}$ with thresholds $(a',b')$ is QMA-hard. Since the
compiler preserves the ground-state energy exactly, the same thresholds
$(a',b')$ define the YES and NO cases of the resulting
$\mathrm{DSIH}_5$ instance.

The reduction has polynomial overhead. By
\cref{lem:unpinned-dsih-system-size}, $n=O(N^4),$ and the number of applications of the $5$-local interaction is $2M=O(N^2).$
The family $\{\widehat h^{(n)}\}$ is uniformly computable from the final
system size $n$, as shown above.

Because every target term receives its own four-qubit auxiliary block, applications associated with distinct target terms differ in their auxiliary entries, even when the underlying work-qubit support is the same. For a fixed target term, the two compiler applications use different auxiliary pairs in the check positions and are therefore distinct. Each individual \(5\)-tuple consists of two distinct graph qubits and three distinct qubits from the dedicated auxiliary block. Hence all output interaction tuples satisfy the requirements of \cref{def: DSIH}.

Thus the reduction is polynomial time and promise preserving, proving
QMA-hardness of $\mathrm{DSIH}_5$. QMA containment follows from
\cref{lem:unpinned-containment}. Therefore $\mathrm{DSIH}_5$ is
QMA-complete.

\end{proof}

\section{Stoquastic Single-Interaction Hamiltonians}
\label{sec:stoquastic-embedding-results}

This section develops two complementary stoquastic
single-interaction constructions.  First, applying the positive-weight
singleton unweighting directly to a StoqMA-complete stoquastic
two-qubit interaction yields an
$\mathrm{Stoq\text{-}SIH}_3$ problem.  We then formulate
realification and the stoquastic sign embedding within the
single-interaction framework.  The general complex construction
increases locality by two and introduces one pinned qubit, while a
real source interaction requires only the sign embedding and therefore
increases locality by one.  Applying this sharper real-source
construction to the QMA-complete $\mathrm{SIH}_3$ interaction yields
QMA-completeness of $1$-Pinned $\mathrm{Stoq\text{-}SIH}_4$.

\subsection{The \texorpdfstring{$\mathrm{Stoq\text{-}SIH}_3$}{Stoq-SIH3} Construction}
\label{subsec:stoq-sih3-construction}

We first obtain an unpinned three-local stoquastic hardness result from
a positive-weight two-qubit source.  Consider
\[
h_{\mathrm S}
=
-X\otimes X+Y\otimes Y+2Z\otimes Z.
\]
For the parametrization
\[
h_{\mathrm S}
=
\alpha X\otimes X
+
\beta Y\otimes Y
+
\gamma Z\otimes Z,
\]
we have
\[
\alpha=-1,
\qquad
\beta=1,
\qquad
\gamma=2.
\]
Hence
\[
\alpha=-\beta\neq 0,
\qquad
\alpha+\gamma=1>0,
\qquad
\beta+\gamma=3>0.
\]
By \cref{thm:piddock-montanaro-positive},
the positive-weight singleton problem
$\{h_{\mathrm S}\}^{+}$-Hamiltonian is StoqMA-complete.

\begin{thm}
\label{thm:stoq-sih3-stoqma-complete}
There exists a fixed $3$-local stoquastic Hermitian interaction
$g_{\mathrm S}$ such that the corresponding
$\mathrm{Stoq\text{-}SIH}_3$ problem is StoqMA-complete. In particular,
one may take
\[
g_{\mathrm S}
=
h_{\mathrm S}\otimes\Pi^{\Ket{1}}
+
5I\otimes\Pi^{\Ket{0}}.
\]
\end{thm}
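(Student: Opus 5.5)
The plan is to follow the proof of \cref{thm:sih3-qma-complete}, with $h_{\mathrm S}$ in place of $h_{\mathrm H}$ and with stoquasticity tracked throughout. The positive-weight singleton unweighting of \cref{lem:positive-singleton-unweighting} and \cref{cor:positive-weight-singleton-sih} produces exactly the interaction $g_{\mathrm S}$, provided the constant $c=5$ is admissible. So the first step is to check that $c=5>\lambda_{\max}(h_{\mathrm S})$.

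To do this I will diagonalize $h_{\mathrm S}$ in the Bell basis, where $X\otimes X$, $Y\otimes Y$ and $Z\otimes Z$ are simultaneously diagonal. The eigenvalues of $(X\otimes X,\,Y\otimes Y,\,Z\otimes Z)$ on $\Ket{\Phi^\pm},\Ket{\Psi^\pm}$ are $(1,-1,1)$, $(-1,1,1)$, $(1,1,-1)$ and $(-1,-1,-1)$. Hence the spectrum of $h_{\mathrm S}$ is $\{0,4,-2,-2\}$. This gives $\lambda_{\max}(h_{\mathrm S})=4$, so $c=5$ satisfies the hypothesis of \cref{lem:positive-singleton-unweighting} with fixed separation $\delta=1$.

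Next I verify that $h_{\mathrm S}$ is stoquastic in the computational basis. The term $2Z\otimes Z$ is diagonal. The off-diagonal part of $-X\otimes X+Y\otimes Y$ has coefficient $-1-1=-2$ on $\Ket{00}\!\Bra{11}$ and its transpose, and coefficient $-1+1=0$ on $\Ket{01}\!\Bra{10}$ and its transpose. All entries are real, so $h_{\mathrm S}$ is stoquastic. By the last clause of \cref{lem:positive-singleton-unweighting}, $g_{\mathrm S}=h_{\mathrm S}\otimes\Pi^{\Ket{1}}+5I\otimes\Pi^{\Ket{0}}$ is then stoquastic as well. This holds because $5I\otimes\Pi^{\Ket{0}}$ is diagonal, and the only off-diagonal entries of $g_{\mathrm S}$ lie in the auxiliary-$\Ket{1}$ block, where they equal those of $h_{\mathrm S}$.

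For hardness, \cref{thm:piddock-montanaro-positive} (second case, as verified before the statement) shows that $\{h_{\mathrm S}\}^{+}$-Hamiltonian is StoqMA-complete. \cref{cor:positive-weight-singleton-sih} gives a polynomial-time reduction to $\mathrm{SIH}_3$ with the fixed interaction $g_{\mathrm S}$. The only approximation in this reduction is the weight rounding, with $\varepsilon<(b-a)/4$ and output thresholds $\widetilde a=Q(a+\varepsilon)$, $\widetilde b=Q(b-\varepsilon)$. The output gap is therefore at least $(b-a)/2$, which is inverse polynomial in the polynomially larger output size. Because $g_{\mathrm S}$ is stoquastic, the output is a valid $\mathrm{Stoq\text{-}SIH}_3$ instance, and StoqMA-hardness transfers. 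Containment in StoqMA follows from \cref{lem:stoquastic-containment}.

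I expect the main obstacle to be only a matter of care, not substance. One point is confirming that the StoqMA-hardness of the Piddock--Montanaro source is stated with the inverse-polynomial promise gap and polynomially bounded, polynomial-bit weights that \cref{lem:weighted-finite-alphabet-normalization} requires. The other is that no step of the reduction introduces a non-stoquastic term; here the only nontrivial check is the sign computation for $-X\otimes X+Y\otimes Y$ above.
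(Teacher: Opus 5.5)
Your proposal is correct and follows the paper's proof essentially step for step. You check $\lambda_{\max}(h_{\mathrm S})=4$ so that $c=5$ is admissible, take StoqMA-completeness of $\{h_{\mathrm S}\}^{+}$-Hamiltonian from \cref{thm:piddock-montanaro-positive}, apply \cref{cor:positive-weight-singleton-sih}, use the stoquasticity clause of \cref{lem:positive-singleton-unweighting}, and cite \cref{lem:stoquastic-containment} for containment; your Bell-basis spectrum $\{0,4,-2,-2\}$ and the off-diagonal sign check on $-X\otimes X+Y\otimes Y$ just make explicit facts the paper states without computation.
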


\begin{proof}
The interaction $h_{\mathrm S}$ has
\[
\lambda_{\max}(h_{\mathrm S})=4,
\]
so the choice $c=5$ satisfies the hypothesis of
\cref{lem:positive-singleton-unweighting}.  Since
$\{h_{\mathrm S}\}^{+}$-Hamiltonian is StoqMA-complete by
\cref{thm:piddock-montanaro-positive},
\cref{cor:positive-weight-singleton-sih} gives the
promise-preserving reduction
\[
\{h_{\mathrm S}\}^{+}\text{-Hamiltonian}
\longrightarrow
\mathrm{Stoq\text{-}SIH}_3
\]
with fixed interaction
\[
g_{\mathrm S}
=
h_{\mathrm S}\otimes\Pi^{\Ket{1}}
+
5I\otimes\Pi^{\Ket{0}}.
\]
Because $h_{\mathrm S}$ is stoquastic,
\cref{lem:positive-singleton-unweighting} implies that
$g_{\mathrm S}$ is stoquastic as well.  This proves StoqMA-hardness,
while containment follows from
\cref{lem:stoquastic-containment}.
\end{proof}

\subsection{The Stoquastic Single-Interaction Embedding}

We now lift the realification and stoquastic sign embedding of
\cref{sec: Embedding Hamiltonian} to the single-interaction framework.
For a real matrix $R$, write
\[
R=R^{-}+R^{+},
\]
where $R^{-}$ contains the diagonal and non-positive off-diagonal
entries of $R$, and $R^{+}$ contains its strictly positive
off-diagonal entries.

For a $k$-local Hermitian interaction $h$, let
\[
R(h):=\Phi(h)
\]
and define
\begin{equation}
\label{eq:single-interaction-stoq-embedding}
\mathcal{S}(h)
=
R(h)^{-}\otimes I_c
-
R(h)^{+}\otimes X_c.
\end{equation}
The interaction $\mathcal S(h)$ is $(k+2)$-local: one additional
qubit is introduced by realification and one by the sign embedding.

\begin{thm}[Stoquastic embedding of a single-interaction problem]
\label{thm:stoquastic-single-interaction-embedding}
Let $H$ be an instance of a $p$-Pinned $\mathrm{SIH}_k$ problem whose
pinned state $\Ket{\psi_{\mathrm{pin}}}$ has real amplitudes in the
computational basis; $p=0$ denotes the unpinned case.

There is a polynomial-time reduction to a
$(p+1)$-Pinned $\mathrm{Stoq\text{-}SIH}_{k+2}$ instance.  The
original pinned register remains fixed to
$\Ket{\psi_{\mathrm{pin}}}$, the new sign qubit is pinned to
$\Ket{-}$, and the realification qubit remains unpinned.  The
restricted spectrum of the output Hamiltonian is exactly the spectrum
of the original pinned problem, with every eigenvalue appearing twice.

The same statement holds for $\mathrm{DSIH}_k$, yielding
$(p+1)$-Pinned $\mathrm{Stoq\text{-}DSIH}_{k+2}$.
\end{thm}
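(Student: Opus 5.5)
The plan is to realify and sign-embed term by term, sharing one global realification qubit $\mathcal L$ and one global sign qubit $c$ across all terms, so that linearity of $\Phi$ and of the $\Ket{-}_c$-compression makes the whole Hamiltonian transform exactly.

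Write the input as $H=\sum_{e\in L}h_e$ on registers $W$ (unpinned) and $A$ (pinned). Introduce two fresh qubits, $\mathcal L$ and $c$, and let the output instance be
\[
\widehat H=\sum_{e\in L}\mathcal S(h)_{(e,\mathcal L,c)},
\]
with $\mathcal S(h)$ as in \cref{eq:single-interaction-stoq-embedding}. The new pinned register is $A\cup\{c\}$ with pinned state $\Ket{\psi_{\mathrm{pin}}}_A\otimes\Ket{-}_c$, prepared from $\Ket{0}$ by $X$ followed by a Hadamard; I assume the fixed pinning gate set contains or can approximate these. The qubit $\mathcal L$ stays unpinned.

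\emph{Structural checks.} The tuples $(e,\mathcal L,c)$ are pairwise distinct because the $e$ are distinct. Each tuple has distinct entries because $\mathcal L$ and $c$ are fresh. The common interaction $\mathcal S(h)$ is stoquastic by the argument following \cref{eq: make stoquastic}, and its locality is $k+2$.

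\emph{Spectral identity, first step: sign compression.} The operator $\widehat H$ is a sum of terms $R_e^-\otimes I_c-R_e^+\otimes X_c$ with
\[
R_e=\Phi(h)_{(e,\mathcal L)}\otimes I,
\]
and all terms act on the same $c$. Each term leaves the $\Ket{-}_c$ sector invariant and compresses to $R_e$ by \cref{eq:stoq-sign-restriction}. So the sector is invariant for the sum and
\[
V_-^\dagger\widehat H V_-=\sum_e R_e .
\]

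\emph{Second step: linearity of $\Phi$.} The identity on the remaining qubits is real, so $\Phi(h_e)=\Phi(h)_{(e,\mathcal L)}\otimes I$, where $\Phi$ is taken on the full system with $\mathcal L$ as the auxiliary factor. The map $\Phi$ on operators is real-linear, hence $\sum_e R_e=\Phi(H)$.

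\emph{Third step: the original pinning.} Because $\Ket{\psi_{\mathrm{pin}}}$ has real amplitudes, $\Phi(\Ket\phi\otimes\Ket{\psi_{\mathrm{pin}}})=\Phi(\Ket\phi)\otimes\Ket{\psi_{\mathrm{pin}}}$. Compressing $\Phi(H)$ to the pinned-$A$ sector therefore yields $\Phi(V_{\mathrm{pin}}^\dagger HV_{\mathrm{pin}})=\Phi(H_{\mathrm{pin}})$. This is checked entrywise, using that $V_{\mathrm{pin}}$ is a real isometry and so commutes with taking real and imaginary parts.

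\emph{Conclusion of the spectral claim.} \Cref{lem:realification-spectrum} then says the restricted output spectrum equals that of $H_{\mathrm{pin}}$ with doubled multiplicities. In particular the ground energy, and hence the thresholds $a,b$, are unchanged.

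\emph{Uniformity and main obstacle.} The output instance has polynomial size, and for DSIH the family $\{\mathcal S(h^{(n)})\}$ is defined pointwise in $n$. The step I expect to be delicate is computability of the entries of $\mathcal S(h)$. The split $R=R^-+R^+$ depends on the exact sign of each off-diagonal entry, and additive approximations to the entries do not decide whether an entry is strictly positive. For a fixed $h$ this is harmless: the sign pattern of $\Phi(h)$ is finite data that can be hardwired into the definition of the problem. For DSIH I would require, or verify for the families actually used, that the sign pattern of $\Phi(h^{(n)})$ is polynomial-time computable from $1^n$. With that in hand, entries of $\mathcal S(h^{(n)})$ are the approximated entries of $\Phi(h^{(n)})$ placed according to the computed pattern, with norm $\le 2\|h^{(n)}\|=\operatorname{poly}(n)$. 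Such a bound should hold since $\Phi$ acting on the columns of a Hermitian matrix, and the sign splitting, are norm-controlled up to a constant factor; I would check this by a direct entrywise bound on the fixed-size matrix $\Phi(h^{(n)})$.
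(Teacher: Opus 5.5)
Your construction and spectral argument match the paper's proof essentially step for step. You append one shared realification qubit and one shared sign qubit to every tuple, compress the sign qubit to $\Ket{-}_c$ term by term, and use linearity of $\Phi$ to obtain $\Phi(H)$. You then compress by the real isometry of the original pinned register to get $\Phi(H_{\mathrm{pin}})$ and finish with \cref{lem:realification-spectrum}.

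The one place you diverge is the uniformity discussion, and there the gap is of your own making. The ``main obstacle'' you identify is not real. Your workaround is to assume that the sign pattern of $\Phi(h^{(n)})$ is computable from $1^n$, but that is not a hypothesis of the theorem, so as written you would prove only a weaker DSIH statement. The missing observation is that every entry of $\mathcal S(h)$ comes from an entry $r$ of $\Phi(h)$ by one of three maps:
\begin{itemize}
\item $r\mapsto r$ on the diagonal;
\item $r\mapsto\min\{r,0\}$ off the diagonal in the $I_c$ part;
\item $r\mapsto-\max\{r,0\}$ in the $X_c$ part.
\end{itemize}
All three maps are $1$-Lipschitz. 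The DSIH computability requirement gives a $2^{-q}$-accurate approximation of each entry of $h^{(n)}$ in time $\operatorname{poly}(n,q)$. This directly yields a $2^{-q}$-accurate approximation of the corresponding entry of $\mathcal S(h^{(n)})$, without ever deciding a sign exactly, so no hardwiring is needed even for fixed $h$. You should also index the output family by the final system size, $\widehat h^{(n')}=\mathcal S(h^{(n'-2)})$, because the two new qubits change the system size.

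A smaller point: your bound $\|\mathcal S(h)\|\le 2\|h\|$ is false in general.
\begin{itemize}
\item On the $\Ket{+}_c$ sector, $\mathcal S(R)$ acts as $R^--R^+$.
\item Take $h$ to be the $2^k\times 2^k$ Sylvester--Hadamard matrix, which is symmetric, traceless, and has $\pm1$ entries. Then $R^--R^+$ has norm at least $2^k-1$, while $\|h\|=2^{k/2}$, so the bound fails once $k\ge 3$.
\end{itemize}
What does hold, and is all you need since $k$ is constant, is
\[
\|\mathcal S(R)\|\le\|R^-\|+\|R^+\|\le 2\|R\|_F\le 2^{(k+3)/2}\|h\|.
\]
This uses two facts: $R^{\pm}$ consist of subsets of the entries of the $2^{k+1}$-dimensional matrix $R=\Phi(h)$, and $\|\Phi(h)\|=\|h\|$ by \cref{lem:realification-spectrum}.
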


\begin{proof}
Let the original Hamiltonian be
\[
H=\sum_{e\in L}h_e.
\]
Introduce one global phase qubit $b$ and one global sign qubit $c$.
For every ordered interaction tuple $e$, append $b$ and $c$ and apply
the same interaction $\mathcal{S}(h)$ from
\cref{eq:single-interaction-stoq-embedding}. Thus
\[
H_{\mathrm{stoq}}
=
\sum_{e\in L}
\mathcal{S}(h)_{e,b,c}.
\]

Because $b$ and $c$ are new physical qubits, every extended tuple
contains pairwise distinct entries.  Distinct source tuples remain
distinct after appending the same ordered pair $(b,c)$, so the
construction satisfies the ordered-hyperedge requirements of
\cref{def: SIH}; the same argument applies to DSIH.

By construction, every off-diagonal matrix element of $S(h)$ is
non-positive, so $H_{\mathrm{stoq}}$ is stoquastic.

We now identify the restricted Hamiltonian in two compression steps.

\paragraph{First compression: the sign qubit.}
Let
\[
V_-:\mathcal H_{W,A,b}\longrightarrow
\mathcal H_{W,A,b,c},
\qquad
V_-\ket{\phi}
=
\ket{\phi}\otimes\ket{-}_c .
\]
Since
\[
X_c\ket{-}_c=-\ket{-}_c,
\]
compression of a single interaction occurrence gives
\[
V_-^\dagger
S(h)_{e,b,c}
V_-
=
R(h)^-_{e,b}+R(h)^+_{e,b}
=
R(h)_{e,b}
=
\Phi(h_e).
\]
Summing over all interaction occurrences therefore yields
\[
V_-^\dagger H_{\mathrm{stoq}}V_-
=
\sum_{e\in L}\Phi(h_e)
=
\Phi(H),
\]
where the final equality follows from the linearity of $\Phi$.
Thus pinning the new sign qubit to $\ket{-}$ removes the stoquastic
sign embedding exactly and leaves the realified source Hamiltonian.

\paragraph{Second compression: the original pinned register.}
Let $A$ denote the original pinned register, taken to be trivial when
$p=0$, and let $W$ denote the unpinned source register. Define
\[
Q=I_W\otimes\bra{\psi_{\mathrm{pin}}}_A,
\qquad
H_{\mathrm{eff}}=QHQ^\dagger,
\]
with $Q=I_W$ in the unpinned case.
Since $Q$ and $Q^\dagger$ are real matrices,
\[
Q\,\operatorname{Re}(H)\,Q^\dagger
=
\operatorname{Re}(H_{\mathrm{eff}})
\]
and
\[
Q\,\operatorname{Im}(H)\,Q^\dagger
=
\operatorname{Im}(H_{\mathrm{eff}}).
\]
Consequently,
\[
(Q\otimes I_b)\Phi(H)(Q^\dagger\otimes I_b)
=
\operatorname{Re}(H_{\mathrm{eff}})\otimes I_b
+
\operatorname{Im}(H_{\mathrm{eff}})\otimes J_b
=
\Phi(H_{\mathrm{eff}}).
\]

Combining the two compression steps gives
\[
(Q\otimes I_b)
V_-^\dagger H_{\mathrm{stoq}}V_-
(Q^\dagger\otimes I_b)
=
\Phi(H_{\mathrm{eff}}).
\]
Hence, after pinning both the original pinned register to
$\ket{\psi_{\mathrm{pin}}}$ and the new sign qubit to $\ket{-}$, while
leaving the realification qubit $b$ unpinned, the restricted output
Hamiltonian is exactly $\Phi(H_{\mathrm{eff}})$.

By \cref{lem:realification-spectrum}, its spectrum is the spectrum of
$H_{\mathrm{eff}}$ with every eigenvalue appearing twice. In particular,
the ground-state energy and the YES and NO thresholds are preserved
exactly.

Thus the transformation introduces two physical auxiliary qubits, but
only the sign qubit is pinned.  The locality increases from $k$ to
$k+2$, the number of pinned qubits from $p$ to $p+1$, and the
restricted spectrum is preserved up to the exact multiplicity-two
realification.

For the $n$-dependent case, let $n$ be the final size of the source
DSIH instance.  The phase and sign qubits increase the final physical
size to
\[
n'=n+2.
\]
Define
\[
\widehat h^{(n')}
=
\mathcal S\!\left(h^{(n'-2)}\right)
\]
for $n'\geq3$, and set $\widehat h^{(n')}=0$ for smaller values.

Given $1^{n'}$ and a requested precision, the entries of
$h^{(n'-2)}$ can be computed uniformly. Taking real and imaginary parts
and applying the entrywise positive- and negative-part maps can likewise
be performed to the required precision in polynomial time. Because the
matrix dimension is constant, these entrywise operations increase the
operator norm by at most a constant factor. Hence
\[
\|\widehat h^{(n')}\|
\leq
\operatorname{poly}(n').
\]
\end{proof}

For real source interactions, the realification qubit is unnecessary.
This removes one unit of locality overhead and is the specialization
used below.

\begin{cor}
\label{cor:real-stoquastic-embedding}
Suppose the interaction $h$ of a $p$-Pinned $\mathrm{SIH}_k$ instance
is real in the computational basis.  Then there is a polynomial-time
reduction
\[
p\text{-Pinned }\mathrm{SIH}_k
\longrightarrow
(p+1)\text{-Pinned }\mathrm{Stoq\text{-}SIH}_{k+1}
\]
that preserves the restricted spectrum exactly.

The same statement holds for DSIH.
\end{cor}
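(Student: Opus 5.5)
The argument repeats the proof of \cref{thm:stoquastic-single-interaction-embedding} with the realification step omitted. Since $h$ is real in the computational basis, set $R:=h$, split it as $R=R^-+R^+$ into its diagonal and non-positive off-diagonal part and its strictly positive off-diagonal part, and define the $(k+1)$-local interaction
\[
\mathcal S'(h)=R^-\otimes I_c-R^+\otimes X_c .
\]
This is exactly the sign embedding of \cref{eq: make stoquastic}, applied directly to $h$ instead of $\Phi(h)$. As noted there, every off-diagonal entry of $\mathcal S'(h)$ is real and non-positive, so $\mathcal S'(h)$ is stoquastic. It depends only on $h$, so it is a fixed interaction whenever $h$ is.

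Given the instance $H=\sum_{e\in L}h_e$, add one new physical qubit $c$ and form
\[
H_{\mathrm{stoq}}=\sum_{e\in L}\mathcal S'(h)_{e,c}.
\]
Appending the same new qubit $c$ to every tuple keeps each tuple free of repeated entries and keeps distinct tuples distinct, so the ordered-hyperedge conditions of \cref{def: SIH} still hold. The pinned register is the original one, with pinned state $\Ket{\psi_{\mathrm{pin}}}$, together with $c$ pinned to $\Ket{-}$. The combined pinned state $\Ket{\psi_{\mathrm{pin}}}\otimes\Ket{-}$ is prepared by the original circuit $C_{\mathrm{pin}}$ followed by $X$ and $H$ on the new qubit. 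If the fixed gate set lacks $X$ or $H$, I will add them, since they are single fixed gates.

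To compute the spectrum, first compress by $V_-$. Because $X_c\Ket{-}=-\Ket{-}$, each occurrence satisfies
\[
V_-^\dagger\mathcal S'(h)_{e,c}V_-=R^-_e+R^+_e=h_e,
\]
the same identity as \cref{eq:stoq-sign-restriction}. Summing over occurrences gives $V_-^\dagger H_{\mathrm{stoq}}V_-=H$. The two pinnings act on disjoint registers, so the compressions commute, and compressing further by $I_W\otimes\Bra{\psi_{\mathrm{pin}}}$ returns exactly $H_{\mathrm{pin}}$.

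One point differs from \cref{thm:stoquastic-single-interaction-embedding}. There, the compression by the pinned state was passed through $\Phi$, which required $\Ket{\psi_{\mathrm{pin}}}$ to have real amplitudes. Here there is no realification, and the compression is a plain sandwich by an isometry, so that hypothesis is not needed. The restricted spectrum is therefore exactly that of $H_{\mathrm{pin}}$, with no doubling of multiplicities, and the same thresholds $(a,b)$ can be used.

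For DSIH, take the final size $n'=n+1$, define $\widehat h^{(n')}=\mathcal S'(h^{(n'-1)})$, and set $\widehat h^{(n')}=0$ for small $n'$. Computing this to a requested precision uses only entrywise sign splitting of a matrix of constant dimension, so it is uniformly computable and $\|\widehat h^{(n')}\|\le\operatorname{poly}(n')$.

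The main obstacle is the uniformity of the sign split. From finite-precision approximations of the entries one cannot decide whether an entry is positive or zero. I would handle this as in \cref{lem:stoquastic-containment}: an entry whose magnitude is within the precision tolerance is sent to $R^-$. The resulting matrix differs from the exact $\mathcal S'(h)$ by at most the requested error, so it is a valid approximation and stays stoquastic.
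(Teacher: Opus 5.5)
Your proposal is correct and follows the paper's proof: apply only the sign embedding $h^-\otimes I-h^+\otimes X$, compress the new qubit to $\Ket{-}$ to recover $h$ exactly, and carry over the DSIH bookkeeping with $n'=n+1$. Your extra remarks (preparing the pinned state with $HX$ on the new qubit, and noting that the real-amplitude hypothesis on $\Ket{\psi_{\mathrm{pin}}}$ is unnecessary without realification) are accurate refinements the paper leaves implicit. The uniformity obstacle you flag is milder than you suggest: each off-diagonal entry splits as $\min(x,0)$ and $\max(x,0)$, which are $1$-Lipschitz in $x$, so approximating the entries to precision $2^{-q}$ approximates the split to the same precision without any thresholding, and using $\min(\tilde x,0)$ keeps the approximation nonpositive.
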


\begin{proof}
Since $h$ is already real, apply only the sign embedding
\[
h=h^{-}+h^{+}
\longmapsto
h^{-}\otimes I-h^{+}\otimes X.
\]
Compression of the new sign qubit to $\Ket{-}$ gives exactly
\[
h^{-}+h^{+}=h.
\]
Thus the construction adds one qubit of locality and one pinned qubit
while preserving the restricted spectrum exactly.  In the
$n$-dependent case the final size changes from $n$ to $n+1$, and the
uniformity argument is identical to that of
\cref{thm:stoquastic-single-interaction-embedding}.
\end{proof}

\subsection{A QMA-Complete Pinned Stoquastic Single-Interaction Problem}

The QMA-complete interaction $g_{\mathrm H}$ of
\cref{thm:sih3-qma-complete} is real, so the sharper real-source embedding
applies directly.

\begin{thm}
\label{thm:pinned-stoq-sih4}
There exists a fixed $4$-local stoquastic Hermitian interaction such
that the $1$-Pinned $\mathrm{Stoq\text{-}SIH}_4$ problem is
QMA-complete.
\end{thm}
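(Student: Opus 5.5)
The plan is to compose \cref{thm:sih3-qma-complete} with \cref{cor:real-stoquastic-embedding}. Both are already established in the paper, so the proof mainly checks that their hypotheses line up.

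First, verify that $g_{\mathrm H}=h_{\mathrm H}\otimes\Pi^{\Ket{1}}+2I\otimes\Pi^{\Ket{0}}$ is real in the computational basis. The terms $X\otimes X$ and $Z\otimes Z$ are real. The term $Y\otimes Y$ is also real, since $Y\otimes Y=-\,(iY)\otimes(iY)$ with $iY$ real; concretely its nonzero entries are $-1,1,1,-1$ on the antidiagonal. The projectors and identity are real as well. View the unpinned $\mathrm{SIH}_3$ problem as the case $p=0$ of $p$-Pinned $\mathrm{SIH}_3$. \Cref{cor:real-stoquastic-embedding} then gives a polynomial-time reduction to $1$-Pinned $\mathrm{Stoq\text{-}SIH}_4$ with the fixed interaction
\[
\mathcal S'(g_{\mathrm H})=g_{\mathrm H}^{-}\otimes I-g_{\mathrm H}^{+}\otimes X .
\]
This interaction is independent of the instance and has efficiently computable (in fact integer) entries.

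Next, check the instance-level details. One global sign qubit $c$ is appended to every ordered $3$-tuple, so tuples remain pairwise distinct and have distinct entries. The pinned state $\Ket{-}$ is prepared from $\Ket{0}$ by the fixed circuit $HX$ (or $ZH$), which fits \cref{def:pinned-sih}. Compression to $\Ket{-}_c$ returns exactly $\sum_e (g_{\mathrm H})_e$, so the restricted spectrum, and in particular $\lambda_{\min}$, is unchanged. The thresholds $(\widetilde a,\widetilde b)$ produced by \cref{thm:sih3-qma-complete} therefore carry over unchanged, and the inverse-polynomial gap survives because the system size grows by only one qubit. Since $\mathrm{SIH}_3$ with $g_{\mathrm H}$ is QMA-hard, this gives QMA-hardness. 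Containment in QMA follows from \cref{lem:pinned-containment}.

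The only step needing care is confirming the realness of $g_{\mathrm H}$, specifically the $Y\otimes Y$ entries, since without it the cheaper one-qubit sign embedding does not apply. No real obstacle is expected beyond that. For completeness, one can also note that $g_{\mathrm H}^{+}$ is nonzero: the term $X\otimes X+Y\otimes Y$ produces positive off-diagonal entries $2\bigl(\Ket{01}\Bra{10}+\Ket{10}\Bra{01}\bigr)$ in the auxiliary-$1$ block. The pinning is therefore genuinely used, though this fact is not required for correctness.
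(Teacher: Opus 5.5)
Your proposal is correct and follows the paper's proof: the paper also observes that $g_{\mathrm H}$ is real in the computational basis, applies \cref{cor:real-stoquastic-embedding} with $p=0$ and $k=3$, and takes containment from \cref{lem:pinned-containment}. Your additional checks are sound details that the paper leaves implicit, namely the real $Y\otimes Y$ entries, the preparation of $\Ket{-}$ by a fixed circuit, tuple distinctness, and threshold preservation.
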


\begin{proof}
The fixed interaction $g_{\mathrm H}$ of
\cref{thm:sih3-qma-complete} is real in the computational basis.
Applying \cref{cor:real-stoquastic-embedding} with $p=0$ and $k=3$
gives
\[
\mathrm{SIH}_3
\longrightarrow
1\text{-Pinned }\mathrm{Stoq\text{-}SIH}_4,
\]
with exact preservation of the restricted spectrum.  QMA-hardness
therefore follows from \cref{thm:sih3-qma-complete}, while containment in QMA
follows from \cref{lem:pinned-containment}.
\end{proof}

\section{Geometrically Local Stoquastic \texorpdfstring{$\mathrm{DSIH}_9$}{DSIH9}}
\label{sec:stoq-n-dependent-hardness}

We now combine stoquasticity, geometric locality, and strict
single-interaction uniformity.  The interaction is allowed to depend
uniformly on the final physical system size, but on no other
instance-specific parameter.

\begin{thm}
\label{thm:StoqMA-complete}
There exists a uniform family of $9$-local stoquastic Hermitian
interactions $\{h^{(n)}\}_{n\geq1}$ for which geometrically local
$\mathrm{Stoq\text{-}DSIH}_9$ is StoqMA-complete.
\end{thm}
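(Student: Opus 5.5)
Containment is immediate from \cref{lem:stoquastic-containment}, so the work is StoqMA-hardness. I would start from the geometrically local stoquastic source $H^{\mathrm{stoq}_g}=H_0+\delta H^{\mathrm{stoq}_g}_{\mathrm{out}}$ of \cref{thm:geo-stoq}. Its verifier is first compiled to a single fixed classical reversible gate (Toffoli), together with native SWAP and identity operations, exactly as in \cref{subsec:unpinned_geometric}. The point of allowing $n$-dependence is the output weight: $\delta=o(T^{-3})$ is an instance-dependent coupling, and StoqMA amplification is unavailable to make it constant. My plan is therefore to put $\delta$ directly into the common interaction, as a coefficient computable from the final system size, rather than emulating it by multiplicity. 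Multiplicity would destroy bounded degree, since the output clock qubit would need $\Theta(1/\delta)$ copies.

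\textbf{Building the interaction.} I would mimic the $8$-local construction of \cref{eq:fixed-h-def}, with computational branches for Toffoli propagation (acting on $3$ work qubits plus clock-transition and neighbouring clock projectors), SWAP propagation, identity propagation, and a diagonal penalty branch. I would add one further branch carrying $\Pi^{\Ket{-}}$ for the $\Ket{+}$-initialization checks. The output branch would be $\delta_n\,\Pi^{\Ket{-}}\otimes\Pi^{\Ket{1}}$. Every branch is tensored with computational-basis selector projectors, and a $\Delta\, h_{\mathrm{check}}$ term acts on a dedicated pair. Both $\Pi^{\Ket{-}}=\tfrac12(I-X)$ and the permutation propagators have non-positive off-diagonals, and the selectors and check are diagonal. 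Hence the whole interaction is stoquastic, by the stoquasticity remark following \cref{lem:geometric-locality-preservation}. With shared selector/clock positions, Toffoli propagation needs $3$ work qubits, $1$ clock-transition qubit, $3$ selector/clock projector slots and $2$ check qubits, which accounts for the claimed locality $9$. Each target term would use a constant number of applications with a local auxiliary block cycling pairs through the check slots, as in \cref{lem:geometric-sih-auxiliary-subspaces}. The valid sector then reproduces $H^{\mathrm{stoq}_g}$ exactly, and every invalid sector has energy at least $1$. Geometric locality and bounded degree then follow exactly as in \cref{thm:unpinned-fixed}.

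\textbf{Uniformity.} The scale $\delta$ must depend only on $n$. I would choose $\delta_n$ as an explicit function of the circuit length $T$, for instance $\delta=T^{-4}$ or something smaller. $T$ would be encoded into the final size via padding with isolated dummy qubits, with a decoding lemma analogous to \cref{lem:DSIH-size-decoding}. If bounded degree should forbid isolated dummies, they can equally be placed as separated lattice sites. Invalid $n$ give $h^{(n)}=0$.

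\textbf{Completeness and soundness (main obstacle).} Completeness follows from \cref{lemma: bravyi StoqMA completness} applied in the valid sector. Soundness is the main obstacle, because \cref{lem:stoq-soundness} only gives $O(\delta^2)$ with no explicit constant, and the threshold must be set uniformly. I would use the Projection Lemma (Kempe–Kitaev–Regev) with $H_0$ as the large term, whose gap on the history space is $\Gamma_T=\Omega(T^{-3})$. Within the history space, $\delta H_{\mathrm{out}}$ restricts to $\tfrac{\delta}{T+1}(1-p)$. The lemma then gives $\lambda_{\min}\ge \tfrac{\delta}{T+1}(1-\epsilon_{\mathrm{no}})-O(\delta^2\|H_{\mathrm{out}}\|^2/\Gamma_T)$. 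Choosing $\delta_n$ polynomially smaller than $\Gamma_T(\epsilon_{\mathrm{yes}}-\epsilon_{\mathrm{no}})/T$ leaves an inverse-polynomial gap. For this, $\epsilon_{\mathrm{yes}}-\epsilon_{\mathrm{no}}\ge 1/T$ would be arranged by padding, so that everything is a function of $T$ and hence of $n$. Combined with the invalid-sector bound of $1$, this yields the promise gap.
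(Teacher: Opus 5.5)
Your plan is essentially the paper's proof. It uses the same kind of branch-sharing $9$-local interaction with a cycled $h_{\mathrm{check}}$ pair that gives energy at least $1$ on invalid sectors, puts the output coefficient $\delta_n$ directly into $h^{(n)}$, and gets soundness from the Projection-Lemma bound $\frac{\delta}{T+1}(1-\epsilon_{\mathrm{no}})-\frac{\delta^2}{\Gamma_T-2\delta}$.

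There are two small points to fix. First, your sample choice $\delta=T^{-4}$ contradicts your own criterion. With $\Gamma_T=\Omega(T^{-3})$ and $\epsilon_{\mathrm{yes}}-\epsilon_{\mathrm{no}}\ge 1/T$, the Projection-Lemma error is then only bounded at order $T^{-5}$, while the first-order separation is of order $T^{-6}$, so you need $\delta$ polynomially below $T^{-5}$. Second, the decoding lemma is unnecessary: the paper simply pads to $n=\max\{n_{\mathrm{act}},T+1,\lceil q(|x|)\rceil\}$ and sets $\delta_n=n^{-8}$, using only $T+1\le n$ and $q(|x|)\le n$ to obtain $b-a\ge\tfrac12 n^{-10}$.
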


For hardness, we start from the geometrically local stoquastic
Hamiltonian source of \cref{thm:geo-stoq}.  Following the spatial
strategy of \cref{subsec:unpinned_geometric}, we use a fixed local
operation set and encode the resulting Hamiltonian with one
$9$-local stoquastic interaction.  The only system-size-dependent
coefficient will be the output-penalty strength.

\subsection{Spatially Local StoqMA Verification}

Let $L\in\mathrm{StoqMA}$.  We use a verifier in which Toffoli is the
only nontrivial computational gate; the spatialization below then
introduces native SWAP and identity operations for routing and time
advancement. This entails no loss of generality: the Toffoli gate is universal for classical reversible computation given access to qubits set as $\Ket{1}$.

We apply the geometrically local spatialization of
\cref{subsubsec: Stoq Geometrically Local,app:geometric-stoq-source}.
Let
\[
U_1,\ldots,U_T
\]
denote the resulting polynomial-length spatial circuit. Following the
notation of the source construction, let $I_0$, $I_1$, and $I_+$ denote
the physical work qubits whose prescribed initial states are
$\Ket{0}$, $\Ket{1}$, and $\Ket{+}$, respectively; the witness qubits
remain unconstrained.

The spatialization preserves the completeness and soundness parameters
of the original StoqMA verifier. Thus, if $x\in L_{\mathrm{yes}}$,
there exists a witness accepted with probability at least
$\epsilon_{\mathrm{yes}}$, whereas if $x\in L_{\mathrm{no}}$, every
witness is accepted with probability at most
$\epsilon_{\mathrm{no}}$, where
\[
\epsilon_{\mathrm{yes}}-\epsilon_{\mathrm{no}}
\geq
\frac{1}{\operatorname{poly}(|x|)}.
\]

After spatialization, every operation belongs to the fixed set
\[
\{\mathrm{Toffoli},\mathrm{SWAP},I\},
\]
and every physical work qubit participates in only $O(1)$ circuit
operations.  As in \cref{subsec:unpinned_geometric}, SWAP and $I$ are
kept as native spatial operations rather than compiled into Toffoli
gates, preserving the constant participation bound required for
bounded degree.

Finally, by padding with an initial and a final identity round, we may
assume without loss of generality that the first and last computational
rounds consist entirely of identity operations. The explicit lattice
layout, operation count, and localized initialization construction are
given in \cref{app:geometric-stoq-source}.

\subsection{The Stoquastic Single-Interaction Family}

We encode the geometrically local source Hamiltonian
$H^{\mathrm{stoq}_g}$ using a uniform family of $9$-local stoquastic
interactions
\[
\{h^{(n)}\}_{n\geq1},
\]
where $n$ is the total number of physical qubits in the final instance,
in accordance with \cref{def: DSIH}.  All matrix entries will be fixed
constants except the coefficient of the output-penalty branch.

Set
\[
\delta_n=n^{-8}.
\]
The exponent is chosen so that the output perturbation remains small
relative to the $\Omega(T^{-3})$ gap of the unperturbed history-state
Hamiltonian while still leaving an inverse-polynomial YES--NO gap.

The interaction $h^{(n)}$ acts on nine qubits
$(v_1,\ldots,v_9)$ and is defined by
\begin{align}
\label{eq:stoq-dsih9-h-def}
h^{(n)}(v_1,\ldots,v_9)
={}&
\frac{1}{4}
\left(
h^{\mathrm{stoq}}_{\mathrm{prop}}(v_1,\ldots,v_7)
+
h^{(n)}_{\mathrm{pen}}(v_1,\ldots,v_7)
\right)
\otimes I_{v_8,v_9}
\nonumber \\
&+
I_{v_1,\ldots,v_7}
\otimes
h_{\mathrm{check}}(v_8,v_9).
\end{align}

The propagation component contains three branches,
corresponding respectively to Toffoli, SWAP, and identity propagation:
\begin{align}
\label{eq:stoq-dsih9-prop-def}
h^{\mathrm{stoq}}_{\mathrm{prop}}(v_1,\ldots,v_7)
={}&
h^{\mathrm{stoq}}_{\mathrm{prop}_1}(v_1,\ldots,v_4)
\otimes
\Pi^{\Ket{0}}_{v_5}
\otimes
\Pi^{\Ket{1}}_{v_6}
\otimes
\Pi^{\Ket{1}}_{v_7}
\nonumber\\
&+
h^{\mathrm{stoq}}_{\mathrm{prop}_2}(v_1,v_2,v_3)
\otimes
I_{v_4}
\otimes
\Pi^{\Ket{1}}_{v_5}
\otimes
\Pi^{\Ket{1}}_{v_6}
\otimes
\Pi^{\Ket{0}}_{v_7}
\nonumber\\
&+
h^{\mathrm{stoq}}_{\mathrm{prop}_3}(v_1,v_2)
\otimes
I_{v_3}
\otimes
\Pi^{\Ket{1}}_{v_4}
\otimes
\Pi^{\Ket{0}}_{v_5}
\otimes
\Pi^{\Ket{0}}_{v_6}
\otimes
\Pi^{\Ket{0}}_{v_7}.
\end{align}

For Toffoli propagation, define
\begin{align}
\label{eq:stoq-dsih9-prop-toffoli}
h^{\mathrm{stoq}}_{\mathrm{prop}_1}(v_1,\ldots,v_4)
=
\frac{1}{2}
\Big(
& I_{v_1,v_2,v_3}\otimes\Ket{1}\Bra{1}_{v_4}
+
I_{v_1,v_2,v_3}\otimes\Ket{0}\Bra{0}_{v_4}
\nonumber\\
&-
\mathrm{Toffoli}_{v_1,v_2,v_3}
\otimes\Ket{1}\Bra{0}_{v_4}
-
\mathrm{Toffoli}_{v_1,v_2,v_3}^{\dagger}
\otimes\Ket{0}\Bra{1}_{v_4}
\Big).
\end{align}

For SWAP propagation, define
\begin{align}
\label{eq:stoq-dsih9-prop-swap}
h^{\mathrm{stoq}}_{\mathrm{prop}_2}(v_1,v_2,v_3)
=
\frac{1}{2}
\Big(
&I_{v_1,v_2}\otimes\Ket{1}\Bra{1}_{v_3}
+
I_{v_1,v_2}\otimes\Ket{0}\Bra{0}_{v_3}
\nonumber\\
&-
\mathrm{SWAP}_{v_1,v_2}
\otimes\Ket{1}\Bra{0}_{v_3}
-
\mathrm{SWAP}_{v_1,v_2}^{\dagger}
\otimes\Ket{0}\Bra{1}_{v_3}
\Big).
\end{align}

For identity propagation, define
\begin{align}
\label{eq:stoq-dsih9-prop-identity}
h^{\mathrm{stoq}}_{\mathrm{prop}_3}(v_1,v_2)
=
\frac{1}{2}
\Big(
&I_{v_1}\otimes\Ket{1}\Bra{1}_{v_2}
+
I_{v_1}\otimes\Ket{0}\Bra{0}_{v_2}
\nonumber\\
&-
I_{v_1}\otimes\Ket{1}\Bra{0}_{v_2}
-
I_{v_1}\otimes\Ket{0}\Bra{1}_{v_2}
\Big).
\end{align}

The remaining source-Hamiltonian terms are encoded by
\begin{align}
\label{eq:stoq-dsih9-pen-def}
h^{(n)}_{\mathrm{pen}}(v_1,\ldots,v_7)
={}&
\Big(
\Pi^{\Ket{0}}_{v_1}
\otimes
\Pi^{\Ket{0}}_{v_2}
\otimes
\Pi^{\Ket{1}}_{v_3}
\otimes
\Pi^{\Ket{1}}_{v_4}
\nonumber\\
&\quad+
\Pi^{\Ket{1}}_{v_1}
\otimes
\Pi^{\Ket{-}}_{v_2}
\otimes
\Pi^{\Ket{0}}_{v_3}
\otimes
\Pi^{\Ket{0}}_{v_4}
\nonumber\\
&\quad+
\delta_n
\Pi^{\Ket{1}}_{v_1}
\otimes
\Pi^{\Ket{-}}_{v_2}
\otimes
\Pi^{\Ket{0}}_{v_3}
\otimes
\Pi^{\Ket{1}}_{v_4}
\Big)
\nonumber\\
&\qquad\otimes
\Pi^{\Ket{0}}_{v_5}
\otimes
\Pi^{\Ket{1}}_{v_6}
\otimes
\Pi^{\Ket{0}}_{v_7}.
\end{align}

The three branches of \cref{eq:stoq-dsih9-pen-def} implement,
respectively, the clock and computational-basis initialization
penalties, the $\Ket{+}$-initialization penalties, and the output
penalty.  Only the output branch carries the $n$-dependent coefficient
$\delta_n$.

Finally,
\begin{equation}
\label{eq:stoq-dsih9-check}
h_{\mathrm{check}}(v_8,v_9)
=
I_{v_8,v_9}-
\Pi^{\Ket{0}}_{v_8}
\otimes
\Pi^{\Ket{1}}_{v_9}.
\end{equation}
This operator penalizes every state orthogonal to $\Ket{01}$.

As in \cref{subsec:unpinned_geometric}, the tensor-factor positions of
$h^{(n)}$ do not have fixed \emph{computational} and \emph{selector} roles.
Instead, projector positions already required by the target Hamiltonian
are shared with the branch-selection mechanism.

On the valid auxiliary configuration, the selector projectors are routed
so that exactly the intended propagation or penalty branch survives.
The explicit routing and its verification are given in
\cref{subsec:stoq-dsih-spatial-embedding,app:stoq-dsih-routing}.

The interaction family is stoquastic.  The computational-basis
projectors and $h_{\mathrm{check}}$ are diagonal, while
$\Pi^{\Ket{-}}$ has non-positive off-diagonal entries.  The Toffoli,
SWAP, and identity matrices have nonnegative computational-basis
entries and appear with a minus sign in the off-diagonal propagation
terms. 
Hence every off-diagonal entry of $h^{(n)}$ is non-positive.

Each propagation operator
$h^{\mathrm{stoq}}_{\mathrm{prop}_i}$ is positive semidefinite, and every
term in \cref{eq:stoq-dsih9-pen-def} is positive semidefinite. Therefore
\begin{equation}
\label{eq:stoq-dsih9-check-domination}
h^{(n)}
\succeq
I_{v_1,\ldots,v_7}
\otimes
h_{\mathrm{check}}(v_8,v_9).
\end{equation}
This property will provide the invalid-auxiliary-sector energy penalty,
exactly as in \cref{subsec:unpinned_geometric}.

Finally, $\{h^{(n)}\}_{n\geq1}$ is uniform in the sense of
\cref{def: DSIH}.  Its only $n$-dependent coefficient is
\[
\delta_n=n^{-8},
\]
which is exactly computable from $1^n$, while all remaining entries are
fixed constants.  Moreover,
\[
\|h^{(n)}\|=O(1).
\]

\subsection{Auxiliary-Block Construction and Spatial Embedding}
\label{subsec:stoq-dsih-spatial-embedding}

We use the same four-pair auxiliary-block architecture as in
\cref{subsec:unpinned_geometric}, now with the work--clock placement of
\cref{app:geometric-stoq-source}.  For each time step $t$, place a
dedicated block
\[
A^{(t)}
=
\left(
A^{(t)}_0,\ldots,A^{(t)}_7
\right)
\]
within the corresponding constant-diameter neighborhood.  Its valid
state is
\[
\Ket{\Omega_t}_A
=
\Ket{01}_{A^{(t)}_0,A^{(t)}_1}
\Ket{01}_{A^{(t)}_2,A^{(t)}_3}
\Ket{01}_{A^{(t)}_4,A^{(t)}_5}
\Ket{01}_{A^{(t)}_6,A^{(t)}_7},
\]
and
\[
\Ket{\Omega}_A
=
\bigotimes_{t=1}^{T}\Ket{\Omega_t}_A.
\]

The prefactor $1/4$ in the computational part of $h^{(n)}$ is
compensated by using four applications of $h^{(n)}$ for every target
term. These applications are indexed by
$j\in\{0,1,2,3\}$. On the valid auxiliary state all four applications
select the same computational branch, so each contributes one quarter
of the desired target operator and their sum reproduces it with
coefficient one.

At the same time, the check positions $(v_8,v_9)$ are occupied by

\[
\left(
A^{(t)}_{2j+6},
A^{(t)}_{2j+7}
\right),
\]

where auxiliary indices are interpreted modulo $8$. As $j$ ranges from
$0$ to $3$, these positions contain all four auxiliary pairs exactly
once. Thus the same four applications that recover the coefficient of
the target term also route every pair through $h_{\mathrm{check}}$.

We define the valid auxiliary subspace by
$
\mathcal{H}_{\mathrm{val}}
=
\mathcal{H}_{W,C}
\otimes
\operatorname{span}\{\Ket{\Omega}_A\},
$
and its orthogonal complement by 
$
\mathcal{H}_{\mathrm{inv}}
=
\mathcal{H}_{\mathrm{val}}^\perp.
$ 
Let 
$
V_{\mathrm{val}}
:
\mathcal{H}_{W,C}
\longrightarrow
\mathcal{H}_{\mathrm{val}}
$ 
be the isometry 
$
V_{\mathrm{val}}\Ket{\psi}
=
\Ket{\psi}_{W,C}\otimes\Ket{\Omega}_A.
$

\paragraph{Final physical system size.}
Let $n_{\mathrm{act}}$ be the number of work, clock, and auxiliary
qubits in the active construction.  Choose a polynomial $q$ such that
\[
\epsilon_{\mathrm{yes}}-\epsilon_{\mathrm{no}}
\geq
\frac{1}{q(|x|)},
\]
and assume without loss of generality that
$q(|x|)\geq |x|+1$.

Set
\[
n
=
\max\left\{
n_{\mathrm{act}},
T+1,
\left\lceil q(|x|)\right\rceil
\right\}.
\]
We add $n-n_{\mathrm{act}}$ isolated physical qubits, so that $n$ is
exactly the final system size used to select $h^{(n)}$.

These isolated qubits change only spectral multiplicities: they leave
the set of eigenvalues, the ground-state energy, interaction range, and
vertex degree unchanged.

Having fixed the final physical size, set
\[
\delta=\delta_n=n^{-8}.
\]

The total Hamiltonian is
\begin{equation}
\label{eq:stoq-dsih-global}
\widetilde H^{\mathrm{stoq}_g}
=
\widetilde H^{\mathrm{stoq}_g}_{\mathrm{in}}
+
\widetilde H^{\mathrm{stoq}_g}_{\mathrm{out}}
+
\widetilde H^{\mathrm{stoq}_g}_{\mathrm{prop}}
+
\widetilde H^{\mathrm{stoq}_g}_{\mathrm{clock}}.
\end{equation}

Each source term is implemented by four applications of $h^{(n)}$.
The appropriate work and clock qubits are routed into its computational
positions, while the local auxiliary block selects the intended branch
and cycles all four auxiliary pairs through the check positions.
The explicit ordered $9$-tuples are given in
\cref{app:stoq-dsih-routing}.

\begin{lemma}[Valid and invalid auxiliary subspaces]
\label{lem:stoq-dsih-auxiliary-subspaces}
The restriction of $\widetilde H^{\mathrm{stoq}_g}$ to the valid
auxiliary subspace is exactly the source Hamiltonian:
\begin{equation}
\label{eq:stoq-dsih-valid-restriction}
V_{\mathrm{val}}^\dagger
\widetilde H^{\mathrm{stoq}_g}
V_{\mathrm{val}}
=
H^{\mathrm{stoq}_g}.
\end{equation}
Moreover,

\[
\left.
\widetilde H^{\mathrm{stoq}_g}
\right|_{\mathcal H_{\mathrm{inv}}}
\succeq
I_{\mathcal H_{\mathrm{inv}}}.
\]

\end{lemma}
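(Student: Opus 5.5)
The proof will follow \cref{lem:geometric-sih-auxiliary-subspaces} closely, with two parts: an exact computation on the valid sector and a domination argument on the invalid sector.

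\textbf{Valid sector.} First I note that every operator acting on auxiliary qubits is diagonal in the computational basis. Auxiliary qubits only enter through the selector projectors $\Pi^{\Ket{0}},\Pi^{\Ket{1}}$ and through $h_{\mathrm{check}}$. The non-diagonal $\Pi^{\Ket{-}}$ lands only on work qubits in $I_+$, by the routing of \cref{app:stoq-dsih-routing}. Hence $\mathcal H_{\mathrm{val}}$ is invariant, and I can compute $V_{\mathrm{val}}^\dagger\widetilde H^{\mathrm{stoq}_g}V_{\mathrm{val}}$ term by term.

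For each source term I take its four applications of $h^{(n)}$. On $\Ket{\Omega}_A$ the check part vanishes, since each pair is in $\Ket{01}$. The selector qubits (even auxiliary qubits in $\Ket{0}$, odd in $\Ket{1}$) sit in positions $v_4,\dots,v_7$, and I check that they annihilate every branch except the intended one.
\begin{itemize}
\item Toffoli propagation: the pattern $(0,1,1)$ on $v_5v_6v_7$.
\item SWAP propagation: the pattern $(1,1,0)$.
\item Identity propagation: $\Pi^{\Ket{1}}_{v_4}$ with $(0,0,0)$.
\item Penalty branch: $(0,1,0)$.
\end{itemize}
The branch patterns are pairwise distinguished, so some fixed selector value kills each unwanted branch. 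Any remaining projector slots are occupied by clock qubits and become the neighbouring clock projectors $\Pi^{\Ket{1}}_{C_{t-1}}$ and $\Pi^{\Ket{0}}_{C_{t+1}}$.

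Within the penalty branch, the sub-branches are separated by the projectors on $v_1,\dots,v_4$, which are filled by work, clock, and auxiliary qubits. This yields, respectively, the clock or computational-basis initialization term, the $\Ket{+}$-initialization term, and $\delta_n\,\Pi^{\Ket{-}}_{W_1}\otimes\Pi^{\Ket{1}}_{C_T}$. Four copies of $\tfrac14$ of each target operator sum to coefficient one, which gives \cref{eq:stoq-dsih-valid-restriction}.

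\textbf{Invalid sector.} I define
\[
H_A=\sum_{t}\sum_{r=0}^{3}\left(I-\Pi^{\Ket{0}}_{A^{(t)}_{2r}}\otimes\Pi^{\Ket{1}}_{A^{(t)}_{2r+1}}\right).
\]
By \cref{eq:stoq-dsih9-check-domination}, each application of $h^{(n)}$ dominates its own check term. Summing over all applications, and using that the four propagation applications at time $t$ cycle every pair of $A^{(t)}$ through the check positions, I get $\widetilde H^{\mathrm{stoq}_g}\succeq H_A$. Extra checks from other applications are positive semidefinite and can only increase the bound.

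$H_A$ is a sum of commuting diagonal projectors whose common kernel is exactly $\mathcal H_{\mathrm{val}}$, so $H_A|_{\mathcal H_{\mathrm{inv}}}\succeq I$. The invalid subspace is also invariant, being a direct sum of diagonal auxiliary sectors. Restricting the operator inequality to $\mathcal H_{\mathrm{inv}}$ therefore gives the claim.

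\textbf{Main obstacle.} The delicate step is the valid-sector branch-selection bookkeeping. I must check that in every routed $9$-tuple, and especially in the penalty applications and the endpoint propagation terms, the projectors filled by work and clock qubits reproduce the source terms exactly and leave no spurious leftover branch. I also need $\Pi^{\Ket{-}}$ never to act on an auxiliary qubit, since that would break the sector decomposition. I would verify this case by case against the explicit tuples of \cref{app:stoq-dsih-routing}, checking that each unwanted branch contains a selector projector orthogonal to the routed auxiliary value.
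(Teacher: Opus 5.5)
Your proof is the paper's: on $\Ket{\Omega}_A$ the checks vanish and the selector projectors leave exactly one branch, which contributes $\tfrac14$ of the target term in each of four applications. On the invalid side you use $\widetilde H^{\mathrm{stoq}_g}\succeq H_A$ via \cref{eq:stoq-dsih9-check-domination}, with the four propagation applications at each $t$ cycling every pair of $A^{(t)}$ through the check slots, and then the commuting-projector kernel argument. The two displayed claims follow as you say.

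One supporting claim is false. Auxiliary qubits do \emph{not} enter only through $\Pi^{\Ket{0}},\Pi^{\Ket{1}}$ and $h_{\mathrm{check}}$. In the tuples of \cref{app:stoq-dsih-routing}, auxiliary qubits also sit in slots $v_1,v_3,v_4$, where the \emph{killed} branches act off-diagonally. Two examples:
\begin{itemize}
\item In the clock-penalty tuple, the SWAP branch applies $\mathrm{SWAP}_{A^{(t)}_{2j},C_t}$, and the Toffoli branch applies the transition $\Ket{1}\Bra{0}$ to $A^{(t)}_{2j+1}$.
\item In the identity-propagation tuple, the SWAP branch applies the transition to $A^{(t)}_{2j+1}$.
\end{itemize}
So the individual computational-basis auxiliary sectors are not invariant. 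Your justifications ``$\mathcal H_{\mathrm{val}}$ is invariant because all auxiliary operators are diagonal'' and ``$\mathcal H_{\mathrm{inv}}$ is a direct sum of invariant diagonal sectors'' are therefore wrong. Checking that $\Pi^{\Ket{-}}$ never lands on an auxiliary qubit, which is true, does not suffice.

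The repair is easy, and the case check you propose would find it. Every killed branch also carries a selector projector on a \emph{different} auxiliary qubit, orthogonal to that qubit's valid value, so it annihilates $\mathcal H_{\mathrm{val}}$. The surviving branch acts on auxiliary qubits only through $I$, $\Pi^{\Ket{0}}$, or $\Pi^{\Ket{1}}$. Hence $\mathcal H_{\mathrm{val}}$ is invariant, and $\mathcal H_{\mathrm{inv}}$ is invariant by Hermiticity.

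For the lemma as stated you do not need invariance at all. Both claims are compressions, and compressing $\widetilde H^{\mathrm{stoq}_g}\succeq H_A$ to $\mathcal H_{\mathrm{inv}}$ is valid for any subspace; this is all the paper's proof uses. Invariance matters only later, when the theorem uses the valid/invalid block decomposition.

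Minor: the output term is $\delta_n\Pi^{\Ket{-}}_{W_L}\otimes\Pi^{\Ket{1}}_{C_T}$, not on $W_1$.
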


\begin{proof}

On the valid auxiliary state $\Ket{\Omega}_A$, the routing construction
described above selects the intended branch of $h^{(n)}$ for every
initialization, output, clock, and propagation term, while the remaining
branches are annihilated by their selector projectors. For each target
term, the four associated applications contribute four copies of one
quarter of the desired operator, and all $h_{\mathrm{check}}$
contributions vanish on the valid $\Ket{01}$ auxiliary pairs. Hence
\[
V_{\mathrm{val}}^\dagger
\widetilde H^{\mathrm{stoq}_g}
V_{\mathrm{val}}
=
H^{\mathrm{stoq}_g},
\]
which proves \cref{eq:stoq-dsih-valid-restriction}.

For the invalid subspace, let $H_A$ denote the sum of all
$h_{\mathrm{check}}$ contributions appearing in the construction.
Since the non-check part of every application of $h^{(n)}$ is positive
semidefinite,

\[
\widetilde H^{\mathrm{stoq}_g}
\succeq
H_A.
\]

For each time-step block $A^{(t)}$, the construction includes four
applications whose check positions cycle through all four auxiliary
pairs.  Hence

\[
H_A
\succeq
\sum_{t=1}^{T}
\sum_{r=0}^{3}
h_{\mathrm{check}}
\left(
A^{(t)}_{2r},
A^{(t)}_{2r+1}
\right).
\]

The operators on the right are commuting projectors, each having kernel
$\operatorname{span}\{\Ket{01}\}$. Their joint kernel is therefore
exactly

\[
\mathcal H_{W,C}
\otimes
\operatorname{span}\{\Ket{\Omega}_A\}
=
\mathcal H_{\mathrm{val}},
\]

and their sum has eigenvalue at least $1$ on
$\mathcal H_{\mathrm{inv}}$. Thus

\[
\left.
\widetilde H^{\mathrm{stoq}_g}
\right|_{\mathcal H_{\mathrm{inv}}}
\succeq
I_{\mathcal H_{\mathrm{inv}}},
\]

as claimed.
\end{proof}

\paragraph{Geometric locality and tuple validity.}
Each constant-size block $A^{(t)}$ is placed within constant distance
of the work and clock qubits used at time $t$.  Hence every application
of $h^{(n)}$ has constant geometric diameter.  Each source term is
replaced by only four applications, and every auxiliary block
participates in only $O(1)$ applications, so bounded degree is also
preserved.

The explicit routing in \cref{app:stoq-dsih-routing} verifies that the
ordered $9$-tuples are pairwise distinct and contain nine distinct
physical qubits.

\subsection{Correctness and StoqMA-Completeness}
\label{sec:stoq-dsih-com-sound}

\begin{proof}[Proof of \cref{thm:StoqMA-complete}]
By \cref{lem:stoq-dsih-auxiliary-subspaces}, the valid auxiliary subspace
$\mathcal H_{\mathrm{val}}$ is invariant under
$\widetilde H^{\mathrm{stoq}_g}$ and
\begin{equation}
\label{eq:stoq-dsih-valid-spectrum}
V_{\mathrm{val}}^\dagger
\widetilde H^{\mathrm{stoq}_g}
V_{\mathrm{val}}
=
H^{\mathrm{stoq}_g}.
\end{equation}
Moreover,

\[
\left.
\widetilde H^{\mathrm{stoq}_g}
\right|_{\mathcal H_{\mathrm{inv}}}
\succeq
I_{\mathcal H_{\mathrm{inv}}}.
\]

Because every invalid auxiliary sector has energy at least $1$, while
the source energies considered below are inverse polynomial, the
ground-energy analysis reduces to the valid sector.

\paragraph{Completeness.}
Suppose $x\in L_{\mathrm{yes}}$. By the completeness analysis of
\cref{subsubsec: Stoq Geometrically Local}, there exists a witness and
corresponding history state $\Ket{\eta}$ satisfying

\[
\Bra{\eta}
H^{\mathrm{stoq}_g}
\Ket{\eta}
\leq
\frac{\delta_n}{T+1}
\left(
1-\epsilon_{\mathrm{yes}}
\right).
\]

Therefore
$V_{\mathrm{val}}\Ket{\eta}$ has the same energy under
$\widetilde H^{\mathrm{stoq}_g}$. We choose the YES threshold
\begin{equation}
\label{eq:stoq-dsih-yes-threshold}
a
=
\frac{\delta_n}{T+1}
\left(
1-\epsilon_{\mathrm{yes}}
\right).
\end{equation}

\paragraph{Soundness.}
Suppose $x\in L_{\mathrm{no}}$.  Write
\[
H^{\mathrm{stoq}_g}
=
H_0+\delta_n H^{\mathrm{stoq}_g}_{\mathrm{out}},
\]
where $H_0$ contains the initialization, propagation, and clock terms.
By \cref{eq:geo-stoq-unperturbed-gap}, its spectral gap satisfies
\[
\Gamma_T=\Omega(T^{-3}).
\]

Since $T\leq n$ and
$\Gamma_T=\Omega(T^{-3})$, we have
\[
\Gamma_T=\Omega(n^{-3}).
\]
Thus the choice $\delta_n=n^{-8}$ gives
\[
\frac{\delta_n}{\Gamma_T}=O(n^{-5}),
\]
so the perturbative regime required by the Projection Lemma holds for
all sufficiently large $n$.

Since $H^{\mathrm{stoq}_g}_{\mathrm{out}}$ is a projector, the
Projection Lemma~\cite[Lemma~1]{Kempe2004} applies for all sufficiently
large $n$, yielding
\[
\lambda_{\min}
\left(
H^{\mathrm{stoq}_g}
\right)
\geq
\frac{\delta_n}{T+1}
\left(1-\epsilon_{\mathrm{no}}\right)
-
\frac{\delta_n^2}{\Gamma_T-2\delta_n}.
\]
Since $\delta_n=n^{-8}$, $T\leq n$, and
$\Gamma_T=\Omega(T^{-3})$,
\[
\frac{\delta_n^2}{\Gamma_T-2\delta_n}
=
O(n^{-13}).
\]
Therefore, for all sufficiently large $n$,
\[
\lambda_{\min}
\left(
H^{\mathrm{stoq}_g}
\right)
\geq
\frac{\delta_n}{T+1}
\left(1-\epsilon_{\mathrm{no}}\right)
-
n^{-12}.
\]
We consequently choose
\[
b
=
\frac{\delta_n}{T+1}
\left(1-\epsilon_{\mathrm{no}}\right)
-
n^{-12}.
\]

\paragraph{Promise gap.}
It remains to verify that $b-a$ is inverse polynomial in the final
physical system size $n$. By the definition of $n$,

\[
T+1\leq n,
\qquad
q(|x|)\leq n,
\]

where

\[
\epsilon_{\mathrm{yes}}
-
\epsilon_{\mathrm{no}}
\geq
\frac{1}{q(|x|)}.
\]

Therefore
\begin{align}
b-a
&=
\frac{\delta_n}{T+1}
\left(
\epsilon_{\mathrm{yes}}
-
\epsilon_{\mathrm{no}}
\right)
-
n^{-12}
\nonumber\\
&\geq
n^{-8}
\cdot
\frac{1}{n}
\cdot
\frac{1}{n}
-
n^{-12}
\nonumber\\
&=
n^{-10}-n^{-12}
\nonumber\\
&\geq
\frac{1}{2}n^{-10}
\label{eq:stoq-dsih-promise-gap}
\end{align}
for all sufficiently large $n$. The finitely many smaller sizes do not
affect the polynomial-time reduction.

The construction has polynomial size, and the interaction family
$\{h^{(n)}\}_{n\geq1}$ is uniformly computable, stoquastic, and
depends only on the final physical system size.  By the geometric
analysis of \cref{subsec:stoq-dsih-spatial-embedding} and the explicit
routing of \cref{app:stoq-dsih-routing}, the output hypergraph has
bounded interaction range and bounded degree, and all ordered
$9$-tuples are valid and pairwise distinct.

Thus the problem is StoqMA-hard.  Containment in StoqMA follows from
\cref{lem:stoquastic-containment}, proving StoqMA-completeness.
\end{proof}

\section{Further Consequences of Single-Interaction Compilation}
\label{sec:further-consequences}

This section develops several consequences of the
single-interaction compilation framework beyond the basic hardness
results established above. We first show that the Hamiltonian quantum
PCP conjecture admits an equivalent single-interaction normal form.
We then consider three additional promise settings: frustration-free,
exponentially precise, and guided Hamiltonians.

Together, these results show that the single-interaction restriction
is compatible not only with standard Local Hamiltonian hardness,
but also with several qualitatively different forms of low-energy
structure and promise-gap behavior.

\subsection{A Single-Interaction Normal Form for Hamiltonian qPCP}
\label{subsec:qpcp-normal-form}

\begin{cor}[Single-interaction normal form for constant-relative-gap Local Hamiltonian]
\label{cor:sih-qpcp-normal-form}
Fix $k=O(1)$. Suppose that there exists a constant $c>0$ such that
$k$-Local Hamiltonian remains QMA-hard when restricted to instances
\[
H=\sum_{i=1}^{M} H_i,
\qquad
0\preceq H_i\preceq I,
\]
whose promise gap satisfies
\[
b-a\ge cM.
\]
Then there exist a constant $c'>0$ and a fixed $(3k+3)$-local
positive-semidefinite interaction $h_k$ satisfying
\[
0\preceq h_k\preceq I
\]
such that the corresponding $\mathrm{SIH}_{3k+3}$ problem remains
QMA-hard when restricted to instances with $\widetilde M$
interaction terms and promise gap
\[
\widetilde b-\widetilde a\ge c'\widetilde M.
\]

Consequently, the Hamiltonian quantum PCP conjecture holds if and only
if it holds under the additional restriction that every local term is
an unweighted copy of one fixed interaction.
\end{cor}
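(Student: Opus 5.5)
The plan is to run the universal reduction of \cref{thm:universal-single-interaction-reduction} with an accuracy parameter that is \emph{linear} in $M$ rather than inverse polynomial. This keeps the rounding scale $Q$ a constant, so both the number of compiled terms and the promise gap scale linearly in $M$. A final shift and rescaling of the compiled interaction then puts it in the window $0\preceq h_k\preceq I$.

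\textbf{Step 1: normalization with constant $Q$.} Start from a hard instance $(H,a,b)$ with $0\preceq H_i\preceq I$ and $b-a\ge cM$. Apply \cref{lem:universal-finite-alphabet-normalization} with $B=1$ and $\varepsilon=cM/4$. That lemma imposes no smallness requirement on $\varepsilon$. This choice gives
\[
Q=\max\{1,\lceil 4^k\cdot 2/c\rceil\}=O(1),
\]
depending only on $k$ and $c$. The occurrence count is then bounded by
\[
R\le M4^k(Q+\tfrac12)=O(M).
\]

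\textbf{Step 2: exact compilation.} Compile with the unpinned compiler, taking $J=1$, $\gamma=1$ and $\Delta=3$ as in the proof of \cref{thm:universal-single-interaction-reduction}. This yields $\widetilde M=(2k+2)R=O(M)$ applications of a fixed interaction $\widetilde h$ with $\lambda_{\min}(\widetilde H)=\lambda_{\min}(K)$. By Weyl's inequality, the thresholds $\widetilde a=Q(a+\varepsilon)$ and $\widetilde b=Q(b-\varepsilon)$ work, and they give
\[
\widetilde b-\widetilde a=Q(b-a-2\varepsilon)\ge QcM/2\ge cM/2.
\]
Since $\widetilde M\le C_{k,c}M$ for a constant $C_{k,c}$, this is at least $c''\widetilde M$ with $c''=c/(2C_{k,c})$.

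\textbf{Step 3: making the interaction positive and bounded.} From \cref{eq:unpinned-single-interaction}, with Pauli alphabet elements of norm one, the spectrum of $\widetilde h$ lies in $[-s,\,s+3]$ with $s=1/(\ell+1)$. Set
\[
h_k=\frac{\widetilde h+sI}{3+2s}.
\]
Then $0\preceq h_k\preceq I$, and $h_k$ is fixed and depends only on $k$. The new Hamiltonian is
\[
\frac{\widetilde H+s\widetilde M I}{3+2s},
\]
so the thresholds shift by the known amount $s\widetilde M$ and are divided by the constant $3+2s$. The promise gap therefore remains at least $c'\widetilde M$ with $c'=c''/(3+2s)$. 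Tuple distinctness and polynomial size are inherited from \cref{thm:unpinned-spectral-preservation}.

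\textbf{Step 4: the equivalence.} The ``if'' direction is immediate: every such $\mathrm{SIH}_{3k+3}$ instance is itself a $(3k+3)$-local instance with terms $0\preceq h_e\preceq I$ and a constant-relative gap, so QMA-hardness of the restricted problem is an instance of the qPCP conjecture at locality $3k+3$. The ``only if'' direction is the construction above.

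\textbf{Expected main obstacle.} The delicate point is Step 1. The normalization must be applied outside the inverse-polynomial regime in which it was stated for promise preservation, and one must check that $Q$ is genuinely independent of $n$. This makes the multiplicity blow-up linear in $M$ rather than polynomial, which is exactly what keeps the ratio $(\widetilde b-\widetilde a)/\widetilde M$ bounded below by a constant.
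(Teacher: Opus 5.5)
Your proposal is correct and follows essentially the paper's proof. Both run the universal reduction with an accuracy linear in $M$ so that $Q=O(1)$, bound $\widetilde M=(2k+2)R\le(2k+2)M4^k(Q+\tfrac12)=O(M)$, shift and rescale the compiled interaction into $0\preceq h_k\preceq I$, and treat the converse as immediate. The differences are cosmetic: the paper takes $\varepsilon=(b-a)/4$, so its $Q$ depends on the instance but is bounded by $\max\{1,\lceil 2\cdot4^k/c\rceil\}$, and it normalizes using $\lambda_{\min}(h_k)$ and the spectral width rather than your explicit window $[-s,s+3]$.
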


\begin{proof}
Let
\[
H=\sum_{i=1}^{M} H_i
\]
be an instance of the assumed constant-relative-gap $k$-Local
Hamiltonian problem, with
\[
0\preceq H_i\preceq I
\]
and
\[
b-a\ge cM.
\]

Apply \cref{thm:universal-single-interaction-reduction} with
\[
\varepsilon=\frac{b-a}{4},
\]
and let $\widetilde H$ denote the resulting
$\mathrm{SIH}_{3k+3}$ Hamiltonian, with fixed common interaction $h_k$.
Writing
\[
g:=\frac{b-a}{M},
\]
the assumption gives $g\geq c$. Moreover,
\[
Q
=
\max\left\{
1,
\left\lceil
\frac{M4^k}{2\varepsilon}
\right\rceil
\right\}
=
\max\left\{
1,
\left\lceil
\frac{2\cdot4^k}{g}
\right\rceil
\right\},
\]
and hence
\[
Q
\le
\max\left\{
1,
\left\lceil
\frac{2\cdot4^k}{c}
\right\rceil
\right\}
=
O(1).
\]

The output promise gap is
\[
\widetilde b-\widetilde a
=
Q\bigl((b-a)-2\varepsilon\bigr)
=
\frac{Q(b-a)}{2}.
\]

Let $R$ denote the number of signed-Pauli occurrences produced by the
normalization. Since $0\preceq H_i\preceq I$,
\[
R
\le
M4^k\left(Q+\frac12\right).
\]
The unpinned compiler uses $2k+2$ applications of the common
interaction per occurrence, so
\[
\widetilde M
=
(2k+2)R
\le
(2k+2)M4^k\left(Q+\frac12\right).
\]
Therefore the output fractional promise gap satisfies
\[
\begin{aligned}
\widetilde g
:=
\frac{\widetilde b-\widetilde a}{\widetilde M}
&\ge
\frac{Q}{2(2k+2)4^k(Q+\frac12)}
\frac{b-a}{M} \\
&\ge
\frac{g}{3(2k+2)4^k} \\
&\ge
\frac{c}{3(2k+2)4^k}
=:c'>0.
\end{aligned}
\]
Hence a constant fractional promise gap is preserved up to a fixed
constant factor.

The converse implication is immediate, since a fixed-interaction SIH
instance is a special case of a Local Hamiltonian instance of the same
locality. Hence the Hamiltonian quantum PCP conjecture is equivalent to
its fixed unweighted single-interaction restriction.

It remains only to put the fixed interaction into the
positive-semidefinite normalization stated in the corollary.  Let
\[
\lambda_-:=\lambda_{\min}(h_k),
\qquad
D:=\max\left\{
1,
\lambda_{\max}(h_k)-\lambda_{\min}(h_k)
\right\},
\]
and define
\[
h_k'
=
\frac{h_k-\lambda_- I}{D}.
\]
Then
\[
0\preceq h_k'\preceq I.
\]

If $\widetilde H$ contains $\widetilde M$ occurrences of $h_k$, replacing
each occurrence by $h_k'$ gives
\[
\widetilde H'
=
\frac{
\widetilde H-\widetilde M\lambda_- I
}{D}.
\]
Accordingly, the thresholds transform as
\[
\widetilde a'
=
\frac{
\widetilde a-\widetilde M\lambda_-
}{D},
\qquad
\widetilde b'
=
\frac{
\widetilde b-\widetilde M\lambda_-
}{D},
\]
and hence
\[
\frac{\widetilde b'-\widetilde a'}{\widetilde M}
=
\frac{1}{D}
\frac{\widetilde b-\widetilde a}{\widetilde M}.
\]
Since $D$ is a fixed constant depending only on the fixed interaction,
the output fractional promise gap remains bounded below by a positive
constant. Renaming $h_k'$ as $h_k$ gives the claimed normalization
\[
0\preceq h_k\preceq I.
\]

\end{proof}

\begin{rem}
The reduction preserves a constant fractional promise gap up to a
constant factor, but it does not provide a general fractional-gap
amplification guarantee.  In particular, the preceding calculation
controls the output gap relative to the number of compiled interaction
occurrences and shows that a constant fractional gap remains constant.
Thus the result should be viewed as a single-interaction normal form
for the Hamiltonian quantum PCP conjecture, rather than as a quantum
gap-amplification procedure.
\end{rem}

\subsection{Frustration-Free Single-Interaction Hamiltonians}
\label{subsec:ff-sih}

\begin{thm}[Frustration-free SIH]
\label{thm:ff-sih9}
There exists a fixed $9$-local positive-semidefinite interaction
$h_{\mathrm{ff}}^{(9)}$ such that
$\mathrm{FF\text{-}SIH}_{9}$ is QMA-complete.
\end{thm}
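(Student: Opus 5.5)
The plan is to combine perfect-completeness verification (\cref{thm:qma-perfect-completeness}) with a Kitaev-type circuit-to-Hamiltonian construction built only from positive-semidefinite projector terms, and then to apply the exact unpinned compiler of \cref{thm:unpinned-spectral-preservation} directly to the resulting \emph{finite} alphabet, skipping the approximate normalization step. That step would destroy the exact zero-energy promise.

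\textbf{Step 1 (source).} Take a $\mathrm{QMA}_1$ verifier over $\{H,X,\mathrm{Toffoli}\}$. Since $X$ equals a Toffoli gate with two $\Ket{1}$ controls, and $\Ket{1}$ ancillas are allowed by \cref{def:QMA}, we may reduce to $\{H,\mathrm{Toffoli}\}$.

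Build $H=H_{\mathrm{in}}+H_{\mathrm{out}}+H_{\mathrm{prop}}+H_{\mathrm{clock}}$ as in \cref{subsubsec: Circuit-to-Hamiltonian}, with the following choices:
\begin{itemize}
\item use unit coefficients;
\item use the unary clock;
\item replace the factor $\frac12$ in the propagation terms by an operator that is still a projector, so that every term is PSD.
\end{itemize}

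\textbf{Step 2 (YES and NO cases).} In the YES case the witness is accepted with probability exactly one. The history state then annihilates every term, including $H_{\mathrm{out}}$, so $\lambda_{\min}(H)=0$.

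In the NO case, the soundness argument of \cref{lemma: kitaev soundness} with acceptance at most $s$ gives
\[
\lambda_{\min}(H)\ge \Omega\!\left(\frac{1-s}{T^3}\right),
\]
which is inverse polynomial. We take this as $b$.

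All local matrices come from a fixed finite alphabet $\mathcal A$, made up of Toffoli- and $H$-propagation terms (interior and endpoint versions), the $\Ket{0}$- and $\Ket{1}$-initialization projectors, the output projector, and the clock projector. Each is PSD. Terms acting on fewer qubits are padded with identities on spare qubits so that everything has a common locality $k$.

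\textbf{Step 3 (compilation).} Apply \cref{thm:unpinned-spectral-preservation} with $\gamma=1$. Every selector projector and $h_{\mathrm{check}}$ is PSD, and the alphabet is PSD, so the compiled interaction is PSD.

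By items 3 and 5 of that theorem, the ground energy is preserved exactly. Hence $\lambda_{\min}(\widetilde K)=0$ in the YES case, while every invalid auxiliary sector sits at energy at least $\lambda_{\min}(K)+1$. In the NO case $\lambda_{\min}(\widetilde K)\ge\min\{b,1\}=b$.

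This gives a valid $\mathrm{FF\text{-}SIH}$ instance with $a=0$. Containment is \cref{lem:ff-sih-containment}.

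\textbf{Main obstacle: the locality count.} We need $k+\lceil\log_2 m\rceil+2\le 9$.
\begin{itemize}
\item An interior Toffoli propagation term is already $6$-local: $3$ work qubits plus $3$ clock qubits. Taking $k=6$ then forces $m\le 2$.
\item The naive alphabet above has many more than two elements.
\end{itemize}
I would first try to collapse the alphabet to two $6$-local PSD matrices.
\begin{itemize}
\item The first matrix is Toffoli propagation.
\item The second matrix merges everything else: $H$-propagation (which is only $4$-local, leaving two free slots) together with all diagonal penalty projectors. These branches would be distinguished by computational-basis projectors on those free slots, fed by fixed auxiliary qubits in $\Ket{0}$ or $\Ket{1}$. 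This is the same shared-selector trick used for the fourth branch of \cref{eq:fixed-h-def}.
\end{itemize}
The point to verify is that, on the valid auxiliary configuration, each merged branch reproduces exactly one intended source term. The extra fixed qubits used as internal selectors must themselves be enforced energetically, for example by routing them through check pairs whose valid state is $\Ket{01}$. Here I would reuse the even/odd auxiliary convention of the $\mathrm{SIH}_8$ construction, so that those qubits are already covered by $h_{\mathrm{check}}$.

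If this packing fails, the fallback is to bring propagation down to $5$-local by exact gate-set tricks, which would then allow $m\le 4$.

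Finally, I would check that dedicated auxiliary blocks keep all ordered tuples distinct, and that thresholds and sizes are polynomial.
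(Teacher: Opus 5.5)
Your outline matches the paper's: a Grewal--Rudolph perfectly complete verifier, Kitaev's unary-clock Hamiltonian with PSD terms, bypassing \cref{lem:universal-finite-alphabet-normalization}, and branch sharing to reach locality $9$. (The propagation terms with the factor $\frac12$ are already projectors, so nothing there needs replacing.) The gap is that the step carrying the whole locality claim is left as an unverified plan, and your Step~3 justification does not survive that plan. \Cref{thm:unpinned-spectral-preservation} is proved for an alphabet acting only on working qubits, with auxiliary qubits entering only through the diagonal selector and check operators. For $m=2$ each occurrence gets just two pairs, so every application has a single spare auxiliary qubit whose state is forced by the selector. That qubit cannot select among the branches of your merged $g_2$, nor fill the unused projector slots of a penalty branch or the missing clock projector of an endpoint propagation term. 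Once you draw those internal selectors and fillers from check-covered auxiliary pairs, as you propose, the alphabet acts on auxiliary qubits. The block size and the prefactor $1/(\ell+1)$ must then change, and items~3 and~5 of that theorem can no longer be cited.

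The paper closes this by skipping the generic compiler altogether. It reuses the branch-shared interaction \cref{eq:stoq-dsih9-h-def}, which has seven non-check slots and a check pair. The non-check slots carry Toffoli, $H$, and $X$ propagation branches with the selector patterns of \cref{eq:stoq-dsih9-prop-def}, plus one fixed product-projector penalty branch $h^{\mathrm{ff}}_{\mathrm{pen}}$ with pattern $(v_5,v_6,v_7)=(0,1,0)$, routed to produce every initialization, output, and clock term. Each source occurrence gets a fresh eight-qubit block with four cyclic applications at weight $\frac14$.

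The invalid-sector bound then comes from positivity rather than from \cref{thm:unpinned-spectral-preservation}. Every branch is PSD, so $h\succeq I\otimes h_{\mathrm{check}}$, and every auxiliary pair is checked once. Every unwanted branch contains a projector on an auxiliary qubit that vanishes on that qubit's valid value, so the valid subspace is invariant and its complement has energy at least $1$.

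Your two-matrix packing can be completed along exactly these lines with four-pair blocks. Your replacement of $X$ by a Toffoli with two $\Ket{1}$ controls is a harmless simplification. But you need to write down the branch patterns explicitly and replace the appeal to the compiler theorem by this domination argument; the $5$-local fallback is then unnecessary.
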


\begin{proof}

Containment in QMA follows from
\cref{lem:ff-sih-containment}.

For QMA-hardness, let $L\in\mathrm{QMA}$. By
\cref{thm:qma-perfect-completeness}, there is a polynomial-size
perfectly complete verifier over the fixed gate set
\[
\mathcal G=\{H,X,\mathrm{Toffoli}\}.
\]
Apply the unary-clock circuit-to-Hamiltonian construction of
\cref{subsubsec: Circuit-to-Hamiltonian}, and write
\[
K
=
H_{\mathrm{in}}
+
H_{\mathrm{out}}
+
H_{\mathrm{prop}}
+
H_{\mathrm{clock}}.
\]
Since the largest verifier gate is Toffoli, $K$ is $6$-local, and every
local term of $K$ is positive semidefinite.

Perfect completeness gives
\[
x\in L_{\mathrm{yes}}
\quad\Longrightarrow\quad
\lambda_{\min}(K)=0,
\]
whereas for $x\in L_{\mathrm{no}}$,
\cref{lemma: kitaev soundness} gives
\[
\lambda_{\min}(K)
\ge
\frac{c_0}{T^3}\left(1-\sqrt{s}\right).
\]
Since the perfectly complete verifier has
\[
1-s\ge\frac{1}{\operatorname{poly}(|x|)}
\]
and
\[
1-\sqrt{s}
=
\frac{1-s}{1+\sqrt{s}}
\ge
\frac{1-s}{2},
\]
it follows that
\[
\lambda_{\min}(K)
\ge
\frac{1}{\operatorname{poly}(|x|)}.
\]

Because frustration freeness is not stable under approximation, we
do not invoke the universal finite-alphabet normalization.  Instead,
we encode the finite set of local term types appearing in $K$
directly into a fixed $9$-local interaction using the branch-sharing
construction of \cref{sec:stoq-n-dependent-hardness}.

Starting from the interaction structure of
\cref{eq:stoq-dsih9-h-def}, we retain the same auxiliary check
$h_{\mathrm{check}}$ and selector patterns, but replace the three
propagation branches
\[
\mathrm{Toffoli},\ \mathrm{SWAP},\ I
\]
by
\[
\mathrm{Toffoli},\ H,\ X,
\]
respectively.

More precisely, the first propagation branch of
\cref{eq:stoq-dsih9-prop-def} is unchanged. In the second branch,
we replace $h^{\mathrm{stoq}}_{\mathrm{prop2}}$ by the standard
circuit-to-Hamiltonian propagation operator for $H$, tensored with the
identity on the unused work position; in the third branch, we replace
$h^{\mathrm{stoq}}_{\mathrm{prop3}}$ by the corresponding propagation
operator for $X$.  Finally, we replace
$h^{(n)}_{\mathrm{pen}}$ by the fixed projector branch
\[
h^{\mathrm{ff}}_{\mathrm{pen}}(v_1,\ldots,v_7)
=
\Pi^{\Ket{0}}_{v_1}
\otimes
\Pi^{\Ket{0}}_{v_2}
\otimes
\Pi^{\Ket{1}}_{v_3}
\otimes
\Pi^{\Ket{1}}_{v_4}
\otimes
\Pi^{\Ket{0}}_{v_5}
\otimes
\Pi^{\Ket{1}}_{v_6}
\otimes
\Pi^{\Ket{0}}_{v_7},
\]
which is routed to reproduce the initialization, output, and clock
penalties of $K$.

The three propagation branches are distinguished by the same
auxiliary patterns as in \cref{sec:stoq-n-dependent-hardness}, while
the penalty branch has selector pattern
\[
(v_5,v_6,v_7)=(0,1,0),
\]
which annihilates all three propagation branches.  Hence the same
four-application auxiliary-block construction reproduces every local
term of $K$ exactly on the valid auxiliary sector.

Assign each local-term occurrence of $K$ its own fresh auxiliary block
$A^{(i)}=(A^{(i)}_0,\ldots,A^{(i)}_7)$ with valid state
$\Ket{01}^{\otimes 4}$.
The four cyclic applications associated with one source occurrence are distinguished by 
the auxiliary pair occupying the two check positions, while distinct
source occurrences use disjoint auxiliary blocks. Hence all compiled 
ordered $9$-tuples are pairwise distinct, and every tuple contains nine 
distinct physical qubits. Since $K$ contains only
$\operatorname{poly}(|x|)$ local terms, both the number of auxiliary
qubits and the number of interaction occurrences remain polynomial.

Each circuit-to-Hamiltonian propagation operator is positive
semidefinite, as are $h^{\mathrm{ff}}_{\mathrm{pen}}$ and
$h_{\mathrm{check}}$. Hence
\[
h_{\mathrm{ff}}^{(9)}\succeq0.
\]
Moreover, the domination argument of
\cref{eq:stoq-dsih9-check-domination} applies unchanged: on the valid
auxiliary sector the compiled Hamiltonian is exactly $K$, whereas every
invalid auxiliary sector has energy at least $1$.

Unlike the construction of \cref{sec:stoq-n-dependent-hardness},
no system-size-dependent perturbative coefficient is needed here.
All coefficients are fixed constants, so
$h_{\mathrm{ff}}^{(9)}$ is independent of both the input and the
system size. Therefore YES instances have ground energy zero, while NO instances
have ground energy at least
\[
\min\left\{
1,\,
\frac{c_0}{T^3}\left(1-\sqrt{s}\right)
\right\}
=
\frac{1}{\operatorname{poly}(|x|)}.
\]
Thus the resulting $\mathrm{FF\text{-}SIH}_9$ problem is QMA-hard.
Together with QMA containment, it is QMA-complete.

\end{proof}

\subsection{Precise Single-Interaction Hamiltonians}
\label{subsec:precise-sih}

The precise setting requires no new single-interaction construction:
we reuse the geometrically local $\mathrm{SIH}_8$ interaction of
\cref{subsec:unpinned_geometric} and track an exponentially small
promise gap through the same reduction.

\begin{thm}[Precise geometrically local single-interaction Hamiltonians]
\label{thm:precise-geometric-sih8}
For the same fixed $8$-local interaction $h$ as in
\cref{thm:unpinned-fixed}, the geometrically local
$\mathrm{Precise\text{-}SIH}_8$ problem is PSPACE-complete.
\end{thm}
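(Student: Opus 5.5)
Containment is \cref{lem:precise-sih-containment}, so the plan concerns PSPACE-hardness. I would rerun the reduction behind \cref{thm:unpinned-fixed}, but start from a $\mathrm{QMA}_{\exp}$ verifier instead of an ordinary QMA verifier. By \cref{thm:qma-exp-pspace}, for $L\in\mathrm{PSPACE}$ there is a polynomial-size verifier $U_x$ with completeness $c$ and soundness $s$ satisfying $c-s\ge 2^{-p(|x|)}$. Error reduction is not available in this regime, so the source Hamiltonian has to be analysed with generic thresholds rather than with $\varepsilon$ exponentially small.

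\textbf{Step 1: compile to the fixed gate set.} Compile $U_x$ to ordered applications of $\mathbb{B}$ via inverse-free Solovay--Kitaev, with per-gate error $\delta=2^{-p(|x|)-r(|x|)}$ for a suitable polynomial $r$. This costs only $\operatorname{polylog}(1/\delta)=\operatorname{poly}(|x|)$ gates. The acceptance probabilities then shift by at most $T\delta\ll 2^{-p}$, so the new gap is still at least $2^{-p-1}$. Next spatialize with native SWAP and $I$ exactly as in \cref{subsec:unpinned_geometric}. This yields $H^{\mathrm{geo}}$, which is $5$-local, geometrically local, and of bounded degree.

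\textbf{Step 2: precise Kitaev-type bounds.} This is the main obstacle. I would follow the approach of Fefferman--Lin (\cref{thm:precise-local-hamiltonian}). Write $H^{\mathrm{geo}}=H_0+H_{\mathrm{out}}$. The null space of $H_0$ is the set of history states, and $H_0$ has gap $\Gamma=\Omega(T^{-3})$ above it. Restricted to the null space, $H_{\mathrm{out}}$ acts as $(1-\mathrm{acc}(\xi))/(T+1)$.
\begin{itemize}
\item YES: the history state of the best witness has energy at most $a=(1-c)/(T+1)$.
\item NO: I would not rely on the Projection Lemma at first order, since its $O(1/\Gamma)$-type corrections may exceed $2^{-p}$. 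Instead I would choose an energy threshold $b$ separated from $a$ by $2^{-\operatorname{poly}}$.
\end{itemize}
My first approach to the NO case is to scale the output penalty down by a weight $w=2^{-q}$. Then second-order perturbation theory gives $\lambda_{\min}\ge w(1-s)/(T+1)-O(w^2/\Gamma)$, while YES gives $w(1-c)/(T+1)$. Taking $q\ge p+\log(T^4)+O(1)$ makes the gap $\Omega(w2^{-p}/T)$, which is inverse exponential.

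A fixed unweighted interaction cannot carry the weight $w$ directly. I would handle this by amplifying instead: multiply $H_0$ through by repeating each $H_0$ term $2^{q}$ times. That is exponential, so it is unacceptable. The alternative is to cite the precise analysis of Fefferman--Lin, which shows that the standard unweighted Kitaev Hamiltonian already has an $\exp$-small gap between $\lambda_{\min}$ in the two cases. The key fact I would invoke there is the exact identity that $\lambda_{\min}$ is a strictly monotone continuous function of the maximal acceptance probability, for example via the Jordan-lemma / one-dimensional reduction of Kitaev's soundness proof with exact eigenvalue expressions. The resulting separation is $2^{-p}/\operatorname{poly}(T)$. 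This step is where I expect the real work.

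\textbf{Step 3: transfer to SIH.} Apply \cref{lem:geometric-sih-auxiliary-subspaces} and \cref{cor:geometric-sih-low-energy-spectrum}. The valid sector reproduces $H^{\mathrm{geo}}$ exactly, and the invalid sector lies at energy $\ge1$. Both thresholds are below $1$ (since $\lambda_{\min}\le 1/(T+1)$), so $\lambda_{\min}(\widetilde H^{\mathrm{geo}})=\lambda_{\min}(H^{\mathrm{geo}})$ exactly. The thresholds $a,b$ therefore carry over unchanged with $b-a\ge 2^{-\operatorname{poly}(n)}$.

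Geometric locality, bounded degree, tuple distinctness, and polynomial size are inherited verbatim from the proof of \cref{thm:unpinned-fixed}, with the same fixed interaction $h$ of \cref{eq:fixed-h-def}. Since the interaction and construction are unchanged, the exactness of the single-interaction compilation is what makes the exponentially small gap survive.
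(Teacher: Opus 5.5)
Your Steps 1 and 3 are fine. Step 2 has a genuine gap, and it comes from how Step 1 chooses the source. You start from an arbitrary $\mathrm{QMA}_{\exp}$ verifier that only satisfies $c-s\ge 2^{-p}$. For such a verifier the circuit-to-Hamiltonian bounds $a=(1-c)/(T+1)$ and $b\ge \frac{c_0}{T^3}(1-\sqrt{s})$ need not be ordered at all. For example, $c\approx s\approx 2/3$ makes the YES upper bound much larger than the NO lower bound. Amplification would cost about $(c-s)^{-2}$ repetitions, which is exponential here.

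Your fallback is that Fefferman--Lin show $\lambda_{\min}$ of the unweighted Kitaev Hamiltonian to be an exactly computable, strictly monotone function of $p_{\max}$ with $2^{-\operatorname{poly}}$ sensitivity. That is not what their result provides, and it is not available in this setting for three reasons:
\begin{itemize}
\item The Jordan-lemma reduction to a one-angle family of path Hamiltonians needs the input penalty to sit at a single clock time.
\item $H^{\mathrm{geo}}$ does not have that form: it checks each initialized qubit $W_i$ at its own time $t_i-1$, which after the history rotation is a sum of projectors at different times.
\item Even for Kitaev's original Hamiltonian, you would still need a quantitative lower bound on the derivative and polynomial-time computable thresholds, and you supply neither.
\end{itemize}

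The missing idea is to use the specific structure of the Fefferman--Lin verifier rather than only the class equality $\mathrm{QMA}_{\exp}=\mathrm{PSPACE}$. For $L\in\mathrm{PSPACE}$ one may take $1-c=\epsilon$ and $1-s=d-\epsilon$, with $d=2^{-g(|x|)}$ and $\epsilon$ exponentially small \emph{relative to} $d$. Compile to $\mathbb{B}$ with accuracy $\eta$ and set $\alpha=\epsilon+2\eta$. The ordinary bounds then give $a\le \alpha/(T+1)$. Using $1-\sqrt{\widetilde s}\ge(1-\widetilde s)/2$, they also give $b\ge \frac{c_0}{2T^3}(1-\widetilde s)\ge \frac{c_0}{2T^3}(d-\alpha)$. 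Now choose $\alpha\le\min\{d/2,\,c_0d/(16T^2)\}$. This is possible because $T$ grows only polylogarithmically in $1/\epsilon$ and $1/\eta$. It yields $b-a\ge 3c_0d/(16T^3)=2^{-\operatorname{poly}}$, with no perturbative weight or monotonicity argument. Your Step~3 then applies verbatim: \cref{cor:geometric-sih-low-energy-spectrum} transfers these thresholds exactly to the fixed $8$-local interaction.
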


\begin{proof}
Containment in PSPACE follows from
\cref{lem:precise-sih-containment}, since geometric locality is an
additional restriction on the allowed instances.

For hardness, let $L\in\mathrm{PSPACE}$.  By
\cref{thm:qma-exp-pspace}, using the verifier parameters underlying
the Precise Local Hamiltonian construction of Fefferman and Lin, we
may take a polynomial-size verification procedure with completeness
$c$ and soundness $s$ satisfying
\[
1-c=\epsilon,
\qquad
1-s=d-\epsilon,
\qquad 
d=2^{-g(|x|)}
\]
for some polynomial $g$, where
$\epsilon=2^{-\operatorname{poly}(|x|)}$ may be chosen sufficiently small.

Compile the verifier to the fixed Barenco gate $\mathbb{B}$ as in
\cref{subsubsec:fixed-gate-verifier-compilation}, choosing the synthesis
accuracy so that the compiled unitary $\widetilde U$ satisfies
\[
\|\widetilde U-U\|\leq\eta,
\qquad
\eta=2^{-\operatorname{poly}(|x|)}.
\]
The inverse-free Solovay--Kitaev construction has polylogarithmic
overhead in $1/\eta$, so the compiled circuit remains polynomial in
size. Apply the spatial embedding of
\cref{subsec:unpinned_geometric} to this circuit. The additional SWAP
and identity operations are exact and therefore introduce no further
acceptance error. Let $T$ denote the resulting circuit length.

The operator-norm synthesis error changes every acceptance
probability by at most $2\eta$.
Writing
\[
\alpha:=\epsilon+2\eta,
\]
the compiled verifier therefore satisfies
\[
1-\widetilde c\leq\alpha,
\qquad
1-\widetilde s\geq d-\alpha.
\]
The corresponding spatial circuit-to-Hamiltonian instance has
completeness threshold
\[
a
\leq
\frac{\alpha}{T+1},
\]
while the soundness bound of
\cref{lemma: kitaev soundness} gives, for some constant $c_0>0$,
\[
\begin{aligned}
b
&\geq
\frac{c_0}{T^3}
\left(1-\sqrt{\widetilde s}\right) \\
&\geq
\frac{c_0}{2T^3}
\left(1-\widetilde s\right) \\
&\geq
\frac{c_0}{2T^3}
\left(d-\alpha\right).
\end{aligned}
\]

Choose the exponentially small parameters $\epsilon$ and $\eta$ so that
\[
\alpha
\leq
\min\left\{
\frac d2,\,
\frac{c_0d}{16T^2}
\right\}.
\]
Such a choice is possible because $T$ grows only polynomially in
$|x|$, $\log(1/\epsilon)$, and $\log(1/\eta)$. Choosing the precision
parameters polynomially large therefore makes $\epsilon$ and $\eta$
exponentially small while keeping $T$ polynomial in $|x|$. Hence
\[
b
\geq
\frac{c_0d}{4T^3},
\qquad
a
\leq
\frac{c_0d}{16T^3},
\]
and therefore
\[
b-a
\geq
\frac{3c_0d}{16T^3}
=
2^{-\operatorname{poly}(|x|)}.
\]

Apply the geometrically local single-interaction construction of
\cref{subsec:unpinned_geometric}. This step introduces no further
approximation: by \cref{cor:geometric-sih-low-energy-spectrum},
\[
\lambda_{\min}
\left(
\widetilde H^{\mathrm{geo}}
\right)
=
\lambda_{\min}
\left(
H^{\mathrm{geo}}
\right).
\]
Hence the compiled instance has the same thresholds $a$ and $b$.

The construction is geometrically local and uses the same fixed
$8$-local interaction $h$ as \cref{thm:unpinned-fixed}. Its output size
$n_{\mathrm{out}}$ is polynomial in $|x|$, and therefore
\[
b-a
\geq
2^{-\operatorname{poly}(n_{\mathrm{out}})}.
\]
Thus geometrically local
$\mathrm{Precise\text{-}SIH}_8$ is PSPACE-hard. Together with
containment in PSPACE, it is PSPACE-complete.
\end{proof}

\subsection{Guided Single-Interaction Hamiltonians}
\label{subsec:guided-sih}

\begin{thm}[Guided single-interaction Hamiltonians]
\label{thm:guided-sih9}
There exists a fixed $9$-local Hermitian interaction $h_{\mathrm G}$
such that $\mathrm{Guided\text{-}SIH}_{9}$ is BQP-complete.
Moreover, BQP-hardness holds even for instances with a unique ground
state separated from the remainder of the spectrum by an
inverse-polynomial gap and with a semi-classical subset guiding state
having overlap
\[
1-\frac{1}{\operatorname{poly}(n)}
\]
with the ground state.
\end{thm}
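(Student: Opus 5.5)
Containment is immediate from \cref{lem:guided-sih-containment}, so the plan concentrates on BQP-hardness. The source is the Guided $2$-Local Hamiltonian instances of \cref{cor:guided-local-hamiltonian-hardness}, taken in the form with a unique ground state, inverse-polynomial spectral gap $\delta_H$, and a semi-classical guiding state $\Ket{u}$ of overlap $1-1/q(n)$. We apply \cref{thm:universal-single-interaction-reduction} with $k=2$. This gives the fixed $9$-local interaction $h_{\mathrm G}:=h_2$, which depends only on $k$ and uses $\gamma=1$, $\Delta=3$. The only instance-dependent data are then the occurrence list, $Q$, and the thresholds $\widetilde a=Q(a+\varepsilon)$ and $\widetilde b=Q(b-\varepsilon)$.

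The key issue is that the guided promise concerns the ground \emph{space}, whereas the normalization is only approximate. We therefore choose $\varepsilon$ much smaller than both $b-a$ and the source gap, say $\varepsilon=\min\{(b-a)/4,\delta_H/4\}\cdot\frac{1}{p'(n)}$, which is still inverse polynomial. Write $K$ for the normalized finite-alphabet Hamiltonian, so that $\|H-K/Q\|\le\varepsilon$. By Weyl's inequality, $K/Q$ has a unique ground state with gap at least $\delta_H-2\varepsilon\ge\delta_H/2$. By the Davis--Kahan $\sin\Theta$ theorem, the ground states $\Ket{g_H}$ and $\Ket{g_K}$ satisfy $|\langle g_H|g_K\rangle|\ge 1-O(\varepsilon^2/\delta_H^2)$. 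The triangle inequality on angles then gives $\|\Pi_K\Ket{u}\|\ge 1-1/q(n)-O(\varepsilon/\delta_H)$. Choosing $p'$ large makes this $1-1/\operatorname{poly}(n)$.

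Next we invoke the exact compiler, \cref{thm:unpinned-spectral-preservation}. By item 3, $\operatorname{Ground}(\widetilde H)=V\operatorname{Ground}(K)$, so the compiled ground state is unique and equals $\Ket{g_K}\otimes\Ket{\Omega}_A$. By item 6, the gap of $\widetilde H$ is at least $\min\{Q\delta_H/2,1\}$, which is inverse polynomial.

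The new guiding state is $\Ket{\widetilde u}=\Ket{u}\otimes\Ket{\Omega}_A$. Since $\Ket{\Omega}_A=\Ket{01}^{\otimes R(\ell+1)}$ is a single computational basis string, $\Ket{\widetilde u}$ is again a semi-classical subset state. Its support is $\{x\,z_{\mathrm{val}}:x\in S\}$ and has the same polynomial size, so it is efficiently enumerable. Its overlap is $\|\Pi_{\widetilde H}\Ket{\widetilde u}\|=\|\Pi_K\Ket{u}\|$.

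Two normalization issues remain. The first is that \cref{def:guided-sih}, unlike \cref{def:guided-local-hamiltonian}, imposes no norm bound and no $[0,1]$ thresholds. The second is that the promise gap $\widetilde b-\widetilde a=Q(b-a-2\varepsilon)\ge (b-a)/2$ and $\delta=1-1/\operatorname{poly}$ must be inverse polynomial in the output size $n_{\mathrm{out}}=\operatorname{poly}(n)$. Both hold automatically, so no rescaling of $h_{\mathrm G}$ is needed.

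The main obstacle is the perturbative transfer of ground-state overlap through the approximate normalization. It requires the source instances to carry an explicit inverse-polynomial gap, which the cited Cade et al.\ construction provides, and it requires tracking that $\varepsilon/\delta_H$ is polynomially small. The exact compiler step is otherwise mechanical.
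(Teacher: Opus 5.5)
Your proposal is correct and follows the paper's proof essentially step for step. You normalize to the signed two-qubit Pauli alphabet with $\varepsilon$ small relative to both $b-a$ and the source gap, use Weyl's inequality to keep a unique gapped ground state of $K/Q$, transfer the guiding overlap, then apply the exact unpinned compiler with $\gamma=1$ and take $\Ket{u}\otimes\Ket{\Omega}_A$ as the new semi-classical guiding state. The only difference is in the overlap-transfer step: you invoke Davis--Kahan (a rotation linear in $\varepsilon/\delta_H$), whereas the paper uses an elementary variational energy argument giving $\|\Ket{\phi'}-\Ket{\phi}\|\le 2\sqrt{\varepsilon/\Delta_{\mathrm{sp}}}$ and therefore takes $\varepsilon\le\Delta_{\mathrm{sp}}/(4q(n)^2)$. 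Either bound suffices, since $\varepsilon$ only needs to be inverse polynomial.
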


\begin{proof}
Containment in BQP follows from
\cref{lem:guided-sih-containment}.

For hardness, let $L\in\mathrm{BQP}$. 
Here we use the universal normalization before the exact compiler,
so we must control its effect on the unique ground state, spectral
gap, and guiding-state overlap.
By \cref{cor:guided-local-hamiltonian-hardness}, together with the
construction of Cade et al.~\cite{CadeEtAl2023}, there is a BQP-hard family of $2$-local Hamiltonian instances $H$ with thresholds $a<b$
satisfying
\[
b-a\ge\frac{1}{\operatorname{poly}(n)},
\]
a unique ground state $\Ket{\phi}$ with spectral gap
\[
\Delta_{\mathrm{sp}}
\ge
\frac{1}{\operatorname{poly}(n)},
\]
and a semi-classical subset state $\Ket{u}$ such that
\[
|\Braket{u|\phi}|
\ge
1-\frac{1}{q(n)}
\]
for some polynomial $q$.

Choose
\[
\varepsilon
\le
\min\left\{
\frac{b-a}{4},
\frac{\Delta_{\mathrm{sp}}}{4q(n)^2}
\right\}.
\]
Since all quantities on the right are inverse polynomial,
$1/\varepsilon= O( \operatorname{poly}(n))$.

Apply the universal finite-alphabet normalization of
\cref{lem:universal-finite-alphabet-normalization}. This gives an
integer $Q=\operatorname{poly}(n)$ and a Hamiltonian $K$ over the
fixed signed Pauli alphabet
\[
\mathcal A_2^\star
=
\left\{
\pm P:
P\in\{I,X,Y,Z\}^{\otimes2}
\right\}
\]
such that
\[
\left\|
H-\frac{1}{Q}K
\right\|
\le
\varepsilon.
\]
Write
\[
H'=\frac{1}{Q}K.
\]

By Weyl's inequality, the spectral gap of $H'$ is at least
\[
\Delta_{\mathrm{sp}}-2\varepsilon
\ge
\frac{\Delta_{\mathrm{sp}}}{2},
\]
so $H'$ still has a unique ground state; denote it by
$\Ket{\phi'}$.

The guiding state remains close to this perturbed ground state.
Indeed,
\[
\begin{aligned}
\Bra{\phi'}H\Ket{\phi'}
&\le
\Bra{\phi'}H'\Ket{\phi'}+\varepsilon\\
&\le
\Bra{\phi}H'\Ket{\phi}+\varepsilon\\
&\le
\lambda_{\min}(H)+2\varepsilon.
\end{aligned}
\]
Since $\Ket{\phi}$ is the unique ground state of $H$ with gap
$\Delta_{\mathrm{sp}}$,
\[
\Bra{\phi'}H\Ket{\phi'}
\ge
\lambda_{\min}(H)
+
\Delta_{\mathrm{sp}}
\left(
1-|\Braket{\phi|\phi'}|^2
\right).
\]
Hence
\[
|\Braket{\phi|\phi'}|^2
\ge
1-\frac{2\varepsilon}{\Delta_{\mathrm{sp}}}.
\]
Choosing the global phase of $\Ket{\phi'}$ so that
$\Braket{\phi|\phi'}=|\Braket{\phi|\phi'}|$, we obtain
\[
\|\Ket{\phi'}-\Ket{\phi}\|
\le
2\sqrt{\frac{\varepsilon}{\Delta_{\mathrm{sp}}}}.
\]
Therefore
\[
\begin{aligned}
|\Braket{u|\phi'}|
&\ge
|\Braket{u|\phi}|
-
\|\Ket{\phi'}-\Ket{\phi}\|\\
&\ge
1-\frac{1}{q(n)}
-
2\sqrt{\frac{\varepsilon}{\Delta_{\mathrm{sp}}}}\\
&\ge
1-\frac{2}{q(n)}.
\end{aligned}
\]

We now apply the exact unpinned compiler of
\cref{thm:unpinned-spectral-preservation} to $K$, with separation
parameter $\gamma=1$. Since the signed two-qubit Pauli alphabet has
\[
|\mathcal A_2^\star|
=
2\cdot 4^2
=
32,
\]
the selector length is
\[
\ell
=
\left\lceil\log_2 32\right\rceil
=
5,
\]
and the resulting single interaction has locality
\[
2+\ell+2=9.
\]

Let
\[
V\Ket{\psi}
=
\Ket{\psi}\otimes\Ket{\Omega}_A
\]
be the valid-sector isometry. Exact spectral preservation gives
\[
\operatorname{Ground}(\widetilde H)
=
V\operatorname{Ground}(K).
\]
Hence the unique ground state of the compiled Hamiltonian is
\[
\Ket{\widetilde \phi}
=
\Ket{\phi'}\otimes\Ket{\Omega}_A.
\]
The corresponding guiding state
\[
\Ket{\widetilde u}
=
\Ket{u}\otimes\Ket{\Omega}_A
\]
is again a semi-classical subset state, since
$\Ket{\Omega}_A$ is a computational-basis state, and
\[
|\Braket{\widetilde u|\widetilde\phi}|
=
|\Braket{u|\phi'}|
\ge
1-\frac{2}{q(n)}.
\]

Since $K=QH'$, its spectral gap is at least
\[
Q\left(\Delta_{\mathrm{sp}}-2\varepsilon\right).
\]
Therefore \cref{thm:unpinned-spectral-preservation} gives
\[
\widetilde\Delta_{\mathrm{sp}}
\ge
\min\left\{
Q\left(\Delta_{\mathrm{sp}}-2\varepsilon\right),
1
\right\}.
\]
Since $Q\ge1$ and
$\varepsilon\le\Delta_{\mathrm{sp}}/4$, we have
\[
\widetilde\Delta_{\mathrm{sp}}
\ge
\min\left\{
\frac{\Delta_{\mathrm{sp}}}{2},
1
\right\},
\]
which is inverse polynomial.

Finally, the approximation
\[
\left\|
H-\frac1QK
\right\|
\le\varepsilon
\]
transforms the source thresholds to
\[
\widetilde a
=
Q(a+\varepsilon),
\qquad
\widetilde b
=
Q(b-\varepsilon).
\]
The exact compiler preserves these thresholds, and
\[
\widetilde b-\widetilde a
=
Q\bigl((b-a)-2\varepsilon\bigr)
\ge
\frac{Q}{2}(b-a)
\ge
\frac{1}{\operatorname{poly}(n)}.
\]
Let $n_{\mathrm{out}}$ denote the final compiled system size.
Since
\[
n\le n_{\mathrm{out}}\le\operatorname{poly}(n),
\]
the promise gap and spectral gap remain inverse polynomial in
$n_{\mathrm{out}}$, while
\[
1-\frac{2}{q(n)}
=
1-\frac{1}{\operatorname{poly}(n_{\mathrm{out}})}
\]
after redefining the polynomial.  Thus all promises of
Guided-SIH are satisfied with respect to the final physical system
size.

Thus $\mathrm{Guided\text{-}SIH}_{9}$ is BQP-hard and hence
BQP-complete.

\end{proof}

\section{Discussion and Open Problems}
\label{sec:discussion}

The results of this work show that substantial Hamiltonian complexity
survives an unusually strong form of interaction uniformity. Once the
local interaction is fixed, all nontrivial instance dependence is
carried by the ordered interaction hypergraph: neither the local matrix
nor its coupling strength varies from term to term.

Nevertheless, \cref{thm:sih3-qma-complete} gives QMA-completeness
already for a fixed three-local interaction, while
\cref{thm:stoq-sih3-stoqma-complete,thm:pinned-stoq-sih4} show that the
same restriction is compatible with both StoqMA-complete and pinned
QMA-complete stoquastic problems. Together with the $n$-dependent and
geometrically local constructions, these results indicate that
heterogeneous local interactions are not, by themselves, an essential
source of hardness in Local Hamiltonian problems.

\paragraph{Single interactions as a normal form.}
The general compiler of
\cref{sec:general_methodology} separates two distinct issues.  For a
fixed finite interaction alphabet, the compilation to one interaction
is exact and preserves the low-energy spectrum below a freely chosen
separation scale.  For arbitrary local Hamiltonians, the additional
finite-alphabet normalization of
\cref{sec:universal-finite-alphabet-normalization} introduces an
inverse-polynomial approximation and a known global energy scale.
This distinction is important when transferring structural promises:
the exact compiler preserves ground spaces exactly and preserves
spectral gaps exactly whenever they lie below the chosen separation
scale, whereas the normalization by multiplicity need not preserve
frustration freeness, geometric locality, or stoquasticity.
The consequences in \cref{sec:further-consequences} illustrate both
regimes.  In particular, the Hamiltonian quantum PCP conjecture admits
the single-interaction normal form of
\cref{cor:sih-qpcp-normal-form}, while the frustration-free, precise,
and guided settings require more specialized uses of the compilation
framework.

\paragraph{Low locality and fixed-interaction classification.}
The QMA-complete $\mathrm{SIH}_3$ construction leaves a natural
locality boundary unresolved by the present work: the complexity of
strict unweighted $\mathrm{SIH}_2$. This regime already contains
physically and algorithmically important special cases. In particular,
up to an overall constant normalization and the max-versus-min
convention, unweighted Quantum Max-Cut is a strict fixed-interaction
two-local problem generated by repeated copies of the singlet
projector. By contrast, weighted Quantum Max-Cut permits
edge-dependent coupling strengths and therefore belongs to the
weighted fixed-interaction framework discussed in
\cref{subsubsec:s-hamiltonians}.

More generally, it would be useful to classify the complexity of
$\mathrm{SIH}(h)$ as $h$ ranges over fixed two-qubit interactions.
Existing interaction-set classifications allow freedoms, such as
independently variable coupling strengths, that are absent in the
strict SIH model. The fixed-interaction hardness results established
here begin at locality three and therefore do not resolve this
two-local classification problem.

\paragraph{Geometric locality and multiplicity.}
The multiplicity-based unweighting used in the $\mathrm{SIH}_3$ construction
does not automatically preserve geometric locality.  Replacing a
coupling strength by repeated occurrences may increase the degree of
the participating work qubits by a polynomial factor, even when the
source Hamiltonian has bounded degree.  This is why the low-locality
$\mathrm{SIH}_3$ result does not subsume the geometrically local $\mathrm{SIH}_8$
construction, where the computation and the auxiliary checks are
distributed explicitly so that both interaction range and vertex
degree remain bounded.

This leaves open the problem of reducing the locality of geometrically
local SIH while retaining strict unweighted interactions.  More
generally, it would be interesting to find alternatives to
coefficient-to-multiplicity normalization that preserve bounded degree
and geometric locality with only constant overhead.

\paragraph{Stoquasticity and pinning.}
The stoquastic results highlight the role that pinning can play under
the single-interaction restriction.  We obtain a fixed three-local
stoquastic interaction whose unpinned $\mathrm{Stoq\text{-}SIH}_3$ problem is
StoqMA-complete, whereas the real-interaction embedding shows that a
single pinned qubit suffices to obtain a QMA-complete
$\mathrm{Stoq\text{-}SIH}_4$ problem. Thus, even when every local term is the same fixed stoquastic
interaction, pinning a single auxiliary qubit is sufficient for
QMA-hardness.

This leaves a natural locality question.  In particular, the present
construction does not determine whether QMA-hardness can already occur
for a $1$-Pinned $\mathrm{Stoq\text{-}SIH}_3$ problem, or whether locality four is
intrinsic to this particular embedding strategy.  More generally, it
would be useful to understand how the computational power of pinning
depends on the locality and algebraic structure of the fixed
stoquastic interaction.

\paragraph{Fixed versus size-dependent interactions.}
The distinction between SIH and DSIH isolates another form of
uniformity.  In DSIH the common interaction may depend on the final
physical system size, but not directly on the instance itself.  The
constructions in this work use this freedom in two different ways.  In
\cref{thm:pinned-n-dependent,thm:unpinned-n-dependent}, the final system
size encodes the parameters required to reproduce the localized
fixed-Hamming-weight XY Hamiltonian.  In the geometrically local
stoquastic construction of \cref{thm:StoqMA-complete}, the dependence
is substantially weaker: all entries of the interaction are constant
except for the perturbative coefficient
\[
\delta_n=n^{-8}.
\]

This raises the question of how essential such size dependence is.
Can the DSIH hardness constructions be replaced by fixed-interaction
SIH constructions with comparable locality and boundary conditions?
In particular, it would be interesting to determine whether the
system-size encoding used in the XY-based construction can be replaced
by an instance-independent mechanism, and whether the perturbative
scale in the geometrically local stoquastic construction can be
generated using a fixed interaction rather than being built directly
into $h^{(n)}$.  More generally, understanding when size dependence can
be eliminated would clarify the computational distinction between SIH
and DSIH.

\paragraph{Beyond ground-energy decision problems.}
The exact finite-alphabet compiler preserves substantially more than
the minimum eigenvalue.  By
\cref{thm:unpinned-spectral-preservation}, the valid auxiliary sector
reproduces the source Hamiltonian exactly, while the spectrum below a
controlled separation scale is preserved together with multiplicities.
This suggests studying single-interaction versions of problems whose
promise or objective depends on richer low-energy spectral information,
such as spectral-gap questions, ground-space degeneracy, the density of
low-energy states, or specified excited-state energies. Such extensions
would require careful choice of the source problem and separation scale,
but the exact compiler provides a natural starting point.

The same observation also raises a stronger simulation question.  The
universal reduction of
\cref{thm:universal-single-interaction-reduction} is designed as a
complexity-theoretic reduction for ground-energy decision problems:
after finite-alphabet normalization, it reproduces the relevant
energies only up to a controlled approximation and a known global
scale.  It does not by itself constitute a general analogue-simulation
theorem preserving encoded eigenstates, local observables, or the full
spectrum.  Determining whether similarly severe single-interaction
restrictions are compatible with such stronger simulation guarantees
would extend the present framework beyond decision-problem
complexity.

Taken together, these results suggest that much of the computational
content of Local Hamiltonian problems need not reside in a heterogeneous
collection of local matrices or in independently tunable coupling
strengths.  In the single-interaction setting, substantial complexity
can instead be encoded into the interaction hypergraph, auxiliary
structure, boundary conditions, or, in the DSIH setting, a uniformly
recoverable dependence on the final system size.  The exact
finite-alphabet compiler makes this principle particularly explicit:
once the relevant local interactions are drawn from a fixed finite
alphabet, they can be collapsed to a single interaction while retaining
the controlled low-energy structure needed by a range of complexity
constructions.

The remaining questions concern the limits of this compression.
Understanding how far locality can be reduced, when geometric or
stoquastic structure can be retained, whether size dependence or
pinning can be eliminated, and which stronger spectral or simulation
properties survive the single-interaction restriction would help
clarify which ingredients of a many-body Hamiltonian are genuinely
responsible for computational hardness.

\section*{Acknowledgements}
\addcontentsline{toc}{section}{Acknowledgements}

The author thanks Prof.~Dorit Aharonov for her guidance, valuable discussions, and feedback throughout the development of this work.

\section*{Use of AI Tools}

The mathematical ideas, constructions, proofs, and conclusions
presented in this work were developed by the author. OpenAI's ChatGPT and Google's Gemini
were used as assistive tools for checking mathematical arguments and
internal consistency, identifying possible gaps and presentation
issues, reviewing the manuscript's positioning relative to the literature and citation consistency, and
improving the clarity, organization, and language of the manuscript.
They were used to review and refine material produced by the author rather
than as authoritative sources or substitutes for the author's
mathematical reasoning. The author remained responsible for developing,
verifying, and presenting the mathematical content, as well as for the
accuracy of all citations and the final form of the manuscript.

\appendix

\section{Geometrically Local Source Hamiltonians}
\label{app:geometric-sources}

\subsection{The Geometrically Local QMA Source}
\label{app:geometric-source}

We now record the particular spatial construction used in the
paper. The reduction first compiles the verifier to a one-dimensional
nearest-neighbor circuit, then unrolls its execution across a
two-dimensional lattice, and finally places the unary clock along the
resulting spatial computation path.

\paragraph{Structural assumptions on the input circuit.}
We may assume that the verifier acts on a one-dimensional array of
$N$ qubits using nearest-neighbor gates and nearest-neighbor SWAP
operations. Any polynomial-size verifier can be compiled into this
form with only polynomial overhead: a gate acting on nonadjacent qubits
is implemented by routing its operands together with SWAPs, applying
the gate, and routing them back.

\paragraph{Two-dimensional lattice embedding.}
To ensure that the resulting Hamiltonian satisfies
\cref{def: geometrically local}, we embed the one-dimensional register
of $N$ qubits into a two-dimensional lattice consisting of
$L=NR$ physical qubits arranged in $R$ columns of height $N$. The
execution is divided into
\[
R=\operatorname{poly}(|x|)
\]
computational rounds, each containing one nontrivial nearest-neighbor
gate together with the required identity operations. The computation
proceeds through the lattice as follows:

\begin{enumerate}
    \item \textbf{Gate execution.}
    During round $i$, the operations for that round are applied along
    column $i$, beginning at the top of the column.

    \item \textbf{Spatial translation.}
    After the round is complete, the logical state of the $N$ qubits is
    transferred to column $i+1$ by nearest-neighbor SWAP operations.
    The SWAP sequence begins at the bottom of the column and proceeds
    upward.

    \item \textbf{Next round.}
    Once the transfer is complete, round $i+1$ begins at the top of the
    new column.
\end{enumerate}

The resulting sequence of operations follows a snaking path through the
two-dimensional grid. Each physical work qubit participates in at most
one computational operation and at most two SWAP operations, so its
circuit participation is bounded by a constant. Consecutive operations
also occur within constant geometric distance. This is the spatial
structure used to localize the work--clock interactions below.

\paragraph{Geometrically local Hamiltonian.}
Let the spatialized verifier contain $R$ computational rounds on columns
of height $N$. The resulting two-dimensional work register contains

\[
L=RN
\]

physical qubits, and the snaking circuit contains

\[
T=(2R-1)N
\]

local operations, including computational gates, identities, and SWAPs.
By padding with one initial and one final identity round, we may assume
without loss of generality that the first and last computational rounds
contain only identity operations. This changes the circuit size by only
$O(N)$.

\paragraph{Spatial embedding of the clock register.}
To ensure that the resulting Hamiltonian terms remain geometrically local,
we specify the physical placement of the $T$-qubit clock register. The
clock qubits are distributed along the same snaking path followed by the
spatial computation. Each logical time step
$t\in\{1,\ldots,T\}$ is associated with a position on this path, and the
corresponding clock qubit $C_t$ is placed within constant distance of the
work qubits involved in the operation $U_t$. Consecutive clock qubits
$C_t$ and $C_{t+1}$ are likewise placed within constant distance of one
another. Hence the propagation and clock terms associated with
neighboring time steps have bounded geometric range.

With this placement established, the geometrically local Hamiltonian is constructed as a sum of four terms, similar to Kitaev's original formulation:
\begin{align}
    H^{\mathrm{geo}} = H^{\mathrm{geo}}_{\mathrm{in}} + H^{\mathrm{geo}}_{\mathrm{out}} + H^{\mathrm{geo}}_{\mathrm{prop}} + H^{\mathrm{geo}}_{\mathrm{clock}}
\end{align}

Let $I_0$ and $I_1$ denote the sets of physical work qubits in the
initial computational column whose prescribed states are $\Ket{0}$ and
$\Ket{1}$, respectively. Thus $|I_0|=n_0$ and $|I_1|=n_1$, while the
witness qubits are unconstrained. For every initialized work qubit
$W_i\in I_0\cup I_1$, let $t_i$ denote the earliest time step at which
$W_i$ participates in a nonidentity operation, with SWAP counted as
nonidentity.

By padding the spatial circuit with an initial and a final identity
round, we may assume that $ 2\leq t_i\leq T-1 $ for every initialized qubit. Since every operation acting on $W_i$ before time $t_i$ is the identity, its state immediately before $U_{t_i}$ is equal to its prescribed initial state.

In the valid unary-clock subspace, the projector $ \Pi^{\Ket{1}}_{C_{t_i-1}} \otimes \Pi^{\Ket{0}}_{C_{t_i}} $ selects precisely the computational snapshot immediately before $U_{t_i}$. Hence checking the prescribed state of $W_i$ at this clock position is equivalent to checking its state at the beginning of the computation, while keeping the interaction geometrically local.

These terms act as follows:

\paragraph{Propagation.}
The propagation Hamiltonian
\[
H^{\mathrm{geo}}_{\mathrm{prop}}
=
\sum_{t=1}^{T} H_{\mathrm{prop}}(t)
\]
has the same form as in Kitaev's construction; see
\cref{eq:hprop}. Because the spatialized circuit uses only local gates
and nearest-neighbor SWAP operations, each propagation term has
constant geometric range.

\paragraph{Clock penalty.}
The clock Hamiltonian $H^{\mathrm{geo}}_{\mathrm{clock}}$ has the
standard unary-clock form of \cref{eq:hclock}. Because the clock qubits
are placed along the snaking computation path, each clock penalty acts
only on neighboring clock sites.

\paragraph{Initialization.}
The initialization Hamiltonian checks the prescribed
computational-basis states of the initialized input qubits at the local
times defined above:
\begin{align}
    \label{eq:geo-in}
    H^{\mathrm{geo}}_{\mathrm{in}}
    ={}&
    \sum_{i\in I_0}
    \Pi^{\Ket{1}}_{W_i}
    \otimes
    \Pi^{\Ket{1}}_{C_{t_i-1}}
    \otimes
    \Pi^{\Ket{0}}_{C_{t_i}}
    \nonumber\\
    &+
    \sum_{i\in I_1}
    \Pi^{\Ket{0}}_{W_i}
    \otimes
    \Pi^{\Ket{1}}_{C_{t_i-1}}
    \otimes
    \Pi^{\Ket{0}}_{C_{t_i}}.
\end{align}

\paragraph{Output.}
Without loss of generality, we route the designated output qubit to the
bottommost site of the final column. The final output check is therefore
within constant distance of the clock site corresponding to $t=T$.
\begin{align}
    H^{\mathrm{geo}}_{\mathrm{out}} = \Pi^{\Ket{0}}_{W_{L}} \otimes\Pi^{\Ket{1}}_{C_T}
\end{align}

\begin{rem}
\label{rem:localized-initialization}
Our initialization Hamiltonian differs slightly from the one used by
Oliveira and Terhal~\cite{Oliveira2005}.

First, we check an initialized qubit $W_i$ immediately before its first
nonidentity operation rather than at the first position at which the
spatial cursor encounters it. By definition of $t_i$, every earlier
operation acting on $W_i$ is the identity. Hence, on any
propagation-consistent history state, the state of $W_i$ at time
$t_i-1$ is exactly its initial state. The localized check therefore
enforces the same input condition as an initialization check at time
zero.

Second, Oliveira and Terhal use a three-clock projector of the form
\[
\Pi^{\Ket{100}}_{C_{t_i-1}C_{t_i}C_{t_i+1}},
\]
whereas we use
\[
\Pi^{\Ket{10}}_{C_{t_i-1}C_{t_i}}.
\]
On the valid unary-clock subspace, both projectors select precisely the
history time $t_i-1$. Moreover,
\[
\Pi^{\Ket{10}}_{C_{t_i-1}C_{t_i}}
\otimes I_{C_{t_i+1}}
=
\Pi^{\Ket{100}}_{C_{t_i-1}C_{t_i}C_{t_i+1}}
+
\Pi^{\Ket{101}}_{C_{t_i-1}C_{t_i}C_{t_i+1}}
\succeq
\Pi^{\Ket{100}}_{C_{t_i-1}C_{t_i}C_{t_i+1}}.
\]
Thus the two-clock check has the same action on legal clock states and
does not weaken the energetic penalty on illegal clock configurations.
\end{rem}

\paragraph{Completeness and soundness.}
It remains to verify that the localized Hamiltonian retains an
inverse-polynomial energy separation between YES and NO instances.

\begin{lemma}[Completeness]
\label{lemma: OT completness}
If $x\in L_{\mathrm{yes}}$, there exists a history state
$\Ket{\eta}$ such that
\[
\Bra{\eta}H^{\mathrm{geo}}\Ket{\eta}
\leq
\frac{\varepsilon}{T+1}.
\]
\end{lemma}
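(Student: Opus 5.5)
We verify the bound by evaluating each of the four terms of $H^{\mathrm{geo}}$ on the standard history state. Since $x\in L_{\mathrm{yes}}$, choose a witness $\Ket{\xi}$ that the spatialized circuit $U_T\cdots U_1$ accepts with probability at least $1-\varepsilon$. The spatialization inserts only exact SWAP and identity operations, so it leaves acceptance probabilities unchanged. Let $\Ket{\eta}$ be the history state \eqref{def: history state} built from this circuit, where the input column is $\Ket{0}$ on $I_0$, $\Ket{1}$ on $I_1$, and $\Ket{\xi}$ on the witness sites. The clock register is placed along the snaking path in the unary encoding $\Ket{t}_C=\Ket{1^t0^{T-t}}$.

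\textbf{Clock and propagation terms.} Every $\Ket{t}_C$ is a legal unary clock state. Each term $\Pi^{\Ket{0}}_{C_t}\otimes\Pi^{\Ket{1}}_{C_{t+1}}$ therefore annihilates every branch, and $H^{\mathrm{geo}}_{\mathrm{clock}}\Ket{\eta}=0$. For propagation, $H_{\mathrm{prop}}(t)$ acts only on the branches $t-1$ and $t$. Restricted to them it is $\tfrac12\bigl(\Ket{\psi_{t-1}}\Ket{t-1}-\Ket{\psi_t}\Ket{t}\bigr)$-type: it annihilates $\Ket{\psi_{t-1}}\Ket{t-1}+\Ket{\psi_t}\Ket{t}$ because $\Ket{\psi_t}=U_t\Ket{\psi_{t-1}}$. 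The boundary terms $t=1,T$ work the same way. This is the standard Kitaev computation and applies verbatim, since the form \eqref{eq:hprop} is unchanged.

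\textbf{Initialization terms.} This is the only step that differs from Kitaev's argument. On a legal clock state, $\Pi^{\Ket{1}}_{C_{t_i-1}}\otimes\Pi^{\Ket{0}}_{C_{t_i}}$ is nonzero only on the branch $t=t_i-1$. All operations acting on $W_i$ before time $t_i$ are identities, and no SWAP touches $W_i$ before $t_i$, so the reduced state of $W_i$ in $\Ket{\psi_{t_i-1}}$ is its prescribed basis state. It follows that $\Pi^{\Ket{1}}_{W_i}$ for $i\in I_0$, and $\Pi^{\Ket{0}}_{W_i}$ for $i\in I_1$, annihilate that branch, and $H^{\mathrm{geo}}_{\mathrm{in}}\Ket{\eta}=0$. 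Remark~\ref{rem:localized-initialization} already records this reasoning.

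\textbf{Output term.} $\Pi^{\Ket{1}}_{C_T}$ selects only the branch $t=T$, which has weight $1/(T+1)$. Hence
\[
\Bra{\eta}H^{\mathrm{geo}}_{\mathrm{out}}\Ket{\eta}=\frac{1}{T+1}\Bra{\psi_T}\Pi^{\Ket{0}}_{W_L}\Ket{\psi_T}\le\frac{\varepsilon}{T+1},
\]
because the output qubit has been routed to $W_L$ and rejection occurs with probability at most $\varepsilon$. Summing the four contributions gives the claimed bound.

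The main care point is the initialization step: we must confirm that no SWAP moves another qubit's state onto $W_i$ before $t_i$. This holds because SWAP is counted as a nonidentity operation in the definition of $t_i$.
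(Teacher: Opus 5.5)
Your proof is correct and follows the same route as the paper. For an accepting witness, the history state is annihilated by the clock, propagation, and localized initialization terms; initialization vanishes because $\Pi^{\Ket{1}}_{C_{t_i-1}}\otimes\Pi^{\Ket{0}}_{C_{t_i}}$ selects only the branch $t=t_i-1$, before which $W_i$ is untouched by the definition of $t_i$. That leaves only the output term, whose expectation is at most $\varepsilon/(T+1)$; your version just writes out the term-by-term checks that the paper states in a few lines.
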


\begin{proof}
Choose an accepting witness and its corresponding history state.
The propagation and clock terms annihilate this state. By the definition
of $t_i$, every initialized qubit retains its prescribed initial value
until the localized initialization check is applied, so
$H^{\mathrm{geo}}_{\mathrm{in}}$ also annihilates the history state.
The only possible energy contribution is therefore the output term,
whose expectation is at most $\varepsilon/(T+1)$.
\end{proof}

\begin{lemma}[Soundness, adapted from \cite{Oliveira2005}]
\label{lemma: OT soundness}
If $x\in L_{\mathrm{no}}$, then for every state $\Ket{\Psi}$,
\[
\Bra{\Psi}H^{\mathrm{geo}}\Ket{\Psi}
\geq
\frac{c}{T^3}
\left(
1-\varepsilon-\sqrt{\varepsilon}
\right)
\]
for some constant $c>0$ independent of $T$.
\end{lemma}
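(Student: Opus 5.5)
The approach is to reduce to Kitaev's soundness argument, \cref{lemma: kitaev soundness}, adapted as in Oliveira--Terhal to the localized initialization terms. We split $H^{\mathrm{geo}}=A_1+A_2$ with $A_1=H^{\mathrm{geo}}_{\mathrm{clock}}$ and $A_2=H^{\mathrm{geo}}_{\mathrm{in}}+H^{\mathrm{geo}}_{\mathrm{out}}+H^{\mathrm{geo}}_{\mathrm{prop}}$, and apply the Projection Lemma of \cite{Kempe2004}. $A_1$ is a sum of commuting diagonal projectors whose kernel is the legal unary-clock subspace $\mathcal S_{\mathrm{leg}}$, so it has gap at least $1$ on $\mathcal S_{\mathrm{leg}}^\perp$. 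All terms are PSD with $\|A_2\|=\operatorname{poly}(T)$. Since the coefficients are fixed, we use the ``Geometric Lemma'' (Kitaev's angle bound) rather than large-penalty scaling:
\[
\lambda_{\min}(A+B)\geq \min\{\lambda_A,\lambda_B\}\cdot 2\sin^2(\theta/2),
\]
where $\theta$ is the angle between the null spaces. Concretely, we proceed in three layers: clock, then propagation, then input/output.

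First, on $\mathcal S_{\mathrm{leg}}$ we apply the standard unitary change of basis $W=\sum_t U_t\cdots U_1\otimes\Ket{t}\Bra{t}$. This maps $H^{\mathrm{geo}}_{\mathrm{prop}}$ to $I\otimes E$, where $E$ is the path-graph Laplacian on $T+1$ sites. Its null space consists of history states, and its gap is $\Theta(T^{-2})$. Next we transform the input and output terms. The localized check $\Pi_{W_i}\otimes\Pi^{\Ket{1}}_{C_{t_i-1}}\Pi^{\Ket{0}}_{C_{t_i}}$ becomes, after conjugation, a penalty on the \emph{initial} value of $W_i$ at time $t_i-1$. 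This holds because all operations on $W_i$ before $t_i$ are identities, as argued in \cref{rem:localized-initialization}, so $(U_{t_i-1}\cdots U_1)^\dagger\Pi_{W_i}(U_{t_i-1}\cdots U_1)$ equals $\Pi_{W_i}$ on the initial register (SWAPs counted as nonidentity). The output term becomes $U^\dagger\Pi^{\Ket0}_{\mathrm{out}}U\otimes\Ket{T}\Bra{T}$.

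Second, we restrict to the history-state null space of the propagation term. Let $\Pi_{\mathrm{in}}$ be the projector onto correctly initialized ancillas. On this space the input term contributes $\frac{1}{T+1}\sum_i\Pi_{W_i}$, which is at least $\frac{1}{T+1}(I-\Pi_{\mathrm{in}})$. The output term contributes $\frac{1}{T+1}U^\dagger\Pi^{\Ket0}U$. Inside the initialized subspace, the output expectation is at least $\frac{1}{T+1}(1-\varepsilon)$ on NO instances. A second application of the Geometric Lemma, between the input null space $\Pi_{\mathrm{in}}$ and the output null space, uses the bound $\cos^2\theta\le\varepsilon$ from soundness. This yields a lower bound of order $\frac{1}{T+1}(1-\sqrt\varepsilon)$ on the history subspace.

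Third, we combine this bound with the $\Theta(T^{-2})$ propagation gap via the Geometric Lemma once more. The angle between the history space and the null space of the in/out terms has $\sin^2$ bounded below by roughly $1-\varepsilon$ times a constant. This gives $\Omega(T^{-3})(1-\varepsilon-\sqrt\varepsilon)$. A final application with the clock gap of $1$ costs only a constant, since the legal-clock subspace is invariant under $A_2$ after the usual argument. In fact the propagation term maps legal to legal, and illegal clock states can only couple through the propagation term, which we handle as in Kitaev by noting that $A_2$ preserves $\mathcal S_{\mathrm{leg}}$.

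The main obstacle is this last point together with the localized input terms. In Kitaev's construction, $\mathcal S_{\mathrm{leg}}$ is invariant because the propagation terms carry the three-clock-qubit guard $\Pi^{\Ket1}_{C_{t-1}}\otimes\cdot\otimes\Pi^{\Ket0}_{C_{t+1}}$. The two-clock input projector, however, also acts on illegal states. Here the domination noted in \cref{rem:localized-initialization} lets us replace it by the three-qubit projector, which only decreases the energy, and that version does preserve $\mathcal S_{\mathrm{leg}}$. After this substitution, the remaining steps follow Kitaev's calculation with constants independent of $T$.
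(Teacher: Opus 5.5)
Your route is the paper's. The paper also just invokes the Kitaev/Oliveira--Terhal angle argument after the two observations you make: identities before $t_i$ make the localized check equal to the original input check, and the two-clock projector dominates the three-clock one while agreeing with it on legal clock states.

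\textbf{Error in your third layer.} The angle $\theta$ between $\ker H_{\mathrm{prop}}$ (the history states) and the null space of $H_{\mathrm{in}}+H_{\mathrm{out}}$ does \emph{not} have $\sin^2\theta$ bounded below by a constant. It is of order $(1-\sqrt{\varepsilon})/T$, and that factor is exactly where $T^{-3}=T^{-2}\cdot T^{-1}$ comes from. With a constant $\sin^2\theta$, the Geometric Lemma (with $\lambda_{\mathrm{in/out}}\geq 1$ and $\lambda_{\mathrm{prop}}=\Theta(T^{-2})$) would give $\Omega(T^{-2})$. So your conclusion does not follow from the inputs you state.

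The correct bound comes directly from your second layer:
\begin{itemize}
\item On the legal clock subspace, in the conjugated frame, $H_{\mathrm{in}}+H_{\mathrm{out}}=\sum_t A_t\otimes\Ket{t}\Bra{t}$. For $t<T$, $A_t$ is the sum of the commuting single-qubit penalties with $t_i-1=t$, and $A_T=U^\dagger\Pi^{\Ket0}_{\mathrm{out}}U$.
\item Each $A_t$ therefore has integer spectrum, so $\lambda_{\mathrm{in/out}}\geq 1$, and $\sin^2\theta=\min_{\|\phi\|=1}\frac{1}{T+1}\sum_t\|(I-P_{\ker A_t})\phi\|^2$.
\item The projectors $P_{\ker A_t}$ for $t<T$ commute and multiply to $\Pi_{\mathrm{in}}$, so $\sum_{t<T}(I-P_{\ker A_t})\succeq I-\Pi_{\mathrm{in}}$.
\item Your estimate $\cos^2\leq\varepsilon$ between the input and output null spaces then gives $\sin^2\theta\geq(1-\sqrt{\varepsilon})/(T+1)$.
\item Since $2\sin^2(\theta/2)=1-\cos\theta\geq\frac12\sin^2\theta$, the Geometric Lemma yields $\Omega(T^{-3})(1-\sqrt{\varepsilon})$, which implies the stated bound.
\end{itemize}

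\textbf{Misidentified obstacle.} The ``main obstacle'' you name is not one. The projector $\Pi^{\Ket1}_{C_{t_i-1}}\otimes\Pi^{\Ket0}_{C_{t_i}}$ is diagonal in the clock basis, so it commutes with the projector onto $\mathcal S_{\mathrm{leg}}$. Together with the guarded propagation terms, the whole Hamiltonian is block diagonal with respect to legal and illegal clock sectors. On the illegal block, $H_{\mathrm{clock}}\succeq I$ already gives energy at least $1$, so no further Geometric Lemma step is needed there. Replacing the two-clock projector by the three-clock one is harmless, since it only lowers the Hamiltonian. That is exactly what the paper uses the domination for, namely to inherit the Oliveira--Terhal bound as a black box, but it is not needed for invariance.
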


\begin{proof}
The Oliveira--Terhal soundness proof applies to the present construction
with the modified initialization term described in
\cref{rem:localized-initialization}. By definition of $t_i$, no operation
before time $t_i$ acts nontrivially on an initialized qubit $W_i$, so its
localized initialization check enforces the same input condition used in
their argument. Moreover, by \cref{rem:localized-initialization}, our
two-clock projector has the same action as their three-clock projector on
the legal unary-clock subspace and can only add positive semidefinite
penalty outside it. Hence the angle estimate in the Oliveira--Terhal
soundness proof is unaffected, and their lower bound

\[
\frac{c}{T^3}
\left(
1-\varepsilon-\sqrt{\varepsilon}
\right)
\]

continues to hold.
\end{proof}

Thus the construction has bounded interaction range and bounded degree,
while the completeness and soundness bounds retain an inverse-polynomial
promise gap.

\subsection{The Geometrically Local Stoquastic Source}
\label{app:geometric-stoq-source}

\paragraph{Spatialized StoqMA verifier.}

Let $L\in\mathrm{StoqMA}$ and let $U_x$ be a verifier as in
\cref{def: StoqMA}. Following Waite and Bremner~\cite{Waite2025}, we use
the same two-dimensional column-and-SWAP spatialization and snaking-clock
placement as in \cref{app:geometric-source}. The nontrivial
computational operations act on at most three consecutive work qubits,
while the routing operations are nearest-neighbor SWAPs. Since these
operations are permutation matrices, the resulting propagation
Hamiltonian remains stoquastic.

Let $I_0$, $I_1$, and $I_+$ denote the physical work qubits whose
prescribed initial states are respectively
$\Ket{0}$, $\Ket{1}$, and $\Ket{+}$; the witness qubits remain
unconstrained. The spatialization preserves the verifier completeness
and soundness parameters $\epsilon_{\mathrm{yes}}$ and
$\epsilon_{\mathrm{no}}$.

As in \cref{app:geometric-source}, we pad with initial and final identity
rounds and, for every
$W_i\in I_0\cup I_1\cup I_+$, let $t_i$ denote its first nonidentity
time step, with SWAP counted as nonidentity. Then
\[
2\leq t_i\leq T-1,
\]
and the state of $W_i$ immediately before $U_{t_i}$ is still its
prescribed initial state. The clock qubits are placed along the same
snaking computation path, so the corresponding localized initialization
checks have constant geometric range.

We construct
\begin{equation}
\label{eq:geo-stoq-hamiltonian}
H^{\mathrm{stoq}_g}
=
H^{\mathrm{stoq}_g}_{\mathrm{in}}
+
H^{\mathrm{stoq}_g}_{\mathrm{prop}}
+
H^{\mathrm{stoq}_g}_{\mathrm{clock}}
+
\delta H^{\mathrm{stoq}_g}_{\mathrm{out}},
\end{equation}
where the four components are as follows.

\paragraph{Initialization.}
The initialized physical work qubits are checked immediately before their
first nonidentity operation:
\begin{align}
H^{\mathrm{stoq}_g}_{\mathrm{in}}
={}& 
\sum_{i\in I_0}
\Pi^{\Ket{1}}_{W_i}
\otimes
\Pi^{\Ket{1}}_{C_{t_i-1}}
\otimes
\Pi^{\Ket{0}}_{C_{t_i}}
\nonumber\\
&+
\sum_{i\in I_1}
\Pi^{\Ket{0}}_{W_i}
\otimes
\Pi^{\Ket{1}}_{C_{t_i-1}}
\otimes
\Pi^{\Ket{0}}_{C_{t_i}}
\nonumber\\
&+
\sum_{i\in I_+}
\Pi^{\Ket{-}}_{W_i}
\otimes
\Pi^{\Ket{1}}_{C_{t_i-1}}
\otimes
\Pi^{\Ket{0}}_{C_{t_i}}.
\label{eq:geo-stoq-in}
\end{align}
The projector $\Pi^{\Ket{-}}$ has non-positive off-diagonal entries, so
all initialization terms are stoquastic.

The localized-initialization argument of
\cref{rem:localized-initialization} applies directly. In particular, the
two-clock selector identifies the same legal history time as the
three-clock selector used in the spatially sparse construction, while the
definition of $t_i$ guarantees that the checked work qubit still contains
its prescribed initial state.

\paragraph{Propagation.}
The propagation Hamiltonian has the standard circuit-to-Hamiltonian form

\[
H^{\mathrm{stoq}_g}_{\mathrm{prop}}
=
\sum_{t=1}^{T}
H_{\mathrm{prop}}(t).
\]

Every circuit operation is a permutation matrix, so the off-diagonal
entries contributed by the propagation terms are non-positive. A
three-qubit computational gate together with the local unary-clock check
acts on at most six qubits, which determines the $6$-locality of the
construction.

\paragraph{Clock.}
The clock Hamiltonian is the usual unary-clock penalty

\[
H^{\mathrm{stoq}_g}_{\mathrm{clock}}
=
\sum_{t=1}^{T-1}
\Pi^{\Ket{0}}_{C_t}
\otimes
\Pi^{\Ket{1}}_{C_{t+1}}.
\]

It is diagonal and therefore stoquastic.

\paragraph{Output.}
Without loss of generality, the designated output qubit is routed to the
bottommost work site of the final column, denoted $W_L$, so that it lies
within constant distance of the final clock site. We define
\begin{equation}
\label{eq:geo-stoq-out}
H^{\mathrm{stoq}_g}_{\mathrm{out}}
=
\Pi^{\Ket{-}}_{W_L}
\otimes
\Pi^{\Ket{1}}_{C_T}.
\end{equation}
The coefficient $\delta>0$ in
\cref{eq:geo-stoq-hamiltonian} is chosen sufficiently small relative to
the spectral gap of the unperturbed history-state Hamiltonian.

Each term in \cref{eq:geo-stoq-hamiltonian} has constant geometric
diameter. Moreover, the spatial circuit construction ensures that every
physical work qubit participates in only $O(1)$ circuit operations, and
the snaking clock construction gives the same bounded-participation
property for the clock qubits. Hence
$H^{\mathrm{stoq}_g}$ satisfies both conditions of
\cref{def: geometrically local}. Every local term is also stoquastic.

\paragraph{Energy bounds.}
Let

\[
H_0
=
H^{\mathrm{stoq}_g}_{\mathrm{in}}
+
H^{\mathrm{stoq}_g}_{\mathrm{prop}}
+
H^{\mathrm{stoq}_g}_{\mathrm{clock}}.
\]

The spatially sparse analysis of Waite and Bremner~\cite{Waite2025}
shows that $H_0$ has a history-state ground space separated from the rest
of its spectrum by
\begin{equation}
\label{eq:geo-stoq-unperturbed-gap}
\Gamma_T
=
\Omega(T^{-3}).
\end{equation}

\begin{lemma}[Completeness, adapted from \cite{Waite2025}]
\label{lemma: geometric StoqMA completness}
If $x\in L_{\mathrm{yes}}$, then there exists a history state
$\Ket{\eta_{\mathrm{stoq}}}$ satisfying

\[
\Bra{\eta_{\mathrm{stoq}}}
H^{\mathrm{stoq}_g}
\Ket{\eta_{\mathrm{stoq}}}
\leq
\frac{\delta}{T+1}
\left(1-\epsilon_{\mathrm{yes}}\right).
\]

\end{lemma}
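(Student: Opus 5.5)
The plan is to take the stoquastic history state built from an accepting witness and evaluate the four components of \cref{eq:geo-stoq-hamiltonian} on it one at a time. Fix $x\in L_{\mathrm{yes}}$ and a witness $\Ket{\xi}$ accepted with probability at least $\epsilon_{\mathrm{yes}}$. Let
\[
\Ket{\psi_t}=U_t\cdots U_1\Ket{0}^{\otimes n_0}\otimes\Ket{1}^{\otimes n_1}\otimes\Ket{+}^{\otimes n_+}\otimes\Ket{\xi},
\qquad
\Ket{\eta_{\mathrm{stoq}}}=\frac{1}{\sqrt{T+1}}\sum_{t=0}^{T}\Ket{\psi_t}\otimes\Ket{t}_C,
\]
where the prescribed states are placed on $I_0$, $I_1$ and $I_+$ respectively, and $\Ket{t}_C$ is the unary clock state $1^t0^{T-t}$. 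Since the spatialization preserves acceptance probabilities, the spatial circuit accepts $\Ket{\xi}$ with probability at least $\epsilon_{\mathrm{yes}}$.

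\textbf{Clock and propagation terms.} The clock term vanishes on $\Ket{\eta_{\mathrm{stoq}}}$ because every $\Ket{t}_C$ is a legal unary clock state. For propagation, I would use the standard calculation of \cref{subsubsec: Circuit-to-Hamiltonian}: $H_{\mathrm{prop}}(t)$ acts only on the two-dimensional span of $\Ket{\psi_{t-1}}\Ket{t-1}$ and $\Ket{\psi_t}\Ket{t}$, where it equals $\tfrac12\bigl(\begin{smallmatrix}1&-1\\-1&1\end{smallmatrix}\bigr)$ because $\Ket{\psi_t}=U_t\Ket{\psi_{t-1}}$. The equal-amplitude combination lies in its kernel, and the endpoint terms $t=1$ and $t=T$ are handled the same way. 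This argument uses only unitarity of $U_t$, so it covers Toffoli, SWAP and identity alike.

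\textbf{Initialization terms.} On legal clock states, $\Pi^{\Ket{1}}_{C_{t_i-1}}\otimes\Pi^{\Ket{0}}_{C_{t_i}}$ selects only the snapshot $t=t_i-1$. By the definition of $t_i$, every operation on $W_i$ before time $t_i$ is the identity. Hence in $\Ket{\psi_{t_i-1}}$ the qubit $W_i$ is still in its prescribed state, which is $\Ket{0}$, $\Ket{1}$ or $\Ket{+}$. The corresponding projector $\Pi^{\Ket{1}}$, $\Pi^{\Ket{0}}$ or $\Pi^{\Ket{-}}$ therefore annihilates it.

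I expect the main subtlety to be here, for the $\Ket{+}$ qubits. One must check that no earlier operation, in particular a SWAP, entangles $W_i$ with other qubits. SWAP is counted as nonidentity in the definition of $t_i$, so $W_i$ remains in a product state $\Ket{+}$ with the rest of the register up to time $t_i-1$, and the projector factorizes cleanly.

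\textbf{Output term.} Only the $t=T$ component contributes, with weight $1/(T+1)$. This gives
\[
\delta\,\Bra{\eta_{\mathrm{stoq}}}H^{\mathrm{stoq}_g}_{\mathrm{out}}\Ket{\eta_{\mathrm{stoq}}}
=\frac{\delta}{T+1}\,\Bra{\psi_T}\Pi^{\Ket{-}}_{W_L}\Ket{\psi_T}
=\frac{\delta}{T+1}\bigl(1-\Pr[\text{accept}]\bigr)
\le\frac{\delta}{T+1}\bigl(1-\epsilon_{\mathrm{yes}}\bigr).
\]
Here acceptance is the outcome $+$ of the Hadamard-basis measurement on the output qubit, which is routed to $W_L$. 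Summing the four contributions gives the stated bound.
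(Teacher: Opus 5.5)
Your proposal is correct and follows the same route as the paper. You take the history state of an accepting witness, note that the clock, propagation, and localized initialization terms annihilate it (by the definition of $t_i$, $W_i$ is still in its prescribed state at snapshot $t_i-1$), and bound the output term by $\frac{\delta}{T+1}\left(1-\epsilon_{\mathrm{yes}}\right)$, simply giving more detail for each term than the paper does.
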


\begin{proof}
Choose an accepting witness and the corresponding propagation-consistent
history state. The initialization, propagation, and clock Hamiltonians
annihilate this state. The only possible contribution is the output
penalty. The final clock time occurs with probability $1/(T+1)$, while
the probability that the output qubit is in $\Ket{-}$ is at most
$1-\epsilon_{\mathrm{yes}}$. Therefore

\[
\Bra{\eta_{\mathrm{stoq}}}
H^{\mathrm{stoq}_g}
\Ket{\eta_{\mathrm{stoq}}}
\leq
\frac{\delta}{T+1}
\left(1-\epsilon_{\mathrm{yes}}\right).
\]

\end{proof}

For soundness, Waite and Bremner~\cite{Waite2025} treat
$\delta H^{\mathrm{stoq}_g}_{\mathrm{out}}$ as a perturbation of $H_0$.
Using the gap
$\Gamma_T=\Omega(T^{-3})$, choosing $\delta$ sufficiently small relative
to $\Gamma_T$ preserves an inverse-polynomial separation between the YES
and NO ground-energy thresholds.

\section{Technical Routing Details for Geometric Reductions}
\label{app:geometric-routing}

\subsection{Explicit Routing for the Geometrically Local \texorpdfstring{$\mathrm{SIH}_8$}{SIH8} Reduction}
\label{app:geometric-sih-routing}

\paragraph{Initialization.}
Using the notation and routing conventions of
\cref{subsec:unpinned_geometric}, the initialization Hamiltonian is

\begin{align}
        \widetilde{H}^{\mathrm{geo}}_{\mathrm{in}} = &\sum_{i\in I_0}\sum_{j=0}^{3} h(C_{t_{i}-1}, C_{t_i}, A^{(t_i)}_{2j} ,A^{(t_i)}_{2j+2} , W_i, A^{(t_i)}_{2j+4},  A^{(t_i)}_{2j+6} , A^{(t_i)}_{2j+7}) \nonumber \\
        + &\sum_{i \in I_1}\sum_{j=0}^{3} h(C_{t_{i}-1},C_{t_i},W_i, A^{(t_i)}_{2j} ,A^{(t_i)}_{2j+3} ,  A^{(t_i)}_{2j+4},  A^{(t_i)}_{2j+6} , A^{(t_i)}_{2j+7}) 
\end{align}

\paragraph{Output.}
The designated output qubit is routed to the bottommost site of the
final column, so the output penalty remains geometrically local to the
final clock site at $t=T$.

 \begin{align}
        \widetilde{H}^{\mathrm{geo}}_{\mathrm{out}} = \sum_{j=0}^{3} h(C_{T}, W_L, A^{(T)}_{2j}, A^{(T)}_{2j+2}, A^{(T)}_{2j+3}, A^{(T)}_{2j+4},  A^{(T)}_{2j+6} , A^{(T)}_{2j+7})
\end{align}

\paragraph{Clock Penalty.} The clock penalty ensures that the clock register remains in the valid unary subspace. Because the clock qubits are physically embedded along the continuous snaking path of the computation, sequential clock qubits are adjacent, preserving geometric locality.

\begin{align}
    \widetilde{H}^{\mathrm{geo}}_{\mathrm{clock}} = \sum_{t=1}^{T-1}\sum_{j=0}^{3} h(C_{t+1},C_{t}, A^{(t)}_{2j}, A^{(t)}_{2j+2}, A^{(t)}_{2j+3}, A^{(t)}_{2j+4}, A^{(t)}_{2j+6} , A^{(t)}_{2j+7})
\end{align}

\paragraph{Propagation.} The propagation term ensures that the computation evolves according to the circuit. Because the spatialized circuit uses only local gates and
nearest-neighbor SWAP operations, the work and clock qubits appearing
in each propagation term lie within a constant-diameter neighborhood. Let $(W_1^t, W_2^t)$ be the target qubits for the $t$-th gate.  

\begin{align}
    \widetilde{H}^{\mathrm{geo}}_{\mathrm{prop}} = \sum_{t=1}^T \widetilde{H}^{\mathrm{geo}}_{\mathrm{prop}}(t)
\end{align}

For intermediate time steps $2 \leq t \leq T-1$, we have three options depending on the applied gate $U_t$:
\begin{itemize}
    
    \item[] For $t$ such that $U_t=\mathbb{B}$:
    \begin{align}
            \widetilde{H}^{\mathrm{geo}}_{\mathrm{prop}}(t) = & \sum_{j=0}^{3} h(W^{t}_{1},W^{t}_{2}, C_t, C_{t+1}, C_{t-1}, A^{(t)}_{2j+1}, A^{(t)}_{2j+6}, A^{(t)}_{2j+7}) 
    \end{align}

    \item[] For $t$ such that $U_t=\mathrm{SWAP}$:
    \begin{align}
            \widetilde{H}^{\mathrm{geo}}_{\mathrm{prop}}(t) = & \sum_{j=0}^{3} h(W^{t}_{1},W^{t}_{2}, C_t, A^{(t)}_{2j+1}, C_{t-1}, C_{t+1}, A^{(t)}_{2j+6}, A^{(t)}_{2j+7}) 
    \end{align} 

    \item[] For $t$ such that $U_t=I$:
    \begin{align}
            \widetilde{H}^{\mathrm{geo}}_{\mathrm{prop}}(t) = & \sum_{j=0}^{3} h(W^t_1, C_t, C_{t-1}, C_{t+1}, A^{(t)}_{2j}, A^{(t)}_{2j+2}, A^{(t)}_{2j+6}, A^{(t)}_{2j+7}) 
    \end{align} 

\end{itemize}
    
By construction, $U_1=U_T=I$. At the two endpoints we therefore use

\begin{align}
        \widetilde{H}^{\mathrm{geo}}_{\mathrm{prop}}(1) = & \sum_{j=0}^{3} h(W_1, C_1, A^{(1)}_{2j+1}, C_2, A^{(1)}_{2j+2}, A^{(1)}_{2j+4}, A^{(1)}_{2j+6}, A^{(1)}_{2j+7}) \\
        \widetilde{H}^{\mathrm{geo}}_{\mathrm{prop}}(T)= & \sum_{j=0}^{3} h(W_L, C_T, C_{T-1}, A^{(T)}_{2j}, A^{(T)}_{2j+2}, A^{(T)}_{2j+4}, A^{(T)}_{2j+6}, A^{(T)}_{2j+7}) 
\end{align}

\paragraph{Geometric locality and bounded degree.}

We first verify bounded interaction range. Begin with the lattice
embedding of $H^{\mathrm{geo}}$. Every target term of
$H^{\mathrm{geo}}$ is supported within a constant-diameter
neighborhood. We rescale this lattice by a sufficiently large constant
factor and, for each time step $t$, place the eight qubits of $A^{(t)}$
in a constant-radius neighborhood of the clock qubit $C_t$. Since the
rescaling factor and the size of each auxiliary block are constants
independent of the input size, the auxiliary blocks can be embedded
injectively into the lattice without collisions.

Every application of $h$ associated with time $t$ acts only on qubits
belonging to the constant-diameter support of the corresponding term of
$H^{\mathrm{geo}}$, together with qubits from the local block
$A^{(t)}$. Hence there exists a constant, independent of the instance
size, that bounds the distance between any two qubits appearing in a
single application of $h$. Thus the resulting interaction hypergraph has
bounded interaction range.

We next verify bounded degree. Every work and clock qubit participates
in only $O(1)$ terms of $H^{\mathrm{geo}}$. Each such target term is
replaced by exactly four applications of $h$, so the degree of every
work and clock qubit increases by at most a constant factor.

For the auxiliary qubits, fix a time step $t$. The block $A^{(t)}$ is
used in four propagation applications and, when $t\leq T-1$, four
clock-penalty applications. It is additionally used in four
output-penalty applications only when $t=T$.

Initialization contributes only a constant number of additional
applications. Indeed, if $t_i=t$, then $W_i$ participates for the first
time in a nonidentity operation at time $t$. Since each nonidentity
operation $U_t$ acts on at most two work qubits, at most two initialized
qubits can satisfy $t_i=t$. Each corresponding initialization term uses
four applications of $h$. Therefore the total number of applications of
$h$ involving any fixed auxiliary block $A^{(t)}$ is bounded by a
constant, and hence every auxiliary qubit also has bounded degree.

Consequently, $\widetilde H^{\mathrm{geo}}$ has both bounded interaction
range and bounded degree and is geometrically local in the sense of
\cref{def: geometrically local}.

The ordered $8$-tuples produced by the reduction are pairwise distinct,
and every tuple contains eight distinct physical qubits. For a fixed
target term, different values of $j$ use different ordered auxiliary
check pairs, while applications associated with different time steps use
different auxiliary blocks $A^{(t)}$. Distinct target terms at the same
time step differ in at least one work or clock position, and distinct
initialization terms differ in the position containing the initialized
work qubit $W_i$. Finally, the work, clock, and auxiliary registers are
disjoint, the work qubits of each two-qubit gate are distinct, and the
displayed auxiliary indices are pairwise distinct modulo $8$. Hence the
construction satisfies the ordered-hyperedge requirements of
\cref{def: SIH}.

\subsection{Explicit Routing for the Geometrically Local \texorpdfstring{$\mathrm{Stoq\text{-}DSIH}_9$}{Stoq-DSIH9} Reduction}
\label{app:stoq-dsih-routing}

In the expressions below, auxiliary indices are evaluated modulo $8$.
For each initialized physical work qubit
$W_i\in I_0\cup I_1\cup I_+$, we use the time $t_i$ defined in
\cref{app:geometric-stoq-source}: namely, the first time at which
$W_i$ participates in a nonidentity operation, with $\mathrm{SWAP}$
counted as nonidentity.

\paragraph{Initialization.}
For the three possible prescribed initial states, define
\begin{align}
\widetilde H^{\mathrm{stoq}_g}_{\mathrm{in}}
={}&
\sum_{i\in I_0}
\sum_{j=0}^{3}
h^{(n)}
\left(
A^{(t_i)}_{2j},
C_{t_i},
C_{t_i-1},
W_i,
A^{(t_i)}_{2j+2},
A^{(t_i)}_{2j+3},
A^{(t_i)}_{2j+4},
A^{(t_i)}_{2j+6},
A^{(t_i)}_{2j+7}
\right)
\nonumber\\
&+
\sum_{i\in I_1}
\sum_{j=0}^{3}
h^{(n)}
\left(
W_i,
C_{t_i},
A^{(t_i)}_{2j+1},
C_{t_i-1},
A^{(t_i)}_{2j+2},
A^{(t_i)}_{2j+3},
A^{(t_i)}_{2j+4},
A^{(t_i)}_{2j+6},
A^{(t_i)}_{2j+7}
\right)
\nonumber\\
&+
\sum_{i\in I_+}
\sum_{j=0}^{3}
h^{(n)}
\left(
C_{t_i-1},
W_i,
C_{t_i},
A^{(t_i)}_{2j},
A^{(t_i)}_{2j+2},
A^{(t_i)}_{2j+3},
A^{(t_i)}_{2j+4},
A^{(t_i)}_{2j+6},
A^{(t_i)}_{2j+7}
\right).
\label{eq:stoq-dsih-in}
\end{align}

On the valid auxiliary state these three routings reproduce,
respectively,
$
\Pi^{\Ket{1}}_{W_i}
\Pi^{\Ket{1}}_{C_{t_i-1}}
\Pi^{\Ket{0}}_{C_{t_i}},
\Pi^{\Ket{0}}_{W_i}
\Pi^{\Ket{1}}_{C_{t_i-1}}
\Pi^{\Ket{0}}_{C_{t_i}},
$
and
$
\Pi^{\Ket{-}}_{W_i}
\Pi^{\Ket{1}}_{C_{t_i-1}}
\Pi^{\Ket{0}}_{C_{t_i}}.
$

\paragraph{Output.}
Let $W_L$ be the designated output qubit at the bottom of the final
computational column, as in \cref{app:geometric-stoq-source}. Define
\begin{align}
\widetilde H^{\mathrm{stoq}_g}_{\mathrm{out}}
=
\sum_{j=0}^{3}
h^{(n)}
\left(
C_T,
W_L,
A^{(T)}_{2j},
A^{(T)}_{2j+1},
A^{(T)}_{2j+2},
A^{(T)}_{2j+3},
A^{(T)}_{2j+4},
A^{(T)}_{2j+6},
A^{(T)}_{2j+7}
\right).
\label{eq:stoq-dsih-out}
\end{align}
On the valid auxiliary state, the surviving branch is
$
\delta_n
\Pi^{\Ket{1}}_{C_T}
\otimes
\Pi^{\Ket{-}}_{W_L}.
$

\paragraph{Clock Penalty.}
Define
\begin{align}
\widetilde H^{\mathrm{stoq}_g}_{\mathrm{clock}}
=
\sum_{t=1}^{T-1}
\sum_{j=0}^{3}
h^{(n)}
\left(
A^{(t)}_{2j},
C_t,
C_{t+1},
A^{(t)}_{2j+1},
A^{(t)}_{2j+2},
A^{(t)}_{2j+3},
A^{(t)}_{2j+4},
A^{(t)}_{2j+6},
A^{(t)}_{2j+7}
\right).
\label{eq:stoq-dsih-clock}
\end{align}
On the valid auxiliary state this reproduces
$
\Pi^{\Ket{0}}_{C_t}
\otimes
\Pi^{\Ket{1}}_{C_{t+1}}.
$

\paragraph{Propagation.}
Let $(W_1^t,W_2^t,W_3^t)$ denote the work qubits involved in the
$t$-th operation, with unused positions omitted for SWAP and identity
operations. Define

\[
\widetilde H^{\mathrm{stoq}_g}_{\mathrm{prop}}
=
\sum_{t=1}^{T}
\widetilde H^{\mathrm{stoq}_g}_{\mathrm{prop}}(t).
\]

For an intermediate step $2\leq t\leq T-1$ with
$U_t=\mathrm{Toffoli}$, set
\begin{align}
\widetilde H^{\mathrm{stoq}_g}_{\mathrm{prop}}(t)
=
\sum_{j=0}^{3}
h^{(n)}
\left(
W_1^t,
W_2^t,
W_3^t,
C_t,
C_{t+1},
C_{t-1},
A^{(t)}_{2j+1},
A^{(t)}_{2j+6},
A^{(t)}_{2j+7}
\right).
\label{eq:stoq-dsih-prop-toffoli}
\end{align}

For $U_t=\mathrm{SWAP}$, set
\begin{align}
\widetilde H^{\mathrm{stoq}_g}_{\mathrm{prop}}(t)
=
\sum_{j=0}^{3}
h^{(n)}
\left(
W_1^t,
W_2^t,
C_t,
A^{(t)}_{2j},
A^{(t)}_{2j+1},
C_{t-1},
C_{t+1},
A^{(t)}_{2j+6},
A^{(t)}_{2j+7}
\right).
\label{eq:stoq-dsih-prop-swap}
\end{align}

For $U_t=I$, set
\begin{align}
\widetilde H^{\mathrm{stoq}_g}_{\mathrm{prop}}(t)
=
\sum_{j=0}^{3}
h^{(n)}
\left(
W_1^t,
C_t,
A^{(t)}_{2j+1},
C_{t-1},
C_{t+1},
A^{(t)}_{2j+2},
A^{(t)}_{2j+4},
A^{(t)}_{2j+6},
A^{(t)}_{2j+7}
\right).
\label{eq:stoq-dsih-prop-identity}
\end{align}

By construction, $U_1=U_T=I$. At the endpoints we therefore use
\begin{align}
\widetilde H^{\mathrm{stoq}_g}_{\mathrm{prop}}(1)
={}&
\sum_{j=0}^{3}
h^{(n)}
\left(
W_1^1,
C_1,
A^{(1)}_{2j},
A^{(1)}_{2j+1},
C_2,
A^{(1)}_{2j+2},
A^{(1)}_{2j+4},
A^{(1)}_{2j+6},
A^{(1)}_{2j+7}
\right) 
\label{eq:stoq-dsih-prop-first}\\
\widetilde H^{\mathrm{stoq}_g}_{\mathrm{prop}}(T)
={}&
\sum_{j=0}^{3}
h^{(n)}
\left(
W_1^T,
C_T,
A^{(T)}_{2j+1},
C_{T-1},
A^{(T)}_{2j},
A^{(T)}_{2j+2},
A^{(T)}_{2j+4},
A^{(T)}_{2j+6},
A^{(T)}_{2j+7}
\right).
\label{eq:stoq-dsih-prop-last}
\end{align}

\paragraph{Geometric locality and bounded degree.}

After a constant rescaling of the lattice, each auxiliary block
$A^{(t)}$ can be placed in a constant-radius neighborhood of the work
and clock qubits involved at time $t$. Since each block has constant
size, every application of $h^{(n)}$ therefore has constant geometric
diameter.

Every target term of $H^{\mathrm{stoq}_g}$ is replaced by four
applications of $h^{(n)}$, so the degree of each work and clock qubit
increases by only a constant factor.

For a fixed $t$, the block $A^{(t)}$ participates in four propagation
applications and, when $t\leq T-1$, four clock applications. It
participates in four output applications only when $t=T$.
Initialization contributes only $O(1)$ additional applications: if
$t_i=t$, then $W_i$ participates in its first nonidentity operation at
time $t$, and $U_t$ acts on at most three work qubits. Hence at most
three initialized qubits can satisfy $t_i=t$. Therefore every auxiliary
qubit also has bounded degree.

Consequently,
$\widetilde H^{\mathrm{stoq}_g}$ has bounded interaction range and
bounded degree in the sense of
\cref{def: geometrically local}.

The ordered $9$-tuples produced by the reduction are pairwise distinct,
and every tuple contains nine distinct physical qubits. For a fixed
target term, different values of $j$ use different ordered auxiliary
check pairs, while applications associated with different time steps use
different auxiliary blocks $A^{(t)}$. Distinct target terms at the same
time step differ in at least one work or clock position, and distinct
initialization terms differ in the position containing the initialized
work qubit $W_i$. Finally, the work, clock, and auxiliary registers are
disjoint, the work qubits involved in each circuit operation are
distinct, and the displayed auxiliary indices are pairwise distinct
modulo $8$. Hence the construction satisfies the ordered-hyperedge
requirements of \cref{def: DSIH}.

\section{Uniformity and System-Size Bookkeeping}
\label{app:uniformity}

\subsection{Decoding the DSIH System-Size Encoding}
\label{app:dsih-size-decoding}

\begin{proof}[Proof of \cref{lem:DSIH-size-decoding}]
Let
\[
n=N^4+\Lambda(N+1)+k+2
\]
and set
\[
r=\Lambda(N+1)+k.
\]
Since
\[
\Lambda\leq N(N-1)
\]
and $k\leq N$, we have
\[
r
\leq
N(N-1)(N+1)+N
=
N^3.
\]
On the other hand,
\[
(N+1)^4-N^4
=
4N^3+6N^2+4N+1
>
N^3.
\]
Therefore
\[
N^4
\leq
n-2
<
(N+1)^4,
\]
and hence
\[
N
=
\left\lfloor(n-2)^{1/4}\right\rfloor.
\]

Once $N$ is known, define
\[
r=n-2-N^4.
\]
Because $0\leq k\leq N$, Euclidean division by $N+1$ gives uniquely
\[
\Lambda
=
\left\lfloor\frac{r}{N+1}\right\rfloor,
\qquad
k
=
r\bmod(N+1).
\]
All of these operations can be performed in polynomial time.
\end{proof}

\subsection{Compatibility of the Unpinned Compiler with the Encoded System Size}
\label{app:unpinned-dsih-system-size}

\begin{proof}[Proof of \cref{lem:unpinned-dsih-system-size}]
The target Hamiltonian $H_{\mathrm{int}}$ contains
\[
M
=
|E|+N(N-1)
=
\frac{\Lambda}{2}+N(N-1)
\]
local interaction terms.

Since the interaction alphabet has size $m=2$, the compiler has selector
length
\[
\ell=1.
\]
Thus each target term receives $\ell+1=2$ dedicated auxiliary pairs, or four
ancilla qubits. The total number of non-isolated physical qubits is therefore
\[
N+4M
=
N+2\Lambda+4N(N-1).
\]

Using the encoded system size,
\begin{align*}
n-(N+4M)
&=
N^4+\Lambda(N+1)+k+2
-
\left(
N+2\Lambda+4N(N-1)
\right)\\
&=
N^4-4N^2+3N+k+2+\Lambda(N-1).
\end{align*}
For $N\geq2$,
\[
N^4-4N^2
=
N^2(N^2-4)
\geq0,
\]
and every remaining term in the preceding expression is nonnegative.
Hence
\[
n-(N+4M)\geq0.
\]

We therefore use the $N$ graph qubits and the $4M$ compiler ancillas as the
active register and fill the remaining
\[
n_{\mathrm{dummy}}
=
n-(N+4M)
\]
positions with isolated dummy qubits. Thus the final number of physical qubits is exactly $n$.
\end{proof}

\begingroup
\setlength{\emergencystretch}{1em}
\printbibliography
\endgroup

\end{document}